\newcommand{\A}{\mathcal{A}}
\newcommand{\N}{\mathbb{N}}
\newcommand{\R}{\mathbb{R}}
\newcommand{\G}{\mathcal{G}}
\newcommand{\C}{\mathcal{C}}
\renewcommand{\P}{\mathcal{P}}
\declaretheorem[name=Question, sibling=theorem]{introquestion}
\declaretheorem[name=Open problem, sibling=theorem]{open}
\title{Complexity landscape for local certification}
\author{Nicolas Bousquet}{CNRS, INSA Lyon, UCBL, LIRIS, UMR5205, F-69622 Villeurbanne, France }{}{}{}{}
\author{Laurent Feuilloley}{CNRS, INSA Lyon, UCBL, LIRIS, UMR5205, F-69622 Villeurbanne, France }{}{}{}{}
\author{Sébastien Zeitoun}{CNRS, INSA Lyon, UCBL, LIRIS, UMR5205, F-69622 Villeurbanne, France }{}{}{}{}
\authorrunning{N. Bousquet, L. Feuilloley, S. Zeitoun}
\keywords{Local certification, proof-labeling schemes, locally checkable proofs, space complexity, distributed graph algorithms, complexity gap}
\begin{document}

	\maketitle
	\begin{abstract}
		An impressive recent line of work has charted the complexity landscape of distributed graph algorithms. For many settings, it has been determined which time complexities exist, and which do not (in the sense that no local problem could have an optimal algorithm with that complexity). In this paper, we initiate the study of the landscape for \emph{space complexity} of distributed graph algorithms. More precisely, we focus on the local certification setting, where a prover assigns certificates to nodes to certify a property, and where the space complexity is measured by the size of the certificates.   
		
		Already for anonymous paths and cycles, we unveil a surprising landscape:
		\begin{itemize}
			\item There is a gap between complexity $O(1)$ and $\Theta(\log \log n)$ in paths. This is the first gap established in local certification. 
			\item There exists a property that has complexity $\Theta(\log \log n)$ in paths, a regime that was not known to exist for a natural property. 
			\item There is a gap between complexity $O(1)$ and $\Theta(\log n)$ in cycles, hence a gap that is exponentially larger than for paths.  
		\end{itemize}
		
		We then generalize our result for paths to the class of trees. Namely, we show that there is a gap between complexity $O(1)$ and $\Theta(\log \log d)$ in trees, where $d$ is the diameter. We finally describe some settings where there are no gaps at all.
		
		To prove our results we develop a new toolkit, based on various results of automata theory and arithmetic, which is of independent interest. 
	\end{abstract}

	\newpage{}
	\tableofcontents{}
	
	\newpage{}

	\clearpage
	\setcounter{page}{1}

\section{Introduction}
\label{sec:overview}

\subsection{Approach}

\paragraph*{Time complexity landscapes for distributed graph algorithms}

For distributed graph algorithms, the most classic measure of complexity is \emph{time}, measured by the number of rounds before output. One the most fruitful research programs on this topic has been the one of charting the complexity landscape~\cite{Suomela20}. That is, instead of considering specific problems and asking for their time complexity, a long series of papers have answered the following question: given a complexity function, is there a problem with that complexity? 
In this paper, we aim at starting the analogue research program for the \emph{space complexity} of distributed graph algorithms.  

Let us mention some elements about the time complexity landscape, in order to draw analogies with our setting later.\footnote{For the sake of simplicity, we only discuss deterministic complexity of locally checkable labelings, in bounded degree graphs in the LOCAL model.} Before 2016, there were a few of classic complexities: $O(1)$ for very simple tasks, $\Theta(\log^*n)$ for tasks that could be reduced to coloring, and $O(n)$ for global problems (throughout the paper, $n$ refers to the number of nodes in the graph). Then, several seminal papers established that some problems have complexity $\Theta(\log n)$~\cite{BrandtFHKLRSU16, ChangKP19}, and that there is no problem whose complexity strictly lies between $\Theta(\log^*n)$ and $\Theta(\log n)$~\cite{ChangKP19}. 
This gap sharply contrasts  with classic computational theory where the time hierarchy theorem prevents the existence of such gaps~\cite{HartmanisS65}. 
It was also proved that in some intervals infinitely many complexities exist~\cite{BalliuBOS21, BalliuHKLOS18, Chang20,ChangP19}, but not \emph{any} complexity function had a corresponding problem.
A key strategy in this research area is to understand the landscape for specific classes of networks such as paths~\cite{BalliuBCORS19}, grids~\cite{BrandtHKLOPRSU17}, trees~\cite{BalliuBCOSS22, BalliuBCOSST23, BalliuBOS21}, minor-closed graphs~\cite{Chang24}, etc.
In this paper, we will follow a similar approach: characterizing which space complexities are possible depending on the structure of the network.

\paragraph*{New direction: A landscape approach to space }

The \emph{space} used by distributed algorithms is much less studied than the time, especially if we restrict to distributed graph algorithms. Some models where space is a well-studied measure are population protocols~\cite{AspnesR07}, massively parallel computing (MPC)~\cite{ImKLMV23} and models similar to Stone Age~\cite{EmekW13}, but they are not relevant to our approach, either because they differ too much from distributed graph computing (having one-to-one communication or a centralized component) or because they fix the space complexity to constant.  
As far as we know, the only major line of research considering space complexity for distributed graph algorithms is self-stabilization, and more specifically the state model, where the nodes update their states by reading the states of their neighbors~\cite{AltisenDDP19, Dolev2000}. 
In that model, the algorithm has to cope with transient faults, and consequently it is common to design algorithms (at least implicitly) with two components : one that builds a solution and one that checks the solution, and can reset the system if needed. 
\emph{Local certification} was introduced to study specifically the space needed for the checking phase, by abstracting the computation of the solution into an oracle, and this is our focus today.

Informally, a local certification of a property is an assignment of labels to the nodes of the graph, such that the nodes can collectively check the correctness of a given property by inspecting the labels in their neighborhoods. 
This notion is known under different names, depending on the specific model considered: \emph{proof-labeling scheme}~\cite{KormanKP10}, locally checkable proofs~\cite{GoosS16}, non-deterministic local decision~\cite{FraigniaudKP13}, etc.
We will formally define our model in Section~\ref{sec:model}, and refer to~\cite{Feuilloley21} for an introduction to the notion.  
The (space) complexity of a certification is the maximum size of the label given to a node, and it is known to be basically equal to the space complexity of self-stabilizing algorithms.\footnote{There are some fine prints to this statement, that we will discuss later, in Section~\ref{sec:discussions}.} 
Finally, since local certification is about decision problems, it is now standard to focus on checking graph properties (instead of solutions to graph problems), and to refer to the set of correct graphs as a \emph{language}.   

In local certification, the situation is similar to the pre-2016 situation for the time complexity as described above. That is, there are a few well-established complexity regimes -- $O(1)$, $\Theta(\log n)$, $\Theta(n)$ and $\Theta(n^2)$ -- but no solid explanation as of why these are so common.
But unlike for distributed time complexity, we can prove that for a wide range of complexities, there exists a problem with that complexity. 

\begin{restatable}{theorem}{ThmFolkloreNoGap}
	\label{thm:folklore-no-gap}
	For general graphs with identifiers, for any non-decreasing function $f(n)$ in $\Omega(\log n)$ and $O(n^2)$, there exists a property that can be certified with $O(f(n))$ bits, but not in $o(f(n))$ bits.
\end{restatable}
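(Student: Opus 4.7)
The plan is a padding argument that lifts a single maximally hard language to a family of languages of prescribed complexities across the whole range $\Omega(\log n) \cap O(n^2)$.

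First, I would secure a ``ground'' language $L^*$ on general graphs with identifiers whose certification complexity is $\Theta(n^2)$. The universal upper bound of $O(n^2)$ follows from placing the full adjacency matrix (and the IDs) in each node's certificate. A matching lower bound on \emph{some} language comes from a counting argument: there are doubly exponentially many distinct languages on $n$-vertex labeled graphs (of the order $2^{2^{\Theta(n^2)}}$), whereas the number of languages admitting an $s$-bit local certification is bounded by the number of local verifiers together with the number of certificate assignments of size $s$, which is much smaller unless $s = \Omega(n^2)$. We fix one such $L^*$.

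Next, for a given non-decreasing $f(n) \in \Omega(\log n) \cap O(n^2)$, let $k(n) = \lceil\sqrt{f(n)}\rceil$ and define $L_f$ as the language of $n$-vertex graphs of a canonical form: a ``core'' of $k(n)$ vertices (marked by certificate labels and recognizable locally) inducing a subgraph $H$, together with a canonical padding of $n - k(n)$ vertices (say, pendants attached to a designated leader of the core), with the additional condition $H \in L^*$. The upper bound $O(f(n))$ is immediate: each node's certificate carries the $L^*$-certificate for $H$ (of size $O(k(n)^2) = O(f(n))$) plus $O(\log n)$ bits of structural information encoding $n$, $k(n)$, and the padding pattern; since $f(n) \in \Omega(\log n)$ this totals $O(f(n))$.

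For the matching lower bound, I would reduce $L^*$ on $k$-vertex inputs to $L_f$ on $n$-vertex inputs. Given any $k$-vertex graph $H$, the canonical padded $n$-vertex graph $G$ satisfies $G \in L_f$ iff $H \in L^*$. A hypothetical $o(f(n))$-bit certification of $L_f$ would, by restriction to the core and a local simulation of the padding, produce an $o(k(n)^2) + O(\log n)$-bit certification of $L^*$ on $k$-vertex graphs. When $f(n) = \omega(\log n)$ the additive $O(\log n)$ is negligible and this contradicts the $\Theta(n^2)$ hardness of $L^*$. For the boundary case $f(n) = \Theta(\log n)$, I would substitute a concrete language, for instance the language of graphs with a uniquely designated leader vertex, whose complexity is already known to be exactly $\Theta(\log n)$ since communicating the leader's identity dominates.

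The main obstacle I anticipate is obtaining a language $L^*$ with a genuinely tight $\Theta(n^2)$ lower bound rather than the $\Omega(n^2/\log n)$ bounds typical of explicit hard languages such as non-Hamiltonicity; any logarithmic slack here would propagate through the padding reduction and spoil the $\Theta(f(n))$ conclusion. The counting argument sidesteps this by producing a (non-explicit) $L^*$ with the required tight bound. A secondary care point is faithfully modelling the core-padding interface during the reduction so that the local verification on the padded graph cleanly separates ``core certifies $H \in L^*$'' from ``padding has the expected shape''; I would handle this by augmenting each core certificate with the small constant-radius view of the padding reachable through the designated leader.
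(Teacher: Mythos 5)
There is a genuine gap in your lower bound. The reduction ``restrict the hypothetical $o(f(n))$-bit certification of $L_f$ to the core'' is not sound. Completeness of the derived scheme for $L^*$ is fine, but soundness fails: if $H \notin L^*$, the padded graph $G$ is a no-instance of $L_f$, so \emph{some} node of $G$ rejects under every assignment --- but that node may always be a padding node. The core-restricted simulation only checks that the core nodes (plus the leader's interface) would accept, which does not imply that the whole of $G$ accepts, so you cannot conclude $H \in L^*$. This is not a technicality: the padding carries $(n-k(n))\cdot o(f(n))$ bits of certificate in total, vastly more than the $O(k(n)^2)$ bits needed to describe $H$, so a hypothetical scheme for $L_f$ could legitimately concentrate the entire proof of ``$H\in L^*$'' (e.g.\ a copy of $H$'s adjacency matrix propagated along the padding) in the padding nodes, making the core blind to whether $H\in L^*$. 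Your proposed fix of giving the leader ``the constant-radius view of the padding'' does not help: with pendant padding that view contains $n-k(n)$ certificates, and with path padding it contains only one node and still says nothing about the far end of the path. Two secondary issues: the case split between $f=\omega(\log n)$ and $f=\Theta(\log n)$ leaves uncovered functions that oscillate between these regimes, and the additive $O(\log n)$ in the derived scheme is not obviously absorbable there.

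The paper avoids reductions entirely. Its language consists of \emph{two} copies of a graph $H$ on $\sqrt{f(n)}$ vertices joined by a path of length $n-2\sqrt{f(n)}$; the upper bound gives every node the adjacency matrix of $H$ (plus $O(\log n)$ bits of spanning-tree bookkeeping), and the lower bound is a direct pigeonhole cut-and-paste argument: with $o(f(n))$-bit certificates, two distinct correct instances $H\neq H'$ of the same size must produce the same pair of certificates on some edge of the connecting path, and gluing the $H$-half of one to the $H'$-half of the other yields an accepted no-instance. The two copies create the communication bottleneck across the path that your single-copy construction lacks; that bottleneck is exactly what makes the lower bound go through without needing any pre-existing $\Theta(n^2)$-hard language $L^*$. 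If you want to salvage your plan, replace the single core by two cores that must agree (or otherwise force $\Omega(f(n))$ bits across a small cut), and prove the lower bound directly rather than by restriction.
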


The proof follows a standard construction of~\cite{GoosS16}.
The correct instances are mde of two copies of a graph $H$, linked by a path. The graph $H$ is chosen so that it can be encoded on $f(n)$ bits, and a certification consists in giving this encoding to all nodes. A counting argument allows to prove that this is actually necessary~\cite{GoosS16}. 

At first sight, Theorem~\ref{thm:folklore-no-gap} gives a trivial answer to our question about the complexity landscape of local certification. But it relies crucially on several assumptions. 
First, in the upper bound it is necessary to have unique identifiers to avoid being fooled by symmetries at the checking phase. Also, because of the identifiers and because the value of $n$ needs to be certified, $\Omega(\log n)$ bits are needed.  
Finally, to make the counting argument of the lower bound work, one cannot restrict too much the graphs in which to apply this theorem. This rises the three following questions that we tackle in this paper.

\begin{introquestion}
	What happens for complexities in $o(\log n)$?
\end{introquestion}

In other words, is the $\Omega(\log n)$ a limitation of the proof technique or could there be a gap between constant and $\Theta(\log n)$, as the current state of our knowledge suggests? 

\begin{introquestion}
	What about anonymous networks?
\end{introquestion}

This is especially relevant in conjunction with the previous question, since a unique identifier cannot be encoded in a certificate of $o(\log n)$ bits. 
Note that even beyond that regime, the impact of the identifiers on local decision is a well-studied topic~\cite{FraigniaudHK12, FraigniaudGKS13, FraigniaudHS18, FeuilloleyF16}.  

\begin{introquestion}
	What about structured graphs, like paths, cycles and trees?
\end{introquestion}

In the spirit of the work made on the distributed time complexity landscape, we would like to understand local certification on restricted graph classes, in which counting arguments do not apply, or only partially.  

\subsection{Main results}

We start with a simple setting: anonymous paths without input labels. In this setting, a path is characterized by its length, and a language is defined by a set of authorized lengths. Naturally, we assume that the nodes \emph{do not} have the knowledge of $n$ (otherwise the problem is trivial).
Given $O(\log n)$ bits, we can recognize any language. This is because the prover can certify the exact length of the path, by giving to every node its distance to a chosen endpoint as a certificate. The nodes can then check locally the correctness of this counter, and the last node can check whether the length belongs to the authorized sizes. 
On the other hand, with a constant number of bits, we can encode properties like ``the path has length $k \mod m$'', via counters modulo $m$ (where $m$ is a constant). 
To build a language whose complexity would be strictly between these regimes, it is tempting to consider properties of the form: ``the path has length $k$ modulo $f(n)$'' for arbitrary function $f$, and for which a counter would use $\log f(n)$ bits. Unfortunately, the nodes are unable to check that the $f(n)$ is correct, since they do not have access to $n$. This obstacle does not seem easy to bypass and, it is reasonable to conjecture that there is a gap between $O(1)$ and $\Theta(\log n)$. We do establish the existence of a gap, but it only goes up to $\Theta(\log \log n)$. 

\begin{restatable}{theorem}{ThmGapPathsChrobak}
	\label{thm:gap-paths-chrobak}
	Let $c > 1$ and $N \in \N$. Let $\P$ be a property on paths that can be certified with certificates of size $s(n):=\left\lfloor\frac{\log \log n}{c}\right\rfloor$ for all $n \geqslant N$. Then, $\P$ can also be certified with constant-size certificates.
\end{restatable}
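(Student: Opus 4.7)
I will interpret the certification as a non-deterministic automaton on a one-letter alphabet, use Chrobak's normal form to extract arithmetic progressions contained in $\P$, and finally argue that $\P$ must be ultimately periodic, after which a constant-size certification follows by a standard counter-modulo-period construction.

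Since paths are anonymous and have no input, a property on paths is just a set $L \subseteq \N$ of authorized lengths. Any certification scheme naturally gives rise to a (possibly infinite) NFA $\A$ on a unary alphabet: the states are the certificate symbols ever used, the transitions are the pairs of adjacent labels that the verifier accepts, and the initial and accepting states encode the verifier's rules at the two endpoints; by construction $L(\A) = L$. The hypothesis then says that for every $n \in L$ with $n \geq N$, there is an accepting run of $\A$ on $1^n$ that uses at most $2^{s(n)} \leq (\log n)^{1/c}$ distinct states. Restricting $\A$ to those states gives a sub-NFA $\A_n$ with $k_n \leq (\log n)^{1/c}$ states, satisfying $L(\A_n) \subseteq L$ and $n \in L(\A_n)$. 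Chrobak's normal form decomposes $L(\A_n)$ as a finite tail $F_n \subseteq [0, O(k_n^2)]$ together with finitely many arithmetic progressions whose cycle lengths sum to at most $k_n$. Since $n$ is much larger than $O(k_n^2) = O((\log n)^{2/c})$, the integer $n$ must lie in one of these progressions, so there exist $a_n, b_n$ with $b_n \leq (\log n)^{1/c}$ such that $a_n + b_n\N \subseteq L$ and $n \in a_n + b_n\N$.

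The main obstacle is to promote ``every large $n \in L$ sits in an arithmetic progression of small period contained in $L$'' into ``$L$ is ultimately periodic with a single constant period $p$''. My plan is a descent-type argument: start from one AP $a + b\N \subseteq L$, pick a very large $m \in L$ outside it, apply the above analysis to $m$ to obtain a new AP $a' + b'\N \subseteq L$ with $b' \leq (\log m)^{1/c}$, and iterate. The crux is to show that only boundedly many essentially distinct periods can arise in this process, so that, up to a finite initial segment, $L$ is a finite union of APs with a single common period $p$. This is where I expect to invoke the arithmetic and automata-theoretic toolkit alluded to in the introduction, exploiting the fact that the period bound $(\log n)^{1/c}$ grows strictly slower than $\log n$ to prevent the periods from inflating indefinitely. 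Once ultimate periodicity is in hand, the constant-size certification is standard: give every node a pair consisting of a bounded counter for positions below the tail threshold and a counter modulo $p$, for a total of $O(\log p) = O(1)$ bits, and let the verifier check local consistency together with the endpoint predicate.
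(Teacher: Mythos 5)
Your setup is the right one and matches the paper's: encode each certificate size as a unary NFA, apply Chrobak normal form, and conclude that every sufficiently large accepted length lies in an arithmetic progression of small period contained in $\P$. (A minor quibble: the natural states are \emph{pairs} of adjacent certificates, so the automaton for certificate size $s$ has about $2^{2s}$ states rather than $2^{s}$; this does not hurt the argument qualitatively, but the sharp constant $c>1$ depends on this accounting.) The final step, from eventual periodicity to an $O(1)$-bit scheme via counters modulo the period, is also fine.

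The genuine gap is the middle step, which you explicitly leave as a plan. The premise you have established --- ``every large $n\in L$ lies in some AP of period at most $(\log n)^{1/c}$ contained in $L$'' --- is strictly too weak to imply eventual periodicity, so no descent on individual progressions can close the argument from that premise alone. For instance, a set of the form $\bigcup_k (m_k + q_k\N)$ with $q_k$ increasing primes and $m_k$ growing doubly exponentially in $q_k$ satisfies your premise yet is not eventually periodic; what rules such sets out is not the size of the individual periods but a global arithmetic constraint tying together \emph{all} the cycles of \emph{all} the automata up to a given size. Concretely, the paper's proof takes the least common multiple $p_k$ of every cycle length appearing in the Chrobak normal forms of $\A_1,\dots,\A_k$; since these lengths sum to at most $\sum_i M_i\leqslant 2^{2k+1}$, Landau's theorem bounds $p_k\leqslant g(2^{2k+1})=2^{2^{k+1/2}\sqrt{(2k+1)\ln 2}\,(1+o(1))}$, which is $o\bigl(2^{2^{ck}}\bigr)$ exactly because $c>1$. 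Taking $n_k$ to be the \emph{smallest} length requiring certificate size $k$, one has $n_k\geqslant 2^{2^{ck}}$, so $n_k-p_k$ is still past every tail; subtracting $p_k$ keeps $n_k-p_k\in S_k$ (the loop length divides $p_k$), minimality forces $n_k-p_k\in S_j$ for some $j<k$, and adding $p_k$ back inside the loop of $\A_j$ yields $n_k\in S_j$, a contradiction. Your heuristic that ``$(\log n)^{1/c}$ grows slower than $\log n$'' does not substitute for this lcm/Landau computation, which is where the threshold $c>1$ actually comes from. As written, the proposal is an outline of the correct strategy with its central lemma unproved and its proposed replacement unsound.
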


This is \emph{the first gap established in local certification}. 
At first, this $\log \log n$ looks like an artifact of the proof but, surprisingly, it is not! We next prove that there exists a language for which the optimal certificate size is $\Theta(\log \log n)$. (Given the previous theorem, it is sufficient to prove that this language can be certified with $O(\log \log n)$ bits but not with $O(1)$ bits.) 

\begin{restatable}{theorem}{ThmCertifProductPrimes}
	\label{thm:certif_product_primes}
	There exist properties on paths that can be certified with certificates of size $O(\log \log n)$, but not with certificates of size $O(1)$.
\end{restatable}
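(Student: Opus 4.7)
The plan is to exhibit a single non-regular property $\mathcal P \subseteq \N$ on path lengths together with an $O(\log\log n)$-bit certification. The $\omega(1)$ side comes for free from the folklore correspondence between $O(1)$-bit certification on anonymous paths and recognition by constant-state unary NFAs: such NFAs accept only eventually periodic unary languages, so any non-eventually-periodic $\mathcal P$ already lies outside the constant-certificate regime. In particular one only needs to arrange that the gaps between consecutive elements of $\mathcal P$ grow without bound, so no finite period can describe $\mathcal P$ asymptotically.

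For the $O(\log\log n)$ upper bound I would invoke Chrobak normal form for unary NFAs, which says that every unary NFA with $N$ states is equivalent to one consisting of a short tail followed by disjoint simple cycles of total length $O(N)$. The key arithmetic input, hinted at by the theorem's name, is the prime-number-theorem estimate $\sum_{i \leq k} p_i = O(p_k^2/\log p_k)$: a Chrobak automaton built from cycles of lengths $p_1, \ldots, p_k$ has only $O(\log^2 n / \log\log n)$ states when its period $P_k = p_1 \cdots p_k$ is calibrated to be of order $n$ (so that $p_k = \Theta(\log n)$), and therefore fits into $O(\log\log n)$ bits per state. A natural candidate for $\mathcal P$, guided by this state budget, is a language built from the primorials---for instance designed so that at each scale $n \in [P_k, P_{k+1})$, membership is determined by a residue condition modulo (some of) the primes $p_1, \ldots, p_k$, while the choice of residue condition varies with $k$ in such a way that $\mathcal P$ is globally non-eventually-periodic. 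The certification then labels each position $j \in [0,n]$ by a Chrobak-NFA state of the form (cycle index, residue modulo the corresponding prime), together with appropriate endpoint markers.

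The delicate step is soundness: the prover must not be able to certify lengths outside $\mathcal P$ by choosing a misleading cycle index or residue. I expect two ingredients to be essential here. First, the fact that the verifier can read off the label length, yielding an implicit upper bound on the moduli encodable in an $O(\log\log n)$-bit certificate; this is what prevents the prover from smuggling in a large prime and certifying divisibility by it. Second, carefully engineered endpoint and cycle-switching rules designed so that only lengths satisfying the intended arithmetic pattern admit valid local labelings. Reconciling these with the fine structure of sums of small primes (PNT, Mertens-type bounds) to rule out spurious certifications is the main technical obstacle, and where I expect the paper's Chrobak-based toolkit---together with the acknowledged help of Thomas Colcombet on Chrobak normal form---to do the real work.
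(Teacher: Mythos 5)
Your lower-bound half is correct and matches the paper: constant-size certification on anonymous paths corresponds to constant-state unary NFAs, hence (via Chrobak normal form) to eventually periodic length sets, so any non-eventually-periodic $\mathcal{P}$ suffices. The gap is entirely in the upper bound, and it is a real one: you never pin down the language, and the soundness problem you flag at the end --- ``the prover must not be able to certify lengths outside $\mathcal{P}$ by choosing a misleading cycle index or residue'' --- is precisely the obstacle the construction has to solve, not a detail to be deferred. Your framing makes it worse: you want membership at scale $n\in[P_k,P_{k+1})$ to be ``determined by a residue condition modulo some of $p_1,\dots,p_k$,'' i.e.\ a \emph{positive, conjunctive} condition whose moduli are calibrated to $n$. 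But the vertices do not know $n$, so they cannot verify that the prover chose the right scale $k$; a dishonest prover on a wrong-length path can simply pick a different $k$ whose residue condition happens to hold. A single Chrobak automaton with cycles $p_1,\dots,p_k$ only ever recognizes an eventually periodic set, so the non-uniformity must live in the certificates, and nothing in your sketch prevents the prover from exploiting it.

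The paper's resolution, which is absent from your proposal, is to certify the \emph{complement}: take $\mathcal{P}$ to be the paths whose length is \emph{not} a primorial $a_k=\prod_{i\le k}p_i$. Non-membership always has a short, self-certifying witness: either $n$ is odd, or the sequence of prime divisors of $n$ has a gap ($p_\ell\nmid n$ but $p_k\mid n$ for some $\ell<k$), or $p_k\mid n$, $p_{k+1}\nmid n$ and $n\not\equiv a_k \bmod m$ for some $m\le\lceil\log n\rceil$ --- this last $m$ exists by the Chinese remainder theorem because $\lcm(1,\dots,\lceil\log t\rceil)\ge t$. Each witness is a disjunction of $O(1)$ modular (dis)equalities with moduli polylogarithmic in $n$ (since the relevant $p_k=\Theta(k\log k)$ with $k=O(\log n)$), each certifiable by a counter on $O(\log\log n)$ bits. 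Crucially, soundness is then automatic: a valid counter modulo $m$ genuinely establishes $n \bmod m$, and \emph{any} of the three conditions, with \emph{any} parameters $k,m$ (even outside the claimed ranges), already implies $n\notin S$; the bounds on $k$ and $m$ are needed only for the certificate-size analysis, not for correctness. This is the idea your proposal is missing, and without it the ``carefully engineered endpoint and cycle-switching rules'' you appeal to do not exist.
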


It is \emph{the first time that this regime appears in the area of local certification} (here under the promise that the graph is a path).  
The language we use is the set of paths whose length is not a product of consecutive primes; we will come back to it. 

Now, moving on to cycles, there is a second surprise. We expect to see the same landscape in paths and cycles, but the intermediate regime actually disappears in cycles, and there is a gap between $O(1)$ and $\Theta(\log n)$.

\begin{restatable}{theorem}{ThmCycles}
	\label{thm:cycles}
	Let $c > 12$ and $N \in \N$. Let $\P$ be a property on cycles that can be certified with certificates of size $s(n) := \left\lfloor \frac{\log n}{c} \right\rfloor$ for every integer $n \geqslant N$. Then, $\P$ can also be certified with constant-size certificates.
\end{restatable}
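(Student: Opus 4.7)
The plan is to recast the certification combinatorially: the verifier $V$ of any certification scheme for $\mathcal{P}$ naturally determines, for each $n \geq N$, a directed graph $G_n$ whose vertices are the admissible certificates (bit-strings of length at most $s(n) = \lfloor \log n / c \rfloor$) and whose arcs encode the locally accepting transitions. Then $|V(G_n)| = O(n^{1/c})$, the family $(G_n)$ is monotone increasing in $n$ (since $s$ is nondecreasing, a legal configuration for length $n$ is still legal at length $n'>n$), and a length $n$ lies in $\mathcal{P}$ if and only if $G_n$ admits a closed walk of length $n$. Reaching a constant-size certification then means producing a single finite digraph $H$ whose set of closed-walk lengths coincides with $\mathcal{P}$ for all sufficiently large $n$.

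The engine I would use is a classical structural theorem for closed walks: in a strongly connected digraph on $k$ vertices with period $d$ (the gcd of its cycle lengths), the closed-walk lengths are exactly the multiples of $d$ at least some threshold $T = O(k^2)$. Applied to the strongly connected component $S_n$ of $G_n$ carrying a witness for $n \in \mathcal{P}$, this yields a period $d_n$ dividing $n$ with $d_n \leq k_n = O(n^{1/c})$, together with a pumping property: $n + d_n, n + 2d_n, \ldots$ all lie in $\mathcal{P}$ once $n$ exceeds $T_n = O(n^{2/c})$, which holds for all large $n$ as soon as $c > 2$.

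The hard part will be proving that only finitely many periods $d_n$ can occur as $n$ ranges over large elements of $\mathcal{P}$. I would argue by contradiction: an unbounded sequence $d_{n_i} \to \infty$ would force the sizes $k_{n_i}$ of the witnessing SCCs to tend to infinity as well (since $d_{n_i} \leq k_{n_i}$). Combining the pumping of two different witnesses via a Sylvester--Frobenius style estimate then produces elements of $\mathcal{P}$ at prescribed residues and small gaps, in a pattern whose ``cheapest alphabet'' realization at some intermediate length $m$ exceeds the budget $O(m^{1/c})$. Each layer in this combinatorial chain---the Frobenius-type bound on the threshold $T = O(k^2)$, the loss from merging two SCCs with periods bounded by $k$, and the further loss from accommodating a third SCC to force coprimality---should contribute a polynomial slack, and the cumulative cost pushes the required exponent to $c > 12$.

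Once only finitely many periods $d_1, \dots, d_r$ appear among large $n \in \mathcal{P}$, set $D = \lcm(d_1, \dots, d_r)$. The language $\mathcal{P}$ is then eventually a finite union of arithmetic progressions of common period dividing $D$, which is easily certified with $O(\log D) = O(1)$ bits: the prover labels each node with its index modulo $D$, and the verifier checks that consecutive labels differ by $1$ modulo $D$ and that the residue of some distinguished marker lies in the (finite, hard-coded) set of admissible residues. This converts the small but non-constant certification into a constant-size one, completing the proof.
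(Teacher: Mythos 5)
Your setup---the certificate digraphs, the closed-walk characterization of acceptance, SCC periods, and Frobenius-type pumping above a threshold $O(k^2)$---is exactly the paper's framework (its Claims about $\G_k$ and its generalized B\'ezout lemma). But the step you yourself label ``the hard part'' is where the entire proof lives, and your sketch of it is not an argument: ``a pattern whose cheapest alphabet realization at some intermediate length $m$ exceeds the budget'' names a hoped-for contradiction without exhibiting the intermediate length, the pattern, or the counting. The paper's mechanism is concrete and different from what you gesture at. It takes the \emph{minimal} length $n_k$ accepted with $k$-bit certificates but with no fewer (so $n_k \geqslant 2^{ck}$); the witnessing SCC has period $d \leqslant 2^{2k}$, so every $pd$ with $p$ prime in the window $(2^{4k+1}, 2^{k(c/2-2)}]$ lies in $\P$ and satisfies $pd \leqslant \sqrt{n_k} < n_k$, hence by minimality each such $pd$ is accepted by some \emph{smaller} certificate graph $\G_i$, $i<k$. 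The prime number theorem puts more than $k2^{2k}$ primes in that window, so a double pigeonhole yields one index $i$ and two primes $p,q$ whose closed walks in $\G_i$ share a vertex; splicing them realizes every length $apd+bqd$, and since $\gcd(pd,qd)=d$ divides $n_k$ and $pd,qd \leqslant \sqrt{n_k}$, the Frobenius bound places $n_k$ itself in $S_i$, contradicting minimality. None of these ingredients (minimal counterexample, the prime window, the pigeonhole over smaller certificate graphs, the walk surgery) appears in your plan, and they are precisely where the constant $12$ comes from; without them your ``finitely many periods'' claim is unsupported.

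Separately, your closing constant-size scheme is unsound on cycles: a counter modulo $D$ together with ``some distinguished marker'' whose residue is checked is exactly the gadget that fails in anonymous cycles, since a cheating prover can place several markers and keep every local view consistent, so one can only ever certify congruences $0 \bmod m$, never $i \bmod m$ with $i \neq 0$. (This is the structural reason the gap in cycles reaches $\Theta(\log n)$ while in paths it stops at $\Theta(\log\log n)$.) Your endgame happens to be rescuable because closed-walk lengths in an SCC of period $d_j$ are automatically multiples of $d_j$, so the only progressions that can arise have residue $0$---but you neither observe this nor avoid the unsound marker construction. Note also that the paper never needs to build the constant-size scheme explicitly: its contradiction shows $S=\bigcup_{k\leqslant K}S_k$ for some finite $K$, which is already a constant-size certification.
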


Finally, we study the case of trees, where the picture is more complex, as it depends on the exact setting considered. In some settings, we can prove that there is no gap (any reasonable function corresponds to the optimal certificate size for a language), and in some other cases, the gap from paths persists. 
These settings depend on three parameters: 
(1)~whether we consider certificate size as a function of the number of nodes~$n$ or of the diameter~$d$, 
(2)~whether the nodes can only see what is at distance 1 or at larger distance (in other words the verification radius is 1 or larger), and (3)~whether the maximum degree is bounded or not. We establish a classification of these different settings. In particular, we prove that \emph{the gap for paths is generalized to arbitrary trees}, when parameterized by the diameter, and the verification radius is 1.

\begin{restatable}{theorem}{ThmGapTreeRadiusOne}
	\label{thm:gap-tree-radius-1}
	Let $c > 2$ and $D \in \N$. Let $\P$ be a property in trees (of unbounded degree) that can be certified using $s(d):=\left\lfloor\frac{\log \log d}{c}\right\rfloor$ bits for all $d \geqslant D$ (where $d$ is the diameter). Then, $\P$ can also be certified with constant-size certificates.
\end{restatable}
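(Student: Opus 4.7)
The plan is to reduce to the path case and invoke Theorem~\ref{thm:gap-paths-chrobak}. The observation driving this reduction is that any path on $n$ vertices is, viewed as a tree, of diameter $n-1$, so the tree certification of $\P$ automatically restricts to a path certification of $\P \cap \text{Paths}$.

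First I would extract a path scheme: restricting the given tree certification to paths yields a scheme for $\P \cap \text{Paths}$ on paths with labels of size $s(n-1) = \lfloor \log\log(n-1)/c \rfloor$. Since $\log\log(n-1) \leq \log\log n$ for $n \geq 3$ and $c > 2$, we can find $c' \in (1, c)$ such that $s(n-1) \leq \lfloor \log\log n / c' \rfloor$ for all sufficiently large $n$. Theorem~\ref{thm:gap-paths-chrobak} then yields a constant-size certification of $\P \cap \text{Paths}$ on paths.

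The second and harder step is to lift this to a constant-size certification of $\P$ on all trees. For a tree $T$ of diameter $d$ in $\P$, I would fix a diametral path $P$ in $T$ and certify the spine of $T$ using the path scheme from the previous step. The remaining nodes---those in subtrees hanging off $P$---must also receive constant-size labels, together with a verification that replicates the original tree verifier's decisions. The key structural fact I would try to exploit is that the original alphabet $\{0,1\}^{s(d)}$ has size at most $(\log d)^{1/c}$, which is tiny compared to $d$, so the verification at a branching node (which processes a multi-set of labels from this small alphabet) should admit relatively few behaviorally distinct configurations.

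The hard part is precisely this lifting step, and in particular the treatment of branching nodes of unbounded degree. The difficulty is that $\P$ and $\P \cap \text{Paths}$ may differ substantially, so a constant-size path certification of the latter does not directly give a tree certification of the former. My plan would be to classify subtree configurations into finitely many equivalence classes (two configurations being equivalent if they produce the same verification outcome when attached at a given branching node) and encode the equivalence class of each subtree in a constant-size label. Proving that the number of such classes is bounded by a constant independent of $d$ likely requires extending the automata-theoretic analysis (Chrobak normal form) underlying Theorem~\ref{thm:gap-paths-chrobak} from the linear setting of paths to the branching setting of trees, and this I expect to be the technical heart of the argument.
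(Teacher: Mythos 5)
There is a genuine gap, and it sits exactly where you place it: the lifting step. The first step is in fact a detour that buys nothing. Restricting the tree scheme to paths certifies $\P \cap \text{Paths}$, but membership of a bare path in $\P$ tells you nothing about membership of a tree with pendant subtrees attached to that path; worse, the original verifier's decision at a spine vertex depends on the size-$s(d)$ certificates of its pendant neighbors, so a constant-size certification of the spine cannot be spliced together with the original verification at branching nodes. The whole burden therefore falls on your claim that subtree configurations fall into \emph{constantly many} behavioral equivalence classes — but a priori the number of such classes grows with the certificate size $s(d)$, and showing it does not is essentially the theorem itself, so this step cannot be waved at. Finally, the tool you propose for it, extending Chrobak normal form to trees, is a dead end: Chrobak normal form is specific to NFAs over a \emph{unary} alphabet, and the paper notes it does not even extend to labeled paths; this is precisely why the tree theorem is stated with $c>2$ (matching the state-complexity proof of Theorem~\ref{thm:gap-paths-state-complexity}) rather than the $c>1$ of Theorem~\ref{thm:gap-paths-chrobak}.

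The paper's route is different and worth contrasting with yours. It never produces a constant-size path scheme in advance. Instead, for each certificate size $k$ it builds an NFA $\A_k$ over the \emph{infinite alphabet of rooted trees}: a tree $T$ is parsed along a path $P_{uv}$ into the sequence of rooted subtrees hanging off that path, states of $\A_k$ are pairs of certificates, and a transition labeled by a rooted tree exists whenever some internal certificate assignment makes all its vertices accept. The closure operations of Lemma~\ref{lem:automata} have state bounds independent of the alphabet size, so $S_k\setminus\bigcup_{i<k}S_i$ is recognized by an automaton with at most $2^{2^{2k}}$ states. Taking a counterexample tree of minimal diameter and, among those, minimal vertex count, an accepting run along a diametral path cannot repeat a state — deleting the repeated portion preserves the diameter but strictly decreases the vertex count (Lemma~\ref{lem:gluing_subsequence}) — giving $d_k\leqslant 2^{2^{2k}}$ and contradicting $d_k\geqslant 2^{2^{ck}}$. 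The "equivalence classes of subtrees" you are after are exactly the transition relations of these automata, but they must be defined relative to the original non-constant scheme and controlled by state complexity, not extracted from a path certification obtained beforehand.
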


We then show that two assumptions in Theorem~\ref{thm:gap-tree-radius-1} are optimal. Namely, we prove in Theorem~\ref{thm:caterpillars-no-gap-in-n} that we can not hope for a gap in~$n$ instead of~$d$ in trees (even in caterpillars, that is, paths with leafs attached), and we establish in Theorem~\ref{thm:caterpillar-no-gap-radius-2} that it is essential that the vertices are able to see only at distance only~1, because there is no more gap if the verification radius~$r$ is at least~2 (again, even in caterpillars).

\begin{restatable}{theorem}{ThmCaterpillarNoGapinNRadiusOne}
	\label{thm:caterpillars-no-gap-in-n}
	Let $f : \N \to \mathbb{R}$ be a non-decreasing function such that $\lim_{n \rightarrow +\infty}f(n) = +\infty$ and for all integers $1 \leqslant s \leqslant t$ we have $f(t)-f(s) \leqslant \log t - \log s$. Then, there exists a property on caterpillars (of unbounded degree) that can be certified with certificates of size~$O(f(n))$ and not with certificates of size~$o(f(n))$.
\end{restatable}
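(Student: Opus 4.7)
The plan is to adapt the classical two-copies construction of~\cite{GoosS16} to caterpillars, exploiting unbounded degree to encode $n$ locally at a single spine endpoint. The growth hypothesis, applied at $s = 1$, gives $f(n) \le f(1) + \log n$, so the inverse $f^{-1}(k) := \min\{n \in \N : f(n) \ge k\}$ is well-defined, tends to $+\infty$, and satisfies $f^{-1}(k) \ge 2^{k - f(1)}$; moreover, since $\log n - \log(n-1) < 1$ for $n \ge 3$, we get $k \le f(f^{-1}(k)) < k + 1$. For each large admissible (say, odd) $n$, I set $k := \lfloor f(n) \rfloor$ and, for each $w \in \{0,1\}^k$, I build a caterpillar $C^n_w$ of size exactly $n$ with the following spine: an endpoint $v_0$ carrying $B := (n - 12k - 47)/2$ leaves; an encoding region of $k$ spine nodes whose leaf counts encode $w$ using a pair-based fixed-sum scheme (so this region contributes $6k$ nodes, independently of $w$); a separator spine node with a distinguished leaf count; a bare padding $P_1 P_2 P_3$ of three spine nodes with no leaves; a second separator; a second identical $w$-encoding; and a symmetric endpoint $v_M$ with $B$ leaves. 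A direct count gives total size $n$, and $B \ge 3$ holds for $n$ large enough. I then set $\P_f := \{C^n_w : n \text{ admissible}, w \in \{0,1\}^{\lfloor f(n) \rfloor}\}$.

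\textbf{Upper bound.} The prover attaches to each node the certificate $(k, w, i)$, with $k = \lfloor f(n) \rfloor$, $w$ the encoded string, and $i$ a position counter used in the encoding regions. The verifier checks neighbor agreement on $(k, w)$, correct increment of $i$ along the spine, consistency between each node's role (size marker, encoding, separator, padding) and its leaf count, that each encoding node bears the number of leaves dictated by $w_i$, and at $v_0$ and $v_M$ the consistency equation $k = \lfloor f(2B + 12k + 47) \rfloor$ using the observed $B$. All checks are local, and the certificate has size $k + O(\log k) = O(f(n))$.

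\textbf{Lower bound (splicing).} I assume for contradiction a certification of size $s(n) = o(f(n))$. Then for all large admissible $n$, writing $k = \lfloor f(n) \rfloor$, one has $s(n) < k/10$, so the number of possible radius-$1$ local views at the middle padding node $P_2$ is at most $2^{3 s(n)} < 2^k$. Writing $\alpha_w$ for an accepting certificate assignment of $C^n_w$, the pigeonhole principle applied over the $2^k$ strings $w$ yields $w_1 \ne w_2$ whose local views at $P_2$ under $\alpha_{w_1}$ and $\alpha_{w_2}$ coincide. I form the spliced caterpillar of size $n$ by taking the left part (up to and including $P_2$) from $(C^n_{w_1}, \alpha_{w_1})$ and the right part (from $P_3$ onwards) from $(C^n_{w_2}, \alpha_{w_2})$. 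Every node sees exactly the view it had in its source instance (for nodes other than $P_2$ this is immediate; at $P_2$ it uses the matching views), so the verifier accepts. But the two encoding regions of the spliced caterpillar carry the distinct strings $w_1 \ne w_2$, so it does not belong to $\P_f$, a contradiction. Therefore $s(n) \ge k/3 = \Omega(f(n))$ on infinitely many $n$.

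\textbf{Main obstacle.} The hard part will be enabling the verifier to enforce the constraint $k = \lfloor f(n) \rfloor$ without a global distance counter, which would cost $\log n$ bits per node and spoil the $O(f(n))$ upper bound whenever $f \ll \log n$. The unbounded degree of caterpillars resolves this: storing $\Theta(n)$ leaves at the endpoint $v_0$ makes $n$ directly readable from $v_0$'s degree alone, at no cost to the other nodes' certificates. Combined with a $w$-independent pair encoding and a constant-length padding, this makes the family rigid enough for the classical pigeonhole-splicing argument to yield the matching lower bound.
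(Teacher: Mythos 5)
Your proposal is correct in its essential ideas, but it takes a genuinely different route from the paper's proof. The paper builds a single rigid family of caterpillars whose $i$-th spine vertex carries $a_i-1$ leaves, for a sequence $(a_i)$ tuned by a dedicated technical lemma (Lemma~\ref{lem:sequence a_i}) so that $f(n)=\Theta(\log d)$; the upper bound is then just a position counter of $O(\log d)$ bits, and the lower bound is a \emph{pumping} argument: with $o(\log d)$-bit certificates two pairs of consecutive spine certificates coincide, and duplicating the segment between them yields an accepted caterpillar whose leaf counts are wrong. You instead adapt the two-copies construction of~\cite{GoosS16} to the anonymous sublogarithmic regime, reading $n$ off the degree of a spine endpoint so that no $\Theta(\log n)$ global counter is needed, and you prove the lower bound by pigeonhole over the $2^{\lfloor f(n)\rfloor}$ equal-size instances followed by splicing at the middle padding node. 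Both arguments work; yours gives a fooling-set-style, information-theoretic lower bound that is arguably more robust (it would survive a larger verification radius with minor changes), while the paper's pumping argument is tied to radius $1$ but yields a more natural property and a sequence lemma that is reused for Theorem~\ref{thm:caterpillar-no-gap-radius-2}. One detail to repair in a full write-up: as stated, your verifier checks the two spine endpoints independently, so soundness does not force them to carry the same number of leaves --- a prover could pass both checks with $B_1\neq B_2$ lying in the same plateau of $\lfloor f\rfloor$, producing an accepted graph outside your $\P_f$. The clean fix is to relax the definition of $\P_f$ to allow any $B_1,B_2$ with $\lfloor f(2B_j+12k+47)\rfloor=k$; this still forces $k=\lfloor f(n)\rfloor$ (since $n$ is the average of $2B_1+12k+47$ and $2B_2+12k+47$ and $f$ is non-decreasing), keeps the certificate size at $O(f(n))$, and leaves the splicing argument unchanged since the spliced instance still has mismatched encoding regions.
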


\begin{restatable}{theorem}{ThmCaterpillarNoGapRadiusTwo}
	\label{thm:caterpillar-no-gap-radius-2}
	Let $r > 1$.
	Let $f : \N \to \N$ be a non-decreasing function such that $\lim_{n \rightarrow +\infty}f(n) = +\infty$ and for all integers $1 \leqslant s \leqslant t$ we have $f(t)-f(s) \leqslant \log t - \log s$. Then, if the vertices can see at distance~$r$, there exists a property on caterpillars (of unbounded degree) that can be certified with certificates of size~$O(f(d))$ and not with certificates of size~$o(f(d))$, where $d$ is the diameter.
\end{restatable}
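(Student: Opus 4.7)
The plan is to exploit the extra visibility of radius $r \geq 2$: a spine vertex of a caterpillar then sees the leaf counts of its adjacent spine vertices, so the sequence of leaf counts along the spine functions as a binary tape on which the prover can encode information. The certificates need only carry the ``state'' of a computation over this tape, not the tape itself, decoupling the certificate size from the $\log d$ that would otherwise be needed to store global information. Concretely, I would call a caterpillar \emph{canonical} if its spine $v_1, \dots, v_{d-1}$ has exactly $b_i \in \{1,2\}$ leaves at every spine vertex, and associate to it the word $w(C) = (b_1-1)\cdots(b_{d-1}-1) \in \{0,1\}^{d-1}$. For each $d$ I would pick a subset $S_d \subseteq \{0,1\}^{\lceil f(d)\rceil}$ (to be tuned below), let $L_0 = \bigcup_d \{w \in \{0,1\}^{d-1} : w_1 \cdots w_{\lceil f(d)\rceil} \in S_d\}$, and define $\P$ as the set of canonical caterpillars $C$ with $w(C) \in L_0$.

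For the upper bound I would put on every spine vertex $v_i$ an endpoint/internal role marker together with the state of a finite automaton $A_d$ after reading $b_1\cdots b_{i-1}$. The automaton $A_d$ simply memorizes the first $\lceil f(d)\rceil$ input bits (using $2^{\lceil f(d)\rceil}$ states), then tests membership in $S_d$ and transitions to an accept/reject sink. Because $r \geq 2$, vertex $v_i$ reads $b_{i-1}, b_i, b_{i+1}$ from leaf counts and inspects the certificates of $v_{i-1}, v_i, v_{i+1}$, which is exactly what it needs to verify the transition $q_i = \delta(q_{i-1}, b_{i-1}-1)$. Endpoints additionally enforce the initial/accepting conditions, and each node locally checks structural correctness (spine vs.\ leaf status, each leaf has degree~$1$, each spine vertex has $b_i \in \{1,2\}$). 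Every certificate fits in $O(f(d))$ bits.

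For the lower bound, if some certification had size $s(d) = o(f(d))$ then the number of distinct radius-$r$ views at a spine vertex of a canonical caterpillar would be at most $2^{O(r\cdot s(d))} = 2^{o(f(d))}$. I would then pick $S_d$ so that its Myhill--Nerode index on length-$\lceil f(d)\rceil$ strings equals $2^{\lceil f(d)\rceil}$ (a random $S_d$ does this), relying on the growth hypothesis $f(t)-f(s) \le \log t - \log s$, which gives $2^{f(d)} \le 2^{f(1)} d$ and thus guarantees that a prefix of length $\lceil f(d)\rceil$ comfortably fits inside a canonical caterpillar of diameter $d$. A standard pigeonhole then produces two canonical caterpillars $C, C'$ of diameter $d$ with $w(C) \in L_0$ and $w(C') \notin L_0$ sharing a common certificate pattern on some spine segment; splicing them there yields a caterpillar outside $\P$ that is accepted everywhere, the desired contradiction.

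The main obstacle is the joint design of the sets $S_d$: they must be decidable with an $O(2^{f(d)})$-state automaton while simultaneously forcing $\Omega(f(d))$ bits of certificate in \emph{every} certification scheme. The growth condition on $f$ controls precisely the interplay between these two constraints, by limiting how large $2^{f(d)}$ can be relative to $d$. A secondary technical point is that, although a spine vertex of a general caterpillar may have unbounded degree, on canonical instances the radius-$r$ view at a spine vertex depends only on the leaf counts (each in $\{1,2\}$) of a constant number of spine neighbors, which keeps the pigeonhole argument sharp.
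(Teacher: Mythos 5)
There is a genuine gap here, on two levels. The first is in your upper bound: the automaton $A_d$ and the set $S_d$ it tests both depend on the diameter $d$, but in this anonymous model the vertices have no access to $d$ and no way to verify a value of $d$ (or of $\lceil f(d)\rceil$) announced by the prover. Writing $d$ in the certificates costs $\Theta(\log d)$ bits, and writing only $\lceil f(d)\rceil$ leaves the prover free to lie: on a canonical caterpillar of diameter $d$ with $w(C)\notin L_0$, a cheating prover can announce some other diameter $d'$ whose prefix test happens to pass and simulate $A_{d'}$; nothing in your verification ties the announced parameter to the actual length of the spine, because your automaton cannot count to $d$ with $2^{O(f(d))}$ states. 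This is precisely the obstruction the paper's introduction isolates (one cannot certify a statement of the form ``length $\equiv k \bmod f(n)$'' without a way to validate $f(n)$ against $n$), and it is the reason the gap theorems of the paper exist at all.

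The second problem is structural and shows the approach cannot be repaired within your framework: your canonical caterpillars have maximum degree at most $4$, hence $n=\Theta(d)$, and Theorem~\ref{thm:gap_caterpillar_bounded_degree} proves that on bounded-degree caterpillars, for any verification radius $r$, every property certifiable with $o(\log\log n)$ bits is certifiable with $O(1)$ bits. Consequently no property supported on your canonical instances can have optimal certificate size $\Theta(f(d))$ when $f$ is unbounded but $o(\log\log d)$ (e.g.\ an integer version of $\log\log\log$), whereas the theorem must cover all such $f$. Unbounded degree is essential, and the paper's proof uses it exactly where your scheme is stuck: the $i$-th spine vertex carries $i-1$ leaves, so the structure itself encodes each vertex's position (checkable at radius $2$ by comparing the leaf counts of consecutive spine vertices), and the first vertex carries $b_m$ leaves, which pins the total spine length to the integer $m$ written in every certificate. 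The certificate then consists only of the $O(\log m)=O(f(d))$ bits of $m$, and the lower bound is a pigeonhole over certificate patterns near the heavy endpoint, splicing a prefix of $G_{m'}$ onto $G_m$ to fool the verifier.
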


Note that the condition on $f(t)-f(s)$ only ask for a function which is sublogarithmic and whose growth is sublogarithmic; to avoid arbitrarily long plateaus followed by big jumps. This condition is satisfied by all the usual functions ($\sqrt{\log n}$, $\log \log n$, $\log^*n$, etc.).

All the results mentioned so far are in the anonymous setting. We also explore the impact of identifiers and of the knowledge of $n$ on this landscape, but we delay this discussion for now. 

\subsection{Main techniques}

\paragraph*{Automata and arithmetic perspectives}

To give an overview of our techniques, let us start by describing two perspectives on constant size certification in anonymous paths. 
As said earlier, a basic building block is to certify with a counter that the length is equal to $i \mod k$ for some constant $i$ and $k$. 
One can see this behavior as a run in an automaton with $k$ states inducing a cycle whose initial state being $0$ mod $k$ and the final state being $i$ mod $k$. 
More generally, any constant size certification can be turned into an automaton. There are various ways to do it, but basically, it consists in having states describing (pairs of) certificates, and transitions that connect states that would be accepted by the local verifier. The existence of an accepting run in the automaton is equivalent to the existence of a certificate assignment making the local verifier accept (see Section~\ref{subsec:automata} for a more detailed explanation). (Also see \cite{FeuilloleyBP22}, and also \cite{ChangSS23} for a similar perspective in the context of local problems.)
Another point of view is the one of arithmetic. Intuitively, a constant size certification corresponds to a combination of congruence relations. This point of view allows, for example, to derive that the set of lengths of paths having a given constant size certification is eventually periodic. 

Now, moving on to non-constant certifications, we need to introduce a non-uniform automata model. Indeed, since larger paths imply that we can use larger certificates, it also means larger automata. 
For each certificate size $i$, we will have an automaton $\mathcal{A}_i$, and we will require that an incorrect instance is rejected by all these automata, while a correct instance is accepted by at least one (small enough) $\mathcal{A}_i$. 
If a property has non-constant complexity, then it means that we will need arbitrarily large automata.
If there exists a certification of size $s(n)$, then for any instance of length $n$, the automaton $\mathcal{A}_{s(n)}$ will have to accept.
Now from the arithmetic perspective, there is an equivalence between having non-constant complexity and the fact that the set of correct lengths is not eventually periodic. 

\paragraph*{Establishing gaps in paths and trees}

The reasoning to establish a gap between constant size and $\Theta(\log \log n)$ size in paths is the following.\footnote{Actually this proof gives worse constants than the ones of Theorem~\ref{thm:gap-paths-chrobak}, but it is the one that generalizes to other settings. } 
Consider a language $S$ that does not have a constant size certification. For an arbitrary $i$, we focus on the paths of $S$ that are recognized by $\mathcal{A}_i$ but not by $\mathcal{A}_1,...,\mathcal{A}_{i-1}$ (assume this is non-empty, which is the case for infinitely many $i$). This set is itself recognized by an automata: the intersection of $\mathcal{A}_i$ with the complement of the union of $\mathcal{A}_1,...,\mathcal{A}_{i-1}$. Studying the state complexity of this new automaton (which we do not discuss here) leads us to the fact that it must accept a path of length at most $2^{2^i}$. 
Finally, this upper bound on the minimum length for which some certificate size is needed translates into a lower bound constraint for the optimal complexity, and the double exponential translates into the double logarithm of the theorem. 

This proof can be generalized without much modification to labeled paths (that is, paths with inputs) and to larger verification radius (Theorem~\ref{thm:gap_labeled_distance>1}). This is pretty straightforward, given the previous proof: we use classic automata instead of unary automata, and a slightly different transformation from certification to automata. On the contrary, the proof that gives the right constants for Theorem~\ref{thm:gap-paths-chrobak} utilizes Chrobak normal form for unary automata (see Section~\ref{sec:Chrobak}), and does not generalize easily to the labeled case. 

The proof of the analogous gap in trees (Theorem~\ref{thm:gap-tree-radius-1}), is based on the same insights, but is much more involved. 
Basically, we adapt our automata to walk in the tree,  reading the pending subtrees as labels. Two challenges are that a given tree can be read in many different ways, and that the alphabet is infinite. We argue that the complexity will be captured by the walks that follow a maximum path in the tree, and that the intersection, complement and union that are used in the proof are not harmed by the infinite alphabet.  

\paragraph*{The $\log\log n$ language}

We now turn our attention to Theorem~\ref{thm:certif_product_primes}, which establishes the existence of a language with optimal certification size $\Theta(\log \log n)$ in anonymous unlabeled paths. 
Let us start by giving some intuition about how we came to this result. Intuitively, having optimal certificates of size $\Theta(\log \log n)$ means that certificates of size $k$ allow to differentiate between correct and incorrect instances of size order $2^{2^k}$. 
When we implement simple counters modulo some constant using $k$ bits, the modulo is of order $2^k$, hence we can distinguish between correct and incorrect instances of size $2^k$ but not more, in the sense that $q$ and $q+2^k$ will be classified in the same category. Hence we need to do an ``exponential jump'' in terms of distinguishing capability. 
We now make two observations. First, given a set of pairs $(r_i,m_i)_{i \le \ell}$, the fact that the path length is \emph{not equal} to $r_i \mod m_i$ for some $i \le \ell$ can be certified in $\max_i \log (m_i)$. Indeed, since it is sufficient to prove that one of the modulo equation is not satisfied, we can simply give explicitly $m_i$ in the all certificates, in addition to the counter modulo $M_i$. Second, using the Chinese remainder theorem, for any two numbers $a$ and $b$, there must exist an integer $m$ exponentially smaller that $\max (a,b)$ such that $a \not\equiv b \mod m$. 
This makes the existence of a $\Theta(\log \log n)$ language believable, but making it into a real construction is another kettle of fish. We end up considering the language of the path whose length \emph{is not} the product of consecutive primes. The scheme consists in certifying that either there is a gap or a repetition in the sequence of primes, via adequate counters, or there is an inconsistency between the largest prime used and what the length of the path is equal to, modulo a well-chosen (small) integer.  

\paragraph*{The case of cycles}

Theorem~\ref{thm:cycles} states that in cycles there is a gap between constant and logarithmic certification size. This is in sharp contrast with the case of paths, which is surprising given the similarity of the two structures. 
At a very intuitive level, the difference stems from the fact that the only congruence that we can check in cycles are of the form $0 \mod m$, and not arbitrary $i \mod m$. Indeed, in a path, it is easy to have one endpoint checking counter value zero and the other checking counter value $i$, while in cycles there are no endpoints. The natural way to simulate the endpoints is to designate a specific vertex $v$ in the cycle to check that the counter starts at $0$ from one side and reaches $i$ from the other side, but this is not robust. Indeed there could be more than one designated vertices and still all nodes could accept if each segment is $i \mod m$. This breaks the congruence, except if $i=0$.  
This restriction in the congruences prevents us from using the arithmetic tools of the proof of Theorem~\ref{thm:certif_product_primes}. 

Let us now sketch the proof of the gap between certificates of size $O(1)$ and $\Theta(\log n)$ in cycles. We again use an automata-like point of view, but need to adapt it to work without starting and ending nodes. We consider a sequence of directed certificate graphs\footnote{We consider ``graphs" and not ``automata" here since we have no initial or accepting states, a sequence of certificate only gives us a walk in the graph of pairs of certificates.} $(G_i)_i$. Here, a cycle $C$ with certificates of size~$i$ $(c_1,\ldots,c_n)$ is accepted by $G_i$ if and only if $(c_1,c_2),(c_2,c_3),\cdots,(c_n,c_1)$ is a closed walk in $G_i$. For the proof, we consider the first length $n_k$ that is accepted by the $k$-th graph $G_k$ (corresponding to certificates of size~$k$), but not by smaller ones. Now, if the certificate size is in $o(\log n)$, this walk is much longer than the size of $G_k$, and therefore it has a lot of cycles. We define a notion of elementary cycle of the graph, and decompose this walk as a linear combination of elementary cycles. If $d$ is the greatest common divisor of these elementary cycles, then $n_k=d\times q$, for some $q$. We consider a set of numbers of the form $p\times d$, such that (1) $p$ is prime, (2) $p\times d < n_k$ and (3) $p\times d$ is still much larger than the size of $G_k$. 
By a generalization of Bézout's identity, (3) implies that all these numbers are accepted by $G_k$ hence they belong to the language. But since they are smaller than $n_k$, by minimality, they must be recognized by smaller graphs too. We argue by pigeon-hole principle that there must exist one graph $G_{k'}$, $k'<k$ that has two walks of lengths $p_1d$ and $p_2d$ that intersect. Then by concatenating portions of the two walks, we can prove that $G_{k'}$ actually accepts all the cycles of size $ap_1d + bp_2d$. Finally, using again a generalization of Bézout's identity we can prove that the cycle of length $n_k$ is also recognized by $G_{k'}$, which contradicts the minimality of~$k$.   

\paragraph*{``No gap'' results}

We also establish that for several settings there is no gap in the complexity, that is, for any well-behaved function $f$, there exists a property that has certificates of size $O(f(n))$ but not $o(f(n))$. 

Let us sketch the technique of the upper bound for the setting of Theorem~\ref{thm:caterpillar-no-gap-radius-2}: restricting to caterpillars, using certificates whose size is a function of $d$, with verification radius 2.  
Given a function $f$, we define a sequence of integers $(b_k)$, such that the $k$-th term is roughly $f^{-1}(\log k)$. 
The correct instances are of the following form: a path of length $b_k$, for some $k$, such that the $i$-th node has $i$ leaves, and except for the first one, which has $b_k$ leaves instead of 1.
A compact way to certify these instances is to give $k$ to every node. The nodes in the middle check the growth of the number of attached leaves (thanks to their radius 2 verification) and the endpoints check that they have $b_k$ leaves. 
The diameter is $b_k$, and the number of bits used is $\log k$ which is basically~$f(d)$.  

\subsection{Discussions and open problems}
\label{sec:discussions}

Before we move on to the technical parts, let us discuss open problems and future directions. 

\subparagraph*{Full understanding of paths}
For paths, we do not know what happens between $\Theta(\log \log n)$ and $\Theta(\log n)$. By sparsifying the set of primes considered in the $\log \log n$ language (Theorem~\ref{thm:certif_product_primes}), we can get languages for which the natural upper bound can be positioned in between these two regimes, but Theorem~\ref{thm:gap-paths-chrobak} does not provide  a matching lower bound anymore, hence we cannot prove that there is no gap.
\begin{open}
	Is there a gap between $\Theta(\log \log n)$ and $\Theta(\log n)$ in paths? 
\end{open}

To solve this question, it would good to get a better understanding of the $\Theta(\log \log n)$ regime, or even a characterization. (For now, we just have one example of a property in this regime.)

\subparagraph*{General graphs}

For general graphs, we prove in Theorem~\ref{thm:general-graphs} that if the radius is larger than 1, then there is no gap. For radius 1, it is unclear whether we should expect the same gap as for trees or not. Our automata-related tools seem too weak to tackle this case (the generalization from trees to graphs in automata theory is notoriously intricate).

\begin{open}
	Is there a gap between $O(1)$ and $\Theta(\log \log d)$ for general graphs? For bounded degree graphs?
\end{open}

\subparagraph*{The role of identifiers and of the knowledge of $n$}

Our main results are for the anonymous setting, where the nodes do not have the knowledge of $n$. In Section~\ref{sec:extensions-ID}, we explore several settings with identifiers or (approximate) knowledge of $n$. We can for example prove that a very sharp estimate of $n$ allows to break the $\log\log n$ barrier of Theorem~\ref{thm:gap-paths-chrobak}, while arbitrarily large identifiers do not help. It is still very unclear how these different assumptions affect the complexity landscape.

\subparagraph*{Extensions to self-stabilizing algorithms}

We described in the introduction our wish to chart the space landscape of distributed graph algorithms, and as a first step we focused on local certification. A natural next step is to transfer our results to self-stabilizing algorithms. 
As mentioned earlier, the two are tightly connected, since the space used for silent self-stabilization is basically captured by the space needed to certify the solution correctness~\cite{BlinFP14}. 
Actually, \cite{BlinFP14} implements a transformation from a local certification to a self-stabilizing algorithm, that does require an additive $O(\log n)$ for the memory of the algorithm in comparison to the local certification size. This is usually harmless, but in the setting of this paper, which focuses on sublogarithmic regime, the result is unusable. The restricted topology might allow to shave the additional logarithmic term.  

\begin{open}
	Are the optimal certificate size and the optimal memory of a self-stabilizing algorithm asymptotically equal in restricted topologies (paths, cycles, trees)?
\end{open}

When it comes to transferring the sub-log-logarithmic gap for trees to (silent) self-stabilizing algorithms, it would actually be enough to understand whether labelings accepted by tree automata (which are equivalent to monadic second-order on trees) can be built in constant space in a self-stabilizing manner. Incidentally, understanding when one can make this transfer while keeping polynomial time is also a very intriguing question (see \cite{BlinDF20} for a discussion). 

\subparagraph*{Different landscapes}

Theorem~\ref{thm:folklore-no-gap} establishes that for any super-logarithmic complexity $f$ there exists a property that requires exactly that complexity. But the construction is very unnatural, since the nodes need to know the function $f$, and the instances are extremely specific. Hence the following question. 

\begin{open}
	Are there super-logarithmic gaps in the complexity of natural properties, for a reasonable definition of ``natural''? What about monadic second-order (MSO) properties?
\end{open}

It is known that there are properties for which the optimal certification size is $\Theta(n)$, for example having diameter $3$~\cite{Censor-HillelPP20,BousquetEFZ24} and also $\Theta(n^2)$~\cite{GoosS16}. 
As far as we know the only works about what happens in between are~\cite{BousquetCFPZ24} and \cite{Miyamoto2024}, that have established that for forbidden subgraphs, there are many polynomial complexities, when using verification radius 2. 

We mention MSO properties in the open problem because they have received a considerable amount of attention in recent years in local certification~\cite{BaterisnaC25, CookKM25,FeuilloleyBP22, FraigniaudMRT24, FraigniaudM0RT23}, and capture many classic properties and problems through logic. 

Finally, another research direction is to chart landscapes for other parameters. For example, \cite{BousquetFZ24} explored the certification complexity as a function of the maximum degree.

\subsection{Outline of the paper}

The organization of the paper is the following. After the model section (Section~\ref{sec:model}), we have a series of technical sections, each of them containing basically one theorem and its proof. 
These are arranged in the following way. First, the sections 3 to 6 are dedicated to paths and cycles, and contain most of the basic insights of our work. Then our more involved gap results are grouped in Section 8 to 10, and our ``no gap'' results are in Sections 11 to 14. Finally, Section~\ref{sec:extensions-ID} discusses the knowledge of $n$ and the identifier assumptions. See the table of contents after the title page for more specific pointers.



	\section{Model and definitions}
	\label{sec:model}
	
	\subsection{Graphs}
	
	All the graphs we consider are finite, simple, loopless, connected, and undirected. For completeness, let us recall several basic graph definitions. Let $G$ be a graph. We denote its set of vertices (resp. edges) by~$V(G)$ (resp. by~$E(G)$), or simply by~$V$ (resp. by~$E$) if~$G$ is clear from the context. Let $u,v \in V(G)$. The \emph{distance} between $u$ and $v$, denoted by~$d(u,v)$, is the smallest number of edges in a path from~$u$ to~$v$. The \emph{diameter} of~$G$ is the largest distance between any two vertices. If $G$ is a path, its \emph{length} is its number of vertices (or equivalently, its diameter plus one). We say that $G$ is a \emph{caterpillar} if, when removing all the degree-1 vertices in~$G$, the resulting graph is a path, called the \emph{central path} of~$G$ (which is induced by the set of all vertices of degree at least two in~$G$).
	
	\subsection{Local certification}
	
	
	Let $G=(V,E)$ be a graph. We will sometimes assume that the vertices of~$G$ are equipped with \emph{unique identifiers} and/or with \emph{inputs}. An \emph{identifier assignment} for $G$ is an \emph{injective} mapping from $V$ to some set~$I$ (the set of identifiers) and an \emph{input function} is a mapping from $V$ to some set $L$ (the set of labels). If $G$ is equipped with identifiers, we say that we are in the \emph{locally checkable proof model}, else we say that we are in the \emph{anonymous model}. If the vertices of $G$ have inputs, we say that $G$ is \emph{labeled}.
	Finally, let $C$ be a non-empty set. A \emph{certificate assignment} of $G$ with certificates in~$C$ is a mapping~$c : V \to C$.
	
	Let~$r \geqslant 1$ and $c$ be a certificate assignment for~$G$. Let $u \in V$. The~\emph{view of~$u$ at distance~$r$} consists in:
	\begin{itemize}
		\item all the vertices at distance at most $r$ from~$u$, and all the edges having at least one endpoint at distance at most~$r-1$,
		\item the restriction of~$c$ to these vertices,
		\item the restriction of the identifier assignment (if any) and of the input function (if any) to these vertices.
	\end{itemize}

	A~\emph{verification algorithm (at distance~$r$)} is a function taking as input the view at distance~$r$ of a vertex, and outputting a decision, \emph{accept} or~\emph{reject}. In all this paper, if~$r$ is not mentioned in a statement of a result, it is by default equal to~$1$. When we will consider settings where $r \geqslant 2$, it will always be explicitly written.
	
	Let $\C$ be a class of (possibly labeled) graphs and $\P$ be a property on graphs in~$\C$. Note that if we consider labeled graphs, the fact that a graph $G \in \C$ satisfies the property $\P$ does not depend only on the structure of $G$: it depends also on its input function. In other words, it is possible that a graph $G \in \C$ satisfies $\P$ and that another graph $G' \in \C$ with the same structure as~$G$ but with a different input function does not satisfy~$\P$. Let $s : \N \to \N$. We say that there exists a \emph{certification scheme for~$\P$} with certificates of size~$s$ if there exists a verification algorithm such that the two following conditions are satisfied:
	
	\begin{itemize}
		\item (Completeness) For every $n$-vertex graph~$G \in \C$ that satisfies~$\P$, and for every identifier assignment of $G$ (if we are in the locally checkable proof model), there exists a certificate assignment in $\{0, \ldots, 2^{s(n)}-1\}$ such that the verification of every vertex accepts (we says that the graph is globally accepted).
		\item (Soundness) For every graph~$G \in \C$ that does not satisfy~$\P$, for every identifier assignment of~$G$ (if we are in the locally checkable proof model), for every $k \in \N$ and every certificate assignment in $\{0, \ldots, 2^{k}-1\}$, at least one vertex rejects.
	\end{itemize}
	
	Let us emphasize that, if $G$ does not satisfy~$\P$, then for any assignment of certificates \emph{of any size}, at least one vertex rejects.
	Let us also point out the fact that in a certification scheme for a property~$\P$ in some class~$\C$ (in this paper, we will for instance consider the cases where $\C$ is the class of paths, of cycles, of trees...), the vertices have the promise that the graph belongs to~$\C$. In other words, the certification scheme depends on the property~$\P$ \emph{and} on the class~$\C$, and we are not concerned by the output of the verification procedure of the vertices in graphs that do not belong to~$\C$.
	
	\section{Gap between $O(1)$ and $\Theta(\log\log n)$ in paths}
	\label{sec:gap-path}
	
	
	
	The goal of this section is to prove the following result:
	
	\begin{restatable}{theorem}{ThmGapPaths}
		\label{thm:gap-paths-state-complexity}
		Let $c > 2$ and $N \in \N$. Let $\P$ be a property on paths that can be certified with certificates of size $s(n):=\left\lfloor\frac{\log \log n}{c}\right\rfloor$ for all $n \geqslant N$. Then, $\P$ can also be certified with constant-size certificates.
	\end{restatable}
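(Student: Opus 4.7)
I will argue by contradiction: suppose $\P$ satisfies the hypothesis on $s$ but cannot be certified with constant-size certificates. For each $k \geq 1$, let $L_k \subseteq \N$ denote the set of path lengths that admit a valid certification with certificates drawn from $\{0, \ldots, 2^k - 1\}$. Then $L_1 \subseteq L_2 \subseteq \cdots$, the hypothesis on $s$ gives $n \in L_{s(n)}$ whenever the length-$n$ path (with $n \geq N$) satisfies $\P$, and non-constant-certifiability means that $B_k := L_k \setminus (L_1 \cup \cdots \cup L_{k-1})$ is non-empty for infinitely many $k$.

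The first step is to encode each $L_k$ as the language of a unary NFA $\mathcal{A}_k$. Following the standard construction sketched in the introduction, take as states the ordered pairs of $k$-bit certificates (together with endpoint markers), and include a transition from $(c, c')$ to $(c', c'')$ whenever the local verifier accepts the triple $(c, c', c'')$; initial and accepting states are those corresponding to valid endpoint configurations. Then $1^n \in L(\mathcal{A}_k)$ iff a length-$n$ path admits a $k$-bit certification, and $\mathcal{A}_k$ has at most $s_k = O(4^k)$ states.

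The second step is a state-complexity bound on $B_k$. Writing $B_k = L(\mathcal{A}_k) \cap \bigcap_{i<k} \overline{L(\mathcal{A}_i)}$, I determinize each $\mathcal{A}_i$ via the subset construction (blowup to $2^{s_i}$ states), complement (no blowup), and take a product; this yields a DFA for $B_k$ with at most $\prod_{i \leq k} 2^{s_i} = 2^{O(4^k)}$ states. Since a non-empty unary language recognized by a DFA of size $M$ contains a word of length at most $M-1$, there exists $n_k \in B_k$ with $n_k \leq 2^{O(4^k)}$. A simple partition argument (each length lies in at most one $B_k$) shows that $n_k \to \infty$ as we range over the infinitely many $k$ with $B_k \neq \emptyset$, so eventually $n_k \geq N$ and the hypothesis on $s$ applies.

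Combining the two bounds yields the contradiction. From $n_k \leq 2^{O(4^k)}$ I get $\log\log n_k \leq 2k + O(1)$, i.e., $k \geq \tfrac{1}{2}\log\log n_k - O(1)$; and by construction $n_k \notin L_{k-1}$, so the scheme must use at least $k$ bits at $n_k$, giving $s(n_k) \geq k \geq \tfrac{1}{2}\log\log n_k - O(1)$. For $c > 2$ and $k$ large this contradicts $s(n_k) \leq \log\log n_k / c$. The main obstacle is keeping the state-complexity accounting tight enough: paying the exponential determinization cost exactly once per automaton before intersecting is what lands the exponent at $\tfrac{1}{2}$ and hence at $c > 2$, while a refined analysis via Chrobak normal form for unary NFAs (used to prove Theorem~\ref{thm:gap-paths-chrobak}) would be needed to push the constant down to $c > 1$.
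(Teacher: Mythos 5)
Your proposal is correct and follows essentially the same route as the paper's proof: encode $k$-bit certifications as unary NFAs with $O(4^k)$ states, bound the state complexity of $L(\mathcal{A}_k)\setminus\bigcup_{i<k}L(\mathcal{A}_i)$ by $2^{O(4^k)}$ via determinization/complementation/product, invoke the shortest-accepted-word bound to get $n_k\leqslant 2^{2^{2k+O(1)}}$, and conclude $s(n_k)\geqslant k\geqslant \tfrac{1}{2}\log\log n_k-O(1)$. The only cosmetic difference is the order of Boolean operations (the paper unions the $\mathcal{A}_i$ for $i<k$ first and determinizes once, while you determinize each separately and take a product), which yields the same exponent, and your closing remark about Chrobak normal form being the tool for $c>1$ matches the paper's Section on Theorem~\ref{thm:gap-paths-chrobak}.
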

	
	Note that the constant $c$ is larger here than in Theorem~\ref{thm:gap-paths-chrobak}. We prove Theorem~\ref{thm:gap-paths-chrobak} later in the paper (Section~\ref{sec:Chrobak}) with tools that are specific to unlabeled paths. We choose to prove this weaker version of the theorem first, because the techniques can be adapted to more general settings.

	\subsection{Preliminary: automata point of view}\label{subsec:automata}
	
	
	For every property $\P$ on paths, we can associate a subsets of integers $S$ such that a path is accepted if and only if its length is in $S$. The property $\P$ is equivalent to $S$ and in the rest of the proof, we will completely forget the property $\P$ and only focus on $S$.
	For every $k \in \N$, let us denote by $C_k$ the set of certificates of size~$k$, and by $S_k$ the set of lengths of the paths that are accepted with certificates in $C_k$. We have: $S = \bigcup_{k \in \N}S_k$.
	
	
	The set $S_k$ is a regular language that is accepted with the following nondeterministic finite automaton $\A_k$ over a unary alphabet. The set of states is the set of pairs of certificates of size~$k$ plus two additional states, $i$ and $f$, that is: $C_k^2 \cup \{i,f\}$. There is a single initial state which is~$i$ and a single final state which is~$f$. The transitions are the following:
	\begin{itemize}
		\item for every $c_1,c_2 \in C_k$, we put a transition between states $i$ and $(c_1,c_2)$ if a vertex of degree~1 that has certificate $c_1$ and has a neighbor with certificate $c_2$ accepts;
		\item for every $c_1,c_2,c_3 \in C_k$, we put a transition between states $(c_1,c_2)$ and $(c_2,c_3)$ if a vertex of degree~2 that has certificate $c_2$ and has two neighbors with certificates $c_1$ and $c_3$ accepts;
		\item for every $c_1, c_2 \in C_k$, we put a transition between states $(c_1,c_2)$ and $f$ if a vertex of degree~1 that has certificate $c_2$ and has a neighbor with certificate $c_1$ accepts;
		\item if there exists $c \in C_k$ such that an isolated vertex with certificate~$c$ accepts, we put a transition from~$i$ to~$f$.
	\end{itemize}
	
	Let us give an example to make things more concrete. Assume that we want to certify that the length of a path is divisible by~$3$. There is an easy way to do it by using three certificates 0, 1, and 2. The prover fixes an endpoint~$u$ and for every vertex~$v$, the certificate it gives to $v$ is $d(u,v) \mod 3$. Then, every vertex $v$ checks that one of the two following conditions is satisfied:
	\begin{itemize}
		\item $v$ has degree~$1$, has certificate~$0$ or~$2$, and its neighbor has certificate~$1$, or
		\item $v$ has degree~$2$, and the set of certificates of $v$ and its two neighbors is $\{0,1,2\}$.
	\end{itemize}
	
	The automaton corresponding to these certificates is represented on Figure~\ref{fig:example_automaton}.
	
	\begin{figure}[h!]
		\centering
		\begin{tikzpicture}
			\scriptsize
			\node[state, initial, minimum size=0.7cm] at (0, 0) (i) {$i$};
			\node[state, minimum size=0.7cm] at (1.5, 0.5) (01) {$0,1$};
			\node[state, minimum size=0.7cm] at (3, 0.5) (12) {$1,2$};
			\node[state, minimum size=0.7cm] at (4.5, 0.5) (20) {$2,0$};
			\node[state, minimum size=0.7cm] at (1.5, -0.5) (02) {$0,2$};
			\node[state, minimum size=0.7cm] at (3, -0.5) (10) {$1,0$};
			\node[state, minimum size=0.7cm] at (4.5, -0.5) (21) {$2,1$};
			\node[state, accepting, minimum size=0.7cm] at (6, 0) (f) {$f$};
			
			\draw[->, >=stealth]
			(i) edge node{} (01)
			(i) edge[bend right=80] node{} (21)
			(01) edge node{} (12)
			(12) edge node{} (20)
			(20) edge[bend right=40] node{} (01)
			(21) edge node{} (10)
			(10) edge node{} (02)
			(02) edge[bend right=40] node{} (21)
			(12) edge[bend left=80] node{} (f)
			(10) edge[bend right=80] node{} (f)
			;
		\end{tikzpicture}
		\caption{The automaton corresponding to the certificates used to certify that the length of a path is divisible by~$3$. The states corresponding to the tuples $(0,0)$, $(1,1)$ and $(2,2)$ are not represented because the have no incoming nor outgoing transitions. The final state is the state $f$.}
		\label{fig:example_automaton}
	\end{figure}
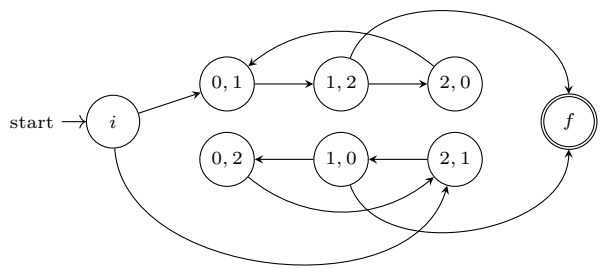

	\begin{lemma}
		For every $t \in \N$, a path on~$t$ vertices is accepted with certificates of size~$k$ if and only if there exists an accepting run of length~$t$ (ie, going through~$t$ transitions) in~$\A_k$.
	\end{lemma}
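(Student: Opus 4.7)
The plan is to establish a natural bijection between certificate assignments $c_1,\dots,c_t$ on the path and runs of length $t$ in $\A_k$, where the $j$-th state visited (for $1 \leqslant j \leqslant t-1$) corresponds to the pair $(c_j, c_{j+1})$, the run starts at $i$ and ends at $f$. The construction of $\A_k$ was designed precisely so that each type of transition mirrors the acceptance condition of the verifier at one vertex of the path, so the proof should be essentially a routine unfolding of the definitions, handled by induction or direct translation. I would split into two directions and treat the degenerate case $t=1$ separately.

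For the forward direction, suppose the path $v_1,\dots,v_t$ is accepted under certificates $c_1,\dots,c_t$. If $t=1$, the isolated vertex $v_1$ accepts with certificate $c_1$, which by the fourth bullet in the definition of $\A_k$ gives a transition $i \to f$; this is an accepting run of length $1$. If $t \geqslant 2$, I would exhibit the candidate run
\[
i \;\to\; (c_1,c_2) \;\to\; (c_2,c_3) \;\to\; \cdots \;\to\; (c_{t-1},c_t) \;\to\; f
\]
and verify transition by transition. The first transition exists because $v_1$ is a degree-$1$ vertex with certificate $c_1$ and neighbor with certificate $c_2$ that accepts (first bullet). For $2 \leqslant j \leqslant t-1$, the transition $(c_{j-1},c_j)\to(c_j,c_{j+1})$ exists because $v_j$ has degree $2$, certificate $c_j$, neighbors with certificates $c_{j-1}$ and $c_{j+1}$, and accepts (second bullet). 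The final transition exists because $v_t$ is a degree-$1$ vertex with certificate $c_t$ and neighbor $v_{t-1}$ with certificate $c_{t-1}$ that accepts (third bullet). The run uses exactly $t$ transitions.

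For the backward direction, suppose there is an accepting run of length $t$ in $\A_k$. If $t=1$, the only possibility is a direct transition $i \to f$, which by construction arose from some certificate $c$ making an isolated vertex accept; assigning $c$ to the single vertex of the path gives acceptance. If $t \geqslant 2$, the run must start with a transition $i \to (c_1,c_2)$, end with $(c_{t-1},c_t) \to f$, and in between use transitions of the form $(c_{j-1},c_j) \to (c_j,c_{j+1})$: note that the second coordinate of each intermediate state must coincide with the first coordinate of the next one, so the sequence $c_1,\dots,c_t$ is unambiguously extracted from the run. Assigning certificate $c_j$ to vertex $v_j$, I read off from each transition the fact that the corresponding local view is accepted by the verifier, so every vertex of the path accepts and the path is globally accepted with certificates of size $k$.

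The only mildly delicate step is making sure the $t=1$ case is not accidentally excluded (it is handled by the auxiliary $i \to f$ transition and is the reason that case was added to the definition) and that, in the $t \geqslant 2$ case, the extracted sequence $(c_j)$ is well-defined — which is guaranteed because the state-pair structure forces the overlap of consecutive states. No obstacle beyond bookkeeping is expected.
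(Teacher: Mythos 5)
Your proposal is correct and follows essentially the same route as the paper: the forward direction is verified transition by transition exactly as in the paper's proof, and the backward direction (which the paper dispatches in one sentence) is the same extraction of certificates from the overlapping state pairs, just spelled out in more detail.
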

	
	\begin{proof}
		Assume that a path of length~$t$ is accepted with certificates of size~$k$. Let us denote the vertices of the path by~$u_1, \ldots, u_t$, and their certificates by~$c_1, \ldots, c_t$. If~$t=1$, by definition there is a transition from~$i$ to~$f$ so there is an accepting run of length~$1$ in~$\A_k$. If $i \geqslant 2$, since $u_1$ accepts, there is a transition from~$i$ to~$(c_1, c_2)$. For every $i \in \{2, \ldots, t-1\}$, since $u_i$ accepts, there is a transition from~$(c_{i-1},c_i)$ to~$(c_i,c_{i+1})$. And since~$u_t$ accepts, there is a transition from $(c_{t-1},c_t)$ to~$f$. Thus, there exists an accepting run of length~$t$ in~$\A_k$.
		
		Conversely, every accepting run of length~$t$ in~$\A_k$ can be converted in an assignment of certificates to a path on~$t$ vertices such that each vertex accepts.
	\end{proof}

	\subsection{Proof via state complexity}

	\begin{lemma}
		\label{lem:automata}
		Let $\Sigma$ be a (possibly infinite) alphabet, and let $L_\mathcal{A}, L_\mathcal{B}$ be two languages over $\Sigma^\ast$ recognized by nondeterministic finite automata $\mathcal{A}$ and $\mathcal{B}$, having $n_\mathcal{A}$ and $n_\mathcal{B}$ states respectively.
		Then:
		\begin{itemize}
			\item $L_\mathcal{A} \cup L_\mathcal{B}$ can be recognized by an automaton having $n_\mathcal{A} + n_\mathcal{B}$ states
			\item $L_\mathcal{A} \cap L_\mathcal{B}$ can be recognized by an automaton having $n_\mathcal{A}n_\mathcal{B}$ states
			\item $\overline{L_\mathcal{A}}$ can be recognized by an automaton having $2^{n_\mathcal{A}}$ states
		\end{itemize}
	\end{lemma}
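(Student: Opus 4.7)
The plan is to handle each of the three claims with the standard textbook constructions and verify they work for a possibly infinite alphabet $\Sigma$. Since these are classical state complexity bounds, the proof is essentially bookkeeping, and the only mild issue I anticipate is making sure that totality of the transition function and the single/multiple initial state convention do not interact badly with $\Sigma$ being infinite.

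For $L_\mathcal{A} \cup L_\mathcal{B}$, I take the disjoint union of the state sets of $\mathcal{A}$ and $\mathcal{B}$, keeping both sets of initial states, both sets of final states, and all the original transitions. A run in the new automaton is exactly a run in one of the two originals, so the recognized language is the union. If the model insists on a unique initial state, the two initial states can be identified into one (they have no incoming transitions in the paper's construction), so we still obtain at most $n_\mathcal{A} + n_\mathcal{B}$ states in either convention.

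For $L_\mathcal{A} \cap L_\mathcal{B}$, I use the product construction. States are pairs $(p,q) \in Q_\mathcal{A} \times Q_\mathcal{B}$; initial and accepting states are pairs of initial (respectively accepting) states; a transition $(p,q) \to (p',q')$ on a letter $a$ is included iff $p \to p'$ on $a$ is a transition of $\mathcal{A}$ and $q \to q'$ on $a$ is a transition of $\mathcal{B}$. Projecting any run onto its two coordinates shows that the product accepts $w$ iff both $\mathcal{A}$ and $\mathcal{B}$ do, giving $n_\mathcal{A} n_\mathcal{B}$ states.

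For $\overline{L_\mathcal{A}}$, I apply the subset construction to obtain a DFA equivalent to $\mathcal{A}$ and then swap accepting and non-accepting states. The states of the DFA are subsets of $Q_\mathcal{A}$; the transition from $S$ on letter $a$ is $\{q' : \exists q \in S,\ q \to q' \text{ on } a\}$; the accepting subsets are those meeting the set of final states of $\mathcal{A}$. With an infinite alphabet, totality is automatic: letters for which no transition exists simply send $S$ to the empty set, itself a valid subset. The state count is bounded by $2^{n_\mathcal{A}}$ regardless of $|\Sigma|$, and swapping final and non-final states preserves the count while complementing the language. This is the only place where $\Sigma$ being infinite even arises, and it causes no real obstacle, so I expect no genuine difficulty in the argument.
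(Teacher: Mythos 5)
Your proposal is correct and follows exactly the same route as the paper: disjoint union for $L_\mathcal{A} \cup L_\mathcal{B}$, the product construction for $L_\mathcal{A} \cap L_\mathcal{B}$, and determinization via the subset construction followed by swapping accepting and non-accepting states for $\overline{L_\mathcal{A}}$. Your remark that the infinite alphabet causes no trouble because totality is handled by sending missing letters to the empty set is a small but welcome extra check that the paper leaves implicit.
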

	
	These statements are folklore. For the first point, one simply has to take the disjoint union of the two automata. For the second, we create an automaton whose states are the pairs of states of $\mathcal{A}$ and $\mathcal{B}$ and we add a transition labeled $\sigma$ from $(a,b)$ to $(a',b')$ if there is a transition labeled $\sigma$ from $a$ to $a'$ in $\mathcal{A}$ and from $b$ to $b'$ in $\mathcal{B}$. Finally, to get the last result, we first have to make the automaton deterministic (which might need an exponential explosion~\cite{RabinS59}) and then flip final states and non-final states.
	
	\begin{proof}[Proof of Theorem~\ref{thm:gap-paths-state-complexity}.]
		Using the notations introduced previously, the automaton $\A_k$ has $M_k := 2^{2k} + 2$ states and recognizes $S_k$. Let $\P$ be a property that can not be recognized with constant-size certificates. Then, the set $X \subseteq \N$ containing all the integers $k \in \N$ such that $S_k \not\subseteq \bigcup_{i \leqslant k-1} S_i$ is infinite. For $k \in X$, let $n_k$ be the smallest integer in $S_k \setminus \bigcup_{i \leqslant k-1} S_i$. Let $k \in X$ be such that $n_k \geqslant N$ (such an integer $k \in X$ exists because $X$ is infinite and for all distinct $k, k' \in X$ we have $n_k \neq n_{k'}$).
		
		Since $n_k \in S_{s(n_k)}$ and $n \notin S_i$ for $i<k$, we have $s(n_k) \geqslant k$. By Lemma~\ref{lem:automata}, $\bigcup_{i\leqslant k-1} S_i$ can be recognized by an automaton that has $\sum_{i=1}^{k-1}M_i \leqslant 2^{2k}$ states. Thus, by Lemma~\ref{lem:automata}, $\overline{\bigcup_{i\leqslant k-1} S_i}$ can be recognized by an automaton that has at most $2^{2^{2k}}$ states.
		Again by Lemma~\ref{lem:automata}, $S_k \setminus \bigcup_{i \leqslant k-1} S_i$ can be recognized by an automaton having at most $M_k \cdot 2^{2^{2k}}$ states. Since $n_k$ is the smallest integer in $S_k \setminus \bigcup_{i \leqslant k-1} S_i$, it is at most equal to the number of states of this automaton, so it follows that $n_k \leqslant M_k \cdot 2^{2^{2k}} \leqslant 2^{2^{2k+1}}$. Finally we get $s(n_k) \geqslant k \geqslant \frac{\log \log n_k - 1}{2}$, and the result follows.
		%
		%
	\end{proof}
	
	
	\subparagraph*{Pointers to generalizations}
	
	In Section~\ref{sec:gap-labeled-larger-radius}, we prove the more general following extension to labeled paths (that is, paths with inputs) and to larger verification radius. 
	
	\begin{restatable}{theorem}{ThmGapLabeledDistance}
		\label{thm:gap_labeled_distance>1}
		Let $N \in \N$, $\Sigma$ be a finite alphabet, $r \geqslant 1$ and $c > 2$. Assume that the vertices can see at distance~$r$.
		Let $\P$ be a property on labeled paths (where the labels are letters in $\Sigma$) that can be certified with certificates of size~$s(n):=\left\lfloor\frac{\log \log n}{cr}\right\rfloor$ for every $n \geqslant N$. Then $\P$ can also be certified with certificates of constant size.
	\end{restatable}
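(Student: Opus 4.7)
The plan is to mirror the proof of Theorem~\ref{thm:gap-paths-state-complexity} almost verbatim, with two adjustments that account for labels and larger radius. First, I would generalize the automaton construction of Section~\ref{subsec:automata}: for each certificate size~$k$, build a nondeterministic finite automaton $\mathcal{A}_k$ over the alphabet~$\Sigma$ (the labels now ride on transitions instead of sitting in the states) whose states encode a sliding window of the last $2r$ pairs $(\text{label},\text{certificate})$ read so far, plus a modest set of boundary states for the initial and final segments of length less than~$2r$. A transition labeled $\sigma \in \Sigma$ nondeterministically chooses a certificate~$c$, shifts the window to append~$(\sigma,c)$, and checks that the vertex which now sits exactly~$r$ positions behind the reading head (so that its entire radius-$r$ view is contained in the window) is locally accepted; the boundary states perform the analogous check for endpoints with incomplete view. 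The standard bijection between accepting runs of length~$n$ and globally accepted certificate assignments on $n$-vertex labeled paths yields $L(\mathcal{A}_k)=S_k$, where $S_k$ denotes the set of labelings admitting a globally accepted certificate assignment of size~$k$.

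Next I would count states. Each non-boundary state is a tuple in $(\Sigma \times \{0,\ldots,2^k-1\})^{2r}$, and the boundary states add only $O(M_k)$ more, giving
\[
M_k \;=\; O\!\bigl((|\Sigma|\cdot 2^k)^{2r}\bigr) \;=\; O\!\bigl(2^{2rk}\bigr),
\]
since $|\Sigma|$ and $r$ are fixed constants absorbed into the $O(\cdot)$.

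Then I would replay the argument of Theorem~\ref{thm:gap-paths-state-complexity} without change. If $\P$ is not certifiable in constant size, the set $X = \{k : S_k \not\subseteq \bigcup_{i<k} S_i\}$ is infinite; for $k \in X$, let $n_k$ be the shortest word in $S_k \setminus \bigcup_{i<k} S_i$. By Lemma~\ref{lem:automata}, $\bigcup_{i<k} S_i$ is recognized by an NFA with $\sum_{i<k} M_i = O(2^{2rk})$ states, its complement by an automaton with $2^{O(2^{2rk})}$ states, and the intersection with $\mathcal{A}_k$ by an automaton with $M_k\cdot 2^{O(2^{2rk})} = 2^{O(2^{2rk})}$ states. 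Since $n_k$ is the length of a shortest accepted word, $n_k \leqslant 2^{O(2^{2rk})}$, so $\log\log n_k \leqslant 2rk + O(1)$ and hence $k \geqslant \frac{\log\log n_k}{2r} - O(1/r)$. Combined with $s(n_k) \geqslant k$, this contradicts $s(n) \leqslant \lfloor \log\log n/(cr)\rfloor$ for any $c > 2$, provided $n_k$ is large enough, which is guaranteed because $X$ is infinite and the $n_k$ are pairwise distinct.

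The only genuinely new ingredient is the sliding-window state count; I expect the main obstacle to be formalizing the boundary states cleanly, in particular making sure that every vertex is verified exactly once and that paths of length less than $2r+1$ (where no vertex has a full radius-$r$ view) are still handled correctly. Once those bookkeeping details are set, the state count and the automaton-theoretic lemmas (which Lemma~\ref{lem:automata} already states over arbitrary alphabets) make the rest of the proof go through unchanged.
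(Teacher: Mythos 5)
Your proposal follows essentially the same route as the paper: the same sliding-window automaton over $\Sigma$ whose states record the last $2r$ (label, certificate) pairs, the same state bound $M_k \leqslant 2r(2^k|\Sigma|)^{2r}$, and the same union/complement/intersection state-complexity argument, with the paper handling the boundary issue you flag by restricting to paths of length at least $2r$ and absorbing the finitely many shorter labeled paths into the constant. One tiny correction: in the labeled setting the $n_k$ need not be pairwise distinct (two values of $k$ can have shortest new words of the same length), but since there are only finitely many words of each length over a finite alphabet, only finitely many $k \in X$ share any given value of $n_k$, so $n_k \to \infty$ along $X$ still holds and the conclusion is unaffected.
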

	
	We have already mentioned the generalization to trees (Theorem~\ref{thm:gap-tree-radius-1}), whose proof can be found in Section~\ref{sec:gap-tree}.
	
	\section{A property with optimal size  $\Theta(\log \log n)$ in unlabeled paths }
	\label{sec:loglogn-property}
	
	The goal of this Section is to prove the following theorem.
	
	\ThmCertifProductPrimes*
	
	Before proving Theorem~\ref{thm:certif_product_primes}, let us show the following result:
	
	\begin{lemma}
		\label{lem:certif_modulo}
		Let $m, t \in \mathbb{N}$ and $m \geqslant 2$. Certifying that the length a path on~$n$ vertices satisfies \mbox{$n \equiv t \mod m$} can be done with certificates of size $O(\log m)$.
	\end{lemma}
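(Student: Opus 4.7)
The plan is to use the standard distributed counter modulo $m$. The prover picks one endpoint, calls it $v_1$, orders the vertices along the path as $v_1, v_2, \ldots, v_n$, and assigns to $v_i$ the certificate $c_i := (i-1) \bmod m$. Each certificate is an element of $\{0, 1, \ldots, m-1\}$, so it can be encoded on $\lceil \log m \rceil = O(\log m)$ bits, which gives the desired size bound. Completeness is immediate: with this assignment, consecutive certificates along the path differ by exactly~$1$ modulo~$m$, the first endpoint has certificate~$0$, and (when $n \equiv t \bmod m$) the last endpoint has certificate $(t-1) \bmod m$.

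The verification algorithm (with $t$ and $m$ hardwired in the verifier) is as follows. A degree-1 vertex with certificate $c$ and neighbor of certificate $c'$ accepts iff either $(c,c') = (0, 1 \bmod m)$ or $(c,c') = ((t-1) \bmod m, (t-2) \bmod m)$; we think of these two options as declaring oneself the ``start'' or the ``end'' endpoint. A degree-2 vertex with certificate $c$ and neighbors of certificates $a, b$ accepts iff $\{a, b\} = \{(c-1) \bmod m, (c+1) \bmod m\}$ as multisets.

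For soundness, assume $m \geqslant 3$ and that every vertex accepts. The interior check at $v_i$ forces $\{c_{i-1}, c_{i+1}\} = \{c_i - 1, c_i + 1\} \pmod m$; since $m \geqslant 3$, $c_i-1 \not\equiv c_i+1$, so the two neighbors carry distinct values and the sequence $(c_i)_i$ is either strictly increasing or strictly decreasing by $1$ at this step. An easy induction then shows the whole sequence $(c_1, \ldots, c_n)$ is arithmetic with common difference $\pm 1$ modulo $m$. Combining this with the two possible endpoint checks at $v_1$ and $v_n$, one rules out the cases where both endpoints play the same role (they would force $m \mid 2$), so one endpoint has certificate $0$ and the other has certificate $(t-1) \bmod m$ with a monotonic progression between them. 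This yields $n - 1 \equiv t - 1 \pmod m$, i.e.\ $n \equiv t \pmod m$, as required.

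The only genuine obstacle is the degenerate regime $m \in \{1, 2\}$, where the equality $(c-1) \equiv (c+1) \pmod m$ makes the ``monotonicity'' step of the soundness argument vacuous. But these cases are handled directly: $m = 1$ is trivial (no certificate is needed), and for $m = 2$ the scheme already reduces to a parity-colouring argument that can be analysed by hand. In any case, for $m$ bounded by any constant the statement is $O(1)$ and thus trivially in $O(\log m)$, so no finer argument is needed.
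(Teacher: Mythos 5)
Your proof is correct, but it takes a slightly different route from the paper's. The paper gives each vertex a two-field certificate: an $\textsf{Orientation}$ field equal to the distance to a chosen endpoint modulo~$3$, and a $\textsf{Distance}$ field equal to that distance modulo~$m$. The modulo-$3$ field exists only to give every vertex a consistent notion of predecessor and successor; once that orientation is verified, the modulo-$m$ counter is checked in the oriented direction and the designated final endpoint tests its value against~$t$. You dispense with the orientation field entirely and recover the orientation from the modulo-$m$ counter itself: for $m \geqslant 3$ the interior check forces a constant common difference $\pm 1$ along the path, and your observation that two endpoints of the same type would force $m \mid 2$ correctly pins down one ``start'' and one ``end''. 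Your case analysis is sound (the key identity is $c_{i+1}-c_i = c_i - c_{i-1}$ at every interior vertex, so all differences equal $c_2 - c_1 \in \{+1,-1\}$), and your dismissal of $m \in \{1,2\}$ as constant-size special cases is legitimate. What you save is one field of the certificate; what the paper's orientation device buys is uniformity in $m$ and, more importantly, modularity: Remark~\ref{rem:certif_modulo} reuses the scheme verbatim to certify $n \not\equiv t \bmod m$ by merely negating the final test, which requires the end endpoint to be identified \emph{independently} of the value being tested --- in your scheme the endpoint roles are inferred from those very values, so that extension would need a separate argument. Since the lemma as stated only concerns the positive congruence, this is not a gap in your proof, just a difference in robustness. (Neither you nor the paper treats the degenerate $n=1$ path, so no complaint there.)
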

	
	\begin{proof}
		The certificate given by the prover to the vertices consists in two parts. For each vertex~$u$, we denote the first part of its certificate by $\textsf{Orientation}[u]$, and the second part by $\textsf{Distance}[u]$. The prover chooses an endpoint of the path, and for each vertex $u$, we denote by $d_u$ its distance to this endpoint. In $\textsf{Orientation}[u]$, the prover writes $d_u \mod 3$. In $\textsf{Orientation}[u]$, the prover writes $d_u \mod m$. This certificate has size $O(\log m)$.
		
		To check the correctness of the certificates, each vertex first checks the correctness of the part $\textsf{Orientation}$. To do so, if $u$ has degree~$2$, it simply checks that it has a neighbor~$v$ such that $\textsf{Orientation}[v] = \textsf{Orientation}[u]+1 \mod 3$, and a neighbor $w$ such that $\textsf{Orientation}[w] = \textsf{Orientation}[u]-1 \mod 3$. And if $u$ has degree~$1$ and has a neighbor $v$, it checks that either $\textsf{Orientation}[u] = 0$ and $\textsf{Orientation}[v]=1$ (in this case we say that $u$ is the \emph{beginning endpoint}), or $\textsf{Orientation}[v] = \textsf{Orientation}[u]-1$ (in this case we say that $u$ is the \emph{final endpoint}). This part of the certificates gives an orientation of the path to the vertices: for two neighbors~$u$ and~$v$, we say that $u$ is a~\emph{predecessor} of~$v$ (resp. a \emph{successor}) if $\textsf{Orientation}[u]=\textsf{Orientation}[v]-1 \mod 3$ (resp. $\textsf{Orientation}[u]=\textsf{Orientation}[v]+1 \mod 3$). Then, to check the correctness of the part $\textsf{Distance}$, each degree-$2$ vertex checks that its predecessor~$v$ satisfies $\textsf{Distance}[v] = \textsf{Distance}[u]-1 \mod m$, and its successor~$w$ satisfies $\textsf{Distance}[w] = \textsf{Distance}[u]+1 \mod m$. If $u$ is the beginning endpoint, it checks that $\textsf{Distance}[u] = 0$. If $u$ is the final endpoint and has a predecessor~$v$, it checks that $\textsf{Distance}[u] = \textsf{Distance}[v] + 1 \mod m$, and that $\textsf{Distance}[u] = t$.
	\end{proof}
	
	\begin{remark}
		\label{rem:certif_modulo}
		For every $m, t \in \mathbb{N}$ with $m \geqslant 2$, we can also certify that the length~$n$ of a path satisfies $n \not\equiv t \mod m$ with certificates of size $O(\log m)$, with the same proof (just by replacing $\textsf{Distance}[u] = t$ by $\textsf{Distance}[u] \neq t$ at the end). In particular, with certificates of size $O(\log m)$, we can certify that~$m$ divides~$n$, or that~$m$ does not divide~$n$.
	\end{remark}

	Finally, let us introduce some notations and give some useful properties.
	For every $k \geqslant 1$, let us denote by $p_k$ the $k$-th prime number (i.e. $p_1 = 2, p_2 = 3, p_3 = 5$...), and let $a_k$ be the product of the $k$ first prime numbers:
	$a_k := \prod_{i=1}^{k}p_i$. Let $S \subseteq \mathbb{N}$ be the set $\{a_k \; | \; k \geqslant 1\}$.
	
	\begin{lemma}{\cite{HardyW79}}
		\label{lem:bound}
		We have $p_k = \Theta(k \log k)$ and $a_k = 2^{k \log k (1 + o(1))}$.
	\end{lemma}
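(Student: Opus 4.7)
The plan is to deduce both claims from the Prime Number Theorem (PNT), in the form $\pi(x) \sim x / \ln x$ as $x \to \infty$, together with its equivalent Chebyshev-function formulation $\theta(x) := \sum_{p \leqslant x} \ln p \sim x$. Both are standard and appear in \cite{HardyW79}, so the only real work is a careful asymptotic manipulation and a base change at the end.

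For the first claim, I would apply PNT at $x = p_k$, which by definition satisfies $\pi(p_k) = k$. This gives $k \sim p_k / \ln p_k$, hence $p_k \sim k \ln p_k$. Taking logarithms of $k \sim p_k / \ln p_k$ shows that $\ln k = \ln p_k - \ln \ln p_k + o(1) \sim \ln p_k$, so substituting back yields $p_k \sim k \ln k$. Since the base of the logarithm is immaterial inside $\Theta$, this establishes $p_k = \Theta(k \log k)$.

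For the second claim, I use that $\ln a_k = \sum_{i=1}^{k} \ln p_i = \theta(p_k)$. By the Chebyshev form of PNT, $\theta(p_k) \sim p_k$, and combining with the first part gives $\ln a_k \sim p_k \sim k \ln k$. Converting to base~$2$ via $\log_2 x = \ln x / \ln 2$, and using $\ln k = (\ln 2)\log_2 k$, I obtain $\log_2 a_k \sim k \log_2 k$, i.e.\ $\log_2 a_k = k \log_2 k \cdot (1 + o(1))$. Exponentiating yields $a_k = 2^{k \log k (1 + o(1))}$, as claimed (the $\log$ on the right is $\log_2$, consistent with the paper's convention).

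There is no real obstacle here beyond bookkeeping: the only subtlety is ensuring that the multiplicative constant produced by switching from $\ln$ to $\log_2$ is correctly absorbed into the $(1 + o(1))$ factor in the exponent, rather than surviving as a stray constant outside it. Because the factor $1/\ln 2$ multiplies both $k \ln k$ and the error term uniformly, this absorption is immediate.
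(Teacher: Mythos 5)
Your derivation is correct, and it fills in exactly the standard argument that the paper delegates to the citation of Hardy and Wright (the paper gives no proof of its own for this lemma). Both steps — $p_k \sim k\ln k$ from $\pi(p_k)=k$ and the PNT, and $\ln a_k = \theta(p_k) \sim p_k$ from the Chebyshev form — are the canonical route, and you correctly handle the one genuine subtlety, namely that the base-change constant must convert $\ln k$ into $\log_2 k$ rather than leak into the $(1+o(1))$ factor.
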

	
	\begin{lemma}
		\label{lem:small_prime_certif}
		There exists $c > 0$ such that, for every even integer $n \geqslant 2$, there exists $k \leqslant c \log n$ such that $p_k$ divides~$n$ and $p_{k+1}$ does not divide~$n$.
	\end{lemma}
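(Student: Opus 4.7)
The plan is to take $k$ to be the largest integer such that each of $p_1, p_2, \ldots, p_k$ divides $n$; this $k$ is well-defined and at least $1$ because $n$ is even, so $p_1 = 2$ divides $n$. By maximality of $k$, the prime $p_{k+1}$ does not divide $n$, so the divisibility portion of the statement is immediate. What remains is to show $k = O(\log n)$.

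To bound $k$, I would use that $p_1, \ldots, p_k$ are pairwise coprime and each divides $n$, so their product $a_k = \prod_{i=1}^{k} p_i$ must divide $n$. In particular $a_k \leqslant n$. Applying Lemma~\ref{lem:bound}, which gives $a_k = 2^{k \log k (1 + o(1))}$, and taking logarithms yields
\[
k \log k \cdot (1 + o(1)) \leqslant \log n,
\]
so $k \log k \leqslant 2 \log n$ for $k$ large enough. This already implies $k \leqslant 2 \log n$, which proves the existence of the desired constant $c$ (one can simply choose $c$ large enough to absorb the finitely many small cases where $k$ is bounded by a constant).

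There is essentially no obstacle: the only mildly careful point is that the $(1+o(1))$ term in Lemma~\ref{lem:bound} needs to be handled for sufficiently large $k$, with the finitely many small values of $n$ (equivalently, bounded $k$) absorbed into the constant $c$. Everything else is a direct consequence of the fact that the first $k$ primes being divisors of $n$ forces $a_k \mid n$.
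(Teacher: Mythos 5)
Your proof is correct and follows essentially the same route as the paper: identify the maximal initial segment of primes dividing $n$, observe that $a_k \mid n$ hence $a_k \leqslant n$, and invoke Lemma~\ref{lem:bound} to get $k\log k(1+o(1)) \leqslant \log n$. Your choice of $k$ as the largest prefix length even makes the step $a_k \mid n$ slightly more immediate than the paper's phrasing (which defines $k$ as the smallest index with $p_k \mid n$ and $p_{k+1} \nmid n$, and yields the same integer), and your handling of the $o(1)$ term and small cases is fine.
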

	
	\begin{proof}
		Let $n$ be an even integer and let $k$ be the smallest integer such that $p_k$ divides $n$ and $p_{k+1}$ does not divide $n$ (which exists because $n$ is divisible by~$p_1 = 2$). Then, $n$ is divisible by $a_k$, so $a_k \leqslant n$. Using Claim~\ref{lem:bound}, we get $2^{k \log k (1 + o(1))} \leqslant n$, so $k \log k (1+o(1)) \leqslant \log n$, and the result follows.
	\end{proof}
	
	\begin{lemma}
		\label{lem:not_equiv}
		Let $1 \leqslant s < t$. There exists $m \leqslant \lceil \log t \rceil$ such that $s \not\equiv t \mod m$.
	\end{lemma}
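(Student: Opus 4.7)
The plan is to prove this by contradiction, reducing the claim to a lower bound on the least common multiple of an initial segment of integers.

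Set $d := t - s$, so $1 \leqslant d \leqslant t-1$. I assume, for contradiction, that for every integer $m$ with $2 \leqslant m \leqslant \lceil \log t \rceil$ we have $s \equiv t \pmod{m}$, equivalently $m \mid d$. Then $\mathrm{lcm}\bigl(1, 2, \ldots, \lceil \log t \rceil\bigr)$ must divide $d$, and so in particular
\[
\mathrm{lcm}\bigl(1, 2, \ldots, \lceil \log t \rceil\bigr) \leqslant d < t.
\]
The plan is to contradict this by invoking a classical Chebyshev-type estimate: for $k$ sufficiently large, $\mathrm{lcm}(1, 2, \ldots, k) \geqslant 2^{k}$ (this follows from the prime number theorem, or can be extracted directly from the fact that $\sum_{p \leqslant k} \log p \geqslant k \log 2$ for large $k$). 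Applying this with $k = \lceil \log t \rceil$ gives $\mathrm{lcm}(1, \ldots, \lceil \log t \rceil) \geqslant 2^{\lceil \log t \rceil} \geqslant t$, which is the desired contradiction.

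The key step, and the main obstacle, is making sure the constants line up: the bound $\mathrm{lcm}(1, \ldots, k) \geqslant 2^k$ only starts holding from a threshold $k_0$ onward (direct computation shows it fails for some small $k$ such as $k = 6$, where $\mathrm{lcm} = 60 < 64$). Consequently, the inequality yields the lemma only for $t$ above some explicit constant $T_0$. For the finitely many $t < T_0$, I would check the statement by hand (or absorb them by a very mild weakening of the bound, e.g.\ $m \leqslant \lceil \log t \rceil + O(1)$, which is what the application in the next section really needs anyway).

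As an alternative route that might give sharper constants, one can work with the smallest $m \geqslant 2$ that fails to divide $d$: by minimality, $1, 2, \ldots, m-1$ all divide $d$, hence $\mathrm{lcm}(1, \ldots, m-1) \leqslant d$, and the same Chebyshev estimate gives $m \leqslant \log_2 d + O(1) \leqslant \log_2 t + O(1)$. This packaging avoids the contradiction framing and makes the admissible value of $m$ explicit, which is convenient when the lemma is later combined with Lemma~\ref{lem:certif_modulo} and Remark~\ref{rem:certif_modulo} to build certificates of size $O(\log m) = O(\log \log t)$.
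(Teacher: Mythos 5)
Your proof is essentially the paper's: the paper also argues by contradiction that if every $m \leqslant \lceil \log t \rceil$ satisfies $s \equiv t \bmod m$, then $\lcm(1, \ldots, \lceil \log t \rceil)$ divides $t-s$, and then invokes the ``well-known'' bound $\lcm(1,\ldots,\lceil \log t\rceil) \geqslant t$ to force $s=t$. Your caveat about small $k$ is in fact warranted and is glossed over in the paper: since $\lcm(1,\ldots,6)=60<64$, the lemma as stated genuinely fails for, e.g., $t=64$ and $s=4$ (where $t-s=60$ is divisible by every $m \leqslant 6 = \lceil\log t\rceil$), so the finitely many exceptional $t$ cannot all be ``checked by hand'' and the statement really does need the mild $m \leqslant \lceil\log t\rceil + O(1)$ weakening you propose --- which, as you correctly observe, is harmless for the $O(\log\log n)$ application in Theorem~\ref{thm:certif_product_primes}.
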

	
	\begin{proof}
		By contradiction, assume that for all $m \leqslant \lceil \log t \rceil$, we have $s \equiv t \mod m$. Then, by the Chinese remainder theorem, we get $s \equiv t \mod p$, where $p$ is the least common multiple of $1, 2, \ldots, \lceil \log t \rceil$. It is well-known that this least common multiple is at least~$t$, so we have $1 \leqslant s < t \leqslant p$. Together with $s \equiv t \mod p$, this implies $s = t$, which contradicts the assumption $s < t$.
	\end{proof}
	
	\begin{proposition}
		\label{prop:main}
		Let $n \geqslant 1$, and $c$ be the constant of Lemma~\ref{lem:small_prime_certif}. Then, $n \notin S$ if and only if at least one of the three following conditions is satisfied:
		\begin{enumerate}
			\item n is odd, or
			\item there exists $1 \leqslant \ell < k \leqslant c \log n$ such that $p_\ell$ does not divide~$n$ and $p_k$ divides~$n$, or
			\item there exists $1 \leqslant k \leqslant c \log n$ and $1 \leqslant m \leqslant \lceil \log n \rceil$ such that $p_k$ divides~$n$, $p_{k+1}$ does not divide~$n$ and $n \not\equiv a_k \mod m$.
		\end{enumerate}
	\end{proposition}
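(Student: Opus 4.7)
\medskip

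\noindent\textbf{Proof plan.} The proposition is an ``iff'', so I would split it into the two directions, treating the easier backward direction first to confirm the characterization is sound, and then proving the forward direction by a short case analysis that turns each negative fact about $n$ into one of the three conditions, with the quantitative bounds coming from Lemma~\ref{lem:small_prime_certif} and Lemma~\ref{lem:not_equiv}.

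\medskip

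\noindent\textbf{Backward direction ($\Leftarrow$).} Suppose $n = a_j$ for some $j \geqslant 1$. I would check that none of conditions 1, 2, 3 can hold. First, $a_j$ is divisible by $p_1 = 2$, ruling out condition 1. Second, the set of prime divisors of $a_j$ is exactly $\{p_1, \ldots, p_j\}$, so whenever $p_k \mid a_j$ one has $k \leqslant j$, and then every $p_\ell$ with $\ell < k$ also divides $a_j$; this rules out condition 2. Third, the unique index $k$ with $p_k \mid a_j$ and $p_{k+1}\nmid a_j$ is $k=j$, hence $n = a_k$, which forces $n \equiv a_k \pmod m$ for every $m$, ruling out condition 3. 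Contrapositively, if at least one of the three conditions holds, then $n \notin S$.

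\medskip

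\noindent\textbf{Forward direction ($\Rightarrow$).} Assume $n \notin S$. If $n$ is odd, condition 1 holds and we are done, so assume $n$ is even. Apply Lemma~\ref{lem:small_prime_certif} to obtain an index $k \leqslant c \log n$ with $p_k \mid n$ and $p_{k+1} \nmid n$. Now split on whether all primes strictly below $p_k$ also divide $n$. If some $p_\ell$ with $\ell < k$ fails to divide $n$, then the pair $(\ell, k)$ witnesses condition 2 (and $\ell < k \leqslant c\log n$). Otherwise, $p_1, \ldots, p_k$ all divide $n$, hence $a_k \mid n$. Since $n \notin S$, we have $n \neq a_k$, and as $a_k \mid n$ this forces $a_k < n$. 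Now apply Lemma~\ref{lem:not_equiv} with $s = a_k$ and $t = n$ to obtain an integer $m$ with $2 \leqslant m \leqslant \lceil \log n \rceil$ such that $a_k \not\equiv n \pmod m$, i.e.\ $n \not\equiv a_k \pmod m$. This $m$ together with the already-fixed $k$ witnesses condition 3.

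\medskip

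\noindent\textbf{Main obstacle.} There is no deep obstacle, since both lemmas provide the quantitative bounds in exactly the form the conditions require. The one subtle point to be careful about is the case ``$a_k \mid n$ and $n \neq a_k$'': one must verify that this really gives $a_k < n$ (true because $a_k$ is a proper divisor of $n$) so that Lemma~\ref{lem:not_equiv} applies, and that the $m$ it produces is at least $2$ (otherwise $n \not\equiv a_k \pmod m$ would be vacuous); this is immediate because the $m=1$ case cannot provide a nontrivial incongruence in the lemma's conclusion. With these small verifications in place, the two directions combine to give the stated equivalence.
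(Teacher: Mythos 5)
Your proof is correct and follows essentially the same route as the paper's: the forward direction is identical (Lemma~\ref{lem:small_prime_certif} to locate $k$, then either condition~2 or $a_k \mid n$ with $a_k < n$ and Lemma~\ref{lem:not_equiv} to get $m$), and your backward direction is just the contrapositive of the paper's argument that each condition individually forces $n \notin S$.
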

	
	\begin{proof}
		First, assume that one of the three conditions is satisfied. If condition~1. holds, $n$ is odd, so $n \notin S$ because $S$ contains only even integers. If condition~2. holds, then $n \notin S$ because all the integers in~$S$ which are divisible by~$p_k$ are also divisible by $p_\ell$ for all $\ell \leqslant k$. If condition~3. holds, then $n \notin S$, because the only integer in $S$ which is divisible by~$p_k$ and not by~$p_{k+1}$ is~$a_k$.
		
		Conversely, assume that $n \notin S$, and let us show that at least one of the three conditions is satisfied. Assume that conditions~1. and~2. are not satisfied, and let us show that condition~3. holds. Since condition~1. is not satisfied, $n$ is even, so by Lemma~\ref{lem:small_prime_certif}, there exists $k \leqslant c \log n$ such that $p_k$ divides~$n$ and $p_{k+1}$ does not divide~$n$. Since condition~2. is not satisfied, $n$ is divisible by~$a_k$, so $a_k < n$ (this inequality is strict because, by assumption, $n \notin S$). Finally, by Lemma~\ref{lem:not_equiv}, there exists $m \leqslant \lceil \log n \rceil$ such that $n \not\equiv a_k \mod m$, so condition~3. is satisfied. 
	\end{proof}
	
	We are now able to prove Theorem~\ref{thm:certif_product_primes}.
	
	\begin{proof}[Proof of Theorem~\ref{thm:certif_product_primes}]
		Recall that we assume that the input graph is a path $P$. 
		Let $\P$ be the property of being a path whose length is \emph{not} in~$S$.
		First, observe that $\P$ cannot be certified with constant-size certificates. Indeed, properties on paths that can be certified with constant size-certificates are paths whose length is in a set that is eventually periodic (the most simple proof for it uses Chrobak normal form, see Observation~\ref{obs:periodic}), and the set $\overline{S}$ is not.
		Now, let us show that $\P$ can be certified with certificates of size $O(\log \log n)$.
		
		Let $n \geqslant 1$. If $n \notin S$, the prover certifies that at least one of the three conditions of Proposition~\ref{prop:main} is satisfied. More precisely:
		\begin{itemize}
			\item if $n$ is odd, the prover certifies it. By Lemma~\ref{lem:certif_modulo}, this needs~$O(1)$ bits.
			\item if there exists $1 \leqslant \ell < k \leqslant c \log n$ such that $p_k$ divides~$n$ and $p_\ell$ does not divide~$n$, the prover writes $k$ and $\ell$ in the certificate of each vertex, and certifies that $p_k$ divides~$n$ and that $p_\ell$ does not divide~$n$. Since $k \leqslant c \log n$ and $p_k = \Theta(k \log k)$, by Lemma~\ref{lem:certif_modulo} and Remark~\ref{rem:certif_modulo}, this needs~$O(\log \log n)$~bits.
			\item if there exists $1 \leqslant k \leqslant c \log n$ and $1 \leqslant m \leqslant \lceil \log n \rceil$ such that $p_k$ divides~$n$, $p_{k+1}$ does not divide~$n$ and $n \not\equiv a_k \mod m$, the prover writes~$k$ and~$m$ in the certificate, certifies that $p_k$ divides~$n$, $p_{k+1}$ does not divides~$n$ and that $n \not\equiv a_k \mod m$. By Lemma~\ref{lem:certif_modulo} and Remark~\ref{rem:certif_modulo}, this needs~$O(\log \log n)$~bits.
		\end{itemize}
		
		The verification of the vertices just consists in checking that the condition given by the prover is indeed satisfied, with the verification procedure of Lemma~\ref{lem:certif_modulo}. Note that the vertices do not need to check the bounds on~$k$ and~$m$ for conditions~2. and~3., because if condition~2. or~3. is satisfied with larger~$k$ or~$m$, it also implies that $n \notin S$ (and then $P \in \P$ and $P$ can be also accepted with certificates of size $O(\log \log n)$).
		These bounds on~$k$ and~$m$ are only useful to get an upper bound on the size of the certificates.
	\end{proof}

	\section{Gap between $O(1)$ and $\Theta(\log n)$ in cycles}
	\label{sec:gap-cycle}
	
	In this section we prove the following theorem.
	
	\ThmCycles*
	
	
	
	
	
	\subsection{Preliminaries on number theory and walks in graphs}
	
	Let us give some results from number theory on which we will rely. First, let us recall the prime number theorem:
	
	\begin{theorem}[Prime Number Theorem]
		\label{thm:PNT}
		For $n \in \N$, let $\pi(n)$ be the number of prime numbers in $\{1, \ldots, n\}$. Then:$$\pi(n) \sim \frac{n}{\ln(n)}$$
	\end{theorem}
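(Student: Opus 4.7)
The plan is to prove the Prime Number Theorem by the standard analytic route through the Riemann zeta function. First I would introduce the Chebyshev functions $\theta(x) = \sum_{p \leqslant x} \log p$ and $\psi(x) = \sum_{p^k \leqslant x} \log p$, and show by partial summation and the elementary Chebyshev bound $\psi(x) = O(x)$ that the asymptotic $\pi(n) \sim n/\log n$ is equivalent to $\psi(x) \sim x$. This reduction is essentially bookkeeping: $\psi(x)-\theta(x) = O(\sqrt{x}\,\log^2 x)$ because higher prime powers contribute negligibly, and Abel summation interconverts the asymptotics of $\theta(x)$ and $\pi(n)\log n$.

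Next I would bring in the Riemann zeta function $\zeta(s) = \sum_{n \geqslant 1} n^{-s}$ for $\operatorname{Re}(s)>1$, and use its Euler product $\zeta(s)=\prod_p (1-p^{-s})^{-1}$ to identify $-\zeta'(s)/\zeta(s) = \sum_{n\geqslant 1}\Lambda(n)\, n^{-s}$, the Dirichlet series of the von Mangoldt function, whose partial sums are exactly $\psi(x)$. The two key analytic facts I would establish are: (i) $\zeta(s)$ extends meromorphically to $\operatorname{Re}(s)>0$ with a single simple pole at $s=1$ of residue $1$; and (ii) $\zeta(1+it)\neq 0$ for every real $t$. For (i) one writes $\zeta(s)-\frac{1}{s-1}$ as an absolutely convergent series $\sum_{n\geqslant 1}\bigl(n^{-s}-\int_n^{n+1}x^{-s}\,dx\bigr)$ valid on $\operatorname{Re}(s)>0$. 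For (ii) the classical trick is the positivity of $3+4\cos\theta+\cos(2\theta)$, applied to $\log|\zeta(\sigma)^{3}\zeta(\sigma+it)^{4}\zeta(\sigma+2it)|$ as $\sigma \to 1^{+}$, which shows that a zero of $\zeta$ at $1+it$ is incompatible with the simple pole at $1$.

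From (i) and (ii), the function $\Phi(s) = \sum_p (\log p)\, p^{-s}$ extends continuously to $\operatorname{Re}(s)\geqslant 1$ apart from a simple pole at $s=1$ with residue $1$. I would then invoke Newman's analytic theorem: if $f:[0,\infty)\to\R$ is bounded and locally integrable, and its Laplace transform $g(s)=\int_0^\infty f(t)e^{-st}\,dt$ extends holomorphically across the line $\operatorname{Re}(s)=0$, then $\int_0^\infty f(t)\,dt$ converges and equals $g(0)$. Applied with $f(t) = \theta(e^t)e^{-t}-1$, this yields $\int_1^\infty \tfrac{\theta(x)-x}{x^{2}}\,dx<\infty$, and a short monotonicity argument on $\theta$ then forces $\theta(x)\sim x$, hence $\psi(x)\sim x$, and finally $\pi(n)\sim n/\log n$ as claimed.

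The main obstacle is step (ii): the non-vanishing of $\zeta$ on the line $\operatorname{Re}(s)=1$. This is the deep heart of the argument and is where every proof of the PNT (Hadamard, de la Vall\'ee Poussin, Newman, Wiener--Ikehara, or even the Erd\H{o}s--Selberg elementary route) must do its essential work. The Tauberian extraction and the translations between $\pi$, $\theta$, and $\psi$ are standard, but ruling out zeros on the critical line requires the $3+4\cos\theta+\cos(2\theta)$ positivity identity or an equivalent mechanism; without it, the contour deformation used in the Tauberian step picks up spurious singularities on $\operatorname{Re}(s)=1$ and the asymptotic breaks.
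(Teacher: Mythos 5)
The paper states the Prime Number Theorem as a classical black-box result and offers no proof of it, so there is no in-paper argument to compare yours against; the honest comparison is that the authors cite it while you attempt to prove it. Your outline is the standard Newman--Zagier analytic proof and is correct as a plan: the reduction of $\pi(n)\sim n/\ln n$ to $\theta(x)\sim x$ via partial summation and the bound $\psi(x)-\theta(x)=O(\sqrt{x}\log^2 x)$ is right, the meromorphic continuation of $\zeta$ to $\operatorname{Re}(s)>0$ via $\zeta(s)-\frac{1}{s-1}=\sum_{n\geqslant 1}\bigl(n^{-s}-\int_n^{n+1}x^{-s}\,dx\bigr)$ is the usual one, and you correctly isolate the two genuinely hard ingredients: the non-vanishing of $\zeta$ on $\operatorname{Re}(s)=1$ (via the $3+4\cos\theta+\cos 2\theta\geqslant 0$ trick) and the Tauberian step (Newman's analytic theorem, which itself needs a contour-integral proof you have not supplied, and whose application requires the elementary Chebyshev bound $\theta(x)=O(x)$ to guarantee boundedness of $f(t)=\theta(e^t)e^{-t}-1$). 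As written this is a correct high-level sketch rather than a complete proof --- the $3$--$4$--$1$ positivity computation, the proof of Newman's theorem, and the final monotonicity argument deducing $\theta(x)\sim x$ from convergence of $\int_1^\infty\frac{\theta(x)-x}{x^2}\,dx$ are all only named --- but every named step is standard and fillable, and for the purposes of this paper one would simply cite the theorem, exactly as the authors do.
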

	
	From Theorem~\ref{thm:PNT}, we can deduce the immediate following corollary:
	
	\begin{corollary}
		\label{cor:PNT}
		Let $c > 12$. For every $n \in \N$, let $\pi_c(n)$ be the number of prime numbers~$p$ such that $2^{4n+1} < p \leqslant 2^{n(c/2-2)}$. Then, there exists $n_0\in \N$ such that for all $n \geqslant n_0$, $\pi_c(n) > n 2^{2n}$.
	\end{corollary}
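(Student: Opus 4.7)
The plan is to derive the corollary as a direct application of the Prime Number Theorem (Theorem~\ref{thm:PNT}) together with the hypothesis that $c > 12$. The key numerical observation is that the condition $c > 12$ forces $c/2 - 2 > 4$, so the upper endpoint $2^{n(c/2-2)}$ of our prime interval is exponentially larger than the lower endpoint $2^{4n+1}$, and in fact $c/2 - 4 > 2$, which is what will let us beat the target $n \cdot 2^{2n}$.

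Concretely, I would write
\[
\pi_c(n) \;=\; \pi\!\left(2^{n(c/2-2)}\right) - \pi\!\left(2^{4n+1}\right).
\]
By Theorem~\ref{thm:PNT}, for every $\varepsilon > 0$ and $n$ large enough,
\[
\pi\!\left(2^{n(c/2-2)}\right) \;\geqslant\; (1-\varepsilon)\,\frac{2^{n(c/2-2)}}{n(c/2-2)\ln 2}, \qquad \pi\!\left(2^{4n+1}\right) \;\leqslant\; (1+\varepsilon)\,\frac{2^{4n+1}}{(4n+1)\ln 2}.
\]
The first (positive) term is of order $2^{n(c/2-2)}/n$, while the second (subtracted) term is of order $2^{4n}/n$; since $c/2 - 2 > 4$, the first term dominates and in particular absorbs the subtracted term, so for $n$ large enough
\[
\pi_c(n) \;\geqslant\; \frac{2^{n(c/2-2)}}{2 n (c/2-2)\ln 2}.
\]

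It then remains to compare this lower bound to $n \cdot 2^{2n}$. Taking logarithms, this amounts to checking that
\[
2^{n(c/2-4)} \;>\; 2 n^2 (c/2-2)\ln 2
\]
for $n$ large enough, which holds because $c/2 - 4 > 2$ makes the left-hand side grow exponentially in $n$ while the right-hand side grows only polynomially. Choosing $n_0$ large enough to make both the PNT approximation valid with $\varepsilon = 1/2$ and the above inequality hold yields the claimed $n_0$. The proof is essentially routine calculus once one unpacks the constants; the only thing to be slightly careful about is making the $o$-terms in PNT explicit enough to subtract one from the other, which is handled by the $(1 \pm \varepsilon)$ bracketing above.
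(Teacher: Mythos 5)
Your proposal is correct and follows essentially the same route as the paper: write $\pi_c(n)=\pi(2^{n(c/2-2)})-\pi(2^{4n+1})$, use the Prime Number Theorem on both terms, observe that $c>12$ makes the subtracted term negligible, and conclude by comparing exponential growth in $n(c/2-4)$ against the polynomial factor. The only difference is cosmetic: you make the $(1\pm\varepsilon)$ bracketing explicit where the paper argues directly with $\sim$ and $o(\cdot)$ notation.
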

	
	\begin{proof}
		For every $n \in \N$, we have $\pi_c(n) = \pi(2^{n(c/2-2)}) - \pi(2^{4n+1})$. Since $c > 12$, by Theorem~\ref{thm:PNT}, we have $\pi(2^{4n+1}) = o(\pi(2^{n(c/2-2)}))$. Thus, $\pi_c(n) \sim \pi(2^{n(c/2-2)})$. By applying again the prime number theorem, we get $\pi_c(n) \sim 2^{n(c/2-2)}/(n(c/2-2))$. Thus, $n2^{2n} = o(\pi_c(n))$, and the result follows.
	\end{proof}
	
	Finally, let us give the following generalization of Bézout's identity that will be useful in the proof of Theorem~\ref{thm:cycles}:
	
	\begin{lemma}[\cite{brauer42}]
		\label{lem:bezout}
		Let $\ell_1, \ldots, \ell_t$ be positive integers. Let $m := \max(\ell_1, \ldots, \ell_t)$ and $d := \gcd(\ell_1, \ldots, \ell_t)$.
		Then, for every integer $n \geqslant m^2$ which is divisible by~$d$, there exists non-negative integers $a_1, \ldots a_t$ such that
		$\sum_{i=1}^{t} a_i \ell_i = n$.
	\end{lemma}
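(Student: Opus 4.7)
The plan is to reduce to the coprime case and then bound the Frobenius threshold through a graph-diameter argument on residues modulo the smallest generator.

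\textbf{Step 1 (Normalization).} Dividing every $\ell_i$ and $n$ by $d$ reduces the statement to the case $\gcd(\ell_1, \ldots, \ell_t) = 1$. Indeed, the rescaled generators $\ell_i/d$ have maximum $m/d$ and the rescaled target satisfies $n/d \geqslant m^2/d \geqslant (m/d)^2$ since $d \geqslant 1$, and a representation $\sum a_i (\ell_i/d) = n/d$ gives $\sum a_i \ell_i = n$. So it suffices to prove: if $\gcd(\ell_1, \ldots, \ell_t) = 1$ then every $n \geqslant m^2$ is a non-negative integer combination of the $\ell_i$. Assume from now on that $d = 1$.

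\textbf{Step 2 (Graph of residues modulo the smallest generator).} Let $\ell := \min_i \ell_i$; we may assume $\ell \geqslant 2$, as otherwise $\ell_1 = 1$ trivializes the claim. Build a directed graph $H$ on vertex set $\mathbb{Z}/\ell\mathbb{Z}$ with, for each $i \in \{1, \ldots, t\}$, an arc $v \to v + \ell_i \bmod \ell$ of weight $\ell_i$. Then non-negative integer combinations $w = \sum a_j \ell_j$ with $w \equiv r \pmod{\ell}$ are in bijection with directed walks of total weight $w$ from $0$ to $r$ in $H$. Let $g(r)$ be the minimum weight of such a walk. The key observation is that if $n \geqslant g(r)$ and $n \equiv r \pmod{\ell}$, then $n - g(r)$ is a non-negative multiple of $\ell$, so adding $(n-g(r))/\ell$ copies of $\ell = \ell_1$ to the combination realizing $g(r)$ produces a valid non-negative representation of $n$. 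It therefore suffices to prove $g(r) \leqslant m^2$ for every residue $r \in \{0, 1, \ldots, \ell-1\}$.

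\textbf{Step 3 (Diameter bound and the main obstacle).} Since $\gcd(\ell_1, \ldots, \ell_t) = 1$, the residues $\bar\ell_1, \ldots, \bar\ell_t$ generate $\mathbb{Z}/\ell\mathbb{Z}$ as a group. The one subtle point of the argument is upgrading this to the statement that every vertex of $H$ is reachable from $0$, i.e.\ that the \emph{semigroup} of non-negative combinations of the $\bar\ell_i$ already equals $\mathbb{Z}/\ell\mathbb{Z}$; this uses the finiteness of $\mathbb{Z}/\ell\mathbb{Z}$, which ensures that sufficiently many positive copies of a generator recover its additive inverse modulo $\ell$. Once reachability is established, a shortest walk from $0$ to $r$ (in edge count) is a simple directed path in a graph with $\ell$ vertices, hence uses at most $\ell - 1$ edges, each of weight at most $m$. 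Therefore $g(r) \leqslant (\ell - 1)m \leqslant (m-1)m < m^2$, which gives the desired bound and completes the proof.
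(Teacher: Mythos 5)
Your proof is correct. Note that the paper does not prove this lemma at all: it is stated as a black box with a citation to Brauer (1942), so there is nothing internal to compare against. What you have written is a valid self-contained argument, essentially the classical Ap\'ery-set/residue-graph approach to the Frobenius problem: reduce to $\gcd=1$, and for each residue class modulo the smallest generator $\ell$ bound the least representable element of that class by a shortest-path argument in the weighted Cayley-type digraph on $\mathbb{Z}/\ell\mathbb{Z}$. All the delicate points are handled: the normalization inequality $m^2/d \geqslant (m/d)^2$, the passage from ``the residues generate the group'' to ``the semigroup of non-negative combinations covers the group'' (which indeed requires finiteness), and the excision argument showing a minimum-edge-count walk is simple, giving $g(r) \leqslant (\ell-1)m < m^2$. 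Two cosmetic remarks: you twice write $\ell_1$ where you mean the minimal generator $\ell=\min_i \ell_i$ (in the trivial case $\ell=1$ and when padding with copies of $\ell$), and the word ``bijection'' in Step~2 is an overstatement --- distinct walks can yield the same combination --- but only the two-way existence correspondence is used, so neither affects correctness. Your argument in fact yields the slightly stronger threshold $(\ell-1)m + 1$ in the coprime case, of which $m^2$ is a weakening sufficient for the paper's purposes.
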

	
	We now move on to some definitions about walks in graphs.
	A~\emph{closed walk} in $\G_k$ is a directed path that begins and ends in the same vertex (it is allowed to pass through the same vertex or the same edge multiple times). 
	The \emph{length} of a closed walk is its number of edges. A closed walk that does not pass through the same vertex twice (except for the starting and ending vertices which are the same) is called an \emph{elementary cycle}. If $\C$ and $\C'$ are two directed closed walks, we say that $\C'$ is a closed subwalk of $\C$ if a subsequence of vertices in $\C$ is equal to $\C'$. See Figure~\ref{fig:dircycl} for an example. Note that the length of an elementary closed walk in~$\G_k$ is at most equal to the number of vertices of $\G_k$ which is $2^{2k}$.
	
	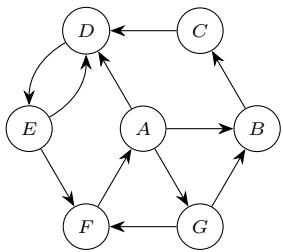
\begin{figure}[h]
		\centering
		\begin{tikzpicture}
			
			\newdimen\R
			\R=1.5cm
			
			\node[circle, draw, minimum size=0.5cm] at (0, 0) (0) {\scriptsize $A$};
			\node[circle, draw, minimum size=0.5cm] at (0:\R) (1) {\scriptsize $B$};
			\node[circle, draw, minimum size=0.5cm] at (60:\R) (2) {\scriptsize $C$};
			\node[circle, draw, minimum size=0.5cm] at (120:\R) (3) {\scriptsize $D$};
			\node[circle, draw, minimum size=0.5cm] at (180:\R) (4) {\scriptsize $E$};
			\node[circle, draw, minimum size=0.5cm] at (240:\R) (5) {\scriptsize $F$};
			\node[circle, draw, minimum size=0.5cm] at (300:\R) (6) {\scriptsize $G$};
			
			\draw[-{Stealth[scale=1.2]}] (6) to (5);
			\draw[-{Stealth[scale=1.2]}] (0) to (1);
			\draw[-{Stealth[scale=1.2]}] (0) to (3);
			\draw[-{Stealth[scale=1.2]}] (2) to (3);
			\draw[-{Stealth[scale=1.2]}] (1) to (2);
			\draw[-{Stealth[scale=1.2]}] (3) to[bend right] (4);
			\draw[-{Stealth[scale=1.2]}] (4) to[bend right] (3);
			\draw[-{Stealth[scale=1.2]}] (4) to (5);
			\draw[-{Stealth[scale=1.2]}] (5) to (0);
			\draw[-{Stealth[scale=1.2]}] (6) to (1);
			\draw[-{Stealth[scale=1.2]}] (0) to (6);
		\end{tikzpicture}
		\caption{
			In this directed graph, $ADEDEFA$ is a closed walk of length~$6$, which is not an elementary cycle. The closed walks $FAGF$ and $CDEFABC$ are elementary cycles and have length~$3$ and~$6$ respectively. Moreover, $FAGF$ is a subwalk of $DEFAGFAD$.}
		\label{fig:dircycl}
	\end{figure}

	\subsection{Proof of Theorem~\ref{thm:cycles}}
	
	All this subsection is devoted to the proof of Theorem~\ref{thm:cycles}.
	
	\medskip
	
	Let $c > 12$ and $N \in \N$. Let $S$ be the set of lengths of cycles in~$\P$.  Assume by contradiction that there exists a property $\P$ such that, for every integer $n \geqslant N$ satisfying $\P$, the cycle of length $n$ is accepted with certificates of size~$s(n)$, where $s(n) \leqslant \left\lfloor
	\frac{\log n}{c}\right\rfloor$, and that a constant number of certificates are not sufficient to certify~$\P$. 
	
	For every $k \geqslant 1$, let $C_k$ be the set of certificates of size~$k$, and let $S_k$ be the subset of $S$ corresponding to the cycles accepted with certificates in~$C_k$.
	Note that $|C_k| = 2^k$, and that we have $S = \bigcup_{k \in \N} S_k$.
	Let $\G_k = (V_k,E_k)$ be the directed graph where $V_k := C_k^2$ and for all $a, b, c \in C_k$, there is an edge in $E_k$ between $(a,b)$ and $(b,c)$ if and only if a degree-2 vertex with certificate $b$ has a neighbor with certificate~$a$ and another neighbor with certificate~$c$ accepts (and there are no other edges in~$E_k$, that is there is no edge between $(a,b)$ and $(c,d)$ if $b \ne c$). 
	The directed graph $\G_k$ has $2^{2k}$ vertices.

	\begin{claim}
		\label{claim:setcycle}
		For every integer $n \geqslant 3$, $n \in S_k$ if and only there exists a closed walk of length~$n$ in $\G_k$.
	\end{claim}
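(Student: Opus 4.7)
The claim is essentially a translation between the two definitions, so the plan is simply to unfold both sides carefully, keeping track of cyclic indices.

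For the forward direction, suppose $n \in S_k$, so there is a certificate assignment $c_1,\dots,c_n \in C_k$ to the vertices $v_1,\dots,v_n$ of the length-$n$ cycle (indices taken modulo $n$) such that every vertex accepts. The plan is to exhibit the closed walk
\[
(c_1,c_2)\to(c_2,c_3)\to\cdots\to(c_{n-1},c_n)\to(c_n,c_1)\to(c_1,c_2)
\]
in $\G_k$. Each transition $(c_{i-1},c_i)\to(c_i,c_{i+1})$ corresponds exactly to the acceptance of the degree-$2$ vertex $v_i$ with certificate $c_i$ and neighbors of certificates $c_{i-1}$ and $c_{i+1}$, which is guaranteed by the assumption that the cycle is accepted. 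So the walk exists, and it has exactly $n$ edges.

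For the converse, suppose there is a closed walk of length $n$ in $\G_k$, which I write as a sequence of vertices $V_1,V_2,\dots,V_{n+1}$ with $V_{n+1}=V_1$ and $V_i\to V_{i+1}$ an edge of $\G_k$ for every $1\leqslant i\leqslant n$. Write $V_i=(a_i,b_i)$. Since every edge of $\G_k$ goes from some $(a,b)$ to some $(b,c)$, the definition forces $b_i=a_{i+1}$ for every $1\leqslant i\leqslant n$. The plan is then to set $c_i:=a_i$ for $1\leqslant i\leqslant n$, so that $V_i=(c_i,c_{i+1})$ for $1\leqslant i\leqslant n$ with indices taken modulo $n$ (the identification $V_{n+1}=V_1$ forces $a_{n+1}=a_1$, hence $c_{n+1}=c_1$, and $b_{n+1}=b_1=c_2$, which is consistent). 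Assign $c_i$ to the vertex $v_i$ of the length-$n$ cycle. Then for each $i$, the existence of the edge $(c_{i-1},c_i)\to(c_i,c_{i+1})$ in $\G_k$ means by definition that a degree-$2$ vertex with certificate $c_i$ and neighbors of certificates $c_{i-1},c_{i+1}$ accepts, so $v_i$ accepts. Hence all vertices accept and $n\in S_k$.

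The only subtle point is the cyclic indexing, and specifically making sure that the constraint $V_{n+1}=V_1$ in the closed walk precisely matches the wrap-around of certificates on the cycle ($c_{n+1}=c_1$ and $c_{n+2}=c_2$); once this is checked, both directions are direct applications of the definition of~$\G_k$.
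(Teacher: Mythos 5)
Your proof is correct and follows essentially the same argument as the paper: both directions are direct unfoldings of the definition of $\G_k$, with the forward direction producing the walk $(c_1,c_2),(c_2,c_3),\ldots,(c_n,c_1)$ from an accepting certificate assignment and the converse reading the certificates off the first coordinates of the walk's states. Your extra care with the cyclic indexing and the consistency of the wrap-around is a harmless elaboration of what the paper leaves implicit.
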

	
	\begin{proof}
		If $n \in S_k$, there exists an assignment of certificates $c_1, \ldots, c_n$ to the vertices of a $n$-vertex cycle such that every vertex accepts. For every $i \in \{1, \ldots, n\}$, since the vertex with certificate~$c_i$ accepts and has two neighbors with certificates $c_{i-1}$ and $c_{i+1}$ (where $i-1$ and $i+1$ are taken modulo~$n$), by definition there is an edge in $\G_k$ from $(c_{i-1}, c_i)$ to $(c_i, c_{i+1})$. So this gives an closed walk of length $n$ in~$\G_k$. Conversely, if there exists a closed walk $(c_1, c_2), (c_2, c_3), \ldots, (c_n,c_1)$ in $\G_k$, by definition all the vertices of the cycle of length~$n$ with certificates $c_1, \ldots, c_n$ accept, so $n \in S_k$.
	\end{proof}
	
	By Corollary~\ref{cor:PNT}, there exists $k_0\in \N$ such that for all $k \geqslant k_0$, $\pi_c(k) > k 2^{2k}$.
	
	Since $S$ is not accepted with a constant number of certificates, the set $X \subseteq \N$ of integers $k \in \N$ such that $S_k \not \subseteq \bigcup_{1 \leqslant i < k}S_i$ is infinite. For $k \in X$, let $n_k$ be the smallest integer $S_k \setminus \bigcup_{1 \leqslant i < k}S_i$. Finally, let us fix an integer integer~$k \in X$, such that $k \geqslant k_0$ and $n_k \geqslant N$ (such an integer $k \in X$ exists because $X$ is infinite and for all distinct $k, k' \in X$ we have $n_k \neq n_{k'}$).
	
	\begin{claim}
		\label{claim:lower_bound_n}
		We have $n_k \geqslant 2^{ck}$.
	\end{claim}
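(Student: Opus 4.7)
The proof of Claim~\ref{claim:lower_bound_n} should be essentially immediate from the choice of $n_k$ together with the hypothesis on the certification size $s(n)$, so the plan is to just unwind the definitions carefully.

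First, I would observe that $n_k \in S_k \subseteq S$, so the $n_k$-vertex cycle satisfies the property $\P$. Since we picked $k \in X$ large enough that $n_k \geqslant N$, the hypothesis of the theorem applies: this cycle must admit a valid certification with certificates of size exactly $s(n_k) = \lfloor \log n_k / c \rfloor$. In the notation introduced just before the claim, this is precisely the statement that $n_k \in S_{s(n_k)}$.

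Next, I would invoke the defining property of $n_k$, namely that $n_k \notin S_i$ for every $i < k$. Combined with $n_k \in S_{s(n_k)}$, this forces $s(n_k) \geqslant k$. Unwinding the definition of $s$, we obtain $\lfloor \log n_k / c \rfloor \geqslant k$, which (since the floor of a real number is at least the integer $k$ iff the real number itself is at least $k$) gives $\log n_k / c \geqslant k$, i.e., $n_k \geqslant 2^{ck}$, as required.

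The main obstacle is essentially nonexistent here: the claim is a direct bookkeeping consequence of (i) the minimality built into the definition of $n_k$, which lower bounds the certificate size needed for length $n_k$, and (ii) the assumed upper bound $s(n) = \lfloor \log n / c \rfloor$, which translates a lower bound on the certificate size into a lower bound on $n$. The real content of the proof of Theorem~\ref{thm:cycles} will come later, when this lower bound on $n_k$ is compared against the number of closed walks of length $n_k$ in the much smaller graph $\G_k$ (which has only $2^{2k}$ vertices), via the prime-number estimate of Corollary~\ref{cor:PNT} and the Bézout-type Lemma~\ref{lem:bezout}.
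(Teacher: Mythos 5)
Your proposal is correct and matches the paper's own proof of this claim: both derive $n_k \in S_{s(n_k)}$ from the hypothesis (using $n_k \geqslant N$), conclude $s(n_k) \geqslant k$ from the minimality/defining property of $n_k$, and then unwind $s(n_k) = \lfloor \log n_k / c\rfloor$ to get $n_k \geqslant 2^{ck}$. The floor-handling step you spell out is also fine, since $k$ is an integer.
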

	
	\begin{proof}
		By definition of $s(n_k)$, we have $n_k \in S_{s(n_k)}$. Since $n_k \in S_k \setminus \bigcup_{1 \leqslant i < k}S_i$, it follows that $s(n_k) \geqslant k$. Moreover, by assumption, $s(n_k) \leqslant \left \lfloor \frac{\log n_k}{c} \right \rfloor$. So $n_k \geqslant 2^{ck}$.
	\end{proof}
	
	Since $n_k \in S_k$, by Claim~\ref{claim:setcycle}, there is a closed walk of length~$n_k$ in $\G_k$. Let us consider the strongly connected component $\G_k'$ of $\G_k$ containing this closed walk. Let $t$ be the number of elementary cycles in $\G_k'$ that we denote by $\C_1, \ldots, \C_t$, and let $\ell_1, \ldots, \ell_t$ be their lengths. Let $d = \gcd(\ell_1, \ldots ,\ell_t)$. We have $d \leqslant 2^{2k}$, because we have $\ell_i \leqslant 2^{2k}$ for every $i \in \{1, \ldots, t\}$ (since $\G_k$ has size $2^{2k}$).
	
	\begin{claim}
		\label{claim:elementary}
		Let $\C$ be a closed walk in $\G_k'$, and let $\ell$ be its length. There exists $b_1, \ldots, b_t \in \N$ such that $\ell = \sum_{i=1}^{t} b_i \ell_i$. Thus, $d$ divides~$\ell$. In particular, $d$ divides~$n_k$.
	\end{claim}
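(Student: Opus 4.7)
The plan is to prove the decomposition by strong induction on the length $\ell$ of the closed walk $\C$, by repeatedly extracting an elementary cycle as a closed subwalk. The base case $\ell = 0$ corresponds to the trivial walk sitting at a single vertex, and we simply take $b_1 = \cdots = b_t = 0$.

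For the inductive step, assume $\ell \geqslant 1$. I would traverse $\C$ from its base vertex and identify the first vertex $v$ that appears for a second time along the traversal. The portion of $\C$ strictly between these two occurrences of $v$, together with $v$ at both endpoints, is a closed subwalk of $\C$ in which no vertex (other than the endpoint $v$) is repeated; hence by definition it coincides with one of the elementary cycles $\C_i$ of $\G_k'$, of length $\ell_i$. Splicing this piece out of $\C$ (concatenating the prefix ending at the first occurrence of $v$ with the suffix starting at the second) produces a shorter closed walk $\C'$ of length $\ell - \ell_i$ in $\G_k'$. By the induction hypothesis, $\ell - \ell_i$ decomposes as a non-negative integer combination of the $\ell_j$'s; incrementing the coefficient of $\ell_i$ by one yields the desired decomposition of $\ell$.

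The divisibility consequences are then essentially free. Since $d = \gcd(\ell_1, \ldots, \ell_t)$ divides each $\ell_i$, it divides every non-negative integer combination $\sum_i b_i \ell_i$, and in particular $\ell$. Applying this to the closed walk of length $n_k$ in $\G_k'$ supplied by Claim~\ref{claim:setcycle} (such a walk exists because $n_k \in S_k$ and $\G_k'$ was chosen as the strongly connected component containing it) gives $d \mid n_k$.

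The only real point requiring attention, and the main obstacle to watch out for, is ensuring that the subwalk extracted at each step is genuinely an \emph{elementary} cycle rather than an arbitrary closed subwalk. Picking $v$ to be the \emph{first} repeated vertex encountered along $\C$ is what guarantees this: all vertices strictly between the two occurrences of $v$ are pairwise distinct and distinct from $v$, so the subwalk matches the definition of an elementary cycle of $\G_k'$. Once this is observed, the induction proceeds with no further complication.
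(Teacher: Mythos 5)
Your proof is correct and follows essentially the same route as the paper's: an induction on the length of the walk that peels off one elementary cycle at a time and then deduces divisibility by $d$ from the resulting non-negative integer combination. The only cosmetic difference is how the elementary cycle is located (you take the cycle closed off at the first repeated vertex, the paper takes a shortest closed subwalk), and both choices validly guarantee elementarity.
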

	
	
	\begin{proof}
		Let us prove this result by induction on the length of $\C$.
		The base case includes all elementary cycles: if $\C$ is an elementary cycle, there exists $j \in \{1, \ldots, t\}$ such that $\ell = \ell_j$, so the result is trivially true. Assume now that $\C$ is not an elementary cycle, and consider the shortest closed subwalk~$\C'$ of~$\C$. Then, $\C'$ is an elementary cycle (otherwise it would not be the shortest subwalk of~$\C$), so $\C' \in \{\C_1, \ldots, \C_t\}$ and its length is equal to $\ell_j$ for some $j \in \{1, \ldots, t\}$.
		Let us denote by $\C \setminus \C'$ the closed walk obtained  by removing from $\C$ the steps of $\C'$. The length of $\C \setminus \C'$ is $\ell - \ell_j$. Finally, apply the induction hypothesis to the closed walk $\C \setminus \C'$, to obtain integers $b_1, \ldots, b_t \in \N$ such that $\ell - \ell_j = \sum_{i=1}^t b_i \ell_i$. The result follows.
	\end{proof}
	
	\begin{claim}
		\label{claim:lengthcycle}
		Let $m \in \N$ be such that $d$ divides~$m$, and $m \geqslant 2^{4k+1}$. Then, there exists a closed walk in $\G_k'$ of length~$m$. Thus, $m \in S_k$.
	\end{claim}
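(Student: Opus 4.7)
The plan is to apply the generalized Bézout identity (Lemma~\ref{lem:bezout}) and then realize the resulting non-negative combination as an actual closed walk in $\G_k'$. The obstacle is that the elementary cycles contributing to a Bézout decomposition need not share vertices, so they cannot simply be concatenated; I will circumvent this by first building a short ``spine'' closed walk that visits every vertex of $\G_k'$, and then grafting the required cycle traversals onto it.

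For the spine, fix a vertex $v$ of $\G_k'$ and enumerate its vertex set as $v = v_1, v_2, \ldots, v_{n'}$ with $n' \leqslant 2^{2k}$. By strong connectivity, for each $j$ there is a walk of length at most $n' - 1$ from $v_j$ to $v_{j+1}$ (and from $v_{n'}$ back to $v_1$); concatenating these $n'$ walks gives a closed walk $W_0$ at $v$ visiting every vertex of $\G_k'$, of length $L_0 \leqslant n'(n'-1) < 2^{4k}$. Claim~\ref{claim:elementary} applied to $W_0$ yields $d \mid L_0$. I then apply Lemma~\ref{lem:bezout} to $m' := m - L_0$: since $d$ divides both $m$ (by hypothesis) and $L_0$, we have $d \mid m'$, and $m' > 2^{4k+1} - 2^{4k} = 2^{4k} \geqslant (\max_i \ell_i)^2$ (using that elementary cycles satisfy $\ell_i \leqslant n' \leqslant 2^{2k}$). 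The lemma then furnishes non-negative integers $a_1, \ldots, a_t$ with $\sum_{i=1}^{t} a_i \ell_i = m'$.

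To conclude, I graft the cycles onto $W_0$. For each $i$ with $a_i > 0$, the cycle $\C_i$ contains some vertex $u_i$ that lies on $W_0$ (since $W_0$ hits every vertex); viewing $\C_i$ as starting and ending at $u_i$, I insert into $W_0$, at some visit to $u_i$, a block of $a_i$ successive traversals of $\C_i$. The result is still a closed walk at $v$, and its length equals $L_0 + \sum_i a_i \ell_i = L_0 + m' = m$, so Claim~\ref{claim:setcycle} gives $m \in S_k$. The main subtlety is precisely the one noted at the outset---that the cycles selected by Bézout need not intersect, and therefore cannot be glued together on their own. The spine $W_0$ resolves this at a cost of $L_0 < 2^{4k}$, which exactly matches the slack between the hypothesis $m \geqslant 2^{4k+1}$ and the quadratic threshold $(\max_i \ell_i)^2 \leqslant 2^{4k}$ required by Lemma~\ref{lem:bezout}; this is the source of the factor of~$2$ in the exponent of the statement.
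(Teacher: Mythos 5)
Your proposal is correct and follows essentially the same route as the paper's proof: both build a closed walk through all vertices of $\G_k'$ of length at most $2^{4k}$ using strong connectivity, invoke Claim~\ref{claim:elementary} to get divisibility by $d$, apply Lemma~\ref{lem:bezout} to the remainder, and graft the elementary cycles onto the spanning walk. The only differences are cosmetic (your bound $n'(n'-1)$ versus the paper's $(2^{2k})^2$).
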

	
	\begin{proof}
		First, we construct greedily a closed walk~$\C_0$ in~$\G_k'$ that passes through all the vertices of $\G_k'$ (it exists, because $\G_k'$ is strongly connected).
		For every $u, v \in V(\G_k')$, the shortest directed path from~$u$ to~$v$ has length at most the number of vertices of $\G_k$ which is $2^{2k}$. Thus, there exists a closed walk~$\C_0$ of length $\ell_0 \leqslant (2^{2k})^2 = 2^{4k}$ that passes through all the vertices.
		By Claim~\ref{claim:elementary}, $d$ divides~$\ell_0$, so $d$ divides $m-\ell_0$. Furthermore, $m-\ell_0 \geqslant 2^{4k}$. Since we have $\max_{1 \leqslant i \leqslant t} \ell_i \leqslant 2^{2k}$, we can apply Lemma~\ref{lem:bezout} to get the existence of integers $a_1, \ldots, a_t \in \N$ such that $m - \ell_0 = \sum_{i=1}^{t} a_i \ell_i$. Finally, to construct a closed walk in $\G_k'$ of length $m = \ell_0 + \sum_{i=1}^{t} a_i \ell_i$, we attach $a_i$ times the elementary cycle $\C_i$ to the closed walk $\C_0$ for every $i \in \{1, \ldots, t\}$ (this is possible, because $\C_0$ passes through all the vertices). By Claim~\ref{claim:setcycle}, we have $m \in S_k$.
	\end{proof}
	
	Let us now combine these arguments to prove Theorem~\ref{thm:cycles}. Before giving the technical details, let us explain the intuition. Let us denote by $d$ the $d:=gcd$ of all the lengths of cycles in $S_k$. Since $\mathcal{G}_k$ has size $2^{2k}$, Lemma~\ref{claim:lengthcycle} ensures that all the cycles of length $r \cdot d$ are in $\mathcal{P}$ when $r$ is large enough (but small compared to $n_k$). Thus there exist many prime numbers $p$ such that $pd$ are in $\mathcal{P}$ and $pd \le n_k$. By definition of $n_k$, at least two of them can be certified with the same set of bits and we can obtain a contradiction. Let us now formalize the argument.
	
	Recall that $k$ is an integer such that $k \geqslant k_0$ and $n_k \geqslant N$.
	Let $p$ be a prime number such that $2^{4k+1} < p \leqslant 2^{k(c/2-2)}$. By Claim~\ref{claim:lengthcycle}, we have $pd \in S_k$.
	Moreover, since $p \leqslant 2^{k(c/2-2)}$ and $d \leqslant 2^{2k}$, we have $pd \leqslant 2^{kc/2}$, that is, $pd \leq \sqrt{n_k}$ using Claim~\ref{claim:lower_bound_n}.
	Since $n_k$ is the smallest integer in $S_k \setminus \bigcup_{1 \leqslant i<k} S_i$, we have $pd \in \bigcup_{1 \leqslant i<k} S_i$. For every $i \in \{1, \ldots, k-1\}$, let $X_i$ be the set of prime numbers~$p \in \{2^{4k+1}+1, \ldots, 2^{k(c/2-2)}\}$ such that $pd \in S_i$.
	Since there are $\pi_c(k)$ prime numbers in $\{2^{4k+1}+1, \ldots, 2^{k(c/2-2)}\}$ and we have $\pi_c(k) > k2^{2k}$, by the pigeonhole principle there exists $i < k$ such that $|X_i| > 2^{2k}$. Let us fix this index~$i$.
	
	For every $p \in X_i$, since $pd \in S_i$, there exists a closed walk~$\C^{(p)}$ of length~$pd$ in $\G_i$. Since~$|X_i| > 2^{2k}$, and since $\G_i$ has $2^{2i}$ vertices and $i < k$, again by the pigeonhole principle there exist $p,q \in X_i$ such that $\C^{(p)}$ and $\C^{(q)}$ have a vertex in common. Since $\C^{(p)}$ has length $pd$, $\C^{(q)}$ has length $qd$, and these two cycles have a vertex in common, for every $a,b \in \N$, there exists a closed walk of length $apd + bqd$ in $\G_i$ (obtained by starting from a vertex $u \in \C^{(p)} \cap \C^{(q)}$, taking $a$ times $\C^{(p)}$ and then $b$ times $\C^{(q)}$).
	Thus, for every $a, b \in \N$, $apd + bqd \in S_i$.
	
	Finally, since $\gcd(pd,qd) = d$ (because $p$ and $q$ are two distinct prime numbers), since $pd, qd \leqslant \sqrt{n_k}$, and since $n_k$ is divisible by~$d$ by Claim~\ref{claim:elementary}, we can apply Lemma~\ref{lem:bezout} which states that there exists $a,b \in \N$ such that $apd + bpd = n_k$. So $n_k \in S_i$, which a contradiction, because by assumption $n_k \in S_k \setminus \bigcup_{1 \leqslant i < k} S_i$. This concludes the proof of Theorem~\ref{thm:cycles}.

	\section{A property with optimal size~$\Theta(\log n)$ in cycles}
	\label{sec:logn-cycle}
	
	Let us now give an example of property that can be certified with $O(\log n)$ bits but not with constant-size certificates, to show that the gap stated in Theorem~\ref{thm:cycles} is optimal.
	
	\begin{proposition}
		Certifying that the length of a cycle is not a power of~$2$ can be done with certificates of size~$O(\log n)$ but not with certificates of size~$O(1)$.
	\end{proposition}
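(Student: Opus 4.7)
The plan is to prove the upper and lower bounds separately. For the upper bound, I use the elementary fact that $n$ is not a power of~$2$ if and only if $n$ admits an odd divisor $m \geq 3$. The prover picks such an $m$ and writes on every vertex the pair $(m, c_v)$, where the counters $c_v \in \{0, \ldots, m-1\}$ are chosen so as to increase by $1$ modulo $m$ along a traversal of the cycle. Each vertex checks that its two neighbors carry the same value of~$m$, that $m$ is odd and at least $3$, and that the two neighboring counters equal $c_v - 1$ and $c_v + 1$ modulo~$m$. Since $m \geq 3$, the values $c_v - 1$ and $c_v + 1$ are distinct, so this local check pins down a consistent orientation around the cycle; the counter closes up only if $m$ divides~$n$, which forces $n$ to be a non-power of~$2$. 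The certificate size is $O(\log m) = O(\log n)$.

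For the lower bound, I argue by contradiction: suppose that a constant certificate size $k$ suffices. By Claim~\ref{claim:setcycle}, the set $S$ of accepted cycle lengths coincides with the set of lengths of closed walks in the finite directed graph $\mathcal{G}_k$. The key step is a structural observation, obtained by applying Lemma~\ref{lem:bezout} inside each strongly connected component~$C$ of $\mathcal{G}_k$: this set has the form $F \cup \bigcup_C \{n \geq N_C : d_C \mid n\}$, where $F$ is finite, the union ranges over the strongly connected components~$C$ of $\mathcal{G}_k$, $N_C$ is some threshold, and $d_C$ is the gcd of the lengths of the elementary cycles contained in~$C$. If some $d_C$ equals~$1$, then $S$ eventually contains every integer, including infinitely many powers of~$2$, which contradicts $S = \{n \in \N : n \text{ is not a power of } 2\}$. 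Otherwise every $d_C \geq 2$, and choosing any prime $p$ larger than $\max F$ and all the $N_C$ and~$d_C$ yields $p \in S$ (since $p$ is not a power of~$2$), yet $p$ is coprime to every $d_C$ and does not belong to~$F$, so $p \notin S$, a contradiction.

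The main technical point will be the structural claim about closed-walk lengths in a finite directed graph, which is folklore but deserves a short justification via Lemma~\ref{lem:bezout} applied component by component (the argument is essentially the one already used in the proofs of Claim~\ref{claim:elementary} and Claim~\ref{claim:lengthcycle}). Everything else is elementary number theory.
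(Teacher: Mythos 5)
Your upper bound is essentially the paper's: certify an odd divisor $m \geqslant 3$ of $n$ together with a counter modulo $m$, and your key observation -- that for $m \geqslant 3$ the two admissible neighbour values $c_v - 1$ and $c_v + 1$ are distinct, so the local checks force a coherent orientation of the increments around the cycle and hence $m \mid n$ -- is exactly what makes the scheme sound. Your lower bound, however, takes a genuinely different route. The paper argues directly on a single cycle of prime length $p > 2^k$: the pigeonhole principle yields two equal consecutive certificate pairs, which splits $p = \ell_1 + \ell_2$ with $\gcd(\ell_1, \ell_2) = 1$; pumping the two segments and applying Lemma~\ref{lem:bezout} then shows that every length at least $p^2$ is accepted, including large powers of $2$. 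You instead establish a structural characterization of the full set of lengths accepted with certificates of size $k$, via the certificate graph $\G_k$ of Claim~\ref{claim:setcycle}: up to a finite set, it is a union over strongly connected components $C$ of all sufficiently large multiples of the gcd $d_C$ of the elementary cycle lengths of $C$. This claim is correct, and is proved exactly by the component-wise versions of Claims~\ref{claim:elementary} and~\ref{claim:lengthcycle} that you cite (every closed walk lives inside one strongly connected component; one only has to discard components containing no closed walk, for which $d_C$ is undefined). Given the characterization, your case split -- some $d_C = 1$ forces acceptance of cofinitely many lengths, while all $d_C \geqslant 2$ forces rejection of every large prime -- cleanly finishes the proof. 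The paper's argument is shorter and more self-contained; yours costs a little more bookkeeping but is more informative, since it shows that any length set certifiable with constant-size certificates in cycles is, up to finitely many exceptions, a finite union of progressions $\{n : d \mid n\}$, which rules out the non-powers of $2$ and many other languages in one stroke.
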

	
	\begin{proof}
		Let $C$ be a cycle of length~$n$, let $u \in V(C)$ and let $P$ be the path obtained from $C$ by deleting one edge adjacent to~$u$. To certify that $n \notin \{2^k, k \in \N\}$, the prover writes in the certificate of every vertex~$v \in V(C)$ the tuple $(d, i)$ where $d \geqslant 3$ is an odd integer that divides~$n$, and~$i$ is the distance from~$u$ to~$v$ in~$P$ modulo~$d$. Every vertex checks that, if its certificate is $(d, i)$ then its two neighbors have certificates $(d, i-1 \mod d)$ and $(d, i+1 \mod d)$, and that $d$ is indeed odd. Such a certificate has size~$O(\log n)$. This scheme is correct: indeed, if all the vertices accept, the length of the cycle should be divisible by~$d$ (and conversely, with the certificates described above, all the vertices will accept if the cycle has length divisible by~$d$).

		Now, assume by contradiction that certifying that the length~$n$ of a cycle is not a power of~$2$ can be done with certificates of constant size~$k$ (or equivalently, that~$2^k$ distinct certificates are sufficient). Let $p$ be an odd prime number such that $p > 2^k$. Let us consider an assignment of certificates to the vertices of a cycle~$C$ of length~$p$ such that all the vertices accept.
		Let us number the vertices of $C$ in clockwise order starting from an arbitrary vertex, and for every $i \in \{0,\ldots, n-1\}$, let $u_i$ be the $i$-th vertex in this numbering, and $c_i$ be its certificate. By the pigeonhole principle, there exists $0 \leqslant i < j \leqslant n-1$ such that $(c_i,c_{i+1}) = (c_j,c_{j+1})$ (where $j+1$ is taken modulo~$n$).
		If $j = i+1$, then a vertex with certificate $c_i$ accepts with two neighbors having certificate $c_i$. Thus, in this case, any cycle is accepted (by giving the certificate $c_i$ to all the vertices) which is a contradiction. Else, let $\ell_1 := j-i$ and $\ell_2 := p - j + i$. We have $\ell_1, \ell_2 \in \{1, \ldots, p-1\}$ and $\ell_1 + \ell_2 = p$. Since $p$ is prime, we get $\gcd(\ell_1, \ell_2) = 1$. Moreover, for any $a_1, a_2 \in \N$, a cycle of size $a_1 \ell_1 + a_2 \ell_2$ is accepted. Indeed, by cutting such a cycle in $a_1$~portions of length~$\ell_1$ and~$a_2$ portions of length~$\ell_2$, giving the certificates~$c_i, \ldots, c_{j-1}$ to the vertices in portions of size~$\ell_1$ and $c_j, \ldots, c_{n-1}, c_0, \ldots, c_{i-1}$ to the vertices in portions of length~$\ell_2$, all the vertices accept because they have the same view as a vertex which accepts in~$C$. Using Lemma~\ref{lem:bezout}, all the the cycles of length $m \geqslant p^2$ are accepted, which is a contradiction because the cycles whose length is a power of~$2$ greater than~$p^2$ should not be accepted.
	\end{proof}
	
	\section{Gap in labeled paths and with larger verification radius}
	\label{sec:gap-labeled-larger-radius}
	
	Let $\Sigma$ be an alphabet. We say that that a path is \emph{$\Sigma$-labeled} (or \emph{labeled} when $\Sigma$ is clear from context) if $P$ is a path given with a function that associates to every vertex a letter in $\Sigma$.
	In this subsection, we prove the following result:
	
	\ThmGapLabeledDistance*
	
	\begin{proof}
		The proof will again rely on state complexity. Let $S$ be the set of labeled paths that satisfy~$\P$. Let $k \in \N$, and let $S_k$ be the set of labeled paths in~$S$ that are accepted with certificates of size~$k$.
		We have: $S = \bigcup_{k \in \N}S_k$.
		The set $S$ can be seen as a set of words in~$\Sigma^\ast$, where the word $w = w_1\ldots w_t \in \Sigma^\ast$ corresponds to the path of length~$t$ labeled by $w_1, \ldots, w_t$. 
		Let us assume by contradiction that $S$ cannot be certified with constant-size certificates.
		Let $S^{\geqslant 2r}$ (resp. $S_k^{\geqslant 2r}$) be the set of words in~$S$ (resp. in~$S_k$) of length at least~$2r$.
		
		
		\begin{claim}
			\label{claim:sk_regular}
			For every $k \in \N$, the set $S_k^{\geqslant 2r}$ is a regular language, that can be recognized by an automaton~$\A_k$ having~$M_k$ states, with $M_k \leqslant 2r(2^k|\Sigma|)^{2r}$.
		\end{claim}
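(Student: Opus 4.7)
The plan is to adapt the unary-automaton construction from Section~\ref{subsec:automata} to handle both a nontrivial label alphabet $\Sigma$ and a verification radius $r$. The automaton $\A_k$ will read the input word letter by letter (each letter being the label of a vertex) and nondeterministically guess a certificate for that vertex; its states will encode a sliding window of the last $2r$ pairs $(\text{label}, \text{certificate})$, which is exactly what is needed to verify local acceptance of vertices lying in the middle of the path.

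First I would set up the state space. Let $P:=2^k|\Sigma|$ be the number of possible $(\text{label},\text{certificate})$ pairs. Take the states of $\A_k$ to be of the form $(j, (\pi_1,\ldots,\pi_{2r}))$ where $\pi_i \in \Sigma \times \{0,1\}^k$ and $j \in \{0,1,\ldots,2r-1\}$ is a phase counter used to track the position in the initial start-up part and the ending part of the run, plus a single initial state $q_0$ and a single accepting state $q_f$. The total number of states is then at most $2r \cdot P^{2r} = 2r(2^k|\Sigma|)^{2r}$ (the two extra vertices $q_0,q_f$ can be absorbed into the $2r$ phases by convention, or absorbed into the bound by a mild inflation of the constants).

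Next, I would define the transitions in three parts. In the \emph{start-up phase}, from $q_0$ the automaton reads the first $2r$ input letters $\sigma_1,\ldots,\sigma_{2r}$ one by one, guessing certificates $c_1,\ldots,c_{2r}$ and appending the pairs $(\sigma_i,c_i)$ to the window, while checking, as soon as enough left-context is available, that each of the first $r$ vertices accepts using its truncated radius-$r$ view (its view is a prefix of the window). At the end of the start-up phase it reaches the ``steady state'' with a full window of length $2r$. In the \emph{steady phase}, on reading letter $\sigma$ the automaton guesses a certificate $c$, forms the window of length $2r+1$ consisting of the current $2r$ pairs followed by $(\sigma,c)$, checks that the middle vertex of this window is accepted by the verifier (this is exactly its view at radius $r$), and then shifts the window by dropping the oldest pair. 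Finally, in the \emph{ending phase}, a nondeterministic $\varepsilon$-like termination transition consumes the last $r$ pairs of the window, verifying each of them with its corresponding truncated view, and moves to $q_f$.

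The remaining steps are correctness and counting. For correctness, I would check the two directions separately, as in Section~\ref{subsec:automata}: an accepting run corresponds exactly to a certificate assignment for which every vertex (including the first $r$ and last $r$, which are checked in the start-up and ending phases) accepts its radius-$r$ view, and conversely. The restriction to words of length at least $2r$ is exactly what allows the start-up and ending phases to fit without overlap, which is why the claim is stated for $S_k^{\geqslant 2r}$. Counting is then immediate from the description of the states. I do not expect a serious obstacle here; the only subtle point is making sure that each vertex is verified exactly once and that the truncated verifications at the two endpoints are not forgotten, which is why the start-up/ending phases and the $2r$ factor in the bound appear.
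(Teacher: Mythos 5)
Your construction is correct and is essentially the paper's: a nondeterministic sliding-window automaton whose states record the last $2r$ (label, certificate) pairs, with the middle vertex of each $(2r+1)$-window checked in the steady phase and the $r$ boundary vertices at each end checked separately. The only differences are bookkeeping — the paper enters the full window state after reading only $r$ letters via chains of $r-1$ intermediary states and symmetrically reads the last $r$ letters on the way to $f$, whereas you fill the window over the first $2r$ letters with a phase counter and finish with an $\varepsilon$-style termination — and both give the stated state bound up to the two extra states you already account for.
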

		
		\begin{proof}
			
			Let $k \in \N$, and let us define the following automaton~$\A_k$ that accepts the words in~$S_k^{\geqslant 2r}$, which is constructed in the following way. First, we create two states: $i$ which is the unique initial state and $f$ which is the unique final state. Then, we add a state for every $2r$-tuple of pairs $((c_0, \ell_0), \ldots, (c_{2r-1}, \ell_{2r-1}))$ where $c_0, \ldots, c_{2r-1}$  are certificates of size~$k$ and $\ell_0, \ldots, \ell_{2r-1} \in \Sigma$.
			Informally speaking, in a run, we will read the vertices of the path from one end to the other. Being in such a state in a run means that the vertices with certificates $c_0, \ldots, c_{r-1}$ and labels~$\ell_0, \ldots, \ell_{r-1}$ have already been considered and had accepted, and that we have already read the labels $\ell_0, \ldots, \ell_{r-1}$ in the automaton. 

			Let $c_0, \ldots, c_{2r}$ be certificates of size~$k$ and $\ell_0, \ldots, \ell_{2r} \in \Sigma$. Let $P$ be the path of length~$2r+1$ with vertices having certificates~$c_0, \ldots, c_{2r}$ and labels~$\ell_0, \ldots, \ell_{2r}$ ($P$ is depicted in~Figure~\ref{fig:transition_2r}; such a path $P$ typically corresponds to the view of a vertex). We add the following states and transitions to~$\A_k$:
			\begin{itemize}
				\item If the vertex with certificate~$c_r$ and label~$\ell_r$ accepts in~$P$, we add a transition from $((c_0, \ell_0), \ldots, (c_{2r-1}, \ell_{2r-1}))$ to $((c_1, \ell_1), \ldots, (c_{2r}, \ell_{2r}))$ labeled by $\ell_{r}$.
				
				\item 
				If all the $r$ vertices with certificates~$c_0, \ldots, c_{r-1}$ and labels~$\ell_0, \ldots, \ell_{r-1}$ accept in~$P$, we add $r-1$ intermediary states $p_1, \ldots, p_{r-1}$ (for readability we do not index them by $c_0, \ldots, c_{2r-1}$ and $\ell_0, \ldots ,\ell_{2r-1}$ in our notation), and we add the following transitions:
				$$i \xrightarrow{\ell_0} p_1 \xrightarrow{\ell_1} \ldots \xrightarrow{\ell_{r-2}} p_{r-1} \xrightarrow{\ell_{r-1}} ((c_0, \ell_0), \ldots, (c_{2r-1}, \ldots, \ell_{2r-1}))$$
				
				\item If all the $r$ vertices with certificates~$c_{r+1}, \ldots, c_{2r}$ and labels~$\ell_{r+1}, \ldots, \ell_{2r}$ accept in~$P$, we add $r-1$ intermediary states $q_1, \ldots, q_{r-1}$ (again, for readability we do not index them by $c_1, \ldots, c_{2r}$ and $\ell_1, \ldots ,\ell_{2r}$ in our notation), and we add the following transitions:
				$$((c_1, \ell_1), \ldots, (c_{2r}, \ldots, \ell_{2r})) \xrightarrow{\ell_{r+1}} q_1 \xrightarrow{\ell_{r+2}} \ldots \xrightarrow{\ell_{2r-1}} q_{r-1} \xrightarrow{\ell_{2r}} f $$
			\end{itemize}

			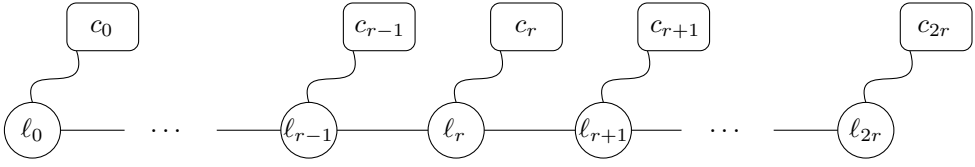
\begin{figure}[h]
				\centering
				
				\begin{tikzpicture}[x=0.75pt,y=0.75pt,yscale=-1,xscale=1]
					
					\draw    (305,135) -- (378.5,135) ;
					\draw    (378.5,135) -- (417.5,135) ;
					\draw    (231.5,135) -- (305,135) ;
					\draw    (185.5,135) -- (231.5,135) ;
					\draw    (93.5,135) -- (139.5,135) ;
					\draw    (463.5,135) -- (509.5,135) ;
					\draw  [fill={rgb, 255:red, 255; green, 255; blue, 255 }  ,fill opacity=1 ] (291,135) .. controls (291,127.27) and (297.27,121) .. (305,121) .. controls (312.73,121) and (319,127.27) .. (319,135) .. controls (319,142.73) and (312.73,149) .. (305,149) .. controls (297.27,149) and (291,142.73) .. (291,135) -- cycle ;
					\draw  [fill={rgb, 255:red, 255; green, 255; blue, 255 }  ,fill opacity=1 ] (364.5,135) .. controls (364.5,127.27) and (370.77,121) .. (378.5,121) .. controls (386.23,121) and (392.5,127.27) .. (392.5,135) .. controls (392.5,142.73) and (386.23,149) .. (378.5,149) .. controls (370.77,149) and (364.5,142.73) .. (364.5,135) -- cycle ;
					\draw  [fill={rgb, 255:red, 255; green, 255; blue, 255 }  ,fill opacity=1 ] (217.5,135) .. controls (217.5,127.27) and (223.77,121) .. (231.5,121) .. controls (239.23,121) and (245.5,127.27) .. (245.5,135) .. controls (245.5,142.73) and (239.23,149) .. (231.5,149) .. controls (223.77,149) and (217.5,142.73) .. (217.5,135) -- cycle ;
					\draw  [fill={rgb, 255:red, 255; green, 255; blue, 255 }  ,fill opacity=1 ] (79.5,135) .. controls (79.5,127.27) and (85.77,121) .. (93.5,121) .. controls (101.23,121) and (107.5,127.27) .. (107.5,135) .. controls (107.5,142.73) and (101.23,149) .. (93.5,149) .. controls (85.77,149) and (79.5,142.73) .. (79.5,135) -- cycle ;
					\draw  [fill={rgb, 255:red, 255; green, 255; blue, 255 }  ,fill opacity=1 ] (495.5,135) .. controls (495.5,127.27) and (501.77,121) .. (509.5,121) .. controls (517.23,121) and (523.5,127.27) .. (523.5,135) .. controls (523.5,142.73) and (517.23,149) .. (509.5,149) .. controls (501.77,149) and (495.5,142.73) .. (495.5,135) -- cycle ;
					\draw    (93.5,121) .. controls (86.5,98) and (123.5,121) .. (115.5,95) ;
					\draw    (231.5,121) .. controls (224.5,98) and (261.5,121) .. (253.5,95) ;
					\draw    (305,121) .. controls (298,98) and (335,121) .. (327,95) ;
					\draw    (378.5,121) .. controls (371.5,98) and (408.5,121) .. (400.5,95) ;
					\draw    (509.5,121) .. controls (502.5,98) and (539.5,121) .. (531.5,95) ;
					\draw  [fill={rgb, 255:red, 255; green, 255; blue, 255 }  ,fill opacity=1 ] (110.7,75.8) .. controls (110.7,73.15) and (112.85,71) .. (115.5,71) -- (141.7,71) .. controls (144.35,71) and (146.5,73.15) .. (146.5,75.8) -- (146.5,90.2) .. controls (146.5,92.85) and (144.35,95) .. (141.7,95) -- (115.5,95) .. controls (112.85,95) and (110.7,92.85) .. (110.7,90.2) -- cycle ;
					\draw  [fill={rgb, 255:red, 255; green, 255; blue, 255 }  ,fill opacity=1 ] (248.7,75.8) .. controls (248.7,73.15) and (250.85,71) .. (253.5,71) -- (279.7,71) .. controls (282.35,71) and (284.5,73.15) .. (284.5,75.8) -- (284.5,90.2) .. controls (284.5,92.85) and (282.35,95) .. (279.7,95) -- (253.5,95) .. controls (250.85,95) and (248.7,92.85) .. (248.7,90.2) -- cycle ;
					\draw  [fill={rgb, 255:red, 255; green, 255; blue, 255 }  ,fill opacity=1 ] (322.2,75.8) .. controls (322.2,73.15) and (324.35,71) .. (327,71) -- (353.2,71) .. controls (355.85,71) and (358,73.15) .. (358,75.8) -- (358,90.2) .. controls (358,92.85) and (355.85,95) .. (353.2,95) -- (327,95) .. controls (324.35,95) and (322.2,92.85) .. (322.2,90.2) -- cycle ;
					\draw  [fill={rgb, 255:red, 255; green, 255; blue, 255 }  ,fill opacity=1 ] (395.7,75.8) .. controls (395.7,73.15) and (397.85,71) .. (400.5,71) -- (426.7,71) .. controls (429.35,71) and (431.5,73.15) .. (431.5,75.8) -- (431.5,90.2) .. controls (431.5,92.85) and (429.35,95) .. (426.7,95) -- (400.5,95) .. controls (397.85,95) and (395.7,92.85) .. (395.7,90.2) -- cycle ;
					\draw  [fill={rgb, 255:red, 255; green, 255; blue, 255 }  ,fill opacity=1 ] (526.7,75.8) .. controls (526.7,73.15) and (528.85,71) .. (531.5,71) -- (557.7,71) .. controls (560.35,71) and (562.5,73.15) .. (562.5,75.8) -- (562.5,90.2) .. controls (562.5,92.85) and (560.35,95) .. (557.7,95) -- (531.5,95) .. controls (528.85,95) and (526.7,92.85) .. (526.7,90.2) -- cycle ;
					
					\draw (86.4,128) node [anchor=north west][inner sep=0.75pt]    {$\ell _{0}$};
					\draw (217.5,128) node [anchor=north west][inner sep=0.75pt]    {$\ell _{r-1}$};
					\draw (297.3,128) node [anchor=north west][inner sep=0.75pt]    {$\ell _{r}$};
					\draw (365,128) node [anchor=north west][inner sep=0.75pt]    {$\ell _{r+1}$};
					\draw (500,128) node [anchor=north west][inner sep=0.75pt]    {$\ell _{2r}$};
					\draw (121,78.4) node [anchor=north west][inner sep=0.75pt]    {$c_{0}$};
					\draw (253,78.4) node [anchor=north west][inner sep=0.75pt]    {$c_{r-1}$};
					\draw (333,78.4) node [anchor=north west][inner sep=0.75pt]    {$c_{r}$};
					\draw (400.5,78.4) node [anchor=north west][inner sep=0.75pt]    {$c_{r+1}$};
					\draw (534,78.4) node [anchor=north west][inner sep=0.75pt]    {$c_{2r}$};
					\draw (151,133) node [anchor=north west][inner sep=0.75pt]    {$\dotsc $};
					\draw (430,133) node [anchor=north west][inner sep=0.75pt]    {$\dotsc $};

				\end{tikzpicture}
				
				\caption{The path~$P$ used to define the transitions in~$\A_k$.}
				\label{fig:transition_2r}
			\end{figure}

			Just by counting directly from the definition, the automaton $\A_k$ has at most $2r(2^k|\Sigma|)^{2r}$ states.
			Moreover, it accepts the language $S_k^{\geqslant 2r}$. Indeed, let us consider a labeled path having~$t \geqslant 2r$ vertices, which is in $S_k^{\geqslant 2r}$. Let $\ell_0, \ldots, \ell_{t-1}$ be the labels of the vertices, and let $c_0, \ldots, c_{t-1}$ be a certificate assignment that makes every vertex accept. Then, by definition of~$\A_k$:
			\begin{itemize}
				\item there are transitions from~$i$ to~$((c_0,\ell_0) \ldots ,(c_{2r-1},\ell_{2r-1}))$ labeled by $\ell_0, \ldots, \ell_{r-1}$
				\item for every $j \in \{0, \ldots, t-2r-1\}$, there is a transition from $((c_j, \ell_j), \ldots, (c_{j+2r-1}, \ell_{j+2r-1}))$ to $((c_{j+1}, \ell_{j+1}), \ldots, (c_{j+2r}, \ell_{j+2r}))$ labeled by $\ell_{j+r}$
				\item there are transitions from~$((c_{t-2r},\ell_{t-2r}), \ldots, (c_{t-1}, \ell_{t-1}))$ to~$f$ labeled by~$\ell_{t-r}, \ldots, \ell_{t-1}$
			\end{itemize}
			
			Thus, there is an accepting run in~$\A_k$ that accepts $\ell_0, \ldots, \ell_{t-1}$. And conversely, any accepting run in $\A_k$ can be converted in a path in $S_k^{\geqslant 2r}$ (an assignment of the certificates that makes every vertex accept is given by the states of the accepting run). 
		\end{proof}

		Now, let $X$ be the set of integers $k \in \N$ such that~$S_{k}^{\geqslant 2r} \setminus \bigcup_{1 \leqslant i<k}S_i^{\geqslant2r}$ is non-empty.
		The set~$X$ in infinite: indeed, assume by contradiction that~$X$ is finite. Then, there exists $k \in \N$ such that $S^{\geqslant 2r} = \bigcup_{i \leqslant k}S_i^{\geqslant 2r}$. In other words, it means that constant-size certificates are sufficient to accept all the labeled paths in~$S$ of length at least~$2r$. Since there is only a finite number of labeled paths in~$S$ of length at most~$2r$, it means that constant-size certificates are sufficient to certify~$S$, which is not the case by assumption. So~$X$ is infinite.
		
		For every $k \in X$, let $n_k$ be the length of a smallest word in $S_{k}^{\geqslant 2r} \setminus \bigcup_{1 \leqslant i<k}S_i^{\geqslant2r}$. Note that we have $\lim_{k \to +\infty, k \in X} n_k = +\infty$. Indeed, for every $n \in \N$, there is only a finite number of $k \in X$ such that $n_k = n$, because there are only a finite number of words of length~$n$.
		Let $k_0 \in \N$ be such that, for every $k \in X$ and $k \geqslant k_0$, we have $n_k \geqslant \max(N,2r)$. For such an integer~$k$, since $n_k \in S_{s(n_k)}^{\geqslant 2r}$, we have $s(n_k) \geqslant k$.
		
		Let $k \in X$ such that $k \geqslant k_0$. By Claim~\ref{claim:sk_regular}, the set $S_k^{\geqslant2r}$ is a regular language in $\Sigma^\ast$, recognized by an automaton~$\A_k$ that has $M_k \leqslant 2r(2^k|\Sigma|)^{2r}$ states.
		By Lemma~\ref{lem:automata}, there exists an automaton~$\A_k'$ that accepts $S_{k}^{\geqslant 2r} \setminus \bigcup_{1 \leqslant i<k}S_i^{\geqslant2r}$ and which has at most $M_k \cdot 2^{\left(\sum_{i=1}^{k-1}M_i\right)}$ states.
		Let us prove the following claim:
		
		\begin{claim}
			\label{claim:bound_k}
			We have $k \geqslant \frac{\log \log n_k}{2r} + O_{r, |\Sigma|}(1)$
		\end{claim}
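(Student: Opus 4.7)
The plan is to imitate the state-complexity argument used in the proof of Theorem~\ref{thm:gap-paths-state-complexity}, but now carrying along the extra factors coming from the alphabet $\Sigma$ and the radius $r$. The automaton $\A_k'$ recognizes a nonempty language (because $S_k^{\geqslant 2r} \setminus \bigcup_{1 \leqslant i < k} S_i^{\geqslant 2r}$ contains a word of length $n_k$), so the standard fact that a nonempty NFA accepts some word of length at most its number of states gives
$$n_k \;\leqslant\; M_k \cdot 2^{\sum_{i=1}^{k-1} M_i}.$$
Thus the whole claim reduces to bounding this quantity as a function of $k$ using Claim~\ref{claim:sk_regular}.

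First I would plug in $M_i \leqslant 2r(2^i |\Sigma|)^{2r} = 2r |\Sigma|^{2r} \cdot 2^{2ri}$ and evaluate the geometric sum: there is a constant $C_1 = C_1(r,|\Sigma|)$ such that $\sum_{i=1}^{k-1} M_i \leqslant C_1 \cdot 2^{2rk}$. Combined with $M_k \leqslant C_1 \cdot 2^{2rk}$, this yields
$$n_k \;\leqslant\; C_1 \cdot 2^{2rk} \cdot 2^{C_1 \cdot 2^{2rk}} \;\leqslant\; 2^{C_2 \cdot 2^{2rk}}$$
for some constant $C_2 = C_2(r,|\Sigma|)$ and $k$ large enough (the polynomial prefactor being absorbed into the double exponential).

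Then I would take $\log$ twice. From the previous display, $\log n_k \leqslant C_2 \cdot 2^{2rk}$, hence $\log \log n_k \leqslant 2rk + \log C_2$, which rearranges to
$$k \;\geqslant\; \frac{\log \log n_k}{2r} \;-\; \frac{\log C_2}{2r} \;=\; \frac{\log \log n_k}{2r} + O_{r,|\Sigma|}(1),$$
establishing the claim. There is no real obstacle here: the argument is essentially the same triple application of Lemma~\ref{lem:automata} (union, complement, intersection) as in the unlabeled case, and the only thing to watch is that the constants hidden in the $O_{r,|\Sigma|}(1)$ term depend on both $r$ and $|\Sigma|$ but not on $k$. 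Once Claim~\ref{claim:bound_k} is in hand, combining it with $s(n_k) \geqslant k$ and letting $k \to \infty$ along $X$ will contradict the assumed bound $s(n) \leqslant \lfloor \log \log n/(cr) \rfloor$ whenever $c > 2$, completing the proof of Theorem~\ref{thm:gap_labeled_distance>1}.
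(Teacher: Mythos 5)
Your proposal is correct and follows essentially the same route as the paper: bound the number of states of $\A_k'$ by $M_k \cdot 2^{\sum_{i=1}^{k-1}M_i}$, plug in $M_i \leqslant 2r(2^i|\Sigma|)^{2r}$ to get $n_k \leqslant 2^{O_{r,|\Sigma|}(1)\cdot 2^{2rk}}$, and take logarithms twice. The only cosmetic difference is that you absorb the polynomial prefactor and the geometric-sum constants into $C_1, C_2$ a little earlier than the paper does.
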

		
		\begin{proof}
			Since $n_k$ is the smallest word accepted by~$\A_k'$, it is smaller than the number of states of $\A_k'$,  which is at most~$M_k \cdot 2^{\left(\sum_{i=1}^{k-1}M_i\right)}$. Moreover, we also know that $M_i \leqslant 2r(2^i|\Sigma|)^{2r}$ for every $i \in \N$, so we have $\sum_{i=1}^{k-1}M_i \leqslant 2r|\Sigma|^{2r}2^{2kr}$. Thus:
			\begin{align*}
				n_k 
				&\leqslant M_k \cdot 2^{\left(\sum_{i=1}^{k-1}M_i\right)} \\
				&\leqslant 2r|\Sigma|^{2r}2^{2kr+2r|\Sigma|^{2r}2^{2kr}}\\
				&\leqslant 2r|\Sigma|^{2r}2^{(2r+1)|\Sigma|^{2r}2^{2kr}}\\
			\end{align*}
			By taking the logarithm of the previous equation, we obtain:
			$$\log n_k \leqslant O_{r, |\Sigma|}(1) + O_{r,|\Sigma|}(1)2^{2kr}$$
			Finally, again by taking the logarithm, we obtain:
			$$2kr \geqslant \log\left(\frac{\log n_k + O_{r,|\Sigma|}(1)}{O_{r,|\Sigma|}(1)} \right)$$
			which finally gives $k \geqslant \frac{\log \log n_k}{2r} + O_{r,|\Sigma|}(1)$.
		\end{proof}
		
		We can finally combine Claim~\ref{claim:bound_k} with the inequality $s(n_k) \geqslant k$ to obtain that $s(n_k) \geqslant \frac{\log \log n_k}{2r} + O_{r,|\Sigma|}(1)$ for every $k \in X$ such that $k \geqslant k_0$. This is a contradiction with the fact that $c>2$ and $s(n_k) = \left\lfloor\frac{\log \log n_k}{cr}\right\rfloor$ since $n_k \geqslant N$.
		This concludes the proof and shows that $\P$ can be certified with constant-size certificates.
	\end{proof}

	\section{Gap in $d$ for trees with radius 1}
	\label{sec:gap-tree}
	
	\ThmGapTreeRadiusOne*

	\subsection{Warm-up 1: parsing trees}

	Let $T$ be a tree, and let $u,v \in V(T)$. Let $P_{uv}$ be the path from~$u$ to~$v$ in~$T$. Let $m$ be the number of vertices of $P_{uv}$, and let $w_1, \ldots, w_m$ be the vertices of $P_{uv}$ (with $w_1 = u$ and $w_m = v$). For each $i \in \{1, \ldots, m\}$, the tree $T_{i}$ \emph{rooted in~$w_i$} is the connected component of~$w_i$ in $T$ after removing all the edges of $P_{uv}$ (with a prescribed vertex $w_i$). Finally, let $\sigma(T,u,v)$ be the sequence of rooted trees $T_1, \ldots, T_m$. We say that $\sigma(T,u,v)$ is a \emph{parsing} of the tree~$T$. See Figure~\ref{fig:parsing} for an example.
	
	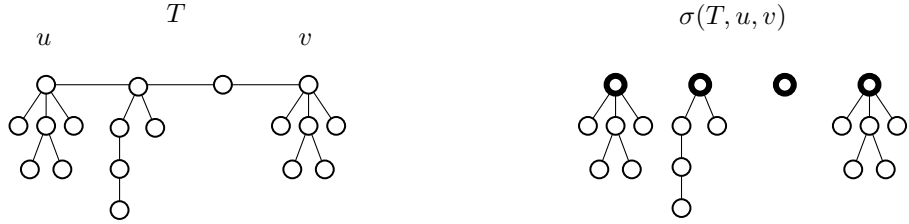
\begin{figure}[h!]
		\centering
		
		\begin{tikzpicture}[x=0.75pt,y=0.75pt,yscale=-1,xscale=1]
			
			\draw    (43.25,102.46) -- (89.25,102.46) ;
			\draw    (89.25,102.46) -- (131.5,102.46) ;
			\draw    (131.5,102.46) -- (173.75,102.46) ;
			\draw    (43.25,123.14) -- (35.14,145.04) ;
			\draw    (43.25,123.14) -- (51.36,145.04) ;
			\draw    (43.25,102.46) -- (43.25,123.14) ;
			\draw    (43.25,102.46) -- (57.04,123.14) ;
			\draw    (29.46,123.14) -- (43.25,102.46) ;
			\draw  [fill={rgb, 255:red, 255; green, 255; blue, 255 }  ,fill opacity=1 ][line width=0.75]  (38.79,102.46) .. controls (38.79,100) and (40.79,98) .. (43.25,98) .. controls (45.71,98) and (47.71,100) .. (47.71,102.46) .. controls (47.71,104.92) and (45.71,106.92) .. (43.25,106.92) .. controls (40.79,106.92) and (38.79,104.92) .. (38.79,102.46) -- cycle ;
			\draw  [fill={rgb, 255:red, 255; green, 255; blue, 255 }  ,fill opacity=1 ][line width=0.75]  (52.58,123.14) .. controls (52.58,120.68) and (54.58,118.68) .. (57.04,118.68) .. controls (59.5,118.68) and (61.5,120.68) .. (61.5,123.14) .. controls (61.5,125.61) and (59.5,127.61) .. (57.04,127.61) .. controls (54.58,127.61) and (52.58,125.61) .. (52.58,123.14) -- cycle ;
			\draw  [fill={rgb, 255:red, 255; green, 255; blue, 255 }  ,fill opacity=1 ][line width=0.75]  (25,123.14) .. controls (25,120.68) and (27,118.68) .. (29.46,118.68) .. controls (31.92,118.68) and (33.92,120.68) .. (33.92,123.14) .. controls (33.92,125.61) and (31.92,127.61) .. (29.46,127.61) .. controls (27,127.61) and (25,125.61) .. (25,123.14) -- cycle ;
			\draw  [fill={rgb, 255:red, 255; green, 255; blue, 255 }  ,fill opacity=1 ][line width=0.75]  (38.79,123.14) .. controls (38.79,120.68) and (40.79,118.68) .. (43.25,118.68) .. controls (45.71,118.68) and (47.71,120.68) .. (47.71,123.14) .. controls (47.71,125.61) and (45.71,127.61) .. (43.25,127.61) .. controls (40.79,127.61) and (38.79,125.61) .. (38.79,123.14) -- cycle ;
			\draw  [fill={rgb, 255:red, 255; green, 255; blue, 255 }  ,fill opacity=1 ][line width=0.75]  (46.9,145.04) .. controls (46.9,142.58) and (48.9,140.58) .. (51.36,140.58) .. controls (53.82,140.58) and (55.82,142.58) .. (55.82,145.04) .. controls (55.82,147.51) and (53.82,149.51) .. (51.36,149.51) .. controls (48.9,149.51) and (46.9,147.51) .. (46.9,145.04) -- cycle ;
			\draw  [fill={rgb, 255:red, 255; green, 255; blue, 255 }  ,fill opacity=1 ][line width=0.75]  (30.68,145.04) .. controls (30.68,142.58) and (32.68,140.58) .. (35.14,140.58) .. controls (37.6,140.58) and (39.6,142.58) .. (39.6,145.04) .. controls (39.6,147.51) and (37.6,149.51) .. (35.14,149.51) .. controls (32.68,149.51) and (30.68,147.51) .. (30.68,145.04) -- cycle ;
			\draw    (79.92,144.83) -- (79.92,165.51) ;
			\draw    (79.92,124.14) -- (79.92,144.83) ;
			\draw    (89.25,103.46) -- (97.71,124.14) ;
			\draw    (79.92,124.14) -- (89.25,103.46) ;
			\draw  [fill={rgb, 255:red, 255; green, 255; blue, 255 }  ,fill opacity=1 ][line width=0.75]  (84.79,103.46) .. controls (84.79,101) and (86.79,99) .. (89.25,99) .. controls (91.71,99) and (93.71,101) .. (93.71,103.46) .. controls (93.71,105.92) and (91.71,107.92) .. (89.25,107.92) .. controls (86.79,107.92) and (84.79,105.92) .. (84.79,103.46) -- cycle ;
			\draw  [fill={rgb, 255:red, 255; green, 255; blue, 255 }  ,fill opacity=1 ][line width=0.75]  (93.25,124.14) .. controls (93.25,121.68) and (95.25,119.68) .. (97.71,119.68) .. controls (100.17,119.68) and (102.17,121.68) .. (102.17,124.14) .. controls (102.17,126.61) and (100.17,128.61) .. (97.71,128.61) .. controls (95.25,128.61) and (93.25,126.61) .. (93.25,124.14) -- cycle ;
			\draw  [fill={rgb, 255:red, 255; green, 255; blue, 255 }  ,fill opacity=1 ][line width=0.75]  (75.46,124.14) .. controls (75.46,121.68) and (77.46,119.68) .. (79.92,119.68) .. controls (82.39,119.68) and (84.38,121.68) .. (84.38,124.14) .. controls (84.38,126.61) and (82.39,128.61) .. (79.92,128.61) .. controls (77.46,128.61) and (75.46,126.61) .. (75.46,124.14) -- cycle ;
			\draw  [fill={rgb, 255:red, 255; green, 255; blue, 255 }  ,fill opacity=1 ][line width=0.75]  (75.46,144.83) .. controls (75.46,142.36) and (77.46,140.37) .. (79.92,140.37) .. controls (82.39,140.37) and (84.38,142.36) .. (84.38,144.83) .. controls (84.38,147.29) and (82.39,149.29) .. (79.92,149.29) .. controls (77.46,149.29) and (75.46,147.29) .. (75.46,144.83) -- cycle ;
			\draw  [fill={rgb, 255:red, 255; green, 255; blue, 255 }  ,fill opacity=1 ][line width=0.75]  (75.46,165.51) .. controls (75.46,163.05) and (77.46,161.05) .. (79.92,161.05) .. controls (82.39,161.05) and (84.38,163.05) .. (84.38,165.51) .. controls (84.38,167.97) and (82.39,169.97) .. (79.92,169.97) .. controls (77.46,169.97) and (75.46,167.97) .. (75.46,165.51) -- cycle ;
			\draw  [fill={rgb, 255:red, 255; green, 255; blue, 255 }  ,fill opacity=1 ][line width=0.75]  (127.04,102.46) .. controls (127.04,100) and (129.04,98) .. (131.5,98) .. controls (133.96,98) and (135.96,100) .. (135.96,102.46) .. controls (135.96,104.92) and (133.96,106.92) .. (131.5,106.92) .. controls (129.04,106.92) and (127.04,104.92) .. (127.04,102.46) -- cycle ;
			\draw    (174.25,123.14) -- (166.14,145.04) ;
			\draw    (174.25,123.14) -- (182.36,145.04) ;
			\draw    (174.25,102.46) -- (174.25,123.14) ;
			\draw    (174.25,102.46) -- (188.04,123.14) ;
			\draw    (160.46,123.14) -- (174.25,102.46) ;
			\draw  [fill={rgb, 255:red, 255; green, 255; blue, 255 }  ,fill opacity=1 ][line width=0.75]  (169.79,102.46) .. controls (169.79,100) and (171.79,98) .. (174.25,98) .. controls (176.71,98) and (178.71,100) .. (178.71,102.46) .. controls (178.71,104.92) and (176.71,106.92) .. (174.25,106.92) .. controls (171.79,106.92) and (169.79,104.92) .. (169.79,102.46) -- cycle ;
			\draw  [fill={rgb, 255:red, 255; green, 255; blue, 255 }  ,fill opacity=1 ][line width=0.75]  (183.58,123.14) .. controls (183.58,120.68) and (185.58,118.68) .. (188.04,118.68) .. controls (190.5,118.68) and (192.5,120.68) .. (192.5,123.14) .. controls (192.5,125.61) and (190.5,127.61) .. (188.04,127.61) .. controls (185.58,127.61) and (183.58,125.61) .. (183.58,123.14) -- cycle ;
			\draw  [fill={rgb, 255:red, 255; green, 255; blue, 255 }  ,fill opacity=1 ][line width=0.75]  (156,123.14) .. controls (156,120.68) and (158,118.68) .. (160.46,118.68) .. controls (162.92,118.68) and (164.92,120.68) .. (164.92,123.14) .. controls (164.92,125.61) and (162.92,127.61) .. (160.46,127.61) .. controls (158,127.61) and (156,125.61) .. (156,123.14) -- cycle ;
			\draw  [fill={rgb, 255:red, 255; green, 255; blue, 255 }  ,fill opacity=1 ][line width=0.75]  (169.79,123.14) .. controls (169.79,120.68) and (171.79,118.68) .. (174.25,118.68) .. controls (176.71,118.68) and (178.71,120.68) .. (178.71,123.14) .. controls (178.71,125.61) and (176.71,127.61) .. (174.25,127.61) .. controls (171.79,127.61) and (169.79,125.61) .. (169.79,123.14) -- cycle ;
			\draw  [fill={rgb, 255:red, 255; green, 255; blue, 255 }  ,fill opacity=1 ][line width=0.75]  (177.9,145.04) .. controls (177.9,142.58) and (179.9,140.58) .. (182.36,140.58) .. controls (184.82,140.58) and (186.82,142.58) .. (186.82,145.04) .. controls (186.82,147.51) and (184.82,149.51) .. (182.36,149.51) .. controls (179.9,149.51) and (177.9,147.51) .. (177.9,145.04) -- cycle ;
			\draw  [fill={rgb, 255:red, 255; green, 255; blue, 255 }  ,fill opacity=1 ][line width=0.75]  (161.68,145.04) .. controls (161.68,142.58) and (163.68,140.58) .. (166.14,140.58) .. controls (168.6,140.58) and (170.6,142.58) .. (170.6,145.04) .. controls (170.6,147.51) and (168.6,149.51) .. (166.14,149.51) .. controls (163.68,149.51) and (161.68,147.51) .. (161.68,145.04) -- cycle ;
			\draw    (327.5,123.14) -- (319.39,145.04) ;
			\draw    (327.5,123.14) -- (335.61,145.04) ;
			\draw    (327.5,102.46) -- (327.5,123.14) ;
			\draw    (327.5,102.46) -- (341.29,123.14) ;
			\draw    (313.71,123.14) -- (327.5,102.46) ;
			\draw  [fill={rgb, 255:red, 255; green, 255; blue, 255 }  ,fill opacity=1 ][line width=2.25]  (323.04,102.46) .. controls (323.04,100) and (325.04,98) .. (327.5,98) .. controls (329.96,98) and (331.96,100) .. (331.96,102.46) .. controls (331.96,104.92) and (329.96,106.92) .. (327.5,106.92) .. controls (325.04,106.92) and (323.04,104.92) .. (323.04,102.46) -- cycle ;
			\draw  [fill={rgb, 255:red, 255; green, 255; blue, 255 }  ,fill opacity=1 ][line width=0.75]  (336.83,123.14) .. controls (336.83,120.68) and (338.83,118.68) .. (341.29,118.68) .. controls (343.75,118.68) and (345.75,120.68) .. (345.75,123.14) .. controls (345.75,125.61) and (343.75,127.61) .. (341.29,127.61) .. controls (338.83,127.61) and (336.83,125.61) .. (336.83,123.14) -- cycle ;
			\draw  [fill={rgb, 255:red, 255; green, 255; blue, 255 }  ,fill opacity=1 ][line width=0.75]  (309.25,123.14) .. controls (309.25,120.68) and (311.25,118.68) .. (313.71,118.68) .. controls (316.17,118.68) and (318.17,120.68) .. (318.17,123.14) .. controls (318.17,125.61) and (316.17,127.61) .. (313.71,127.61) .. controls (311.25,127.61) and (309.25,125.61) .. (309.25,123.14) -- cycle ;
			\draw  [fill={rgb, 255:red, 255; green, 255; blue, 255 }  ,fill opacity=1 ][line width=0.75]  (323.04,123.14) .. controls (323.04,120.68) and (325.04,118.68) .. (327.5,118.68) .. controls (329.96,118.68) and (331.96,120.68) .. (331.96,123.14) .. controls (331.96,125.61) and (329.96,127.61) .. (327.5,127.61) .. controls (325.04,127.61) and (323.04,125.61) .. (323.04,123.14) -- cycle ;
			\draw  [fill={rgb, 255:red, 255; green, 255; blue, 255 }  ,fill opacity=1 ][line width=0.75]  (331.15,145.04) .. controls (331.15,142.58) and (333.15,140.58) .. (335.61,140.58) .. controls (338.07,140.58) and (340.07,142.58) .. (340.07,145.04) .. controls (340.07,147.51) and (338.07,149.51) .. (335.61,149.51) .. controls (333.15,149.51) and (331.15,147.51) .. (331.15,145.04) -- cycle ;
			\draw  [fill={rgb, 255:red, 255; green, 255; blue, 255 }  ,fill opacity=1 ][line width=0.75]  (314.93,145.04) .. controls (314.93,142.58) and (316.93,140.58) .. (319.39,140.58) .. controls (321.85,140.58) and (323.85,142.58) .. (323.85,145.04) .. controls (323.85,147.51) and (321.85,149.51) .. (319.39,149.51) .. controls (316.93,149.51) and (314.93,147.51) .. (314.93,145.04) -- cycle ;
			\draw    (360.42,143.83) -- (360.42,164.51) ;
			\draw    (360.42,123.14) -- (360.42,143.83) ;
			\draw    (369.75,102.46) -- (378.21,123.14) ;
			\draw    (360.42,123.14) -- (369.75,102.46) ;
			\draw  [fill={rgb, 255:red, 255; green, 255; blue, 255 }  ,fill opacity=1 ][line width=2.25]  (365.29,102.46) .. controls (365.29,100) and (367.29,98) .. (369.75,98) .. controls (372.21,98) and (374.21,100) .. (374.21,102.46) .. controls (374.21,104.92) and (372.21,106.92) .. (369.75,106.92) .. controls (367.29,106.92) and (365.29,104.92) .. (365.29,102.46) -- cycle ;
			\draw  [fill={rgb, 255:red, 255; green, 255; blue, 255 }  ,fill opacity=1 ][line width=0.75]  (355.96,123.14) .. controls (355.96,120.68) and (357.96,118.68) .. (360.42,118.68) .. controls (362.89,118.68) and (364.88,120.68) .. (364.88,123.14) .. controls (364.88,125.61) and (362.89,127.61) .. (360.42,127.61) .. controls (357.96,127.61) and (355.96,125.61) .. (355.96,123.14) -- cycle ;
			\draw  [fill={rgb, 255:red, 255; green, 255; blue, 255 }  ,fill opacity=1 ][line width=0.75]  (373.75,123.14) .. controls (373.75,120.68) and (375.75,118.68) .. (378.21,118.68) .. controls (380.67,118.68) and (382.67,120.68) .. (382.67,123.14) .. controls (382.67,125.61) and (380.67,127.61) .. (378.21,127.61) .. controls (375.75,127.61) and (373.75,125.61) .. (373.75,123.14) -- cycle ;
			\draw  [fill={rgb, 255:red, 255; green, 255; blue, 255 }  ,fill opacity=1 ][line width=0.75]  (355.96,143.83) .. controls (355.96,141.36) and (357.96,139.37) .. (360.42,139.37) .. controls (362.89,139.37) and (364.88,141.36) .. (364.88,143.83) .. controls (364.88,146.29) and (362.89,148.29) .. (360.42,148.29) .. controls (357.96,148.29) and (355.96,146.29) .. (355.96,143.83) -- cycle ;
			\draw  [fill={rgb, 255:red, 255; green, 255; blue, 255 }  ,fill opacity=1 ][line width=0.75]  (355.96,164.51) .. controls (355.96,162.05) and (357.96,160.05) .. (360.42,160.05) .. controls (362.89,160.05) and (364.88,162.05) .. (364.88,164.51) .. controls (364.88,166.97) and (362.89,168.97) .. (360.42,168.97) .. controls (357.96,168.97) and (355.96,166.97) .. (355.96,164.51) -- cycle ;
			\draw  [fill={rgb, 255:red, 255; green, 255; blue, 255 }  ,fill opacity=1 ][line width=2.25]  (407.54,102.46) .. controls (407.54,100) and (409.54,98) .. (412,98) .. controls (414.46,98) and (416.46,100) .. (416.46,102.46) .. controls (416.46,104.92) and (414.46,106.92) .. (412,106.92) .. controls (409.54,106.92) and (407.54,104.92) .. (407.54,102.46) -- cycle ;
			\draw    (454.25,123.14) -- (446.14,145.04) ;
			\draw    (454.25,123.14) -- (462.36,145.04) ;
			\draw    (454.25,102.46) -- (454.25,123.14) ;
			\draw    (454.25,102.46) -- (468.04,123.14) ;
			\draw    (440.46,123.14) -- (454.25,102.46) ;
			\draw  [fill={rgb, 255:red, 255; green, 255; blue, 255 }  ,fill opacity=1 ][line width=2.25]  (449.79,102.46) .. controls (449.79,100) and (451.79,98) .. (454.25,98) .. controls (456.71,98) and (458.71,100) .. (458.71,102.46) .. controls (458.71,104.92) and (456.71,106.92) .. (454.25,106.92) .. controls (451.79,106.92) and (449.79,104.92) .. (449.79,102.46) -- cycle ;
			\draw  [fill={rgb, 255:red, 255; green, 255; blue, 255 }  ,fill opacity=1 ][line width=0.75]  (463.58,123.14) .. controls (463.58,120.68) and (465.58,118.68) .. (468.04,118.68) .. controls (470.5,118.68) and (472.5,120.68) .. (472.5,123.14) .. controls (472.5,125.61) and (470.5,127.61) .. (468.04,127.61) .. controls (465.58,127.61) and (463.58,125.61) .. (463.58,123.14) -- cycle ;
			\draw  [fill={rgb, 255:red, 255; green, 255; blue, 255 }  ,fill opacity=1 ][line width=0.75]  (436,123.14) .. controls (436,120.68) and (438,118.68) .. (440.46,118.68) .. controls (442.92,118.68) and (444.92,120.68) .. (444.92,123.14) .. controls (444.92,125.61) and (442.92,127.61) .. (440.46,127.61) .. controls (438,127.61) and (436,125.61) .. (436,123.14) -- cycle ;
			\draw  [fill={rgb, 255:red, 255; green, 255; blue, 255 }  ,fill opacity=1 ][line width=0.75]  (449.79,123.14) .. controls (449.79,120.68) and (451.79,118.68) .. (454.25,118.68) .. controls (456.71,118.68) and (458.71,120.68) .. (458.71,123.14) .. controls (458.71,125.61) and (456.71,127.61) .. (454.25,127.61) .. controls (451.79,127.61) and (449.79,125.61) .. (449.79,123.14) -- cycle ;
			\draw  [fill={rgb, 255:red, 255; green, 255; blue, 255 }  ,fill opacity=1 ][line width=0.75]  (457.9,145.04) .. controls (457.9,142.58) and (459.9,140.58) .. (462.36,140.58) .. controls (464.82,140.58) and (466.82,142.58) .. (466.82,145.04) .. controls (466.82,147.51) and (464.82,149.51) .. (462.36,149.51) .. controls (459.9,149.51) and (457.9,147.51) .. (457.9,145.04) -- cycle ;
			\draw  [fill={rgb, 255:red, 255; green, 255; blue, 255 }  ,fill opacity=1 ][line width=0.75]  (441.68,145.04) .. controls (441.68,142.58) and (443.68,140.58) .. (446.14,140.58) .. controls (448.6,140.58) and (450.6,142.58) .. (450.6,145.04) .. controls (450.6,147.51) and (448.6,149.51) .. (446.14,149.51) .. controls (443.68,149.51) and (441.68,147.51) .. (441.68,145.04) -- cycle ;
			
			\draw (37,75.4) node [anchor=north west][inner sep=0.75pt]    {$u$};
			\draw (168,75.4) node [anchor=north west][inner sep=0.75pt]    {$v$};
			\draw (102,60.4) node [anchor=north west][inner sep=0.75pt]    {${\displaystyle T}$};
			\draw (358,60.4) node [anchor=north west][inner sep=0.75pt]    {$\sigma ( T,u,v)$};

		\end{tikzpicture}

		\caption{A tree $T$ and a parsing of~$T$. The bold vertices are the roots of the trees.}
		\label{fig:parsing}
	\end{figure}

	Let $T_1, \ldots, T_m$ be a sequence of rooted trees. The \emph{gluing} $\tau(T_1, \ldots, T_m)$  of $T_1, \ldots, T_m$ is the (non-rooted) tree obtained by adding an edge between the root of $T_i$ and the root of $T_{i+1}$, for every $i \in \{1, \ldots, m-1\}$.

	The two following lemmas just follow from the definition of $\sigma$ and $\tau$:
	
	\begin{lemma}
		\label{lem:parsing}
		\begin{itemize}
			\item For every tree $T$ and every vertices $u,v \in V(T)$, we have $\tau(\sigma(T,u,v)) = T$.
			\item For every sequence of rooted trees~$T_1, \ldots, T_m$, let~$u$ (resp.~$v$) be the vertex corresponding to the root of~$T_1$ (resp. of~$T_m$) in $\tau(T_1, \ldots, T_m$). Then: $\sigma(\tau(T_1, \ldots, T_m),u,v) = T_1, \ldots, T_m$.
		\end{itemize}
	\end{lemma}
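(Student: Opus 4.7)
The plan is to prove both bullets by directly unpacking the definitions of $\sigma$ and $\tau$; no clever argument should be needed, only careful bookkeeping of vertices and edges.

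For the first bullet, I would start by letting $P_{uv}$ be the unique $u$-$v$ path in $T$, with vertices $w_1 = u, w_2, \ldots, w_m = v$. By definition of $\sigma$, the vertex sets $V(T_1), \ldots, V(T_m)$ form a partition of $V(T)$, since each vertex of $T$ belongs to exactly one connected component after removing the edges of $P_{uv}$, and $w_i \in V(T_i)$. Similarly, $E(T)$ decomposes as $E(T_1) \cup \cdots \cup E(T_m) \cup E(P_{uv})$, and the edges of $P_{uv}$ are precisely the $\{w_i, w_{i+1}\}$ for $i \in \{1, \ldots, m-1\}$. Since $w_i$ is the prescribed root of $T_i$, the gluing $\tau(T_1, \ldots, T_m)$ adds exactly these edges back to the disjoint union $T_1 \sqcup \cdots \sqcup T_m$, so $\tau(\sigma(T,u,v)) = T$.

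For the second bullet, let $T_1, \ldots, T_m$ be rooted trees with roots $r_1 = u, \ldots, r_m = v$, and set $T := \tau(T_1, \ldots, T_m)$. I would first check that $T$ is indeed a tree (its vertex set is the disjoint union of the $V(T_i)$, and the added bridge edges connect the components into a single acyclic graph). The key observation is that the unique $u$-$v$ path in $T$ is $r_1, r_2, \ldots, r_m$: each bridge $\{r_i, r_{i+1}\}$ is a cut edge separating $V(T_1) \cup \cdots \cup V(T_i)$ from $V(T_{i+1}) \cup \cdots \cup V(T_m)$, so any $u$-$v$ walk must traverse all of them, and by uniqueness of the path in a tree it must be exactly $r_1, \ldots, r_m$. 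Removing the edges of this path from $T$ leaves exactly the connected components $T_1, \ldots, T_m$ with roots $r_1, \ldots, r_m$, which by definition of $\sigma$ gives $\sigma(T, u, v) = T_1, \ldots, T_m$.

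The only non-trivial point is the identification of the $u$-$v$ path with the sequence of roots, but this follows from the cut-edge observation above; everything else is direct from the definitions. I therefore expect the actual write-up to be very short.
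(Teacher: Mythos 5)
Your proposal is correct; the paper gives no explicit proof and simply states that the lemma "just follows from the definition of $\sigma$ and $\tau$", and your write-up is exactly the routine verification of that claim (vertex/edge decomposition for the first bullet, the cut-edge identification of the $u$-$v$ path with the sequence of roots for the second). Nothing is missing.
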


	\begin{lemma}
		\label{lem:gluing_subsequence}
		Let $T_1, \ldots, T_m$ be a sequence of rooted trees, and let $T'_1, \ldots, T'_{m'}$ be a strict subsequence of $T_1, \ldots, T_m$ (that is, a sequence obtained from $T_1, \ldots, T_m$ by deleting some trees, and at least one). Then, the diameter of $\tau(T'_1, \ldots, T'_{m'})$ is less or equal than the diameter of $\tau(T_1, \ldots, T_m)$, and $\tau(T'_1, \ldots, T'_{m'})$ has strictly less vertices than $\tau(T_1, \ldots, T_m)$.
	\end{lemma}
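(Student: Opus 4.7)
The plan is to handle the two assertions in turn, treating the vertex count first because it is immediate and then working on the diameter bound, which needs a bit of notation.

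For the vertex count, I would simply observe that each rooted tree $T_i$ contains at least one vertex (its root) and that the operation $\tau$ preserves vertex sets: $|V(\tau(T_1,\ldots,T_m))| = \sum_i |V(T_i)|$, since $\tau$ only adds edges between roots. Hence deleting at least one tree $T_i$ from the sequence strictly reduces the vertex count, which gives the second claim.

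For the diameter, let $\iota : \{1,\ldots,m'\} \to \{1,\ldots,m\}$ be the strictly increasing map witnessing that $T'_1,\ldots,T'_{m'}$ is a subsequence of $T_1,\ldots,T_m$, so that $T'_i = T_{\iota(i)}$ as rooted trees. Let me denote the roots by $r_i$ (in the large gluing) and $r'_i$ (in the small gluing); note that as vertices of their own rooted tree, $r'_i = r_{\iota(i)}$. I would then pick an arbitrary pair $u,v$ of vertices of $\tau(T'_1,\ldots,T'_{m'})$, say $u \in T'_i$ and $v \in T'_j$ with $i \leq j$. Since a gluing is a tree, the unique $u$-$v$ path goes from $u$ up to the root of its rooted tree, then along the spine of roots, then down to $v$. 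This gives in the small gluing
\[
d_{\tau(T'_1,\ldots,T'_{m'})}(u,v) = d_{T'_i}(u, r'_i) + (j-i) + d_{T'_j}(r'_j, v),
\]
and in the large gluing
\[
d_{\tau(T_1,\ldots,T_m)}(u,v) = d_{T_{\iota(i)}}(u, r_{\iota(i)}) + (\iota(j)-\iota(i)) + d_{T_{\iota(j)}}(r_{\iota(j)}, v).
\]
The ``inside'' terms coincide because $T'_i = T_{\iota(i)}$ and $T'_j = T_{\iota(j)}$ as rooted trees, while $\iota(j) - \iota(i) \geq j - i$ since $\iota$ is strictly increasing. Hence the small distance is at most the large one.

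The diameter inequality then follows by taking the maximum over all pairs $u,v$ of $\tau(T'_1,\ldots,T'_{m'})$: each such pair sits in $\tau(T_1,\ldots,T_m)$ as well, with distance no smaller there. The only mildly delicate point, and where I would be most careful, is justifying that the shortest path in the gluing really decomposes as ``up to root, along spine, down from root'', but this is a direct consequence of $\tau$ producing a tree (so paths are unique) combined with the fact that the only edges between distinct $T_i$'s in the gluing are the spine edges between consecutive roots; no other routing between subtrees exists.
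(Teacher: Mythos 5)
Your proof is correct in substance and fills in the details that the paper leaves implicit (the paper simply asserts that this lemma follows from the definitions of $\sigma$ and $\tau$). One small imprecision: your distance decomposition
$d(u,v) = d_{T'_i}(u,r'_i) + (j-i) + d_{T'_j}(r'_j,v)$
is only valid when $i < j$; if $u$ and $v$ lie in the same rooted tree ($i=j$), the unique path between them stays inside that tree and need not pass through its root, so the right-hand side can overestimate the true distance. This does not damage the argument, since in that case the distance is computed entirely inside $T'_i = T_{\iota(i)}$ and is therefore literally the same in both gluings, but the case split should be made explicit. With that caveat handled, the vertex-count argument and the monotonicity of $\iota$ giving $\iota(j)-\iota(i)\geqslant j-i$ complete the proof exactly as intended.
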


	\subsection{Warm-up 2: constant size-certification and automata point of view}
	
	The proof of Theorem~\ref{thm:gap-tree-radius-1} will rely on automata. The proof generalizes of the proof of Theorem~\ref{thm:gap-paths-state-complexity} from paths to trees. For every property $\P$ on trees, we define the set $S$ of all trees that satisfy~$\P$. 
	For every $k \in \N$, let us denote by $C_k$ the set of certificates of size~$k$, and by $S_k$ the subset of~$S$ containing the trees that are accepted with certificates in~$C_k$.
	We have: $S = \bigcup_{k \in \N} S_k$.
	In this section, our goal is to construct an automaton having its transitions labeled by rooted trees, that accepts all the parsings of the trees in~$S_k$, to be able to use the same kind of arguments as in the proof of Theorem~\ref{thm:gap-paths-state-complexity}.
	
	In what follows, we will thus define a non-deterministic finite automaton $\A_k$, having its transitions labeled by rooted trees. In other words, the automaton has a finite number of states but there can be an infinite number of transitions (since the set of rooted trees is infinite). A sequence $T_1, \ldots, T_m$ of rooted trees is said to be \emph{accepted by $\A_k$} if there exists a sequence of transitions $p_0 \xrightarrow{T_1} p_1 \xrightarrow{T_2} \ldots \xrightarrow{T_m} p_m$ in~$\A_k$, where $p_0$ is an initial state, $p_m$ is a final state, and for every $i \in \{1, \ldots, m\}$ there is a transition from~$p_{i-1}$ to~$p_i$ labeled by~$T_i$. See Figure~\ref{fig:automata_trees} for an example.
	
	
	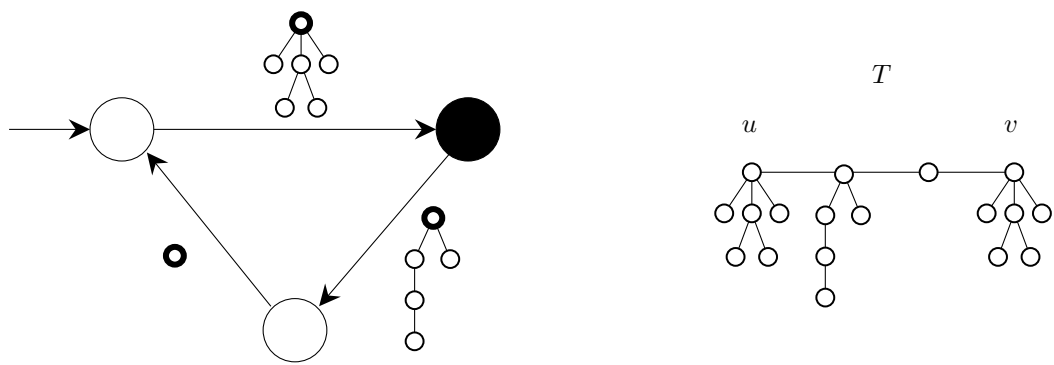
\begin{figure}[h!]
		\centering
		
		\begin{tikzpicture}[x=0.75pt,y=0.75pt,yscale=-1,xscale=1]
			
			\draw    (226.92,199.83) -- (226.92,220.51) ;
			\draw    (226.92,179.14) -- (226.92,199.83) ;
			\draw    (170.25,81.14) -- (162.14,103.04) ;
			\draw    (170.25,81.14) -- (178.36,103.04) ;
			\draw    (170.25,60.46) -- (170.25,81.14) ;
			\draw    (170.25,60.46) -- (184.04,81.14) ;
			\draw    (156.46,81.14) -- (170.25,60.46) ;
			\draw  [fill={rgb, 255:red, 255; green, 255; blue, 255 }  ,fill opacity=1 ] (64.9,113.9) .. controls (64.9,105.12) and (72.02,98) .. (80.81,98) .. controls (89.59,98) and (96.71,105.12) .. (96.71,113.9) .. controls (96.71,122.69) and (89.59,129.81) .. (80.81,129.81) .. controls (72.02,129.81) and (64.9,122.69) .. (64.9,113.9) -- cycle ;
			\draw  [fill={rgb, 255:red, 0; green, 0; blue, 0 }  ,fill opacity=1 ] (237.69,113.9) .. controls (237.69,105.12) and (244.81,98) .. (253.6,98) .. controls (262.38,98) and (269.5,105.12) .. (269.5,113.9) .. controls (269.5,122.69) and (262.38,129.81) .. (253.6,129.81) .. controls (244.81,129.81) and (237.69,122.69) .. (237.69,113.9) -- cycle ;
			\draw  [fill={rgb, 255:red, 255; green, 255; blue, 255 }  ,fill opacity=1 ] (151.3,214.91) .. controls (151.3,206.13) and (158.42,199.01) .. (167.2,199.01) .. controls (175.99,199.01) and (183.11,206.13) .. (183.11,214.91) .. controls (183.11,223.7) and (175.99,230.82) .. (167.2,230.82) .. controls (158.42,230.82) and (151.3,223.7) .. (151.3,214.91) -- cycle ;
			\draw    (96.71,113.9) -- (234.69,113.9) ;
			\draw [shift={(237.69,113.9)}, rotate = 180] [fill={rgb, 255:red, 0; green, 0; blue, 0 }  ][line width=0.08]  [draw opacity=0] (10.72,-5.15) -- (0,0) -- (10.72,5.15) -- (7.12,0) -- cycle    ;
			\draw    (244.57,125.94) -- (181.18,200.59) ;
			\draw [shift={(179.24,202.88)}, rotate = 310.34] [fill={rgb, 255:red, 0; green, 0; blue, 0 }  ][line width=0.08]  [draw opacity=0] (10.72,-5.15) -- (0,0) -- (10.72,5.15) -- (7.12,0) -- cycle    ;
			\draw    (155.17,202.88) -- (95.15,127.85) ;
			\draw [shift={(93.27,125.51)}, rotate = 51.34] [fill={rgb, 255:red, 0; green, 0; blue, 0 }  ][line width=0.08]  [draw opacity=0] (10.72,-5.15) -- (0,0) -- (10.72,5.15) -- (7.12,0) -- cycle    ;
			\draw    (24.5,113.9) -- (61.9,113.9) ;
			\draw [shift={(64.9,113.9)}, rotate = 180] [fill={rgb, 255:red, 0; green, 0; blue, 0 }  ][line width=0.08]  [draw opacity=0] (10.72,-5.15) -- (0,0) -- (10.72,5.15) -- (7.12,0) -- cycle    ;
			\draw  [fill={rgb, 255:red, 255; green, 255; blue, 255 }  ,fill opacity=1 ][line width=2.25]  (165.79,60.46) .. controls (165.79,58) and (167.79,56) .. (170.25,56) .. controls (172.71,56) and (174.71,58) .. (174.71,60.46) .. controls (174.71,62.92) and (172.71,64.92) .. (170.25,64.92) .. controls (167.79,64.92) and (165.79,62.92) .. (165.79,60.46) -- cycle ;
			\draw  [fill={rgb, 255:red, 255; green, 255; blue, 255 }  ,fill opacity=1 ][line width=0.75]  (179.58,81.14) .. controls (179.58,78.68) and (181.58,76.68) .. (184.04,76.68) .. controls (186.5,76.68) and (188.5,78.68) .. (188.5,81.14) .. controls (188.5,83.61) and (186.5,85.61) .. (184.04,85.61) .. controls (181.58,85.61) and (179.58,83.61) .. (179.58,81.14) -- cycle ;
			\draw  [fill={rgb, 255:red, 255; green, 255; blue, 255 }  ,fill opacity=1 ][line width=0.75]  (152,81.14) .. controls (152,78.68) and (154,76.68) .. (156.46,76.68) .. controls (158.92,76.68) and (160.92,78.68) .. (160.92,81.14) .. controls (160.92,83.61) and (158.92,85.61) .. (156.46,85.61) .. controls (154,85.61) and (152,83.61) .. (152,81.14) -- cycle ;
			\draw  [fill={rgb, 255:red, 255; green, 255; blue, 255 }  ,fill opacity=1 ][line width=0.75]  (165.79,81.14) .. controls (165.79,78.68) and (167.79,76.68) .. (170.25,76.68) .. controls (172.71,76.68) and (174.71,78.68) .. (174.71,81.14) .. controls (174.71,83.61) and (172.71,85.61) .. (170.25,85.61) .. controls (167.79,85.61) and (165.79,83.61) .. (165.79,81.14) -- cycle ;
			\draw  [fill={rgb, 255:red, 255; green, 255; blue, 255 }  ,fill opacity=1 ][line width=0.75]  (173.9,103.04) .. controls (173.9,100.58) and (175.9,98.58) .. (178.36,98.58) .. controls (180.82,98.58) and (182.82,100.58) .. (182.82,103.04) .. controls (182.82,105.51) and (180.82,107.51) .. (178.36,107.51) .. controls (175.9,107.51) and (173.9,105.51) .. (173.9,103.04) -- cycle ;
			\draw  [fill={rgb, 255:red, 255; green, 255; blue, 255 }  ,fill opacity=1 ][line width=0.75]  (157.68,103.04) .. controls (157.68,100.58) and (159.68,98.58) .. (162.14,98.58) .. controls (164.6,98.58) and (166.6,100.58) .. (166.6,103.04) .. controls (166.6,105.51) and (164.6,107.51) .. (162.14,107.51) .. controls (159.68,107.51) and (157.68,105.51) .. (157.68,103.04) -- cycle ;
			\draw    (236.25,158.46) -- (244.71,179.14) ;
			\draw    (226.92,179.14) -- (236.25,158.46) ;
			\draw  [fill={rgb, 255:red, 255; green, 255; blue, 255 }  ,fill opacity=1 ][line width=2.25]  (231.79,158.46) .. controls (231.79,156) and (233.79,154) .. (236.25,154) .. controls (238.71,154) and (240.71,156) .. (240.71,158.46) .. controls (240.71,160.92) and (238.71,162.92) .. (236.25,162.92) .. controls (233.79,162.92) and (231.79,160.92) .. (231.79,158.46) -- cycle ;
			\draw  [fill={rgb, 255:red, 255; green, 255; blue, 255 }  ,fill opacity=1 ][line width=0.75]  (240.25,179.14) .. controls (240.25,176.68) and (242.25,174.68) .. (244.71,174.68) .. controls (247.17,174.68) and (249.17,176.68) .. (249.17,179.14) .. controls (249.17,181.61) and (247.17,183.61) .. (244.71,183.61) .. controls (242.25,183.61) and (240.25,181.61) .. (240.25,179.14) -- cycle ;
			\draw  [fill={rgb, 255:red, 255; green, 255; blue, 255 }  ,fill opacity=1 ][line width=0.75]  (222.46,179.14) .. controls (222.46,176.68) and (224.46,174.68) .. (226.92,174.68) .. controls (229.39,174.68) and (231.38,176.68) .. (231.38,179.14) .. controls (231.38,181.61) and (229.39,183.61) .. (226.92,183.61) .. controls (224.46,183.61) and (222.46,181.61) .. (222.46,179.14) -- cycle ;
			\draw  [fill={rgb, 255:red, 255; green, 255; blue, 255 }  ,fill opacity=1 ][line width=0.75]  (222.46,199.83) .. controls (222.46,197.36) and (224.46,195.37) .. (226.92,195.37) .. controls (229.39,195.37) and (231.38,197.36) .. (231.38,199.83) .. controls (231.38,202.29) and (229.39,204.29) .. (226.92,204.29) .. controls (224.46,204.29) and (222.46,202.29) .. (222.46,199.83) -- cycle ;
			\draw  [fill={rgb, 255:red, 255; green, 255; blue, 255 }  ,fill opacity=1 ][line width=2.25]  (102.79,177.46) .. controls (102.79,175) and (104.79,173) .. (107.25,173) .. controls (109.71,173) and (111.71,175) .. (111.71,177.46) .. controls (111.71,179.92) and (109.71,181.92) .. (107.25,181.92) .. controls (104.79,181.92) and (102.79,179.92) .. (102.79,177.46) -- cycle ;
			\draw  [fill={rgb, 255:red, 255; green, 255; blue, 255 }  ,fill opacity=1 ][line width=0.75]  (222.46,220.51) .. controls (222.46,218.05) and (224.46,216.05) .. (226.92,216.05) .. controls (229.39,216.05) and (231.38,218.05) .. (231.38,220.51) .. controls (231.38,222.97) and (229.39,224.97) .. (226.92,224.97) .. controls (224.46,224.97) and (222.46,222.97) .. (222.46,220.51) -- cycle ;
			\draw    (395.25,135.46) -- (441.25,135.46) ;
			\draw    (441.25,135.46) -- (483.5,135.46) ;
			\draw    (483.5,135.46) -- (525.75,135.46) ;
			\draw    (395.25,156.14) -- (387.14,178.04) ;
			\draw    (395.25,156.14) -- (403.36,178.04) ;
			\draw    (395.25,135.46) -- (395.25,156.14) ;
			\draw    (395.25,135.46) -- (409.04,156.14) ;
			\draw    (381.46,156.14) -- (395.25,135.46) ;
			\draw  [fill={rgb, 255:red, 255; green, 255; blue, 255 }  ,fill opacity=1 ][line width=0.75]  (390.79,135.46) .. controls (390.79,133) and (392.79,131) .. (395.25,131) .. controls (397.71,131) and (399.71,133) .. (399.71,135.46) .. controls (399.71,137.92) and (397.71,139.92) .. (395.25,139.92) .. controls (392.79,139.92) and (390.79,137.92) .. (390.79,135.46) -- cycle ;
			\draw  [fill={rgb, 255:red, 255; green, 255; blue, 255 }  ,fill opacity=1 ][line width=0.75]  (404.58,156.14) .. controls (404.58,153.68) and (406.58,151.68) .. (409.04,151.68) .. controls (411.5,151.68) and (413.5,153.68) .. (413.5,156.14) .. controls (413.5,158.61) and (411.5,160.61) .. (409.04,160.61) .. controls (406.58,160.61) and (404.58,158.61) .. (404.58,156.14) -- cycle ;
			\draw  [fill={rgb, 255:red, 255; green, 255; blue, 255 }  ,fill opacity=1 ][line width=0.75]  (377,156.14) .. controls (377,153.68) and (379,151.68) .. (381.46,151.68) .. controls (383.92,151.68) and (385.92,153.68) .. (385.92,156.14) .. controls (385.92,158.61) and (383.92,160.61) .. (381.46,160.61) .. controls (379,160.61) and (377,158.61) .. (377,156.14) -- cycle ;
			\draw  [fill={rgb, 255:red, 255; green, 255; blue, 255 }  ,fill opacity=1 ][line width=0.75]  (390.79,156.14) .. controls (390.79,153.68) and (392.79,151.68) .. (395.25,151.68) .. controls (397.71,151.68) and (399.71,153.68) .. (399.71,156.14) .. controls (399.71,158.61) and (397.71,160.61) .. (395.25,160.61) .. controls (392.79,160.61) and (390.79,158.61) .. (390.79,156.14) -- cycle ;
			\draw  [fill={rgb, 255:red, 255; green, 255; blue, 255 }  ,fill opacity=1 ][line width=0.75]  (398.9,178.04) .. controls (398.9,175.58) and (400.9,173.58) .. (403.36,173.58) .. controls (405.82,173.58) and (407.82,175.58) .. (407.82,178.04) .. controls (407.82,180.51) and (405.82,182.51) .. (403.36,182.51) .. controls (400.9,182.51) and (398.9,180.51) .. (398.9,178.04) -- cycle ;
			\draw  [fill={rgb, 255:red, 255; green, 255; blue, 255 }  ,fill opacity=1 ][line width=0.75]  (382.68,178.04) .. controls (382.68,175.58) and (384.68,173.58) .. (387.14,173.58) .. controls (389.6,173.58) and (391.6,175.58) .. (391.6,178.04) .. controls (391.6,180.51) and (389.6,182.51) .. (387.14,182.51) .. controls (384.68,182.51) and (382.68,180.51) .. (382.68,178.04) -- cycle ;
			\draw    (431.92,177.83) -- (431.92,198.51) ;
			\draw    (431.92,157.14) -- (431.92,177.83) ;
			\draw    (441.25,136.46) -- (449.71,157.14) ;
			\draw    (431.92,157.14) -- (441.25,136.46) ;
			\draw  [fill={rgb, 255:red, 255; green, 255; blue, 255 }  ,fill opacity=1 ][line width=0.75]  (436.79,136.46) .. controls (436.79,134) and (438.79,132) .. (441.25,132) .. controls (443.71,132) and (445.71,134) .. (445.71,136.46) .. controls (445.71,138.92) and (443.71,140.92) .. (441.25,140.92) .. controls (438.79,140.92) and (436.79,138.92) .. (436.79,136.46) -- cycle ;
			\draw  [fill={rgb, 255:red, 255; green, 255; blue, 255 }  ,fill opacity=1 ][line width=0.75]  (445.25,157.14) .. controls (445.25,154.68) and (447.25,152.68) .. (449.71,152.68) .. controls (452.17,152.68) and (454.17,154.68) .. (454.17,157.14) .. controls (454.17,159.61) and (452.17,161.61) .. (449.71,161.61) .. controls (447.25,161.61) and (445.25,159.61) .. (445.25,157.14) -- cycle ;
			\draw  [fill={rgb, 255:red, 255; green, 255; blue, 255 }  ,fill opacity=1 ][line width=0.75]  (427.46,157.14) .. controls (427.46,154.68) and (429.46,152.68) .. (431.92,152.68) .. controls (434.39,152.68) and (436.38,154.68) .. (436.38,157.14) .. controls (436.38,159.61) and (434.39,161.61) .. (431.92,161.61) .. controls (429.46,161.61) and (427.46,159.61) .. (427.46,157.14) -- cycle ;
			\draw  [fill={rgb, 255:red, 255; green, 255; blue, 255 }  ,fill opacity=1 ][line width=0.75]  (427.46,177.83) .. controls (427.46,175.36) and (429.46,173.37) .. (431.92,173.37) .. controls (434.39,173.37) and (436.38,175.36) .. (436.38,177.83) .. controls (436.38,180.29) and (434.39,182.29) .. (431.92,182.29) .. controls (429.46,182.29) and (427.46,180.29) .. (427.46,177.83) -- cycle ;
			\draw  [fill={rgb, 255:red, 255; green, 255; blue, 255 }  ,fill opacity=1 ][line width=0.75]  (427.46,198.51) .. controls (427.46,196.05) and (429.46,194.05) .. (431.92,194.05) .. controls (434.39,194.05) and (436.38,196.05) .. (436.38,198.51) .. controls (436.38,200.97) and (434.39,202.97) .. (431.92,202.97) .. controls (429.46,202.97) and (427.46,200.97) .. (427.46,198.51) -- cycle ;
			\draw  [fill={rgb, 255:red, 255; green, 255; blue, 255 }  ,fill opacity=1 ][line width=0.75]  (479.04,135.46) .. controls (479.04,133) and (481.04,131) .. (483.5,131) .. controls (485.96,131) and (487.96,133) .. (487.96,135.46) .. controls (487.96,137.92) and (485.96,139.92) .. (483.5,139.92) .. controls (481.04,139.92) and (479.04,137.92) .. (479.04,135.46) -- cycle ;
			\draw    (526.25,156.14) -- (518.14,178.04) ;
			\draw    (526.25,156.14) -- (534.36,178.04) ;
			\draw    (526.25,135.46) -- (526.25,156.14) ;
			\draw    (526.25,135.46) -- (540.04,156.14) ;
			\draw    (512.46,156.14) -- (526.25,135.46) ;
			\draw  [fill={rgb, 255:red, 255; green, 255; blue, 255 }  ,fill opacity=1 ][line width=0.75]  (521.79,135.46) .. controls (521.79,133) and (523.79,131) .. (526.25,131) .. controls (528.71,131) and (530.71,133) .. (530.71,135.46) .. controls (530.71,137.92) and (528.71,139.92) .. (526.25,139.92) .. controls (523.79,139.92) and (521.79,137.92) .. (521.79,135.46) -- cycle ;
			\draw  [fill={rgb, 255:red, 255; green, 255; blue, 255 }  ,fill opacity=1 ][line width=0.75]  (535.58,156.14) .. controls (535.58,153.68) and (537.58,151.68) .. (540.04,151.68) .. controls (542.5,151.68) and (544.5,153.68) .. (544.5,156.14) .. controls (544.5,158.61) and (542.5,160.61) .. (540.04,160.61) .. controls (537.58,160.61) and (535.58,158.61) .. (535.58,156.14) -- cycle ;
			\draw  [fill={rgb, 255:red, 255; green, 255; blue, 255 }  ,fill opacity=1 ][line width=0.75]  (508,156.14) .. controls (508,153.68) and (510,151.68) .. (512.46,151.68) .. controls (514.92,151.68) and (516.92,153.68) .. (516.92,156.14) .. controls (516.92,158.61) and (514.92,160.61) .. (512.46,160.61) .. controls (510,160.61) and (508,158.61) .. (508,156.14) -- cycle ;
			\draw  [fill={rgb, 255:red, 255; green, 255; blue, 255 }  ,fill opacity=1 ][line width=0.75]  (521.79,156.14) .. controls (521.79,153.68) and (523.79,151.68) .. (526.25,151.68) .. controls (528.71,151.68) and (530.71,153.68) .. (530.71,156.14) .. controls (530.71,158.61) and (528.71,160.61) .. (526.25,160.61) .. controls (523.79,160.61) and (521.79,158.61) .. (521.79,156.14) -- cycle ;
			\draw  [fill={rgb, 255:red, 255; green, 255; blue, 255 }  ,fill opacity=1 ][line width=0.75]  (529.9,178.04) .. controls (529.9,175.58) and (531.9,173.58) .. (534.36,173.58) .. controls (536.82,173.58) and (538.82,175.58) .. (538.82,178.04) .. controls (538.82,180.51) and (536.82,182.51) .. (534.36,182.51) .. controls (531.9,182.51) and (529.9,180.51) .. (529.9,178.04) -- cycle ;
			\draw  [fill={rgb, 255:red, 255; green, 255; blue, 255 }  ,fill opacity=1 ][line width=0.75]  (513.68,178.04) .. controls (513.68,175.58) and (515.68,173.58) .. (518.14,173.58) .. controls (520.6,173.58) and (522.6,175.58) .. (522.6,178.04) .. controls (522.6,180.51) and (520.6,182.51) .. (518.14,182.51) .. controls (515.68,182.51) and (513.68,180.51) .. (513.68,178.04) -- cycle ;
			
			\draw (389,108.4) node [anchor=north west][inner sep=0.75pt]    {$u$};
			\draw (520,108.4) node [anchor=north west][inner sep=0.75pt]    {$v$};
			\draw (454,80.4) node [anchor=north west][inner sep=0.75pt]    {${\displaystyle T}$};

		\end{tikzpicture}
		
		\caption{An finite automata~$\A$ with transitions labeled by rooted trees, and a tree~$T$ such that $\sigma(T,u,v)$ is accepted by~$\A$. Here the state with the incoming arrow is the only initial state, and the black state is the only final state.}
		\label{fig:automata_trees}
	\end{figure}
	
	Let us now explain how we can define such an automaton with the certification procedure.
	Informally speaking, if a vertex $r$ accepts with two neighbors $u,v$ and a tree~$T$ rooted on~$r$, with certificates $c_1,c_2,c_3$ assigned to $u,r,v$ respectively (and some other certificates assigned to the vertices of~$T$), then we add a transition from the state~$(c_1,c_2)$ to the state~$(c_2,c_3)$ labeled $T$.
	Let us formalize this definition.
	
	For every $k \in \N$, we construct the following non-deterministic automaton $\A_k$. The set of states consists of all pairs of certificates of size~$k$ plus two additional states, $i$ and~$f$. So the set of states is: $C_k^2 \cup \{i,f\}$. There is a single initial state, which is~$i$, and a single final state, which is~$f$. The transitions\footnote{The three cases correspond to the fact that a node can be at the beginning, in the middle or at the end of a path going from a node to another.} are the following:
	\begin{itemize}
		\item \textit{(Middle node transitions)} for every $c_1, c_2, c_3 \in C_k$ and every rooted tree~$T$, consider the tree $T'$ obtained by taking a copy of $T$ and adding two additional neighbors~$u$ and~$v$ to the root~$r$ of~$T$ (and seeing $T'$ as a non-rooted tree). We create a transition from $(c_1,c_2)$ to $(c_2,c_3)$ labeled by~$T$ if and only if there exists a certificate assignment $c : V(T') \to C_k$ such that $c(u) = c_1$, $c(r) = c_2$, $c(v) = c_3$, and all the vertices in $V(T') \setminus \{u,v\}$ accept.
		
		\item \textit{(Initial node transitions)} for every $c_1, c_2 \in C_k$ and every rooted tree~$T$, consider the tree $T'$ obtained by taking a copy of~$T$ and adding one additional neighbor~$v$ to the root $r$ of $T$. We create a transition from $i$ to $(c_1, c_2)$ labeled by~$T$ if and only if there exists a certificate assignment $c : V(T') \to C_k$ such that $c(r) = c_1$, $c(v)=c_2$, and all the vertices in $V(T') \setminus \{v\}$ accept.
		
		\item \textit{(Final node transitions)} for every $c_1, c_2 \in C_k$ and every rooted tree~$T$, consider the tree $T'$ obtained by taking a copy of~$T$ and adding one additional neighbor~$u$ to the root $r$ of $T$. We create a transition from~$(c_1, c_2)$ to~$f$ labeled by~$T$ if and only if there exists a certificate assignment $c : V(T') \to C_k$ such that $c(u) = c_1$, $c(r)=c_2$, and all the vertices in $V(T') \setminus \{u\}$ accept.
		
		\item for every rooted tree~$T$, we put a transition from~$i$ to~$f$ labeled by~$T$ if and only if there exists a certificate assignment to the vertices of $T$ (seen as a non-rooted tree) such that all the vertices accept.
	\end{itemize}

	Now, we prove the following result:
	
	\begin{proposition}
		\label{prop:trees_accepted}
		Let $T$ be a tree. The following are equivalent:
		
		\begin{enumerate}[label=(\roman*)]
			\item $T \in S_k$
			\item For all $u,v \in V(T)$, $\sigma(T,u,v)$ is accepted by~$\A_k$.
			\item There exist $u,v \in V(T)$ such that $\sigma(T,u,v)$ is accepted by~$\A_k$.
		\end{enumerate}
	\end{proposition}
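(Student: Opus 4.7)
The plan is to establish the cycle of implications (i) $\Rightarrow$ (ii) $\Rightarrow$ (iii) $\Rightarrow$ (i), with (ii) $\Rightarrow$ (iii) being immediate since $V(T)$ is non-empty, so any parsing witness exists as soon as every parsing is accepted.

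For (i) $\Rightarrow$ (ii), the plan is to fix an arbitrary certificate assignment $c: V(T) \to C_k$ making every vertex accept, and fix arbitrary $u, v \in V(T)$. Write $\sigma(T,u,v) = T_1, \ldots, T_m$ with $T_i$ rooted at the $i$-th vertex $w_i$ of the $u$-$v$ path. The candidate accepting run will pass through the states $i, (c(w_1),c(w_2)), (c(w_2),c(w_3)), \ldots, (c(w_{m-1}),c(w_m)), f$ using transitions labeled $T_1, \ldots, T_m$ respectively (with the degenerate case $m=1$ handled by the direct $i \to f$ transition). The core check is that each transition exists: for a middle index $i$, the neighborhood of $w_i$ in $T$ at distance $1$ consists exactly of $w_{i-1}$, $w_{i+1}$, and the children of $w_i$ inside $T_i$, which coincides with the view of the root in the tree $T'$ built from $T_i$ by attaching two new neighbors; restricting $c$ to the vertices of $T_i$ together with the labels $c(w_{i-1})$, $c(w_{i+1})$ gives exactly the certificate assignment demanded by the definition of the middle-node transition, and all vertices involved accept by assumption. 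The endpoint and isolated cases are handled symmetrically using the initial-node, final-node and $i \to f$ transitions.

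For (iii) $\Rightarrow$ (i), the plan is reciprocal: given $u,v \in V(T)$ and an accepting run for $\sigma(T,u,v) = T_1,\ldots,T_m$, each transition of the run witnesses the existence of a certificate assignment on the corresponding $T_i$ (together with the root certificate prescribed by the source/target state of the transition) that makes every vertex of $T_i$ accept when viewed inside the enlarged tree $T'$ with the extra neighbor(s) carrying the prescribed certificates. Because consecutive states of the run share their second coordinate with the next state's first coordinate, the root certificate of $T_i$ prescribed by the $i$-th transition is consistent with the certificate of the root of $T_{i+1}$'s extra neighbor; in other words, the certificate assigned to $w_i$ is unambiguous. Gluing these per-tree certificate assignments yields a global assignment $c$ on $V(T)$. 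The key verification is that this $c$ is accepted by every vertex: a vertex lying strictly inside some $T_i$ (i.e. not a root) has identical view in $T$ and in the enlarged $T'$ used to define the transition, so it accepts; a root $w_i$ with $1 < i < m$ sees $w_{i-1}, w_{i+1}$ and the children of $w_i$ in $T_i$, matching exactly the view used in the middle-node transition definition; endpoints are handled by the initial/final transition cases.

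The only place requiring care is that the same vertex $w_i$ belongs to a single $T_i$ only (as a root) while its neighborhood crosses into $T_{i-1}$ and $T_{i+1}$ via its path neighbors; the matching of local views hinges on the fact that in radius-$1$ verification the view at $w_i$ is determined solely by the two path-neighbor certificates and the pending subtree~$T_i$. This is exactly what is encoded by the transition relation of $\A_k$, and is the reason the construction is faithful. I expect this bookkeeping — reconciling the ``two enlarged neighbors'' in the automaton definition with the ``two path neighbors already present'' in the tree — to be the main (though routine) step, with everything else being a straightforward translation between runs of $\A_k$ and certificate assignments on~$T$.
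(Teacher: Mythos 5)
Your proposal is correct and follows essentially the same route as the paper: the implication cycle (i)\,$\Rightarrow$\,(ii)\,$\Rightarrow$\,(iii)\,$\Rightarrow$\,(i), reading off transitions from an accepting certificate assignment along the $u$--$v$ path for the forward direction, and gluing per-transition witness assignments (made consistent by the shared coordinates of consecutive states) for the converse. The view-matching bookkeeping you flag is exactly the content of the paper's argument.
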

	
	\begin{proof}
		Let us first prove the implication $(i) \Rightarrow (ii)$. Assume that $T \in S_k$. Then, there exists an assignment of certificates in~$C_k$ to the vertices of~$T$ such that every vertex accepts. Let $u,v \in V(T)$, and let $T_1, \ldots, T_m := \sigma(T,u,v)$.
		If $u = v$, then $m=1$ and $T_1$ is the tree $T$ rooted at $u$. By definition of $\A_k$, there is a transition from~$i$ to~$f$ labeled by~$T_1$, and the result follows. If $u \neq v$, for every $i \in \{1, \ldots, m\}$, let~$w_i$ be the vertex in~$T$ corresponding to the root of~$T_i$ (with $w_1 = u$ and $w_m = v$), and let $c_i$ be the certificate assigned to~$w_i$. 
		By definition of~$\A_k$, there is a transition from $i$ to $(c_1, c_2)$ labeled by $T_1$ (since $w_1$ accepts), there is a transition from $(c_{m-1}, c_m)$ to $f$ labeled by $T_m$ (since $w_m$ accepts), and for every $i \in \{2, \ldots, m-1\}$ there is a transition from $(c_{i-1}, c_i)$ to $(c_i, c_{i+1})$ labeled by $T_i$ (since $w_i$ accepts). Thus, $\sigma(T,u,v)$ is accepted by $\A_k$.
		
		The implication $(ii) \Rightarrow (iii)$ is direct. 
		
		Let us finally prove $(iii) \Rightarrow (i)$. 
		Assume that there exist $u,v \in V(T)$ such that $\sigma(T,u,v) = T_1, \ldots, T_m$ is accepted by~$\A_k$. Let us show that there exists an assignment of certificates in~$C_k$ to the vertices of~$T$ such that each vertex accepts. If $m=1$, then $T_1$ is the tree~$T$ rooted at~$u$, and there is a transition $i \xrightarrow{T_1} f$ in $\A_k$. By definition of the transitions, there exists a certificate assignment to the vertices of~$T$ such that all the vertices accept. Assume now that $m \geqslant 2$. Since $\sigma(T,u,v)$ is accepted by~$\A_k$, there exists a sequence of transitions in $\A_k$ of the form:
		
		$$i \xrightarrow{T_1} (c_1, c_2) \xrightarrow{T_2} \;\ldots\; \xrightarrow{T_{m-1}} (c_{m-1}, c_m) \xrightarrow{T_m} f$$
		
		For every $i \in \{1, \ldots, m\}$, let $w_i$ be the vertex in~$T$ corresponding to the root of $T_i$.
		For every $i \in \{1, \ldots, m\}$, we assign the certificate $c_i$ to $w_i$. Then, since there is the transition $i \xrightarrow{T_1} (c_1,c_2)$ in~$\A_k$, there exists an assignment of certificates to the vertices of~$T_1$ such that $w_1$ and all the vertices in~$T_1$ accept. For every $i \in \{2, \ldots, m-1\}$, since there is the transition $(c_{i-1},c_i) \xrightarrow{T_i} (c_i, c_{i+1})$ in $\A_k$, there exists an assignment of certificates to the vertices of~$T_i$ such that $w_i$ and all the vertices in~$T_i$ accept. Finally, since there is the transition $(c_{m-1},c_m) \xrightarrow{T_m} f$ in $\A_k$, there exists an assignment of certificates to the vertices of~$T_m$ such that $w_m$ and all the vertices in~$T_m$ accept. So $T \in S_k$.
	\end{proof}

	\begin{corollary}
		\label{cor:trees_accepted}
		For every $k \in \N$, the language over the alphabet of rooted trees accepted by~$\A_k$ is equal to $\{\sigma(T,u,v) \; | \; T \in S_k, u,v \in V(T)\}$, that is, to the set of all the parsings of the trees in~$S_k$.
	\end{corollary}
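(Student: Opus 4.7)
The plan is to deduce the corollary directly from Proposition~\ref{prop:trees_accepted} together with Lemma~\ref{lem:parsing}, by proving the two inclusions of the claimed set equality separately. Let $L(\A_k)$ denote the language of rooted-tree sequences accepted by $\A_k$.

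First I would establish the inclusion $\{\sigma(T,u,v) \mid T \in S_k,\ u,v \in V(T)\} \subseteq L(\A_k)$. Take any $T \in S_k$ and any pair $u,v \in V(T)$. The implication (i)$\Rightarrow$(ii) of Proposition~\ref{prop:trees_accepted} states precisely that \emph{every} parsing of $T$ is accepted by~$\A_k$, so in particular $\sigma(T,u,v) \in L(\A_k)$. This direction requires no further work.

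For the reverse inclusion $L(\A_k) \subseteq \{\sigma(T,u,v) \mid T \in S_k,\ u,v \in V(T)\}$, I would start from an arbitrary accepted word. Since the alphabet of $\A_k$ is the set of rooted trees, such a word is a finite (nonempty, since $i \ne f$ and every transition carries a tree label) sequence $T_1, \ldots, T_m$ of rooted trees. Using the gluing operation, I set $T := \tau(T_1, \ldots, T_m)$, and let $u$ (resp.~$v$) be the vertex of $T$ corresponding to the prescribed root of $T_1$ (resp.~$T_m$). The second bullet of Lemma~\ref{lem:parsing} then gives $\sigma(T,u,v) = T_1, \ldots, T_m$, so our word is exactly a parsing of $T$. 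Because this parsing is accepted by $\A_k$ by hypothesis, the implication (iii)$\Rightarrow$(i) of Proposition~\ref{prop:trees_accepted} yields $T \in S_k$, placing the word in the right-hand side.

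I do not expect a real obstacle here: all the substantive content has already been carried out in Proposition~\ref{prop:trees_accepted} and Lemma~\ref{lem:parsing}, and the corollary is just a repackaging. The one point to double-check is the degenerate case $m=1$, where $T = \tau(T_1)$ is $T_1$ viewed as an unrooted tree and $u=v$ is its root; here the initial-to-final transition labeled by $T_1$ in $\A_k$ matches the $i \to f$ clause used in the $m=1$ part of the proof of Proposition~\ref{prop:trees_accepted}, so the argument still goes through without modification.
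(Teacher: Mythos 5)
Your proof is correct and follows exactly the same route as the paper: the forward inclusion via (i)$\Rightarrow$(ii) of Proposition~\ref{prop:trees_accepted}, and the reverse inclusion by gluing with $\tau$, invoking the second bullet of Lemma~\ref{lem:parsing}, and then (iii)$\Rightarrow$(i). The extra check of the $m=1$ case is a harmless addition that the paper leaves implicit.
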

	
	\begin{proof}
		If~$T \in S_k$, by Proposition~\ref{prop:trees_accepted} we know that $\sigma(T,u,v)$ is accepted by~$\A_k$. Conversely, let $T_1,\ldots, T_m$ be a sequence of rooted trees accepted by~$\A_k$. By Lemma~\ref{lem:parsing}, there exists two vertices $u,v$ of $\tau(T_1, \ldots, T_m)$ such that $T_1, \ldots, T_m = \sigma(\tau(T_1, \ldots, T_m),u,v)$. Finally, again by Proposition~\ref{prop:trees_accepted}, we get $\tau(T_1, \ldots, T_m) \in S_k$.
	\end{proof}

	
	\subsection{Proof of Theorem~\ref{thm:gap-tree-radius-1}}
	
	We prove Theorem~\ref{thm:gap-tree-radius-1} by contradiction. Assume that $\P$ can not be certified with constant-size certificates. Then, the set $X \subseteq \N$ of integers $k \in \N$ such that $S_k \not \subseteq \bigcup_{1 \leqslant i < k}S_i$ is infinite. For $k \in X$, let $d_k$ be the smallest diameter of a tree in $S_k \setminus \bigcup_{1 \leqslant i < k}S_i$. Since any tree that has diameter~$d_k$ is in the set $S_{s(d_k)}$, we have $s(d_k) \geqslant k$. Let us fix an integer $k \in X$ such that $k > \max_{d \in \{1, \ldots, D-1\}} s(d)$. Since $s(d_k) \geqslant k$ and $k > s(d)$ for all $d < D$, we have $d_k \geqslant D$. Moreover, since $s(d) = \lfloor (\log \log d) / c \rfloor$ for all $d \geqslant D$, the inequality $s(d_k) \geqslant k$ implies that $d_k \geqslant 2^{2^{ck}}$.

	By Corollary~\ref{cor:trees_accepted}, for every $i \in \N$, the language over the alphabet of rooted trees accepted by $\A_i$ is equal to the set of all the parsings of the trees in~$S_i$. Moreover, $\A_i$ has $M_i := 2^{2i}+2$ states. Thus, by Lemma~\ref{lem:automata}, there exists an automaton~$\A_k'$ that accepts the parsings of the trees in $S_k \setminus \bigcup_{i<k} S_i$ has at most $M_k \cdot 2^{\left(\sum_{i=1}^{k-1}M_i\right)} \leqslant 2^{2^{2k}}$ states. Let $T^{(k)} \in S_k \setminus \bigcup_{i < k}S_i$ be a tree such that $T^{(k)}$ has diameter~$d_k$, and among the trees in~$S_k \setminus \bigcup_{i < k}S_i$ having diameter~$d_k$, $T^{(k)}$ has the smallest number of vertices.
	To conclude the proof of Theorem~\ref{thm:gap-tree-radius-1}, it is sufficient to prove the following claim, that contradicts the inequality $d_k \geqslant 2^{2^{kc}}$.
	
	\begin{claim}
		\label{claim:bound d_k trees}
		We have $d_k \leqslant 2^{2^{2k}}$.
	\end{claim}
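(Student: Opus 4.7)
The plan is to argue by contradiction via a pumping argument on the automaton $\A_k'$. Suppose that $d_k > 2^{2^{2k}}$. I would first pick endpoints $u, v$ of a diameter-realizing path in $T^{(k)}$, so that the parsing $\sigma(T^{(k)}, u, v) = T_1, \ldots, T_{d_k + 1}$ consists of $d_k + 1$ rooted trees. Since $T^{(k)} \in S_k \setminus \bigcup_{i < k} S_i$, the construction of $\A_k'$ via the constructions of Lemma~\ref{lem:automata} applied to $\A_1, \ldots, \A_k$, together with Corollary~\ref{cor:trees_accepted}, ensures that this parsing is accepted by $\A_k'$ through a run visiting $d_k + 2$ states.

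Next, since $\A_k'$ has at most $2^{2^{2k}}$ states while $d_k + 2 > 2^{2^{2k}}$, the pigeonhole principle yields two indices $0 \leqslant a < b \leqslant d_k + 1$ with the same state visited at positions $a$ and $b$ in the run. I would then excise the loop between these two positions to obtain a strictly shorter accepting run of $\A_k'$, whose label is the strict subsequence $T_1, \ldots, T_a, T_{b+1}, \ldots, T_{d_k + 1}$. Setting $T' := \tau(T_1, \ldots, T_a, T_{b+1}, \ldots, T_{d_k + 1})$, Lemma~\ref{lem:parsing} tells me that this subsequence is exactly a parsing of $T'$, so combining this with the characterization of the language of $\A_k'$ (Corollary~\ref{cor:trees_accepted} together with the intersection/complement construction) gives $T' \in S_k \setminus \bigcup_{i < k} S_i$.

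Finally, Lemma~\ref{lem:gluing_subsequence} ensures that $T'$ has diameter at most $d_k$ and strictly fewer vertices than $T^{(k)}$. Two cases arise. If the diameter of $T'$ is strictly less than $d_k$, this contradicts the minimality of $d_k$ among diameters of trees in $S_k \setminus \bigcup_{i<k}S_i$. If on the contrary the diameter equals $d_k$, it contradicts the choice of $T^{(k)}$ as having the minimum number of vertices among trees of diameter $d_k$ in $S_k \setminus \bigcup_{i<k}S_i$. Either way we get a contradiction, so $d_k \leqslant 2^{2^{2k}}$.

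The main delicate point is guaranteeing that the pumped sequence, after gluing, still lies in $S_k \setminus \bigcup_{i<k}S_i$ rather than merely in some superset; this is precisely what the construction of $\A_k'$ provides, since it was built as the intersection of $\A_k$ with the complement of $\bigcup_{i<k}\A_i$, so any tree with an accepted parsing must belong to the intended set. The rest is a routine pumping argument, with the novelty that the "alphabet" (rooted trees) is infinite but harmless, since we never enumerate transitions — only cut and splice an existing accepted sequence, with Lemma~\ref{lem:gluing_subsequence} doing the combinatorial bookkeeping on diameters and vertex counts.
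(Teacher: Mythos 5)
Your argument is correct and is essentially the paper's own proof: both rest on taking an accepting run of $\sigma(T^{(k)},u,v)$ in $\A_k'$, using the bound on its number of states to find a repeated state, excising the loop, and invoking Lemma~\ref{lem:gluing_subsequence} together with the double minimality of $T^{(k)}$ (first in diameter, then in vertex count) to reach a contradiction. The only cosmetic difference is that you phrase it as a global proof by contradiction with pigeonhole, whereas the paper first shows the run never repeats a state and then reads off the length bound; the content is identical.
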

	
	\begin{proof}
		Let $u, v \in V(T^{(k)})$, and let us consider a run accepting $T_1, \ldots, T_m :=\sigma(T^{(k)},u,v)$ in~$\A_k'$. Let us prove that this run does not use the same state twice. Assume by contradiction that it uses the same state~$p$ twice. Then, consider the run obtained by deleting the part between the first and last times that~$p$ is visited. This run accepts another sequence of rooted trees $T'_1, \ldots, T'_{m'}$. By Lemma~\ref{lem:gluing_subsequence}, the diameter of~$\tau(T'_1, \ldots, T'_{m'})$ is less or equal than the diameter of~$T^{(k)}$ which is $d_k$. So the diameter of~$\tau(T'_1, \ldots, T'_{m'})$ is equal to $d_k$ because $T'_1, \ldots, T'_{m'}$ is accepted by~$\A_k$, so $\tau(T'_1, \ldots, T'_{m'}) \in S_k \setminus \bigcup_{i<k} S_i$ and $d_k$ is the smallest diameter of a tree in this set.
		Again by Lemma~\ref{lem:gluing_subsequence}, $\tau(T'_1, \ldots, T'_{m'})$ has strictly less vertices than~$T^{(k)}$, and this is a contradiction by definition of~$T^{(k)}$. Thus, the length of a run in $\A_k'$ accepting $\sigma(T^{(k)},u,v)$ is at most $2^{2^{2k}}$ for every $u,v \in V(T^{(k)})$.
		In particular, by considering $u, v \in V(T^{(k)})$ such that the distance from~$u$ to~$v$ in $T^{(k)}$ is~$d_k$, we obtain that $d_k \leqslant 2^{2^{2k}}$.
	\end{proof}

		
		
		
		
		
	
	\section{Gap in bounded degree caterpillars with larger radius}
	\label{sec:gap-caterpillar-bounded-deg}
	
	\begin{theorem}
		\label{thm:gap_caterpillar_bounded_degree}
		Let $\Delta, r \geqslant 1$. Assume that the vertices can see at distance~$r$. Let $\P$ be a property on caterpillars of maximum degree at most~$\Delta$, that can be certified with certificates of size $s(n)$, with $s(n)=o(\log \log n)$. Then, $\P$ can be certified with certificates of size~$O(1)$.
	\end{theorem}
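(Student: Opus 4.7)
The strategy is a reduction to Theorem~\ref{thm:gap_labeled_distance>1}. To a caterpillar $G$ of maximum degree~$\Delta$ I would associate the $\Sigma$-labeled path $P(G)$ whose vertices are the central-path vertices $v_1, \ldots, v_{n'}$ of~$G$, with $v_i$ labeled by its number of leaves $\lambda_i$, over the finite alphabet $\Sigma := \{0, 1, \ldots, \Delta\}$. Let $\P'$ be the property on $\Sigma$-labeled paths defined by: $P \in \P'$ if and only if the caterpillar obtained from $P$ by reattaching the prescribed leaves belongs to~$\P$.

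From a radius-$r$ certification of $\P$ of size $s(n)$, I would produce a radius-$r$ certification of $\P'$ of size at most $(\Delta+1)\,s(n)$, by placing at $v_i$ the tuple packing the caterpillar certificates of $v_i$ and of all its leaves. The labeled-path view of $v_i$ at radius~$r$ exposes the labels and packed certificates of every central vertex within distance~$r$, which is exactly enough for $v_i$ to reconstruct its own caterpillar view at radius~$r$: the label of a nearby central vertex tells $v_i$ how many leaves that vertex has, and the packed certificate supplies those leaves' caterpillar certificates. Moreover the radius-$r$ caterpillar view of any leaf $\ell$ of $v_i$ is contained in the radius-$r$ caterpillar view of $v_i$ (all vertices within distance~$r$ of~$\ell$ lie within distance~$r-1$ of $v_i$ via the edge $\{\ell, v_i\}$), so $v_i$ can also simulate the verification of each of its own leaves, and we let $v_i$ accept iff all these simulations accept. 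Since $n' \leqslant n \leqslant \Delta n'$, the resulting $\P'$-certification has size $o(\log \log n')$ as a function of the path length~$n'$, so Theorem~\ref{thm:gap_labeled_distance>1} (applied to the finite alphabet~$\Sigma$ with any $c > 2$) yields a constant-size radius-$r$ certification of $\P'$.

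I would then lift this back by giving each central vertex $v_i$ the pair $(\lambda_i, \gamma_i)$, where $\gamma_i$ is the constant-size labeled-path certificate of $v_i$, and each leaf any fixed symbol. A central vertex $v_i$ verifies by first checking that $\lambda_i$ equals its actual number of leaves (which it can count from its view), and then simulating the labeled-path verification at radius~$r$ using the $(\lambda_j, \gamma_j)$ read off the certificates of central vertices in its view. Completeness is immediate; for soundness, if every central vertex accepts then every $\lambda_j$ is the true leaf count, so the extracted $\gamma_j$'s form a valid labeled-path certification of $P(G)$, which by soundness of the $\P'$-scheme forces $G \in \P$. The main delicate point is that $v_i$ cannot directly observe the leaves of another central vertex lying at caterpillar distance exactly~$r$, so its label has to be stored in the certificate and verified locally; this is precisely what lets the lift go through at the same verification radius~$r$. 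Since $\Delta$ and $|\gamma_i|$ are both constants, this certification of~$\P$ has size~$O(1)$.
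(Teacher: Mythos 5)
Your proposal is correct and follows essentially the same route as the paper's proof: encode a bounded-degree caterpillar as a path labeled by leaf counts over $\Sigma=\{0,\ldots,\Delta\}$, pack each central vertex's and its leaves' certificates into one tuple to get an $o(\log\log n)$ scheme for the induced path property, invoke Theorem~\ref{thm:gap_labeled_distance>1}, and lift the resulting constant-size scheme back by storing and locally verifying the leaf count of each central vertex. The containment argument for leaf views and the observation about central vertices at distance exactly $r$ match the paper's treatment.
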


	\begin{remark}
		There exists a constant $c_{r,\Delta}$ such that Theorem~\ref{thm:gap_caterpillar_bounded_degree} remains true by replacing $s(n) = o(\log \log n)$ by $s(n) = \left\lfloor\frac{\log \log n}{c_{r,\Delta}}\right\rfloor$ (because such a constant exists in Theorem~\ref{thm:gap_labeled_distance>1}, and the proof of Theorem~\ref{thm:gap_caterpillar_bounded_degree} is a reduction to Theorem~\ref{thm:gap_labeled_distance>1}), but we did not try to determine and optimize this constant (it depends on the encoding used in the reduction, see the proof of Theorem~\ref{thm:gap_caterpillar_bounded_degree}).
		Moreover, the statement of Theorem~\ref{thm:gap_caterpillar_bounded_degree} also holds if we replace~$n$ by~$d$, the diameter: indeed, in caterpillars of maximum degree~$\Delta$, we have $d \leqslant n \leqslant \Delta (d+1)$.
	\end{remark}
	
	\begin{proof}
		We will use Theorem~\ref{thm:gap_labeled_distance>1}.
		Let $P$ be a path labeled by $\Sigma:=\{0, \ldots, \Delta\}$, and $G$ be a caterpillar of maximum degree~$\Delta$. For every $u \in V(P)$, let $\ell(u)$ be its label. Let $f(P)$ be the (unlabeled) caterpillar obtained by adding $\ell(u)$ degree-1 neighbors to every $u \in V(P)$. Let $h(G)$ be the labeled path $P$, where~$P$ is the central path of~$G$ and for every $u \in V(P)$ the label of~$u$ is its number of degree-1 neighbors in~$G$. See Figure~\ref{fig:f_caterpillar_bounded_degree} for an example. The following claim is straightforward from the definition of $f$ and $h$:
		
		\begin{claim}
			\label{claim:fg}
			For every caterpillar~$G$ of maximum degree~$\Delta$, we have $f(h(G))=G$. 
		\end{claim}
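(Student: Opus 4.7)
The plan is to verify the identity by simply unfolding both definitions. Let $G$ be a caterpillar of maximum degree at most $\Delta$ and let $P$ be its central path, which by the definition given in Section~\ref{sec:model} is the subgraph of $G$ induced by the vertices of degree at least two. Every vertex of $G$ that is not on $P$ has degree exactly one, and since $P$ is connected and any degree-$1$ vertex must be attached to a vertex of degree $\geq 2$, each such leaf is adjacent to a unique spine vertex $u \in V(P)$. Consequently, $G$ is fully determined by the pair consisting of the path $P$ and, for each $u \in V(P)$, the number $\ell(u)$ of degree-$1$ neighbors of $u$ in $G$.

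Next, observe that $h(G)$ is by definition exactly this pair: the path $P$ equipped with the labeling $u \mapsto \ell(u)$. Here $\ell(u) \leq \Delta$, because the total degree of $u$ in $G$ is at most $\Delta$, so the label indeed lies in $\Sigma = \{0,\ldots,\Delta\}$ and $h(G)$ is a valid input for $f$. Applying $f$ to $h(G)$ then, by the definition of $f$, attaches $\ell(u)$ fresh degree-$1$ vertices to each $u \in V(P)$. This reconstructs a caterpillar whose central path is $P$ and whose spine vertex $u$ carries exactly $\ell(u)$ pendant leaves, which by the first paragraph must be $G$ itself.

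The only point that needs a separate remark is degenerate caterpillars whose central path is empty (a single isolated vertex, a single edge, etc.); these can either be excluded by convention or handled individually, and in each case the identity $f(h(G)) = G$ holds trivially. No real obstacle is expected: the whole argument is a bookkeeping exercise that records the fact that the number of pendant leaves at each spine vertex is exactly the information captured by $h$ and exactly the information used by $f$.
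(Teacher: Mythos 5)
Your proof is correct and matches the paper, which simply asserts that the claim is ``straightforward from the definition of $f$ and $h$'' without further argument; your unfolding of the two definitions is exactly the bookkeeping that assertion relies on. The remark about degenerate caterpillars is a reasonable extra precaution but not needed beyond what the paper implicitly assumes.
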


		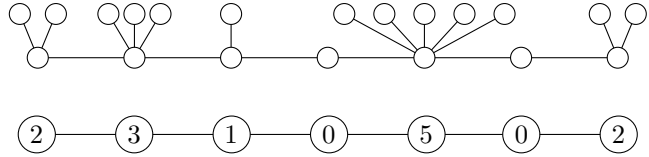
\begin{figure}[h!]
			\centering
			
			\begin{tikzpicture}[x=0.75pt,y=0.75pt,yscale=-1,xscale=1]
				
				\draw    (403.6,164.84) -- (452.24,164.84) ;
				\draw    (258.75,126.25) -- (258.75,104.25) ;
				\draw    (403.5,126.25) -- (451.75,126.25) ;
				\draw    (307.66,164.84) -- (356.29,164.84) ;
				\draw    (259.03,164.84) -- (307.66,164.84) ;
				\draw    (356.29,164.84) -- (404.93,164.84) ;
				\draw    (210.4,164.84) -- (259.03,164.84) ;
				\draw    (161.76,164.84) -- (210.4,164.84) ;
				\draw  [fill={rgb, 255:red, 255; green, 255; blue, 255 }  ,fill opacity=1 ] (298.4,164.84) .. controls (298.4,159.72) and (302.55,155.58) .. (307.66,155.58) .. controls (312.78,155.58) and (316.93,159.72) .. (316.93,164.84) .. controls (316.93,169.96) and (312.78,174.1) .. (307.66,174.1) .. controls (302.55,174.1) and (298.4,169.96) .. (298.4,164.84) -- cycle ;
				\draw  [fill={rgb, 255:red, 255; green, 255; blue, 255 }  ,fill opacity=1 ] (347.03,164.84) .. controls (347.03,159.72) and (351.18,155.58) .. (356.29,155.58) .. controls (361.41,155.58) and (365.56,159.72) .. (365.56,164.84) .. controls (365.56,169.96) and (361.41,174.1) .. (356.29,174.1) .. controls (351.18,174.1) and (347.03,169.96) .. (347.03,164.84) -- cycle ;
				\draw  [fill={rgb, 255:red, 255; green, 255; blue, 255 }  ,fill opacity=1 ] (249.77,164.84) .. controls (249.77,159.72) and (253.91,155.58) .. (259.03,155.58) .. controls (264.14,155.58) and (268.29,159.72) .. (268.29,164.84) .. controls (268.29,169.96) and (264.14,174.1) .. (259.03,174.1) .. controls (253.91,174.1) and (249.77,169.96) .. (249.77,164.84) -- cycle ;
				\draw  [fill={rgb, 255:red, 255; green, 255; blue, 255 }  ,fill opacity=1 ] (152.5,164.84) .. controls (152.5,159.72) and (156.65,155.58) .. (161.76,155.58) .. controls (166.88,155.58) and (171.03,159.72) .. (171.03,164.84) .. controls (171.03,169.96) and (166.88,174.1) .. (161.76,174.1) .. controls (156.65,174.1) and (152.5,169.96) .. (152.5,164.84) -- cycle ;
				\draw  [fill={rgb, 255:red, 255; green, 255; blue, 255 }  ,fill opacity=1 ] (394.34,164.84) .. controls (394.34,159.72) and (398.49,155.58) .. (403.6,155.58) .. controls (408.72,155.58) and (412.87,159.72) .. (412.87,164.84) .. controls (412.87,169.96) and (408.72,174.1) .. (403.6,174.1) .. controls (398.49,174.1) and (394.34,169.96) .. (394.34,164.84) -- cycle ;
				\draw  [fill={rgb, 255:red, 255; green, 255; blue, 255 }  ,fill opacity=1 ] (201.13,164.84) .. controls (201.13,159.72) and (205.28,155.58) .. (210.4,155.58) .. controls (215.51,155.58) and (219.66,159.72) .. (219.66,164.84) .. controls (219.66,169.96) and (215.51,174.1) .. (210.4,174.1) .. controls (205.28,174.1) and (201.13,169.96) .. (201.13,164.84) -- cycle ;
				\draw    (355.25,126.25) -- (315.25,104.25) ;
				\draw    (355.25,126.25) -- (395.25,104.25) ;
				\draw    (355.25,126.25) -- (355.25,104.25) ;
				\draw    (355.25,126.25) -- (375.25,104.25) ;
				\draw    (355.25,126.25) -- (335.25,104.25) ;
				\draw    (210.5,126.25) -- (210.5,104.25) ;
				\draw    (210.5,126.25) -- (223.5,104.25) ;
				\draw    (210.5,126.25) -- (197.5,104.25) ;
				\draw    (162.3,126.25) -- (171.3,104.25) ;
				\draw    (162.25,126.25) -- (153.3,104.25) ;
				\draw    (162.25,126.25) -- (210.5,126.25) ;
				\draw    (210.5,126.25) -- (258.75,126.25) ;
				\draw    (258.75,126.25) -- (307,126.25) ;
				\draw    (307,126.25) -- (355.25,126.25) ;
				\draw    (355.25,126.25) -- (403.5,126.25) ;
				\draw  [fill={rgb, 255:red, 255; green, 255; blue, 255 }  ,fill opacity=1 ] (157,126.25) .. controls (157,123.35) and (159.35,121) .. (162.25,121) .. controls (165.15,121) and (167.5,123.35) .. (167.5,126.25) .. controls (167.5,129.15) and (165.15,131.5) .. (162.25,131.5) .. controls (159.35,131.5) and (157,129.15) .. (157,126.25) -- cycle ;
				\draw  [fill={rgb, 255:red, 255; green, 255; blue, 255 }  ,fill opacity=1 ] (205.25,126.25) .. controls (205.25,123.35) and (207.6,121) .. (210.5,121) .. controls (213.4,121) and (215.75,123.35) .. (215.75,126.25) .. controls (215.75,129.15) and (213.4,131.5) .. (210.5,131.5) .. controls (207.6,131.5) and (205.25,129.15) .. (205.25,126.25) -- cycle ;
				\draw  [fill={rgb, 255:red, 255; green, 255; blue, 255 }  ,fill opacity=1 ] (253.5,126.25) .. controls (253.5,123.35) and (255.85,121) .. (258.75,121) .. controls (261.65,121) and (264,123.35) .. (264,126.25) .. controls (264,129.15) and (261.65,131.5) .. (258.75,131.5) .. controls (255.85,131.5) and (253.5,129.15) .. (253.5,126.25) -- cycle ;
				\draw  [fill={rgb, 255:red, 255; green, 255; blue, 255 }  ,fill opacity=1 ] (301.75,126.25) .. controls (301.75,123.35) and (304.1,121) .. (307,121) .. controls (309.9,121) and (312.25,123.35) .. (312.25,126.25) .. controls (312.25,129.15) and (309.9,131.5) .. (307,131.5) .. controls (304.1,131.5) and (301.75,129.15) .. (301.75,126.25) -- cycle ;
				\draw  [fill={rgb, 255:red, 255; green, 255; blue, 255 }  ,fill opacity=1 ] (350,126.25) .. controls (350,123.35) and (352.35,121) .. (355.25,121) .. controls (358.15,121) and (360.5,123.35) .. (360.5,126.25) .. controls (360.5,129.15) and (358.15,131.5) .. (355.25,131.5) .. controls (352.35,131.5) and (350,129.15) .. (350,126.25) -- cycle ;
				\draw  [fill={rgb, 255:red, 255; green, 255; blue, 255 }  ,fill opacity=1 ] (398.25,126.25) .. controls (398.25,123.35) and (400.6,121) .. (403.5,121) .. controls (406.4,121) and (408.75,123.35) .. (408.75,126.25) .. controls (408.75,129.15) and (406.4,131.5) .. (403.5,131.5) .. controls (400.6,131.5) and (398.25,129.15) .. (398.25,126.25) -- cycle ;
				\draw  [fill={rgb, 255:red, 255; green, 255; blue, 255 }  ,fill opacity=1 ] (148.05,104.25) .. controls (148.05,101.35) and (150.4,99) .. (153.3,99) .. controls (156.2,99) and (158.55,101.35) .. (158.55,104.25) .. controls (158.55,107.15) and (156.2,109.5) .. (153.3,109.5) .. controls (150.4,109.5) and (148.05,107.15) .. (148.05,104.25) -- cycle ;
				\draw  [fill={rgb, 255:red, 255; green, 255; blue, 255 }  ,fill opacity=1 ] (166.05,104.25) .. controls (166.05,101.35) and (168.4,99) .. (171.3,99) .. controls (174.2,99) and (176.55,101.35) .. (176.55,104.25) .. controls (176.55,107.15) and (174.2,109.5) .. (171.3,109.5) .. controls (168.4,109.5) and (166.05,107.15) .. (166.05,104.25) -- cycle ;
				\draw  [fill={rgb, 255:red, 255; green, 255; blue, 255 }  ,fill opacity=1 ] (192.25,104.25) .. controls (192.25,101.35) and (194.6,99) .. (197.5,99) .. controls (200.4,99) and (202.75,101.35) .. (202.75,104.25) .. controls (202.75,107.15) and (200.4,109.5) .. (197.5,109.5) .. controls (194.6,109.5) and (192.25,107.15) .. (192.25,104.25) -- cycle ;
				\draw  [fill={rgb, 255:red, 255; green, 255; blue, 255 }  ,fill opacity=1 ] (205.25,104.25) .. controls (205.25,101.35) and (207.6,99) .. (210.5,99) .. controls (213.4,99) and (215.75,101.35) .. (215.75,104.25) .. controls (215.75,107.15) and (213.4,109.5) .. (210.5,109.5) .. controls (207.6,109.5) and (205.25,107.15) .. (205.25,104.25) -- cycle ;
				\draw  [fill={rgb, 255:red, 255; green, 255; blue, 255 }  ,fill opacity=1 ] (218.25,104.25) .. controls (218.25,101.35) and (220.6,99) .. (223.5,99) .. controls (226.4,99) and (228.75,101.35) .. (228.75,104.25) .. controls (228.75,107.15) and (226.4,109.5) .. (223.5,109.5) .. controls (220.6,109.5) and (218.25,107.15) .. (218.25,104.25) -- cycle ;
				\draw  [fill={rgb, 255:red, 255; green, 255; blue, 255 }  ,fill opacity=1 ] (310,104.25) .. controls (310,101.35) and (312.35,99) .. (315.25,99) .. controls (318.15,99) and (320.5,101.35) .. (320.5,104.25) .. controls (320.5,107.15) and (318.15,109.5) .. (315.25,109.5) .. controls (312.35,109.5) and (310,107.15) .. (310,104.25) -- cycle ;
				\draw  [fill={rgb, 255:red, 255; green, 255; blue, 255 }  ,fill opacity=1 ] (330,104.25) .. controls (330,101.35) and (332.35,99) .. (335.25,99) .. controls (338.15,99) and (340.5,101.35) .. (340.5,104.25) .. controls (340.5,107.15) and (338.15,109.5) .. (335.25,109.5) .. controls (332.35,109.5) and (330,107.15) .. (330,104.25) -- cycle ;
				\draw  [fill={rgb, 255:red, 255; green, 255; blue, 255 }  ,fill opacity=1 ] (350,104.25) .. controls (350,101.35) and (352.35,99) .. (355.25,99) .. controls (358.15,99) and (360.5,101.35) .. (360.5,104.25) .. controls (360.5,107.15) and (358.15,109.5) .. (355.25,109.5) .. controls (352.35,109.5) and (350,107.15) .. (350,104.25) -- cycle ;
				\draw  [fill={rgb, 255:red, 255; green, 255; blue, 255 }  ,fill opacity=1 ] (370,104.25) .. controls (370,101.35) and (372.35,99) .. (375.25,99) .. controls (378.15,99) and (380.5,101.35) .. (380.5,104.25) .. controls (380.5,107.15) and (378.15,109.5) .. (375.25,109.5) .. controls (372.35,109.5) and (370,107.15) .. (370,104.25) -- cycle ;
				\draw  [fill={rgb, 255:red, 255; green, 255; blue, 255 }  ,fill opacity=1 ] (390,104.25) .. controls (390,101.35) and (392.35,99) .. (395.25,99) .. controls (398.15,99) and (400.5,101.35) .. (400.5,104.25) .. controls (400.5,107.15) and (398.15,109.5) .. (395.25,109.5) .. controls (392.35,109.5) and (390,107.15) .. (390,104.25) -- cycle ;
				\draw  [fill={rgb, 255:red, 255; green, 255; blue, 255 }  ,fill opacity=1 ] (253.5,104.25) .. controls (253.5,101.35) and (255.85,99) .. (258.75,99) .. controls (261.65,99) and (264,101.35) .. (264,104.25) .. controls (264,107.15) and (261.65,109.5) .. (258.75,109.5) .. controls (255.85,109.5) and (253.5,107.15) .. (253.5,104.25) -- cycle ;
				\draw    (451.75,126.25) -- (460.75,104.25) ;
				\draw    (451.75,126.25) -- (442.8,104.25) ;
				\draw  [fill={rgb, 255:red, 255; green, 255; blue, 255 }  ,fill opacity=1 ] (446.5,126.25) .. controls (446.5,123.35) and (448.85,121) .. (451.75,121) .. controls (454.65,121) and (457,123.35) .. (457,126.25) .. controls (457,129.15) and (454.65,131.5) .. (451.75,131.5) .. controls (448.85,131.5) and (446.5,129.15) .. (446.5,126.25) -- cycle ;
				\draw  [fill={rgb, 255:red, 255; green, 255; blue, 255 }  ,fill opacity=1 ] (437.55,104.25) .. controls (437.55,101.35) and (439.9,99) .. (442.8,99) .. controls (445.7,99) and (448.05,101.35) .. (448.05,104.25) .. controls (448.05,107.15) and (445.7,109.5) .. (442.8,109.5) .. controls (439.9,109.5) and (437.55,107.15) .. (437.55,104.25) -- cycle ;
				\draw  [fill={rgb, 255:red, 255; green, 255; blue, 255 }  ,fill opacity=1 ] (455.5,104.25) .. controls (455.5,101.35) and (457.85,99) .. (460.75,99) .. controls (463.65,99) and (466,101.35) .. (466,104.25) .. controls (466,107.15) and (463.65,109.5) .. (460.75,109.5) .. controls (457.85,109.5) and (455.5,107.15) .. (455.5,104.25) -- cycle ;
				\draw  [fill={rgb, 255:red, 255; green, 255; blue, 255 }  ,fill opacity=1 ] (442.97,164.84) .. controls (442.97,159.72) and (447.12,155.58) .. (452.24,155.58) .. controls (457.35,155.58) and (461.5,159.72) .. (461.5,164.84) .. controls (461.5,169.96) and (457.35,174.1) .. (452.24,174.1) .. controls (447.12,174.1) and (442.97,169.96) .. (442.97,164.84) -- cycle ;
				
				\draw (157.09,159.58) node [anchor=north west][inner sep=0.75pt]    {$2$};
				\draw (205.73,159.58) node [anchor=north west][inner sep=0.75pt]    {$3$};
				\draw (254.2,159.58) node [anchor=north west][inner sep=0.75pt]    {$1$};
				\draw (303,159.58) node [anchor=north west][inner sep=0.75pt]    {$0$};
				\draw (351.5,159.58) node [anchor=north west][inner sep=0.75pt]    {$5$};
				\draw (399.1,159.58) node [anchor=north west][inner sep=0.75pt]    {$0$};
				\draw (447.57,159.58) node [anchor=north west][inner sep=0.75pt]    {$2$};

			\end{tikzpicture}

			\caption{A caterpillar $G$ and a labeled path~$P$, such that $P = h(G)$ and $G = f(P)$.}
			\label{fig:f_caterpillar_bounded_degree}
		\end{figure}

		Let $\P'$ be following the property on paths labeled by $\Sigma$: a labeled path~$P$ satisfies~$\P'$ if and only if $f(P)$ has maximum degree at most~$\Delta$ and satisfies $\P$.
		
		\begin{claim}
			\label{claim:constant_size_certif}
			The property~$\P'$ can be certified with constant-size certificates, if vertices can see at distance~$r$.
		\end{claim}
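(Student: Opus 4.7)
The plan is to build a certification of $\P'$ on labeled paths with certificate size $o(\log\log|V(P)|)$, and then invoke Theorem~\ref{thm:gap_labeled_distance>1} (with alphabet $\Sigma = \{0, \ldots, \Delta\}$ and verification radius $r$) to upgrade it to a constant-size scheme. This upgrade is immediate, because for any fixed $c > 2$ the threshold $\left\lfloor \log\log n / (cr) \right\rfloor$ of Theorem~\ref{thm:gap_labeled_distance>1} eventually dominates any $o(\log\log n)$ function.

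To build the $o(\log\log)$-size scheme, suppose $P$ satisfies $\P'$, so that $f(P)$ is a caterpillar of maximum degree at most $\Delta$ that satisfies $\P$. By assumption, there is a certification of $\P$ on such caterpillars at radius $r$ with certificate size $s(n) = o(\log\log n)$; let $c : V(f(P)) \to \{0,1\}^{s(n)}$ be an accepting assignment, where $n := |V(f(P))|$. The prover of $\P'$ assigns to each $v \in V(P)$ the tuple consisting of $c(v)$ together with the certificates $c(w)$ of the $\ell(v) \leqslant \Delta$ leaves $w$ attached to $v$ in $f(P)$. Since $\Delta$ is a constant, this tuple has size $O(s(n))$, and using $|V(f(P))| \leqslant \Delta \cdot |V(P)|$ we get $s(n) = o(\log\log|V(P)|)$.

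The verifier at a vertex $v \in V(P)$ does two things. First, it checks $\ell(v) + \deg_P(v) \leqslant \Delta$, which detects any violation of the maximum-degree constraint on $f(P)$. Second, from its radius-$r$ view in $P$, which contains all labels and all tuples within $P$-distance $r$, it reconstructs both the radius-$r$ view of $v$ in $f(P)$ and the radius-$r$ views of each leaf of $v$ in $f(P)$, and then simulates the $\P$-verifier on each of them; it accepts iff all the simulated verifications accept. This reconstruction is possible because every vertex appearing in either kind of ball of $f(P)$ is either a $P$-vertex at $P$-distance at most $r$ from $v$, or a leaf attached to such a $P$-vertex, whose certificate is recorded in the tuple of its parent. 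Completeness is immediate. For soundness, any tuple assignment on $P$ induces a certificate assignment on $f(P)$; if $P \notin \P'$, either the degree check rejects, or $f(P) \notin \P$ and some vertex $x$ of $f(P)$ rejects the induced assignment, which is caught by the $P$-vertex simulating $x$ (either $x$ itself if $x$ lies on the central path, or the $P$-vertex to which $x$ is attached as a leaf). The main technical obstacle is the distance bookkeeping needed for the reconstruction, and once one verifies the observation above --- that the two relevant balls of $f(P)$ are supplied by the $P$-radius-$r$ view --- the certificate-size bound and the reduction to Theorem~\ref{thm:gap_labeled_distance>1} are routine.
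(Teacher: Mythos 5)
Your proposal is correct and follows essentially the same route as the paper: pack the $f(P)$-certificates of each central-path vertex and its attached leaves into a tuple of size $O(\Delta\, s(\Delta n)) = o(\log\log n)$, check the degree constraint, reconstruct the relevant radius-$r$ views in $f(P)$, and then invoke Theorem~\ref{thm:gap_labeled_distance>1} to collapse to constant size. You are in fact slightly more explicit than the paper on two points it leaves implicit --- that each path vertex must also simulate the verification of its attached leaves, and why the radius-$r$ view in $P$ suffices to reconstruct all the needed balls of $f(P)$ --- and both of these details are handled correctly.
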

		
		\begin{proof}
			By Theorem~\ref{thm:gap_labeled_distance>1}, to prove that~$\P'$ can be certified with constant-size certificates if vertices can see at distance~$r$, it is sufficient to show that $\P'$ can be certified with certificates of size $o(\log \log n)$. 
			Such a certification for~$\P'$ is the following one. Let $P$ be a path labeled with $\{0, \ldots, \Delta\}$. To each vertex~$u$ of~$P$ with label $i$, the prover gives a $(i+1)$-tuple of certificates, containing the certificates that~$u$ and its $i$~additional degree-1 neighbors would receive in the certification scheme for~$\P$ in $f(P)$. The verification of each vertex simply consists checking that its degree plus its label is at most~$\Delta$, and then in reconstructing its view in~$f(P)$ and accepting if and only if it would accept in the verification procedure of $\P$. Finally, the size of these certificates is~$O(\Delta s(n\Delta))$, and since $\Delta$ is a constant and $s(n)=o(\log \log n)$, the size remains $o(\log \log n)$.
		\end{proof}
		
		Let us finally describe the constant-size certification procedure for~$\P$. Let $G$ be a caterpillar of maximum degree~$\Delta$, and let $P$ be its central path. The prover first gives a special certificate to each degree-1 vertex. Then, it writes in the certificate of every vertex~$u \in V(P)$ its number of degree-1 neighbors denoted by~$\delta_1(u)$, and the certificate~$u$ would receive in the certification procedure of~$\P'$ for the labeled path $h(G)$. By Claim~\ref{claim:constant_size_certif}, these certificates have constant size.

		The verification of the vertices consists in the following. First, each vertex checks that it has the special certificate if and only if it has degree~$1$. Then, each vertex~$u$ of~$P$ (that is, each vertex~$u$ that has degree at least two) checks that $\delta_1(u)$ is correctly written in its certificate. Finally, to check if $G$ satisfies~$\P$, since by Claim~\ref{claim:fg} we have $G = f(h(G))$, the vertices will check that $h(G)$ satisfies~$\P'$. To do so, each vertex~$u$ of~$P$ reconstructs its view in~$h(G)$ (note can $u$ can do so, because the both the labels and the certificates of the vertices of $h(G)$ are written in the certificates of the vertices of $G$), and~$u$ accepts if and only if it would accept in the certification procedure for~$\P'$ in~$h(G)$. The correctness of this certification procedure follows from the correctness of the certification procedure for~$\P'$.
	\end{proof}


	\section{Gap in paths with better constants via Chrobak normal form}
	\label{sec:Chrobak}
	
	In this section, we prove our gap theorem on paths with better constants than in Theorem~\ref{thm:gap-paths-state-complexity}, via a different technique.
	
	\ThmGapPathsChrobak*
	
	The proof is based on the following theorem, which is illustrated in Figure~\ref{fig:chrobak}.
	
	\begin{theorem}[Chrobak normal form]
		\label{thm:chrobak}
		Let $\A$ be a nondeterministic finite automaton, over a unary alphabet, having $M$ states. Then, there exists another nondeterministic finite automaton, called the \emph{Chrobak normal form of $\A$}, recognizing the same language, that consists of a path of length at most $\ell(M) = O(M^2)$, followed by a single nondeterministic choice to a set of $r$ disjoint cycles of lengths $p_1, \ldots, p_r$ such that $\sum_{1 \leqslant i \leqslant r} p_i \leqslant M$.
	\end{theorem}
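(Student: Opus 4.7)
The plan is to follow Chrobak's original three-step argument, which extracts the normal form from the underlying transition graph of $\A$.

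First, since the alphabet is unary, $\A$ reduces to a directed graph $G$ on $M$ vertices with distinguished initial and final sets, and $L(\A)$ is exactly the set of lengths of directed walks from an initial to a final state in $G$. I would decompose $G$ into its strongly connected components (SCCs) and, for each nontrivial SCC $C$, define $p_C := \gcd(\text{lengths of simple cycles in } C)$. Any accepting walk is a concatenation of: an initial segment through the topological DAG of SCCs (of bounded length), a long ``trapped'' stretch inside a single SCC, and a final segment back through the DAG to a final state.

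Second, I would establish the key arithmetic lemma: for every SCC $C$ and every pair of states $u,v \in C$, there exist a residue $r_{uv} \in \{0, \ldots, p_C - 1\}$ and a threshold $t_{uv} = O(M^2)$ such that for every $n \geq t_{uv}$, there is a walk of length $n$ from $u$ to $v$ in $C$ if and only if $n \equiv r_{uv} \pmod{p_C}$. This is a Frobenius/B\'ezout-type statement, in the same spirit as Lemma~\ref{lem:bezout}: one combines a short ``connecting'' walk of length $O(M^2)$ with arbitrary repetitions of the elementary cycles of $C$, whose lengths have gcd exactly $p_C$.

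Third, I would assemble the Chrobak form. All accepting walks of length at most $\ell(M) = O(M^2)$ are encoded explicitly by a tail path whose states correspond to these short prefixes of runs. For longer walks, step two implies that an accepting walk of length $n > \ell(M)$ is essentially determined by the SCC $C$ it gets trapped in and by the residue of $n$ modulo $p_C$. Hence it suffices, at the end of the tail, to branch nondeterministically into one cycle per SCC, of length exactly $p_C$, with the entry point chosen according to the relevant residue. Since $p_C \leq |C|$ and the SCCs partition $V(G)$, the total cycle length is $\sum_C p_C \leq \sum_C |C| \leq M$, as required.

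The main obstacle will be step two, and in particular the quadratic bound $t_{uv} = O(M^2)$ on the threshold above which walk lengths from $u$ to $v$ form a full arithmetic progression with common difference $p_C$. This requires a careful counting argument inside a single SCC, showing that every residue class reachable at all is reachable by a walk of length $O(M^2)$; this bound is exactly what pins down the tail length $\ell(M) = O(M^2)$ of the normal form.
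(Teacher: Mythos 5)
The paper does not actually prove this theorem: it imports it as a classical result on unary NFAs (Chrobak's normal form), so your proposal has to stand on its own. Its overall architecture --- a tail of length $O(M^2)$ followed by one cycle of length $p_C$ per strongly connected component $C$, giving total cycle length $\sum_C p_C \leqslant \sum_C |C| \leqslant M$ --- is the right one, and is essentially the shape of the known correct proofs. But the justification has a genuine gap, and it sits, perhaps reassuringly, at essentially the same spot where Chrobak's own original argument was later found to be flawed and had to be repaired.

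The gap is the passage from step two to step three: the claim that a long accepting walk decomposes into a bounded DAG part plus a long stretch ``trapped'' in a \emph{single} SCC, so that the accepted lengths beyond the threshold are governed by the per-SCC residues $r_{uv} \bmod p_C$ of your step two. An accepting walk can spend unbounded time looping in \emph{several} SCCs along its way; the achievable total lengths then form (up to the base length) a numerical semigroup generated by cycle lengths taken from different components, and the resulting arithmetic progression has period $\gcd$ of the periods of all SCCs visited, which can be strictly smaller than every individual $p_C$. Concretely, take a path $q_0 \to q_1 \to q_3 \to q_f$ where $q_1$ lies on a $2$-cycle and $q_3$ lies on a $3$-cycle: the accepted lengths are $\{3 + 2a + 3b : a,b \geqslant 0\}$, i.e.\ all $n \geqslant 5$ (plus $n=3$), whereas the union over single SCCs of the progressions your step two produces yields only the odd lengths and the multiples of $3$, missing e.g.\ $n=8$. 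No choice of a single entry residue per cycle reproduces the language. The construction can be repaired: each cross-SCC progression of period $g$ satisfies $g \mid p_C$ for every SCC $C$ it loops in, hence is a union of $p_C/g$ residue classes modulo $p_C$ and can be charged to the cycle of one such $C$ by placing several accepting states on it. But establishing this requires a Brauer/Frobenius-type threshold for sums of cycle lengths drawn from several SCCs (in the spirit of Lemma~\ref{lem:bezout}) rather than your single-SCC lemma, and it is exactly this multi-SCC bookkeeping that your write-up omits.
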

	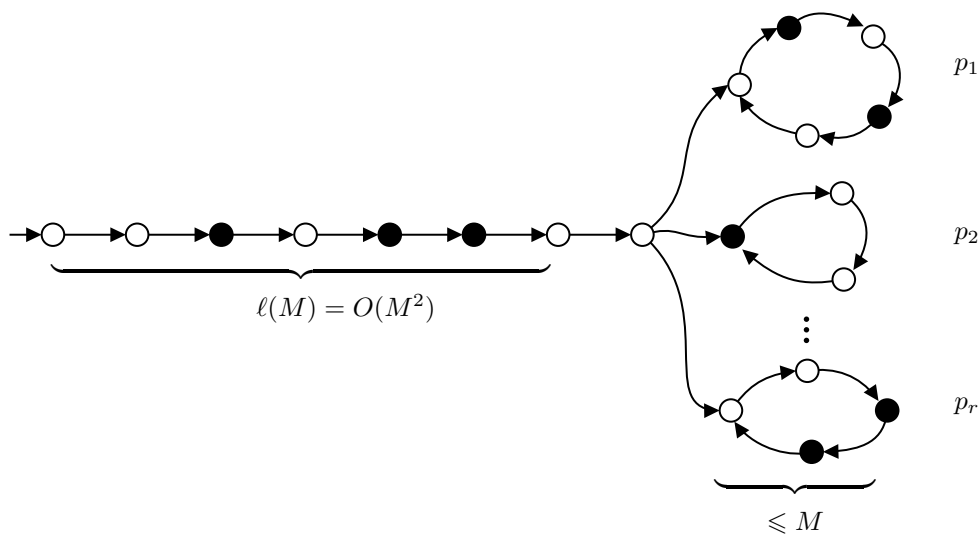
\begin{figure}[h]
		\centering
		
		\tikzset{every picture/.style={line width=0.75pt}} 
		
		\begin{tikzpicture}[x=0.65pt,y=0.65pt,yscale=-1,xscale=1]
			\draw    (19.5,137.5) -- (35,137.5) ;
			\draw [shift={(38,137.5)}, rotate = 180] [fill={rgb, 255:red, 0; green, 0; blue, 0 }  ][line width=0.08]  [draw opacity=0] (8.93,-4.29) -- (0,0) -- (8.93,4.29) -- cycle    ;
			
			\draw    (481.5,263.5) .. controls (470.58,266.83) and (450.17,262.48) .. (438.73,248.32) ;
			\draw [shift={(437,246)}, rotate = 55.49] [fill={rgb, 255:red, 0; green, 0; blue, 0 }  ][line width=0.08]  [draw opacity=0] (8.93,-4.29) -- (0,0) -- (8.93,4.29) -- cycle    ;
			\draw    (525,239.5) .. controls (523.57,257.17) and (511.19,262.52) .. (490.92,263.4) ;
			\draw [shift={(488,263.5)}, rotate = 358.7] [fill={rgb, 255:red, 0; green, 0; blue, 0 }  ][line width=0.08]  [draw opacity=0] (8.93,-4.29) -- (0,0) -- (8.93,4.29) -- cycle    ;
			\draw    (479,216.5) .. controls (495.54,214.14) and (508.95,219.37) .. (518.39,231.76) ;
			\draw [shift={(520,234)}, rotate = 235.84] [fill={rgb, 255:red, 0; green, 0; blue, 0 }  ][line width=0.08]  [draw opacity=0] (8.93,-4.29) -- (0,0) -- (8.93,4.29) -- cycle    ;
			\draw    (435,239.5) .. controls (444.03,226.68) and (451.7,219.72) .. (469.59,216.91) ;
			\draw [shift={(472.5,216.5)}, rotate = 172.87] [fill={rgb, 255:red, 0; green, 0; blue, 0 }  ][line width=0.08]  [draw opacity=0] (8.93,-4.29) -- (0,0) -- (8.93,4.29) -- cycle    ;
			\draw    (384,137.5) .. controls (429.34,177.96) and (391.5,236.96) .. (425.72,239.42) ;
			\draw [shift={(428.5,239.5)}, rotate = 179.27] [fill={rgb, 255:red, 0; green, 0; blue, 0 }  ][line width=0.08]  [draw opacity=0] (8.93,-4.29) -- (0,0) -- (8.93,4.29) -- cycle    ;
			\draw    (500,163.5) .. controls (475.39,167.84) and (462.89,164.27) .. (443.62,147.84) ;
			\draw [shift={(441.5,146)}, rotate = 41.28] [fill={rgb, 255:red, 0; green, 0; blue, 0 }  ][line width=0.08]  [draw opacity=0] (8.93,-4.29) -- (0,0) -- (8.93,4.29) -- cycle    ;
			\draw    (499,113.5) .. controls (510.1,120.74) and (519.8,134.49) .. (506.98,155.67) ;
			\draw [shift={(505.5,158)}, rotate = 303.69] [fill={rgb, 255:red, 0; green, 0; blue, 0 }  ][line width=0.08]  [draw opacity=0] (8.93,-4.29) -- (0,0) -- (8.93,4.29) -- cycle    ;
			\draw    (436,138.5) .. controls (449.99,117.75) and (465.38,112.37) .. (489.81,113.37) ;
			\draw [shift={(492.5,113.5)}, rotate = 183.3] [fill={rgb, 255:red, 0; green, 0; blue, 0 }  ][line width=0.08]  [draw opacity=0] (8.93,-4.29) -- (0,0) -- (8.93,4.29) -- cycle    ;
			\draw    (384,137.5) .. controls (405.6,131.26) and (403.7,140.23) .. (426.52,138.74) ;
			\draw [shift={(429.5,138.5)}, rotate = 174.51] [fill={rgb, 255:red, 0; green, 0; blue, 0 }  ][line width=0.08]  [draw opacity=0] (8.93,-4.29) -- (0,0) -- (8.93,4.29) -- cycle    ;
			\draw    (479,80) .. controls (474.02,79.15) and (470.25,78.27) .. (467.24,77.27) .. controls (464.39,76.31) and (451.85,74.22) .. (441.6,59.44) ;
			\draw [shift={(440,57)}, rotate = 58.3] [fill={rgb, 255:red, 0; green, 0; blue, 0 }  ][line width=0.08]  [draw opacity=0] (8.93,-4.29) -- (0,0) -- (8.93,4.29) -- cycle    ;
			\draw    (520.5,69) .. controls (511.95,77.55) and (505.21,83.39) .. (488.26,80.53) ;
			\draw [shift={(485.5,80)}, rotate = 11.89] [fill={rgb, 255:red, 0; green, 0; blue, 0 }  ][line width=0.08]  [draw opacity=0] (8.93,-4.29) -- (0,0) -- (8.93,4.29) -- cycle    ;
			\draw    (517,24) .. controls (533.54,30.62) and (536.23,47.5) .. (527.2,61.15) ;
			\draw [shift={(525.5,63.5)}, rotate = 308.16] [fill={rgb, 255:red, 0; green, 0; blue, 0 }  ][line width=0.08]  [draw opacity=0] (8.93,-4.29) -- (0,0) -- (8.93,4.29) -- cycle    ;
			\draw    (468.5,17.5) .. controls (481.24,2.94) and (498.95,10.74) .. (508.82,15.27) ;
			\draw [shift={(511.5,16.5)}, rotate = 203.96] [fill={rgb, 255:red, 0; green, 0; blue, 0 }  ][line width=0.08]  [draw opacity=0] (8.93,-4.29) -- (0,0) -- (8.93,4.29) -- cycle    ;
			\draw    (440,50.5) .. controls (439.52,38.56) and (442.69,28.45) .. (459.53,18.85) ;
			\draw [shift={(462,17.5)}, rotate = 152.24] [fill={rgb, 255:red, 0; green, 0; blue, 0 }  ][line width=0.08]  [draw opacity=0] (8.93,-4.29) -- (0,0) -- (8.93,4.29) -- cycle    ;
			\draw    (384,137.5) .. controls (423.2,108.1) and (394.69,81.58) .. (431.18,52.3) ;
			\draw [shift={(433.5,50.5)}, rotate = 143.13] [fill={rgb, 255:red, 0; green, 0; blue, 0 }  ][line width=0.08]  [draw opacity=0] (8.93,-4.29) -- (0,0) -- (8.93,4.29) -- cycle    ;
			\draw  [fill={rgb, 255:red, 255; green, 255; blue, 255 }  ,fill opacity=1 ] (38,137.5) .. controls (38,133.91) and (40.91,131) .. (44.5,131) .. controls (48.09,131) and (51,133.91) .. (51,137.5) .. controls (51,141.09) and (48.09,144) .. (44.5,144) .. controls (40.91,144) and (38,141.09) .. (38,137.5) -- cycle ;
			\draw    (51,137.5) -- (83.5,137.5) ;
			\draw [shift={(86.5,137.5)}, rotate = 180] [fill={rgb, 255:red, 0; green, 0; blue, 0 }  ][line width=0.08]  [draw opacity=0] (8.93,-4.29) -- (0,0) -- (8.93,4.29) -- cycle    ;
			\draw  [fill={rgb, 255:red, 255; green, 255; blue, 255 }  ,fill opacity=1 ] (86.5,137.5) .. controls (86.5,133.91) and (89.41,131) .. (93,131) .. controls (96.59,131) and (99.5,133.91) .. (99.5,137.5) .. controls (99.5,141.09) and (96.59,144) .. (93,144) .. controls (89.41,144) and (86.5,141.09) .. (86.5,137.5) -- cycle ;
			\draw  [fill={rgb, 255:red, 0; green, 0; blue, 0 }  ,fill opacity=1 ] (135,137.5) .. controls (135,133.91) and (137.91,131) .. (141.5,131) .. controls (145.09,131) and (148,133.91) .. (148,137.5) .. controls (148,141.09) and (145.09,144) .. (141.5,144) .. controls (137.91,144) and (135,141.09) .. (135,137.5) -- cycle ;
			\draw    (99.5,137.5) -- (132,137.5) ;
			\draw [shift={(135,137.5)}, rotate = 180] [fill={rgb, 255:red, 0; green, 0; blue, 0 }  ][line width=0.08]  [draw opacity=0] (8.93,-4.29) -- (0,0) -- (8.93,4.29) -- cycle    ;
			\draw    (148,137.5) -- (180.5,137.5) ;
			\draw [shift={(183.5,137.5)}, rotate = 180] [fill={rgb, 255:red, 0; green, 0; blue, 0 }  ][line width=0.08]  [draw opacity=0] (8.93,-4.29) -- (0,0) -- (8.93,4.29) -- cycle    ;
			\draw    (196.5,137.5) -- (229,137.5) ;
			\draw [shift={(232,137.5)}, rotate = 180] [fill={rgb, 255:red, 0; green, 0; blue, 0 }  ][line width=0.08]  [draw opacity=0] (8.93,-4.29) -- (0,0) -- (8.93,4.29) -- cycle    ;
			\draw  [fill={rgb, 255:red, 0; green, 0; blue, 0 }  ,fill opacity=1 ] (232,137.5) .. controls (232,133.91) and (234.91,131) .. (238.5,131) .. controls (242.09,131) and (245,133.91) .. (245,137.5) .. controls (245,141.09) and (242.09,144) .. (238.5,144) .. controls (234.91,144) and (232,141.09) .. (232,137.5) -- cycle ;
			\draw  [fill={rgb, 255:red, 255; green, 255; blue, 255 }  ,fill opacity=1 ] (183.5,137.5) .. controls (183.5,133.91) and (186.41,131) .. (190,131) .. controls (193.59,131) and (196.5,133.91) .. (196.5,137.5) .. controls (196.5,141.09) and (193.59,144) .. (190,144) .. controls (186.41,144) and (183.5,141.09) .. (183.5,137.5) -- cycle ;
			\draw  [fill={rgb, 255:red, 0; green, 0; blue, 0 }  ,fill opacity=1 ] (280.5,137.5) .. controls (280.5,133.91) and (283.41,131) .. (287,131) .. controls (290.59,131) and (293.5,133.91) .. (293.5,137.5) .. controls (293.5,141.09) and (290.59,144) .. (287,144) .. controls (283.41,144) and (280.5,141.09) .. (280.5,137.5) -- cycle ;
			\draw    (293.5,137.5) -- (326,137.5) ;
			\draw [shift={(329,137.5)}, rotate = 180] [fill={rgb, 255:red, 0; green, 0; blue, 0 }  ][line width=0.08]  [draw opacity=0] (8.93,-4.29) -- (0,0) -- (8.93,4.29) -- cycle    ;
			\draw    (245,137.5) -- (277.5,137.5) ;
			\draw [shift={(280.5,137.5)}, rotate = 180] [fill={rgb, 255:red, 0; green, 0; blue, 0 }  ][line width=0.08]  [draw opacity=0] (8.93,-4.29) -- (0,0) -- (8.93,4.29) -- cycle    ;
			\draw    (342,137.5) -- (374.5,137.5) ;
			\draw [shift={(377.5,137.5)}, rotate = 180] [fill={rgb, 255:red, 0; green, 0; blue, 0 }  ][line width=0.08]  [draw opacity=0] (8.93,-4.29) -- (0,0) -- (8.93,4.29) -- cycle    ;
			\draw  [fill={rgb, 255:red, 255; green, 255; blue, 255 }  ,fill opacity=1 ] (329,137.5) .. controls (329,133.91) and (331.91,131) .. (335.5,131) .. controls (339.09,131) and (342,133.91) .. (342,137.5) .. controls (342,141.09) and (339.09,144) .. (335.5,144) .. controls (331.91,144) and (329,141.09) .. (329,137.5) -- cycle ;
			\draw  [fill={rgb, 255:red, 255; green, 255; blue, 255 }  ,fill opacity=1 ] (377.5,137.5) .. controls (377.5,133.91) and (380.41,131) .. (384,131) .. controls (387.59,131) and (390.5,133.91) .. (390.5,137.5) .. controls (390.5,141.09) and (387.59,144) .. (384,144) .. controls (380.41,144) and (377.5,141.09) .. (377.5,137.5) -- cycle ;
			\draw  [fill={rgb, 255:red, 255; green, 255; blue, 255 }  ,fill opacity=1 ] (433.5,50.5) .. controls (433.5,46.91) and (436.41,44) .. (440,44) .. controls (443.59,44) and (446.5,46.91) .. (446.5,50.5) .. controls (446.5,54.09) and (443.59,57) .. (440,57) .. controls (436.41,57) and (433.5,54.09) .. (433.5,50.5) -- cycle ;
			\draw  [color={rgb, 255:red, 0; green, 0; blue, 0 }  ,draw opacity=1 ][fill={rgb, 255:red, 0; green, 0; blue, 0 }  ,fill opacity=1 ] (462,17.5) .. controls (462,13.91) and (464.91,11) .. (468.5,11) .. controls (472.09,11) and (475,13.91) .. (475,17.5) .. controls (475,21.09) and (472.09,24) .. (468.5,24) .. controls (464.91,24) and (462,21.09) .. (462,17.5) -- cycle ;
			\draw  [fill={rgb, 255:red, 255; green, 255; blue, 255 }  ,fill opacity=1 ] (510.5,22.5) .. controls (510.5,18.91) and (513.41,16) .. (517,16) .. controls (520.59,16) and (523.5,18.91) .. (523.5,22.5) .. controls (523.5,26.09) and (520.59,29) .. (517,29) .. controls (513.41,29) and (510.5,26.09) .. (510.5,22.5) -- cycle ;
			\draw  [fill={rgb, 255:red, 0; green, 0; blue, 0 }  ,fill opacity=1 ] (514,69) .. controls (514,65.41) and (516.91,62.5) .. (520.5,62.5) .. controls (524.09,62.5) and (527,65.41) .. (527,69) .. controls (527,72.59) and (524.09,75.5) .. (520.5,75.5) .. controls (516.91,75.5) and (514,72.59) .. (514,69) -- cycle ;
			\draw  [fill={rgb, 255:red, 255; green, 255; blue, 255 }  ,fill opacity=1 ] (472.5,80) .. controls (472.5,76.41) and (475.41,73.5) .. (479,73.5) .. controls (482.59,73.5) and (485.5,76.41) .. (485.5,80) .. controls (485.5,83.59) and (482.59,86.5) .. (479,86.5) .. controls (475.41,86.5) and (472.5,83.59) .. (472.5,80) -- cycle ;
			\draw  [fill={rgb, 255:red, 255; green, 255; blue, 255 }  ,fill opacity=1 ] (492.5,113.5) .. controls (492.5,109.91) and (495.41,107) .. (499,107) .. controls (502.59,107) and (505.5,109.91) .. (505.5,113.5) .. controls (505.5,117.09) and (502.59,120) .. (499,120) .. controls (495.41,120) and (492.5,117.09) .. (492.5,113.5) -- cycle ;
			\draw  [fill={rgb, 255:red, 0; green, 0; blue, 0 }  ,fill opacity=1 ] (429.5,138.5) .. controls (429.5,134.91) and (432.41,132) .. (436,132) .. controls (439.59,132) and (442.5,134.91) .. (442.5,138.5) .. controls (442.5,142.09) and (439.59,145) .. (436,145) .. controls (432.41,145) and (429.5,142.09) .. (429.5,138.5) -- cycle ;
			\draw  [fill={rgb, 255:red, 255; green, 255; blue, 255 }  ,fill opacity=1 ] (493.5,163.5) .. controls (493.5,159.91) and (496.41,157) .. (500,157) .. controls (503.59,157) and (506.5,159.91) .. (506.5,163.5) .. controls (506.5,167.09) and (503.59,170) .. (500,170) .. controls (496.41,170) and (493.5,167.09) .. (493.5,163.5) -- cycle ;
			\draw  [fill={rgb, 255:red, 255; green, 255; blue, 255 }  ,fill opacity=1 ] (472.5,216.5) .. controls (472.5,212.91) and (475.41,210) .. (479,210) .. controls (482.59,210) and (485.5,212.91) .. (485.5,216.5) .. controls (485.5,220.09) and (482.59,223) .. (479,223) .. controls (475.41,223) and (472.5,220.09) .. (472.5,216.5) -- cycle ;
			\draw  [fill={rgb, 255:red, 255; green, 255; blue, 255 }  ,fill opacity=1 ] (428.5,239.5) .. controls (428.5,235.91) and (431.41,233) .. (435,233) .. controls (438.59,233) and (441.5,235.91) .. (441.5,239.5) .. controls (441.5,243.09) and (438.59,246) .. (435,246) .. controls (431.41,246) and (428.5,243.09) .. (428.5,239.5) -- cycle ;
			\draw  [fill={rgb, 255:red, 0; green, 0; blue, 0 }  ,fill opacity=1 ] (518.5,239.5) .. controls (518.5,235.91) and (521.41,233) .. (525,233) .. controls (528.59,233) and (531.5,235.91) .. (531.5,239.5) .. controls (531.5,243.09) and (528.59,246) .. (525,246) .. controls (521.41,246) and (518.5,243.09) .. (518.5,239.5) -- cycle ;
			\draw  [fill={rgb, 255:red, 0; green, 0; blue, 0 }  ,fill opacity=1 ] (475,263.5) .. controls (475,259.91) and (477.91,257) .. (481.5,257) .. controls (485.09,257) and (488,259.91) .. (488,263.5) .. controls (488,267.09) and (485.09,270) .. (481.5,270) .. controls (477.91,270) and (475,267.09) .. (475,263.5) -- cycle ;
			
			\draw (472,174) node [anchor=north west][inner sep=0.75pt]   [align=left] {{\Huge $\vdots$}};
			\draw (562,33.4) node [anchor=north west][inner sep=0.75pt]    {$p_{1}$};
			\draw (562,131.4) node [anchor=north west][inner sep=0.75pt]    {$p_{2}$};
			\draw (562,230.4) node [anchor=north west][inner sep=0.75pt]    {$p_{r}$};
			\draw (42,151.4) node [anchor=north west][inner sep=0.75pt]    {$\underbrace{\ \ \ \ \ \ \ \ \ \ \ \ \ \ \ \ \ \ \ \ \ \ \ \ \ \ \ \ \ \ \ \ \ \ \ \ \ \ \ \ \ \ \ \ \ \ \ \ \ \ \ \ \ \ \ \ }_{}$};
			\draw (160,170.4) node [anchor=north west][inner sep=0.75pt]    {$\ell ( M) = O(M^2)$};
			\draw (424,276.4) node [anchor=north west][inner sep=0.75pt]    {$\underbrace{\ \ \ \ \ \ \ \ \ \ \ \ \ \ \ \ \ \ }_{}$};
			\draw (454,296.4) node [anchor=north west][inner sep=0.75pt]    {$\leqslant M$};

		\end{tikzpicture}
		\caption{\label{fig:chrobak}An automaton in Chrobak normal form. Black states are final states.}
	\end{figure}
	
	Let us make the following observation, which is not crucial in our proof but which completely characterizes the set of paths that admits a constant-size certification, and which is direct from Theorem~\ref{thm:chrobak}.
	
	\begin{observation}
		\label{obs:periodic}
		Let $S \subseteq \N$. Certifying that a path has its length in~$S$ can be done with constant-size certificates if and only if $S$ is eventually periodic.
	\end{observation}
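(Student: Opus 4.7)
The plan is to prove both directions separately. The nontrivial direction ($\Rightarrow$) follows from Chrobak normal form (Theorem~\ref{thm:chrobak}) applied to the certification automaton from Section~\ref{subsec:automata}; the easier direction ($\Leftarrow$) is an explicit construction.

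For $(\Rightarrow)$: Assume $S$ can be certified with certificates of constant size $k$. By soundness and completeness, the set $S_k$ of accepted lengths equals $S$. The NFA $\A_k$ constructed in Section~\ref{subsec:automata} recognizes $S$ over a unary alphabet and has $M := 2^{2k}+2 = O(1)$ states. Applying Theorem~\ref{thm:chrobak} to $\A_k$ yields an equivalent NFA consisting of an initial path of length at most $\ell(M)=O(M^2)$, followed by a single nondeterministic branching into disjoint cycles of lengths $p_1,\dots,p_r$ with $\sum_i p_i \leq M$. From this normal form, for every $n$ strictly larger than the tail length, $n \in S$ if and only if there exists a cycle $i$ and an offset $o_i$ (corresponding to a final state in cycle $i$) with $n \equiv o_i \pmod{p_i}$. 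The set of such $n$ is a finite union of arithmetic progressions, hence eventually periodic with period dividing $\lcm(p_1,\dots,p_r)$.

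For $(\Leftarrow)$: Assume there exist $n_0$ and $P \geq 1$ such that $n \in S \iff n+P \in S$ for every $n \geq n_0$. Fix an endpoint $u_0$ of the path. To every vertex $v$ at distance $d := d(u_0,v)$, the prover assigns a certificate consisting of a flag indicating whether $d < n_0$, together with either $d$ (if $d < n_0$) or $d \bmod P$ (otherwise). This uses only $n_0 + P = O(1)$ distinct certificates. The verifier at each vertex checks local consistency: the endpoint $u_0$ must carry the value $0$; internal vertices must be consistent with an incrementing counter that transitions from the explicit phase to the periodic phase exactly when it reaches $n_0$; and the opposite endpoint must carry a value witnessing membership in $S$, which can be decided by a finite lookup table since there are finitely many possible endpoint certificates.

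The main obstacle is the forward direction, specifically arguing carefully that the Chrobak normal form yields eventual periodicity: one must verify that each cycle contributes a finite union of arithmetic progressions to the accepted set, so that the overall accepted set outside the tail is periodic with period $\lcm(p_1,\dots,p_r)$. This is essentially bookkeeping given the explicit structure provided by Theorem~\ref{thm:chrobak}, but one must take care that the length $n$ of a path corresponds to the number of transitions in an accepting run (as noted in Section~\ref{subsec:automata}), so the tail/cycle lengths translate directly into path-length thresholds and periods.
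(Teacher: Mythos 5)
Your proposal is correct and follows essentially the same route as the paper: the forward direction applies Chrobak normal form (Theorem~\ref{thm:chrobak}) to the unary automaton $\A_k$ of Section~\ref{subsec:automata} to extract eventual periodicity, and the backward direction certifies an eventually periodic set with constant-size counters (the paper does this by writing $S$ as a finite union of congruence conditions and invoking Lemma~\ref{lem:certif_modulo}, which also supplies the orientation gadget your ``incrementing counter'' check implicitly needs). The only cosmetic difference is that you merge the finite prefix and the periodic part into a single two-phase counter rather than a union of congruence certifications.
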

	
	Indeed, if $S$ is eventually periodic, then there are integers $N, i_1, \ldots, i_k, m_1, \ldots, m_k$ such that the set $S$ is equal to $\bigcup_{j=1}^{k} \{n \in \N \; | n \geqslant N \;\; \text{ and } \;\; n \equiv i_j \mod m_j \}$ and for every set of this union, certifying that the length of a path belongs to it can be done with constant size-certificates. Conversely, if the certification for the belonging to~$S$ can be done with constant-size certificates, then there exists a finite automaton~$\A$ (defined as in Section~\ref{sec:gap-path}) that recognizes the set of lengths in~$S$, and its Chrobak normal form given by Theorem~\ref{thm:chrobak} accepts the same set of integers which is eventually periodic.

	To prove Theorem~\ref{thm:gap-paths-chrobak}, we will also use the following result proved by Landau:
	
	\begin{theorem}[Landau]
		\label{thm:landau}
		For $n \in \N$, let us denote by~$g(n)$ the largest least common multiple of any set of positive integers whose sum is equal to~$n$ ($g$ is called the Landau's function).
		Then, we have:
		$$\ln(g(n)) \sim \sqrt{n \ln(n))}$$
	\end{theorem}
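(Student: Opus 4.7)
The plan is to establish matching bounds $\ln g(n) \sim \sqrt{n \ln n}$ by producing an explicit near-optimal partition for the lower bound and by analyzing constrained maximization over prime-power sums for the upper bound. Both directions rely on the classical Chebyshev/PNT estimates $\vartheta(x) := \sum_{p \leqslant x} \ln p \sim x$ and $\sum_{p \leqslant x} p \sim x^2/(2 \ln x)$. For the lower bound, I would take the partition formed by the first $k$ primes $2, 3, 5, \ldots, p_k$, where $k$ is maximal subject to $\sum_{i \leqslant k} p_i \leqslant n$, padded with ones so as to sum exactly to $n$. Applying the second PNT estimate at $x = p_k$ gives $p_k^2/(2 \ln p_k) \leqslant n(1+o(1))$, so $p_k \sim \sqrt{n \ln n}$. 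Since $p_1,\ldots,p_k$ are pairwise coprime and ones do not affect the lcm,
$$\ln g(n) \;\geqslant\; \ln \prod_{i=1}^k p_i \;=\; \vartheta(p_k) \;\sim\; p_k \;\sim\; \sqrt{n \ln n}.$$

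For the upper bound, I would start from any optimal partition $n = a_1 + \cdots + a_m$ realizing $g(n) = \mathrm{lcm}(a_1,\ldots,a_m)$ and write $g(n) = \prod_p p^{e_p}$. For each prime $p$ dividing $g(n)$, pick an index $j(p)$ with $p^{e_p}\mid a_{j(p)}$; a short combinatorial argument using the distinctness of the primes allows the $j(p)$ themselves to be chosen distinct, yielding the key inequality $\sum_p p^{e_p} \leqslant \sum_p a_{j(p)} \leqslant n$. Thus it suffices to upper bound $\sum_p e_p \ln p$ over all finitely-supported nonnegative integer exponent families $(e_p)$ satisfying $\sum_{p : e_p \geqslant 1} p^{e_p} \leqslant n$.

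A convexity-style exchange argument then shows that the maximum is attained essentially by taking $e_p = 1$ on an initial segment of primes. Heuristically, if $e_p \geqslant 2$ and $p'$ is a new prime, replacing $p^{e_p}$ by $p^{e_p-1}$ and adding $p'$ costs at most $p'-p^{e_p-1}(p-1) \leqslant p'$ units of the budget while changing the log-gain by $\ln p' - \ln p$; making this precise bounds the total contribution of exponents $\geqslant 2$ to $\ln g(n)$ by $o(\sqrt{n \ln n})$. Combining this reduction with the PNT estimates gives $\sum_p e_p \ln p \leqslant \vartheta(y)(1+o(1))$, where $y$ is defined by $\sum_{p \leqslant y} p \sim n$, that is $y \sim \sqrt{n \ln n}$.

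The main obstacle is the quantitative form of the exchange step: turning the informal ``using $e_p \geqslant 2$ is wasteful'' idea into an actual $1+o(1)$ constant requires careful bookkeeping of the budget absorbed by large prime powers (which could a priori contribute a nontrivial fraction of $\ln g(n)$) and of the gap between the extremal prime-power configuration and the ``all primes, exponent one'' configuration used in the lower bound. Once this is controlled, both bounds converge because they reduce to the same PNT input $\vartheta(x) \sim x$ evaluated at the common threshold $x \sim \sqrt{n \ln n}$ determined by the budget constraint $\sum_{p \leqslant x} p \sim n$, which is exactly what pins down the constant $1$ in $\ln g(n) \sim \sqrt{n \ln n}$.
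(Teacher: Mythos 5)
The paper does not actually prove this statement: Landau's asymptotic is invoked as a classical black box (it goes back to Landau, 1903, and is quoted from the literature), so there is no in-paper argument to compare against. Your outline is essentially the standard proof. The lower bound is complete and correct: with $\sum_{p\leqslant x}p\sim x^2/(2\ln x)$ one gets $p_k\sim\sqrt{n\ln n}$ from the budget constraint, and $\ln g(n)\geqslant\vartheta(p_k)\sim p_k$.

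Two points in the upper bound deserve flagging. First, your justification of the key inequality $\sum_p p^{e_p}\leqslant n$ is wrong as stated: the indices $j(p)$ cannot in general be chosen distinct (take the single part $a_1=6$, which forces $j(2)=j(3)=1$). The inequality is nevertheless true, by the standard argument that a product of integers each at least $2$ is at least their sum: if the prime powers $p^{e_p}$ for $p$ in some set $Q$ all achieve their maximal exponent inside the same part $a_j$, then $a_j\geqslant\prod_{p\in Q}p^{e_p}\geqslant\sum_{p\in Q}p^{e_p}$, and summing over parts gives $\sum_p p^{e_p}\leqslant n$. Second, the constant $1$ in the upper bound really does require the care you anticipate: a naive split of the prime powers at the threshold $y=\sqrt{n\ln n}$ (bounding the small ones by $\pi(y)\ln y$ and the large ones by $(\ln y/y)\sum q_i$) only yields $\tfrac{3}{2}\sqrt{n\ln n}$. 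One workable route is the fractional/greedy relaxation: since $t\mapsto\ln t/t$ is eventually decreasing, the maximum of $\sum_{p\in S}\ln p$ over sets of distinct primes with $\sum_{p\in S}p\leqslant n$ is at most $\vartheta(y)+O(\ln y)$ where $y$ is the budget threshold, while the total contribution of proper prime powers $p^e$ with $e\geqslant 2$ is $o(\sqrt{n\ln n})$ because there are only $O(\sqrt{T})$ such prime powers below any bound $T$ and each contributes $O(\ln T)$ to the objective. With these two repairs your sketch closes into a correct proof; as written, it is an accurate roadmap with one false intermediate claim and the hardest quantitative step deferred.
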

	
	We use the notations introduced in Section~\ref{sec:gap-path}.
	
	\begin{proof}[Proof of Theorem~\ref{thm:gap-paths-chrobak}.]
		
		The automaton~$\A_k$ has $M_k := 2^{2k} + 2$ states and recognizes $S_k$. Then, by Theorem~\ref{thm:chrobak}, the set $S_k$ is also recognized by the Chrobak normal form of $\A_k$, which is an automaton that consists in a path of at most $\ell(M_k) = O(M_k^2)$ states, with a single nondeterministic choice to cycles whose sum of lengths is at most $M_k$.
		Let $t > 0$ be such that $\ell(M) \leqslant t M^2$ for every integer $M \geqslant 1$.
		
		\begin{claim}
			\label{claim:proof_chrobak_bound}
			There exists an integer $k_0 \in \N$ such that, for all $k \geqslant k_0$, we have: $$2^{2^{ck}} > g(2^{2k+1}) + \ell(M_k)$$
			
		\end{claim}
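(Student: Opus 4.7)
The plan is to compare the asymptotic growth of both sides after taking $\log_2$, using Theorem~\ref{thm:landau} to control $g$. On the left, $\log_2\!\bigl(2^{2^{ck}}\bigr) = 2^{ck}$, which is exponential in~$k$. On the right, I treat the two summands separately. Since $\ell(M_k) = O(M_k^2) = O(2^{4k})$, we have $\log_2 \ell(M_k) = O(k)$, which will turn out to be dominated by the other summand and can be absorbed. For the Landau term, Theorem~\ref{thm:landau} gives $\ln g(n) \sim \sqrt{n \ln n}$; substituting $n = 2^{2k+1}$ yields $\ln g(2^{2k+1}) \sim \sqrt{(2k+1)\ln 2}\cdot 2^{k+1/2}$, and therefore $\log_2 g(2^{2k+1}) = \Theta\!\bigl(\sqrt{k}\cdot 2^k\bigr)$. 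Adding the $O(2^{4k})$ contribution from $\ell(M_k)$ does not change this order of magnitude, so $\log_2$ of the right-hand side is $\Theta\!\bigl(\sqrt{k}\cdot 2^k\bigr)$.

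It therefore suffices to establish that $2^{ck} > C\sqrt{k}\cdot 2^k$ for some absolute constant $C$ and all sufficiently large~$k$, or equivalently $2^{(c-1)k} > C\sqrt{k}$. Since $c > 1$ by hypothesis, the left-hand side grows exponentially in~$k$ while the right-hand side grows only sublinearly, so this inequality is eventually satisfied. Taking $k_0$ to be any integer from which this domination holds and re-exponentiating yields the claim.

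The only subtlety (and it is minor) is that the ``$\sim$'' in Theorem~\ref{thm:landau} must be converted into a clean, non-asymptotic upper bound before we exponentiate twice, since tiny multiplicative errors inside a double exponential can be catastrophic. To handle this cleanly, I would first fix some explicit constant (say $C = 2\sqrt{\ln 2}$) and some threshold $n_1$ such that $\ln g(n) \leq C\sqrt{n \ln n}$ for all $n \geq n_1$, which is justified by Theorem~\ref{thm:landau}. Then for $k$ large enough that $2^{2k+1} \geq n_1$, I bound $g(2^{2k+1}) \leq 2^{C' \sqrt{k}\,2^k}$ for a suitable $C'$, and the remaining step reduces to the elementary inequality $2^{(c-1)k} > C' \sqrt{k} + 1$, which is trivially true for $k$ large. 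No real obstacle arises; the proof is a direct asymptotic comparison.
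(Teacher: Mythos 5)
Your proof is correct and follows essentially the same route as the paper: both arguments show that $\ell(M_k)=o(2^{2^{ck}})$ and, via Landau's theorem, that $\log_2 g(2^{2k+1})=\Theta\bigl(\sqrt{k}\,2^{k}\bigr)$ is eventually dominated by $2^{ck}$ because $c>1$. Your extra care in replacing the ``$\sim$'' of Theorem~\ref{thm:landau} by an explicit eventual upper bound before exponentiating is a reasonable tightening of the same argument, not a different one.
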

		
		\begin{proof}
			Since $M_k = 2^{2k}+2$, we have $\ell(M_k)=o(2^{2^{ck}})$. Moreover, by Theorem~\ref{thm:landau}, we have $\ln(g(n)) = \sqrt{n \ln n}(1+o(1))$, so $g(2^{2k+1}) = 2^{2^{k+1/2}\sqrt{(2k+1) \ln 2}(1+o(1))}$. Since $c>1$, we get $g(2^{2k+1})=o(2^{2^{ck}})$. The result follows.
		\end{proof}
		
		Now, assume by contradiction that $\P$ cannot be certified with constant size certificates, and let $k_0 \in \N$ be the integer given by Claim~\ref{claim:proof_chrobak_bound}. Since $\P$ cannot be certified with constant-size certificates, the set $X \subseteq \N$ containing all the integers $k \in \N$ such that $S_k \nsubseteq \bigcup_{i \leqslant k-1} S_i$ is infinite. For every~$k \in X$, let $n_k$ be the smallest integer in $S_k \setminus \bigcup_{i \leqslant k-1} S_i$.
		
		Let us fix an integer $k \in X$ such that $k \geqslant k_0$ and $n_k \geqslant N$ (note that such an integer exists because $X$ is infinite and for all distinct $k, k' \in X$, we have $n_k \neq n_{k'}$).
		Since a path of length $n_k$ is accepted with certificates of size $s(n_k)$, we have $n_k \in S_{s(n_k)}$.
		Since $n_k \in S_k \setminus \bigcup_{i \leqslant k-1} S_i$, we get $s(n_k) \geqslant k$.
		Now, by definition of $s(n_k)$, we obtain $n_k \geqslant 2^{2^{ck}}$. The run which accepts $n_k$ in the automaton $\A_k$ ends in some accepting state which is inside a loop in the Chrobak normal form of~$\A_k$, because $n_k \geqslant 2^{2^{ck}} > \ell(M_k)$ (the last inequality holds since $k \geqslant k_0$).

		Now, us consider the Chrobak normal forms of $\A_1, \ldots, \A_k$, and let $P_k$ be the set of lengths of all the loops in these Chrobak normal forms. Let $p_k$ be the least common multiple of all the integers in~$P_k$. For every $i \in \{1, \ldots, k\}$, we know that the sum of the lengths of the loops in the Chrobak normal form of $\A_i$ is at most $M_i$. Thus, the sum of the integers in $P_k$ is at most $\sum_{i=1}^k M_i$, and we have:
		$$\sum_{i=1}^kM_i = \sum_{i=1}^k 2^{2i} + 2k \leqslant 2^{2k+1}$$
		
		Since the Landau function $g$ is increasing, we have $p_k \leqslant  g(2^{2k+1})$. Thus, we have $n_k - p_k \geqslant 2^{2^{ck}}-g(2^{2k+1}) > \ell(M_k)$ (the last inequality holds because $k \geqslant k_0$).
		Since the length of the loop in the accepting run of $n_k$ in the Chrobak normal form of $\A_k$ divides~$p_k$, and since $n_k-p_k \geqslant \ell(M_k)$, there is a run of length $n_k - p_k$ in the Chrobak normal form of $\A_k$ which ends in the same accepting state. So \mbox{$n_k - p_k \in S_k$}. By assumption on~$n_k$ (which is the smallest integer in $S_k \setminus \bigcup_{i \leqslant k-1} S_i$), we have \mbox{$n_k - p M_k! \in \bigcup_{i \leqslant k-1} S_i$}. Let $j \leqslant k-1$ be 	such that $n_k - p_k \in S_j$. Since $n_k - p_k \geqslant \ell(M_j)$, the run of $n_k - p_k$ in the Chrobak normal form of $\A_j$ ends in some accepting state which is inside a loop. The length of this loop divides $p_k$. Hence, there exists a run of $n_k$ in the Chrobak normal form of $\A_j$ that ends in the same accepting state as $n_k - p_k$, so $n_k \in S_j$, which is a contradiction and concludes the proof.
	\end{proof}
	
	\section{No gap above $\log n$ in general graphs with identifiers}
	\label{sec:no-gap-above-log-n}
	
	We establish that there is no gap in the local certification complexity in general graphs with identifiers in the $\Omega(\log n)$ regime.
	The proof follows the same route as the proof of the $\Theta(n^2)$ bound for non-trivial automorphism in~\cite{GoosS16} . 
	
	\ThmFolkloreNoGap*
	
	Since the proof is a rather direct adaptation of the one of~\cite{GoosS16} and is not central in this paper, we use a more sketchy style than for our other proofs.
	
	\begin{proof}
		Fix some function $f$ in $\Omega(\log n)$ and $O(n^2)$ consider the following language. A graph is in the language if it is made of two copies of a graph $H$ on $\sqrt{f(n)}$ nodes,where an arbitrary node is linked to its copy by a path of length $n-2\sqrt{f(n)}$. A certification for this language is the following. 
		On a correct instance, every node is given the following pieces of information.  
		\begin{enumerate}
			\item Its part of the certification of the size $n$ of the graph, via a spanning tree (see \cite{Feuilloley21}). 
			\item The adjacency matrix of $H$.
			\item The identifier assignment restricted to the copies of $H$.
			\item Its parts of two spanning trees, pointing to the two nodes that belong both to a copy of $H$ and to the path.  
		\end{enumerate}

		Every node checks the following. 
		\begin{enumerate}
			\item Item 1 and 4 above are consistent (again see \cite{Feuilloley21}).
			\item If its identifier appears in the list of Item 3, it checks that its neighborhood in the graph is consistent with the neighborhood in $H$ as described by the certificates; except if it is the root of a spanning tree of Item 4, in which case it should have exactly one additional neighbor. 
			\item If its identifier does not appear in the certificates it should have degree 2. 
		\end{enumerate}
		
		The correctness is follows from~\cite{GoosS16}, where the same scheme is used except that the central path has constant length (and therefore the identifiers of the nodes on the path can be given to all nodes without overhead). 
		
		The spanning tree certifying the number of nodes can be encoded in $O(\log n)$ bits, the adjacency matrices use $O\left(\left(\sqrt{f(n)}\right)^2\right)=O(f(n))$ bits, and the identifier assignment uses $2\sqrt{f(n)}\log n$ bits. Hence in total $O(f(n))$ bits in the regime of the theorem. 
		
		Again the lower bound is very similar to the one of~\cite{GoosS16}. Basically, if we were to use $o(f(n))$ bits, by pigeon-hole principle, there would be two different correct instances of the same size $n$, for which the same certificates would be used on one edge of the path.
		Then we could consider the graph where we take the right part from one instance and the left part from the other (with their accepting certificates), gluing on the edge with identical certificates. This new graph would be accepted, but it is not in our language since the graphs at the end of the path are different. A contradiction.
	\end{proof}

	\section{No gap in $n$ in caterpillars with radius 1}
	\label{sec:no-gap-in-n-caterpillars}

	The goal of this section is to prove the following: 
	
	\ThmCaterpillarNoGapinNRadiusOne*
	
	To prove this result, we will need the following technical lemma, which basically says that for the function $f$ of the theorem, we can define a sequence of integers such that $f$ applied to the sums of the $d$ first elements is close to $\log d$.  
	
	\begin{lemma}
		\label{lem:sequence a_i}
		Let $f : \N \to \mathbb{R}$ be a non-decreasing function such that $\lim_{n \rightarrow +\infty}f(n) = +\infty$, $f(2)=1$ and for all integers $1 \leqslant s \leqslant t$ we have $f(t)-f(s) \leqslant \frac{\log t - \log s}{2}$.
		Then, there exists an unbounded sequence of positive integers $(a_i)_{i \geqslant 1}$ such that $a_1=2$ and:
		
		\begin{equation}
			\label{eq:lemma a_i}
			\forall d \geqslant 1, \;\; \log d \leqslant f\left(\sum_{i=1}^d a_i\right) \leqslant \log d + 1. 
		\end{equation}
	\end{lemma}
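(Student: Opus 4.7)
The plan is to construct the sequence $(a_i)$ greedily. Set $a_1 = 2$, which handles the base case $d = 1$ since $f(2) = 1 \in [0, 1]$. For $d \geq 2$, let $a_d$ be the smallest positive integer such that $f(S_{d-1} + a_d) \geq \log d$, where $S_k := \sum_{i=1}^{k} a_i$; this exists because $f$ is unbounded. By definition the lower bound $\log d \leq f(S_d)$ is automatic, so the task is to establish the upper bound $f(S_d) \leq \log d + 1$ by induction on $d$.

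I would split the inductive step into two cases. If $a_d \geq 2$, minimality gives $f(S_d - 1) < \log d$, and a single application of the hypothesis yields
\[
f(S_d) \leq f(S_d-1) + \frac{1}{2}\log\left(1 + \frac{1}{S_d-1}\right) < \log d + \frac{1}{2},
\]
which is comfortably below $\log d + 1$. Morally, this case is easy because the maximum single-step jump of $f$ is at most $\frac{1}{2}$, strictly smaller than the width of the target window.

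The real obstacle is the case $a_d = 1$, in which the inductive bound $f(S_{d-1}) \leq \log(d-1) + 1$ is already close to saturation. A short side induction shows $S_{d-1} \geq d$ (using $a_1 = 2$ and each $a_i \geq 1$), so combining the Lipschitz bound $f(S_{d-1}+1) \leq f(S_{d-1}) + \frac{1}{2}\log(1 + 1/S_{d-1})$ with the inductive hypothesis reduces the required inequality to
\[
\frac{1}{2}\log\left(1 + \frac{1}{d}\right) \leq \log\frac{d}{d-1},
\]
i.e.\ $\sqrt{1 + 1/d} \leq 1 + 1/(d-1)$, a routine algebraic check. The delicate point is that the factor $\frac{1}{2}$ in the Lipschitz-type hypothesis is essentially tight here: without it, the inductive argument would break down.

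Finally, to see that $(a_i)$ is unbounded, apply the Lipschitz hypothesis between $s = 1$ and $t = S_d$ to obtain $f(S_d) \leq f(1) + \frac{1}{2}\log S_d$; combined with $f(S_d) \geq \log d$ this forces $S_d \geq c \cdot d^2$ for some constant $c > 0$. If the sequence $(a_i)$ were bounded by some $M$, then $S_d \leq M d$, contradicting this superlinear lower bound for large $d$.
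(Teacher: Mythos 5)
Your construction and case analysis match the paper's proof essentially step for step (same greedy choice of $a_d$, same split on $a_d=1$ versus $a_d\geqslant 2$, same boundedness contradiction), and all your estimates check out, so the proof is correct. One small aside: your claim that the factor $\tfrac{1}{2}$ is essential to the induction is not quite right --- the paper handles the $a_d=1$ case using only the weaker bound $f(t)-f(s)\leqslant \log t-\log s$ together with $\sum_{i=1}^{d-1}a_i\geqslant d-1$, landing exactly on $\log d+1$; the factor $\tfrac{1}{2}$ is only really needed for the unboundedness argument.
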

	
	\begin{proof}
		Let us construct the sequence $(a_i)_{i\geqslant 1}$ iteratively: first, we set $a_1 = 2$. Now, let $d \geqslant 2$ and assume that we already defined $a_1, \ldots, a_{d-1}$. We define $a_d$ as being the smallest positive integer such that $\log d \leqslant f\left(\sum_{i=1}^{d} a_i\right)$, which exists because $\lim_{n \rightarrow +\infty}f(n) = +\infty$. Let us prove that this sequence satisfies both inequalities in~(\ref{eq:lemma a_i}).
		
		The first inequality of~(\ref{eq:lemma a_i}) holds by definition of $(a_i)_{i \geqslant 1}$. Let us prove the second inequality by induction on $d$. It trivially holds for $d=1$ because $f(2)=1$. Now, assume that $d \geqslant 2$, and that it holds for $d-1$. There are two cases:
		
		\begin{itemize}
			\item If $a_d = 1$, then we have:
			\begin{align*}
				f\left(\sum_{i=1}^d a_i\right)
				& \leqslant f\left(\sum_{i=1}^{d-1} a_i\right) + \log \left(\sum_{i=1}^d a_i\right) - \log \left(\sum_{i=1}^{d-1} a_i\right) \\
				& \leqslant \log(d-1) + 1 + \log \left(1 + \frac{1}{\sum_{i=1}^{d-1}a_i}\right) \quad \text{\footnotesize{(by~(\ref{eq:lemma a_i}) with the induction hypothesis)}}\\
				& \leqslant \log (d-1) + 1 + \log\left(1 + \frac{1}{d-1}\right) \\
				& \leqslant \log d + 1
			\end{align*}
			
			\item If $a_d \geqslant 2$, by definition of~$a_d$ we know that $f\left(\sum_{i=1}^{d}a_i - 1\right) < \log d$. In this case, we have:
			\begin{align*}
				f\left(\sum_{i=1}^d a_i\right)
				& \leqslant f\left(\sum_{i=1}^{d} a_i - 1\right) + \log \left(\sum_{i=1}^d a_i\right) - \log \left(\sum_{i=1}^{d} a_i - 1\right) \\
				& \leqslant \log d + \log \left(1 + \frac{1}{\sum_{i=1}^{d}a_i - 1}\right) \\
				& \leqslant \log d + 1
			\end{align*}
		\end{itemize}
		
		Thus, the sequence $(a_i)_{i \geqslant 1}$ satisfies indeed both inequalities of~(\ref{eq:lemma a_i}). Let us prove that this sequence is unbounded. By contradiction, assume that there exists $M > 0$ such that $a_i \leqslant M$ for every $i \geqslant 1$. By plugging $s=2$ in the assumption made on~$f$, we have $f(t) \leqslant \frac{\log t}{2}$ for all integer $t \geqslant 2$, and since $f$ is non-decreasing, we get for all $d \geqslant 1$:
		$$\log d \leqslant f\left(\sum_{i=1}^d a_i\right) \leqslant f(Md) \leqslant \frac{\log (Md)}{2}$$
		which finally gives us $d \leqslant \sqrt{Md}$ for every integer~$d \geqslant 1$ and this is a contradiction. Thus, $(a_i)_{i \geqslant 1}$ is unbounded.
	\end{proof}

	We are now able to prove Theorem~\ref{thm:caterpillars-no-gap-in-n}.
	
	\begin{proof}[Proof of Theorem~\ref{thm:caterpillars-no-gap-in-n}]
		Without loss of generality, we can assume that $f(2) = 1$ and for all integers $1 \leqslant s \leqslant t$, we have $f(t)-f(s) \leqslant (\log t - \log s)/2$ (else, we can consider the function~$g$ such that $g(n) := (f(n) - f(2))/2+1$ which satisfies it). Let $(a_i)_{i \geqslant 1}$ be the sequence given by Lemma~\ref{lem:sequence a_i}.
		
		
		

		Let us define the following property~$\P$ on caterpillars. The caterpillars satisfying~$\P$ are such that the central path~$P$ has length~$d$, we have $a_d \geqslant 2$, and if we denote its vertices by $u_1, \ldots, u_d$, then for every $i \in \{1, \ldots, d\}$, there are $a_i-1$ degree-1 vertices attached to~$u_i$ (note that the condition $a_d \geqslant 2$ ensures that $u_d$ has degree at least~$2$ so it belongs to the central path). See Figure~\ref{fig:caterpillar_no_gap_n} for an example. (This type of construction has been used before, for example in \cite{BousquetFP24}.)

		\begin{figure}[h!]
			\centering
			
			\begin{tikzpicture}[x=0.75pt,y=0.75pt,yscale=-1,xscale=1]
				
				\draw    (382.5,125.25) -- (342.5,103.25) ;
				\draw    (382.5,125.25) -- (422.5,103.25) ;
				\draw    (382.5,125.25) -- (382.5,103.25) ;
				\draw    (382.5,125.25) -- (402.5,103.25) ;
				\draw    (382.5,125.25) -- (362.5,103.25) ;
				\draw    (189.5,125.25) -- (189.5,103.25) ;
				\draw    (189.5,125.25) -- (202.5,103.25) ;
				\draw    (189.5,125.25) -- (176.5,103.25) ;
				\draw    (141.25,125.25) -- (141.25,103.25) ;
				\draw    (141.25,125.25) -- (189.5,125.25) ;
				\draw    (189.5,125.25) -- (237.75,125.25) ;
				\draw    (237.75,125.25) -- (286,125.25) ;
				\draw    (286,125.25) -- (334.25,125.25) ;
				\draw    (334.25,125.25) -- (382.5,125.25) ;
				\draw  [fill={rgb, 255:red, 255; green, 255; blue, 255 }  ,fill opacity=1 ] (136,125.25) .. controls (136,122.35) and (138.35,120) .. (141.25,120) .. controls (144.15,120) and (146.5,122.35) .. (146.5,125.25) .. controls (146.5,128.15) and (144.15,130.5) .. (141.25,130.5) .. controls (138.35,130.5) and (136,128.15) .. (136,125.25) -- cycle ;
				\draw  [fill={rgb, 255:red, 255; green, 255; blue, 255 }  ,fill opacity=1 ] (184.25,125.25) .. controls (184.25,122.35) and (186.6,120) .. (189.5,120) .. controls (192.4,120) and (194.75,122.35) .. (194.75,125.25) .. controls (194.75,128.15) and (192.4,130.5) .. (189.5,130.5) .. controls (186.6,130.5) and (184.25,128.15) .. (184.25,125.25) -- cycle ;
				\draw  [fill={rgb, 255:red, 255; green, 255; blue, 255 }  ,fill opacity=1 ] (232.5,125.25) .. controls (232.5,122.35) and (234.85,120) .. (237.75,120) .. controls (240.65,120) and (243,122.35) .. (243,125.25) .. controls (243,128.15) and (240.65,130.5) .. (237.75,130.5) .. controls (234.85,130.5) and (232.5,128.15) .. (232.5,125.25) -- cycle ;
				\draw  [fill={rgb, 255:red, 255; green, 255; blue, 255 }  ,fill opacity=1 ] (280.75,125.25) .. controls (280.75,122.35) and (283.1,120) .. (286,120) .. controls (288.9,120) and (291.25,122.35) .. (291.25,125.25) .. controls (291.25,128.15) and (288.9,130.5) .. (286,130.5) .. controls (283.1,130.5) and (280.75,128.15) .. (280.75,125.25) -- cycle ;
				\draw  [fill={rgb, 255:red, 255; green, 255; blue, 255 }  ,fill opacity=1 ] (329,125.25) .. controls (329,122.35) and (331.35,120) .. (334.25,120) .. controls (337.15,120) and (339.5,122.35) .. (339.5,125.25) .. controls (339.5,128.15) and (337.15,130.5) .. (334.25,130.5) .. controls (331.35,130.5) and (329,128.15) .. (329,125.25) -- cycle ;
				\draw  [fill={rgb, 255:red, 255; green, 255; blue, 255 }  ,fill opacity=1 ] (377.25,125.25) .. controls (377.25,122.35) and (379.6,120) .. (382.5,120) .. controls (385.4,120) and (387.75,122.35) .. (387.75,125.25) .. controls (387.75,128.15) and (385.4,130.5) .. (382.5,130.5) .. controls (379.6,130.5) and (377.25,128.15) .. (377.25,125.25) -- cycle ;
				\draw  [fill={rgb, 255:red, 255; green, 255; blue, 255 }  ,fill opacity=1 ] (136,103.25) .. controls (136,100.35) and (138.35,98) .. (141.25,98) .. controls (144.15,98) and (146.5,100.35) .. (146.5,103.25) .. controls (146.5,106.15) and (144.15,108.5) .. (141.25,108.5) .. controls (138.35,108.5) and (136,106.15) .. (136,103.25) -- cycle ;
				\draw  [fill={rgb, 255:red, 255; green, 255; blue, 255 }  ,fill opacity=1 ] (171.25,103.25) .. controls (171.25,100.35) and (173.6,98) .. (176.5,98) .. controls (179.4,98) and (181.75,100.35) .. (181.75,103.25) .. controls (181.75,106.15) and (179.4,108.5) .. (176.5,108.5) .. controls (173.6,108.5) and (171.25,106.15) .. (171.25,103.25) -- cycle ;
				\draw  [fill={rgb, 255:red, 255; green, 255; blue, 255 }  ,fill opacity=1 ] (184.25,103.25) .. controls (184.25,100.35) and (186.6,98) .. (189.5,98) .. controls (192.4,98) and (194.75,100.35) .. (194.75,103.25) .. controls (194.75,106.15) and (192.4,108.5) .. (189.5,108.5) .. controls (186.6,108.5) and (184.25,106.15) .. (184.25,103.25) -- cycle ;
				\draw  [fill={rgb, 255:red, 255; green, 255; blue, 255 }  ,fill opacity=1 ] (197.25,103.25) .. controls (197.25,100.35) and (199.6,98) .. (202.5,98) .. controls (205.4,98) and (207.75,100.35) .. (207.75,103.25) .. controls (207.75,106.15) and (205.4,108.5) .. (202.5,108.5) .. controls (199.6,108.5) and (197.25,106.15) .. (197.25,103.25) -- cycle ;
				\draw  [fill={rgb, 255:red, 255; green, 255; blue, 255 }  ,fill opacity=1 ] (337.25,103.25) .. controls (337.25,100.35) and (339.6,98) .. (342.5,98) .. controls (345.4,98) and (347.75,100.35) .. (347.75,103.25) .. controls (347.75,106.15) and (345.4,108.5) .. (342.5,108.5) .. controls (339.6,108.5) and (337.25,106.15) .. (337.25,103.25) -- cycle ;
				\draw  [fill={rgb, 255:red, 255; green, 255; blue, 255 }  ,fill opacity=1 ] (357.25,103.25) .. controls (357.25,100.35) and (359.6,98) .. (362.5,98) .. controls (365.4,98) and (367.75,100.35) .. (367.75,103.25) .. controls (367.75,106.15) and (365.4,108.5) .. (362.5,108.5) .. controls (359.6,108.5) and (357.25,106.15) .. (357.25,103.25) -- cycle ;
				\draw  [fill={rgb, 255:red, 255; green, 255; blue, 255 }  ,fill opacity=1 ] (377.25,103.25) .. controls (377.25,100.35) and (379.6,98) .. (382.5,98) .. controls (385.4,98) and (387.75,100.35) .. (387.75,103.25) .. controls (387.75,106.15) and (385.4,108.5) .. (382.5,108.5) .. controls (379.6,108.5) and (377.25,106.15) .. (377.25,103.25) -- cycle ;
				\draw  [fill={rgb, 255:red, 255; green, 255; blue, 255 }  ,fill opacity=1 ] (397.25,103.25) .. controls (397.25,100.35) and (399.6,98) .. (402.5,98) .. controls (405.4,98) and (407.75,100.35) .. (407.75,103.25) .. controls (407.75,106.15) and (405.4,108.5) .. (402.5,108.5) .. controls (399.6,108.5) and (397.25,106.15) .. (397.25,103.25) -- cycle ;
				\draw  [fill={rgb, 255:red, 255; green, 255; blue, 255 }  ,fill opacity=1 ] (417.25,103.25) .. controls (417.25,100.35) and (419.6,98) .. (422.5,98) .. controls (425.4,98) and (427.75,100.35) .. (427.75,103.25) .. controls (427.75,106.15) and (425.4,108.5) .. (422.5,108.5) .. controls (419.6,108.5) and (417.25,106.15) .. (417.25,103.25) -- cycle ;
				
				\draw (253,105.4) node [anchor=north west][inner sep=0.75pt]    {${\displaystyle \dotsc }$};
				\draw (122,77.4) node [anchor=north west][inner sep=0.75pt]    {$a_{1}-1$};
				\draw (170,77.4) node [anchor=north west][inner sep=0.75pt]    {$a_{2}-1$};
				\draw (363,77.4) node [anchor=north west][inner sep=0.75pt]    {$a_{d}-1$};

				\draw    (142,149) -- (379,149) ;
				\draw [shift={(381,149)}, rotate = 180] [color={rgb, 255:red, 0; green, 0; blue, 0 }  ][line width=0.75]    (10.93,-3.29) .. controls (6.95,-1.4) and (3.31,-0.3) .. (0,0) .. controls (3.31,0.3) and (6.95,1.4) .. (10.93,3.29)   ;
				\draw [shift={(140,149)}, rotate = 0] [color={rgb, 255:red, 0; green, 0; blue, 0 }  ][line width=0.75]    (10.93,-3.29) .. controls (6.95,-1.4) and (3.31,-0.3) .. (0,0) .. controls (3.31,0.3) and (6.95,1.4) .. (10.93,3.29)   ;
				\draw (253,156.4) node [anchor=north west][inner sep=0.75pt]    {$d$};
				
			\end{tikzpicture}

			\caption{A caterpillar satisfying~$\P$.}
			\label{fig:caterpillar_no_gap_n}
		\end{figure}
		
		First, let us prove that $\P$ can be certified with certificates of size $O(f(n))$. The certificates given by the prover to the vertices are the following. The prover gives a special certificate to every vertex of degree~$1$ (note that, on correct instances, a vertex belongs to~$P$ if and only if it has degree at least~$2$). Then, for each $i \in \{1, \ldots, d\}$, the prover writes~$i$ in binary in the certificate of~$u_i$. This uses $O(\log d)$ bits. Since the total number of vertices~$n$ is equal to $\sum_{i=1}^{d} a_i$, the size of the certificates is~$O(f(n))$ because $\log d \leqslant f\left(\sum_{i=1}^{d}a_i\right)$.
		The verification of the vertices consists simply in checking that they have the special certificate if and only if they have degree~$1$, and otherwise they check that their certificate is consistent with the certificates of their neighbors of degree at least~$2$ (for instance, if a vertex has certificate~$i$ and has two neighbors of degree at least~$2$ with certificates~$j$ and~$\ell$, it checks that~$\{j,\ell\} = \{i-1,i+1\}$). Finally, every vertex of degree at least~$2$ with certificate~$i$ checks that it has~$a_i-1$ neighbors with the certificate of degree~$1$ vertices.

		Let us now prove that it is not possible to certify~$\P$ with certificates of size~$o(f(n))$. Assume by contradiction that certificates of size~$o(f(n))$ are sufficient.
		For every $d \geqslant 1$ such that $a_d \geqslant 2$ (since $(a_i)_{i \in \N}$ is unbounded, there exists infinitely many such integers~$d$), let us denote by~$G_d$ the caterpillar satisfying~$\P$ having its central path of length~$d$, by $u_1, \ldots, u_d$ the vertices of its central path, and by $n_d := \sum_{i=1}^d a_i$ its number of vertices.
		By inequality~(\ref{eq:lemma a_i}), we have $f(n_d)=\Theta(\log d)$. Thus, the size of the certificates that makes all the vertices of~$G_d$ accept is $o(\log d)$. So there exists $d \geqslant 1$ such that $a_d \geqslant 2$ and the number of different pairs of certificates is strictly smaller than $d-1$.
		
		Now, let us fix such an integer $d$, let $M = \max_{1 \leqslant i \leqslant d} a_i$, and finally let $d' \in \N$ be such that $a_{d'} > M+1$ (such an integer $d'$ exists because $(a_i)_{i \in \N}$ is unbounded). Consider the $(d-1)$ pairs of consecutive certificates $c(u_1), \ldots, c(u_d)$ attributed by the prover to $u_1, \ldots, u_d$ in~$G_d$ in an assignment of certificates such that every vertex accepts. By the pigeonhole principle, there exists $1 \leqslant k < \ell \leqslant d$ such that $(c(u_k),c(u_{k+1})) = (c(u_{\ell}), c(u_{\ell+1}))$. For every $t \in \N$, we can copy $t$ times the part between $u_k$ and $u_\ell$ (with the same certificates) to obtain a new caterpillar in which every vertex accepts because it has the same view as a vertex in~$G_d$ which accepts. See Figure~\ref{fig:pumping_caterpillar} for an example in which $t=2$. In particular, by choosing $t$ large enough, the caterpillar we obtain has its central path of length at least $d'$, and is accepted with these certificates. However, the vertex $u_{d'}$ should have~$a_{d'}-1$ neighbors of degree~1, but it has most $M$ such neighbors (by definition of~$M$) and $M < a_{d'}-1$. This is a contradiction and concludes the proof. 
		\qedhere
		
		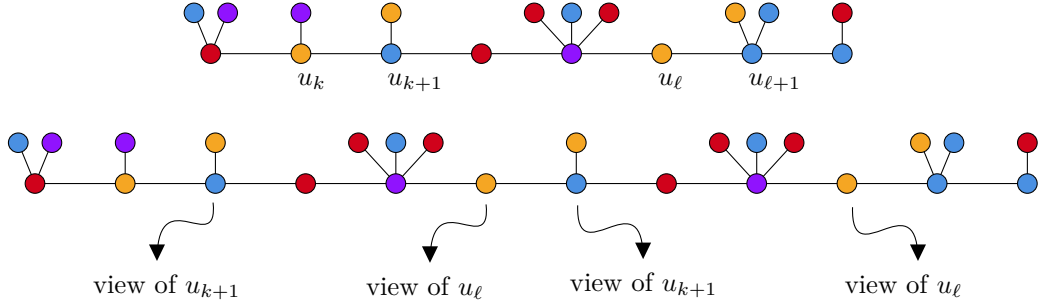
\begin{figure}[h!]
			\centering\begin{tikzpicture}[x=0.7pt,y=0.7pt,yscale=-1,xscale=1]
				
				\draw    (502.5,196.25) -- (550.75,196.25) ;
				\draw    (406,196.25) -- (454.25,196.25) ;
				\draw    (357.75,196.25) -- (357.75,174.25) ;
				\draw    (500,126.25) -- (500,104.25) ;
				\draw    (451.75,126.25) -- (500,126.25) ;
				\draw    (258.75,126.25) -- (258.75,104.25) ;
				\draw    (403.5,126.25) -- (451.75,126.25) ;
				\draw    (355.25,126.25) -- (355.25,104.25) ;
				\draw    (355.25,126.25) -- (375.25,104.25) ;
				\draw    (355.25,126.25) -- (335.25,104.25) ;
				\draw    (210.5,126.25) -- (210.5,104.25) ;
				\draw    (162.3,126.25) -- (171.3,104.25) ;
				\draw    (162.25,126.25) -- (153.3,104.25) ;
				\draw    (162.25,126.25) -- (210.5,126.25) ;
				\draw    (210.5,126.25) -- (258.75,126.25) ;
				\draw    (258.75,126.25) -- (307,126.25) ;
				\draw    (307,126.25) -- (355.25,126.25) ;
				\draw    (355.25,126.25) -- (403.5,126.25) ;
				\draw  [fill={rgb, 255:red, 208; green, 2; blue, 27 }  ,fill opacity=1 ] (157,126.25) .. controls (157,123.35) and (159.35,121) .. (162.25,121) .. controls (165.15,121) and (167.5,123.35) .. (167.5,126.25) .. controls (167.5,129.15) and (165.15,131.5) .. (162.25,131.5) .. controls (159.35,131.5) and (157,129.15) .. (157,126.25) -- cycle ;
				\draw  [fill={rgb, 255:red, 245; green, 166; blue, 35 }  ,fill opacity=1 ] (205.25,126.25) .. controls (205.25,123.35) and (207.6,121) .. (210.5,121) .. controls (213.4,121) and (215.75,123.35) .. (215.75,126.25) .. controls (215.75,129.15) and (213.4,131.5) .. (210.5,131.5) .. controls (207.6,131.5) and (205.25,129.15) .. (205.25,126.25) -- cycle ;
				\draw  [fill={rgb, 255:red, 74; green, 144; blue, 226 }  ,fill opacity=1 ] (253.5,126.25) .. controls (253.5,123.35) and (255.85,121) .. (258.75,121) .. controls (261.65,121) and (264,123.35) .. (264,126.25) .. controls (264,129.15) and (261.65,131.5) .. (258.75,131.5) .. controls (255.85,131.5) and (253.5,129.15) .. (253.5,126.25) -- cycle ;
				\draw  [fill={rgb, 255:red, 208; green, 2; blue, 27 }  ,fill opacity=1 ] (301.75,126.25) .. controls (301.75,123.35) and (304.1,121) .. (307,121) .. controls (309.9,121) and (312.25,123.35) .. (312.25,126.25) .. controls (312.25,129.15) and (309.9,131.5) .. (307,131.5) .. controls (304.1,131.5) and (301.75,129.15) .. (301.75,126.25) -- cycle ;
				\draw  [fill={rgb, 255:red, 144; green, 19; blue, 254 }  ,fill opacity=1 ] (350,126.25) .. controls (350,123.35) and (352.35,121) .. (355.25,121) .. controls (358.15,121) and (360.5,123.35) .. (360.5,126.25) .. controls (360.5,129.15) and (358.15,131.5) .. (355.25,131.5) .. controls (352.35,131.5) and (350,129.15) .. (350,126.25) -- cycle ;
				\draw  [fill={rgb, 255:red, 245; green, 166; blue, 35 }  ,fill opacity=1 ] (398.25,126.25) .. controls (398.25,123.35) and (400.6,121) .. (403.5,121) .. controls (406.4,121) and (408.75,123.35) .. (408.75,126.25) .. controls (408.75,129.15) and (406.4,131.5) .. (403.5,131.5) .. controls (400.6,131.5) and (398.25,129.15) .. (398.25,126.25) -- cycle ;
				\draw  [fill={rgb, 255:red, 74; green, 144; blue, 226 }  ,fill opacity=1 ] (148.05,104.25) .. controls (148.05,101.35) and (150.4,99) .. (153.3,99) .. controls (156.2,99) and (158.55,101.35) .. (158.55,104.25) .. controls (158.55,107.15) and (156.2,109.5) .. (153.3,109.5) .. controls (150.4,109.5) and (148.05,107.15) .. (148.05,104.25) -- cycle ;
				\draw  [fill={rgb, 255:red, 144; green, 19; blue, 254 }  ,fill opacity=1 ] (166.05,104.25) .. controls (166.05,101.35) and (168.4,99) .. (171.3,99) .. controls (174.2,99) and (176.55,101.35) .. (176.55,104.25) .. controls (176.55,107.15) and (174.2,109.5) .. (171.3,109.5) .. controls (168.4,109.5) and (166.05,107.15) .. (166.05,104.25) -- cycle ;
				\draw  [fill={rgb, 255:red, 144; green, 19; blue, 254 }  ,fill opacity=1 ] (205.25,104.25) .. controls (205.25,101.35) and (207.6,99) .. (210.5,99) .. controls (213.4,99) and (215.75,101.35) .. (215.75,104.25) .. controls (215.75,107.15) and (213.4,109.5) .. (210.5,109.5) .. controls (207.6,109.5) and (205.25,107.15) .. (205.25,104.25) -- cycle ;
				\draw  [fill={rgb, 255:red, 208; green, 2; blue, 27 }  ,fill opacity=1 ] (330,104.25) .. controls (330,101.35) and (332.35,99) .. (335.25,99) .. controls (338.15,99) and (340.5,101.35) .. (340.5,104.25) .. controls (340.5,107.15) and (338.15,109.5) .. (335.25,109.5) .. controls (332.35,109.5) and (330,107.15) .. (330,104.25) -- cycle ;
				\draw  [fill={rgb, 255:red, 74; green, 144; blue, 226 }  ,fill opacity=1 ] (350,104.25) .. controls (350,101.35) and (352.35,99) .. (355.25,99) .. controls (358.15,99) and (360.5,101.35) .. (360.5,104.25) .. controls (360.5,107.15) and (358.15,109.5) .. (355.25,109.5) .. controls (352.35,109.5) and (350,107.15) .. (350,104.25) -- cycle ;
				\draw  [fill={rgb, 255:red, 208; green, 2; blue, 27 }  ,fill opacity=1 ] (370,104.25) .. controls (370,101.35) and (372.35,99) .. (375.25,99) .. controls (378.15,99) and (380.5,101.35) .. (380.5,104.25) .. controls (380.5,107.15) and (378.15,109.5) .. (375.25,109.5) .. controls (372.35,109.5) and (370,107.15) .. (370,104.25) -- cycle ;
				\draw  [fill={rgb, 255:red, 245; green, 166; blue, 35 }  ,fill opacity=1 ] (253.5,104.25) .. controls (253.5,101.35) and (255.85,99) .. (258.75,99) .. controls (261.65,99) and (264,101.35) .. (264,104.25) .. controls (264,107.15) and (261.65,109.5) .. (258.75,109.5) .. controls (255.85,109.5) and (253.5,107.15) .. (253.5,104.25) -- cycle ;
				\draw    (451.75,126.25) -- (460.75,104.25) ;
				\draw    (451.75,126.25) -- (442.8,104.25) ;
				\draw  [fill={rgb, 255:red, 74; green, 144; blue, 226 }  ,fill opacity=1 ] (446.5,126.25) .. controls (446.5,123.35) and (448.85,121) .. (451.75,121) .. controls (454.65,121) and (457,123.35) .. (457,126.25) .. controls (457,129.15) and (454.65,131.5) .. (451.75,131.5) .. controls (448.85,131.5) and (446.5,129.15) .. (446.5,126.25) -- cycle ;
				\draw  [fill={rgb, 255:red, 245; green, 166; blue, 35 }  ,fill opacity=1 ] (437.55,104.25) .. controls (437.55,101.35) and (439.9,99) .. (442.8,99) .. controls (445.7,99) and (448.05,101.35) .. (448.05,104.25) .. controls (448.05,107.15) and (445.7,109.5) .. (442.8,109.5) .. controls (439.9,109.5) and (437.55,107.15) .. (437.55,104.25) -- cycle ;
				\draw  [fill={rgb, 255:red, 74; green, 144; blue, 226 }  ,fill opacity=1 ] (455.5,104.25) .. controls (455.5,101.35) and (457.85,99) .. (460.75,99) .. controls (463.65,99) and (466,101.35) .. (466,104.25) .. controls (466,107.15) and (463.65,109.5) .. (460.75,109.5) .. controls (457.85,109.5) and (455.5,107.15) .. (455.5,104.25) -- cycle ;
				\draw  [fill={rgb, 255:red, 74; green, 144; blue, 226 }  ,fill opacity=1 ] (494.75,126.25) .. controls (494.75,123.35) and (497.1,121) .. (500,121) .. controls (502.9,121) and (505.25,123.35) .. (505.25,126.25) .. controls (505.25,129.15) and (502.9,131.5) .. (500,131.5) .. controls (497.1,131.5) and (494.75,129.15) .. (494.75,126.25) -- cycle ;
				\draw    (357.75,196.25) -- (406,196.25) ;
				\draw    (164.75,196.25) -- (164.75,174.25) ;
				\draw    (309.5,196.25) -- (357.75,196.25) ;
				\draw    (261.25,196.25) -- (261.25,174.25) ;
				\draw    (261.25,196.25) -- (281.25,174.25) ;
				\draw    (261.25,196.25) -- (241.25,174.25) ;
				\draw    (116.5,196.25) -- (116.5,174.25) ;
				\draw    (68.3,196.25) -- (77.3,174.25) ;
				\draw    (68.25,196.25) -- (59.3,174.25) ;
				\draw    (68.25,196.25) -- (116.5,196.25) ;
				\draw    (116.5,196.25) -- (164.75,196.25) ;
				\draw    (164.75,196.25) -- (213,196.25) ;
				\draw    (213,196.25) -- (261.25,196.25) ;
				\draw    (261.25,196.25) -- (309.5,196.25) ;
				\draw  [fill={rgb, 255:red, 208; green, 2; blue, 27 }  ,fill opacity=1 ] (63,196.25) .. controls (63,193.35) and (65.35,191) .. (68.25,191) .. controls (71.15,191) and (73.5,193.35) .. (73.5,196.25) .. controls (73.5,199.15) and (71.15,201.5) .. (68.25,201.5) .. controls (65.35,201.5) and (63,199.15) .. (63,196.25) -- cycle ;
				\draw  [fill={rgb, 255:red, 245; green, 166; blue, 35 }  ,fill opacity=1 ] (111.25,196.25) .. controls (111.25,193.35) and (113.6,191) .. (116.5,191) .. controls (119.4,191) and (121.75,193.35) .. (121.75,196.25) .. controls (121.75,199.15) and (119.4,201.5) .. (116.5,201.5) .. controls (113.6,201.5) and (111.25,199.15) .. (111.25,196.25) -- cycle ;
				\draw  [fill={rgb, 255:red, 74; green, 144; blue, 226 }  ,fill opacity=1 ] (159.5,196.25) .. controls (159.5,193.35) and (161.85,191) .. (164.75,191) .. controls (167.65,191) and (170,193.35) .. (170,196.25) .. controls (170,199.15) and (167.65,201.5) .. (164.75,201.5) .. controls (161.85,201.5) and (159.5,199.15) .. (159.5,196.25) -- cycle ;
				\draw  [fill={rgb, 255:red, 208; green, 2; blue, 27 }  ,fill opacity=1 ] (207.75,196.25) .. controls (207.75,193.35) and (210.1,191) .. (213,191) .. controls (215.9,191) and (218.25,193.35) .. (218.25,196.25) .. controls (218.25,199.15) and (215.9,201.5) .. (213,201.5) .. controls (210.1,201.5) and (207.75,199.15) .. (207.75,196.25) -- cycle ;
				\draw  [fill={rgb, 255:red, 144; green, 19; blue, 254 }  ,fill opacity=1 ] (256,196.25) .. controls (256,193.35) and (258.35,191) .. (261.25,191) .. controls (264.15,191) and (266.5,193.35) .. (266.5,196.25) .. controls (266.5,199.15) and (264.15,201.5) .. (261.25,201.5) .. controls (258.35,201.5) and (256,199.15) .. (256,196.25) -- cycle ;
				\draw  [fill={rgb, 255:red, 245; green, 166; blue, 35 }  ,fill opacity=1 ] (304.25,196.25) .. controls (304.25,193.35) and (306.6,191) .. (309.5,191) .. controls (312.4,191) and (314.75,193.35) .. (314.75,196.25) .. controls (314.75,199.15) and (312.4,201.5) .. (309.5,201.5) .. controls (306.6,201.5) and (304.25,199.15) .. (304.25,196.25) -- cycle ;
				\draw  [fill={rgb, 255:red, 74; green, 144; blue, 226 }  ,fill opacity=1 ] (54.05,174.25) .. controls (54.05,171.35) and (56.4,169) .. (59.3,169) .. controls (62.2,169) and (64.55,171.35) .. (64.55,174.25) .. controls (64.55,177.15) and (62.2,179.5) .. (59.3,179.5) .. controls (56.4,179.5) and (54.05,177.15) .. (54.05,174.25) -- cycle ;
				\draw  [fill={rgb, 255:red, 144; green, 19; blue, 254 }  ,fill opacity=1 ] (72.05,174.25) .. controls (72.05,171.35) and (74.4,169) .. (77.3,169) .. controls (80.2,169) and (82.55,171.35) .. (82.55,174.25) .. controls (82.55,177.15) and (80.2,179.5) .. (77.3,179.5) .. controls (74.4,179.5) and (72.05,177.15) .. (72.05,174.25) -- cycle ;
				\draw  [fill={rgb, 255:red, 144; green, 19; blue, 254 }  ,fill opacity=1 ] (111.25,174.25) .. controls (111.25,171.35) and (113.6,169) .. (116.5,169) .. controls (119.4,169) and (121.75,171.35) .. (121.75,174.25) .. controls (121.75,177.15) and (119.4,179.5) .. (116.5,179.5) .. controls (113.6,179.5) and (111.25,177.15) .. (111.25,174.25) -- cycle ;
				\draw  [fill={rgb, 255:red, 208; green, 2; blue, 27 }  ,fill opacity=1 ] (236,174.25) .. controls (236,171.35) and (238.35,169) .. (241.25,169) .. controls (244.15,169) and (246.5,171.35) .. (246.5,174.25) .. controls (246.5,177.15) and (244.15,179.5) .. (241.25,179.5) .. controls (238.35,179.5) and (236,177.15) .. (236,174.25) -- cycle ;
				\draw  [fill={rgb, 255:red, 74; green, 144; blue, 226 }  ,fill opacity=1 ] (256,174.25) .. controls (256,171.35) and (258.35,169) .. (261.25,169) .. controls (264.15,169) and (266.5,171.35) .. (266.5,174.25) .. controls (266.5,177.15) and (264.15,179.5) .. (261.25,179.5) .. controls (258.35,179.5) and (256,177.15) .. (256,174.25) -- cycle ;
				\draw  [fill={rgb, 255:red, 208; green, 2; blue, 27 }  ,fill opacity=1 ] (276,174.25) .. controls (276,171.35) and (278.35,169) .. (281.25,169) .. controls (284.15,169) and (286.5,171.35) .. (286.5,174.25) .. controls (286.5,177.15) and (284.15,179.5) .. (281.25,179.5) .. controls (278.35,179.5) and (276,177.15) .. (276,174.25) -- cycle ;
				\draw  [fill={rgb, 255:red, 245; green, 166; blue, 35 }  ,fill opacity=1 ] (159.5,174.25) .. controls (159.5,171.35) and (161.85,169) .. (164.75,169) .. controls (167.65,169) and (170,171.35) .. (170,174.25) .. controls (170,177.15) and (167.65,179.5) .. (164.75,179.5) .. controls (161.85,179.5) and (159.5,177.15) .. (159.5,174.25) -- cycle ;
				\draw  [fill={rgb, 255:red, 74; green, 144; blue, 226 }  ,fill opacity=1 ] (352.5,196.25) .. controls (352.5,193.35) and (354.85,191) .. (357.75,191) .. controls (360.65,191) and (363,193.35) .. (363,196.25) .. controls (363,199.15) and (360.65,201.5) .. (357.75,201.5) .. controls (354.85,201.5) and (352.5,199.15) .. (352.5,196.25) -- cycle ;
				\draw  [fill={rgb, 255:red, 245; green, 166; blue, 35 }  ,fill opacity=1 ] (352.5,174.25) .. controls (352.5,171.35) and (354.85,169) .. (357.75,169) .. controls (360.65,169) and (363,171.35) .. (363,174.25) .. controls (363,177.15) and (360.65,179.5) .. (357.75,179.5) .. controls (354.85,179.5) and (352.5,177.15) .. (352.5,174.25) -- cycle ;
				\draw  [fill={rgb, 255:red, 208; green, 2; blue, 27 }  ,fill opacity=1 ] (400.75,196.25) .. controls (400.75,193.35) and (403.1,191) .. (406,191) .. controls (408.9,191) and (411.25,193.35) .. (411.25,196.25) .. controls (411.25,199.15) and (408.9,201.5) .. (406,201.5) .. controls (403.1,201.5) and (400.75,199.15) .. (400.75,196.25) -- cycle ;
				\draw  [fill={rgb, 255:red, 208; green, 2; blue, 27 }  ,fill opacity=1 ] (494.75,104.25) .. controls (494.75,101.35) and (497.1,99) .. (500,99) .. controls (502.9,99) and (505.25,101.35) .. (505.25,104.25) .. controls (505.25,107.15) and (502.9,109.5) .. (500,109.5) .. controls (497.1,109.5) and (494.75,107.15) .. (494.75,104.25) -- cycle ;
				\draw    (454.25,196.25) -- (454.25,174.25) ;
				\draw    (454.25,196.25) -- (474.25,174.25) ;
				\draw    (454.25,196.25) -- (434.25,174.25) ;
				\draw    (454.25,196.25) -- (502.5,196.25) ;
				\draw  [fill={rgb, 255:red, 144; green, 19; blue, 254 }  ,fill opacity=1 ] (449,196.25) .. controls (449,193.35) and (451.35,191) .. (454.25,191) .. controls (457.15,191) and (459.5,193.35) .. (459.5,196.25) .. controls (459.5,199.15) and (457.15,201.5) .. (454.25,201.5) .. controls (451.35,201.5) and (449,199.15) .. (449,196.25) -- cycle ;
				\draw  [fill={rgb, 255:red, 208; green, 2; blue, 27 }  ,fill opacity=1 ] (429,174.25) .. controls (429,171.35) and (431.35,169) .. (434.25,169) .. controls (437.15,169) and (439.5,171.35) .. (439.5,174.25) .. controls (439.5,177.15) and (437.15,179.5) .. (434.25,179.5) .. controls (431.35,179.5) and (429,177.15) .. (429,174.25) -- cycle ;
				\draw  [fill={rgb, 255:red, 74; green, 144; blue, 226 }  ,fill opacity=1 ] (449,174.25) .. controls (449,171.35) and (451.35,169) .. (454.25,169) .. controls (457.15,169) and (459.5,171.35) .. (459.5,174.25) .. controls (459.5,177.15) and (457.15,179.5) .. (454.25,179.5) .. controls (451.35,179.5) and (449,177.15) .. (449,174.25) -- cycle ;
				\draw  [fill={rgb, 255:red, 208; green, 2; blue, 27 }  ,fill opacity=1 ] (469,174.25) .. controls (469,171.35) and (471.35,169) .. (474.25,169) .. controls (477.15,169) and (479.5,171.35) .. (479.5,174.25) .. controls (479.5,177.15) and (477.15,179.5) .. (474.25,179.5) .. controls (471.35,179.5) and (469,177.15) .. (469,174.25) -- cycle ;
				\draw  [fill={rgb, 255:red, 245; green, 166; blue, 35 }  ,fill opacity=1 ] (497.25,196.25) .. controls (497.25,193.35) and (499.6,191) .. (502.5,191) .. controls (505.4,191) and (507.75,193.35) .. (507.75,196.25) .. controls (507.75,199.15) and (505.4,201.5) .. (502.5,201.5) .. controls (499.6,201.5) and (497.25,199.15) .. (497.25,196.25) -- cycle ;
				\draw    (599,196.25) -- (599,174.25) ;
				\draw    (550.75,196.25) -- (599,196.25) ;
				\draw    (550.75,196.25) -- (559.75,174.25) ;
				\draw    (550.75,196.25) -- (541.8,174.25) ;
				\draw  [fill={rgb, 255:red, 74; green, 144; blue, 226 }  ,fill opacity=1 ] (545.5,196.25) .. controls (545.5,193.35) and (547.85,191) .. (550.75,191) .. controls (553.65,191) and (556,193.35) .. (556,196.25) .. controls (556,199.15) and (553.65,201.5) .. (550.75,201.5) .. controls (547.85,201.5) and (545.5,199.15) .. (545.5,196.25) -- cycle ;
				\draw  [fill={rgb, 255:red, 245; green, 166; blue, 35 }  ,fill opacity=1 ] (536.55,174.25) .. controls (536.55,171.35) and (538.9,169) .. (541.8,169) .. controls (544.7,169) and (547.05,171.35) .. (547.05,174.25) .. controls (547.05,177.15) and (544.7,179.5) .. (541.8,179.5) .. controls (538.9,179.5) and (536.55,177.15) .. (536.55,174.25) -- cycle ;
				\draw  [fill={rgb, 255:red, 74; green, 144; blue, 226 }  ,fill opacity=1 ] (554.5,174.25) .. controls (554.5,171.35) and (556.85,169) .. (559.75,169) .. controls (562.65,169) and (565,171.35) .. (565,174.25) .. controls (565,177.15) and (562.65,179.5) .. (559.75,179.5) .. controls (556.85,179.5) and (554.5,177.15) .. (554.5,174.25) -- cycle ;
				\draw  [fill={rgb, 255:red, 74; green, 144; blue, 226 }  ,fill opacity=1 ] (593.75,196.25) .. controls (593.75,193.35) and (596.1,191) .. (599,191) .. controls (601.9,191) and (604.25,193.35) .. (604.25,196.25) .. controls (604.25,199.15) and (601.9,201.5) .. (599,201.5) .. controls (596.1,201.5) and (593.75,199.15) .. (593.75,196.25) -- cycle ;
				\draw  [fill={rgb, 255:red, 208; green, 2; blue, 27 }  ,fill opacity=1 ] (593.75,174.25) .. controls (593.75,171.35) and (596.1,169) .. (599,169) .. controls (601.9,169) and (604.25,171.35) .. (604.25,174.25) .. controls (604.25,177.15) and (601.9,179.5) .. (599,179.5) .. controls (596.1,179.5) and (593.75,177.15) .. (593.75,174.25) -- cycle ;
				\draw    (309.5,207.5) .. controls (312.44,237.39) and (280.81,195.73) .. (279.54,236.41) ;
				\draw [shift={(279.5,239)}, rotate = 270] [fill={rgb, 255:red, 0; green, 0; blue, 0 }  ][line width=0.08]  [draw opacity=0] (8.93,-4.29) -- (0,0) -- (8.93,4.29) -- cycle    ;
				\draw    (358.5,208) .. controls (357.52,245.24) and (392.07,194.12) .. (393.46,234.42) ;
				\draw [shift={(393.5,237)}, rotate = 270] [fill={rgb, 255:red, 0; green, 0; blue, 0 }  ][line width=0.08]  [draw opacity=0] (8.93,-4.29) -- (0,0) -- (8.93,4.29) -- cycle    ;
				\draw    (163.5,206.5) .. controls (166.44,236.39) and (134.81,194.73) .. (133.54,235.41) ;
				\draw [shift={(133.5,238)}, rotate = 270] [fill={rgb, 255:red, 0; green, 0; blue, 0 }  ][line width=0.08]  [draw opacity=0] (8.93,-4.29) -- (0,0) -- (8.93,4.29) -- cycle    ;
				\draw    (505.5,209) .. controls (504.52,246.24) and (539.07,195.12) .. (540.46,235.42) ;
				\draw [shift={(540.5,238)}, rotate = 270] [fill={rgb, 255:red, 0; green, 0; blue, 0 }  ][line width=0.08]  [draw opacity=0] (8.93,-4.29) -- (0,0) -- (8.93,4.29) -- cycle    ;
				
				\draw (255.5,135.65) node [anchor=north west][inner sep=0.75pt]    {$u_{k+1}$};
				\draw (207.25,135.65) node [anchor=north west][inner sep=0.75pt]    {$u_{k}$};
				\draw (400.25,135.65) node [anchor=north west][inner sep=0.75pt]    {$u_{\ell}$};
				\draw (448.5,135.65) node [anchor=north west][inner sep=0.75pt]    {$u_{\ell+1}$};
				\draw (244,245) node [anchor=north west][inner sep=0.75pt]   [align=left] {view of $\displaystyle u_{\ell}$};
				\draw (353,243) node [anchor=north west][inner sep=0.75pt]   [align=left] {view of $\displaystyle u_{k+1}$};
				\draw (98,244) node [anchor=north west][inner sep=0.75pt]   [align=left] {view of $\displaystyle u_{k+1}$};
				\draw (500,244) node [anchor=north west][inner sep=0.75pt]   [align=left] {view of $\displaystyle u_{\ell}$};

			\end{tikzpicture}

			\caption{Top: the caterpillar $G_d$ that is accepted with some certificates. The colors represent certificates. Here, the pair of consecutive vertices $(u_k, u_{k+1})$ received the same certificates as $(u_\ell, u_{\ell+1})$. Bottom: the graph obtained from $G_d$ by copying two times the part between~$u_{k+1}$ and~$u_\ell$, which is also accepted with these certificates because every vertex has the same view as a vertex in~$G_d$.}
			\label{fig:pumping_caterpillar}
		\end{figure}

	\end{proof}
	
	\section{No gap in $d$ in caterpillars with larger radius}
	\label{sec:no-gap-in-d-caterpillars}
	
	\ThmCaterpillarNoGapRadiusTwo*
	
	
	\begin{proof}
		Without loss of generality, we can assume that $f(2) = 1$ and that for all integers $1 \leqslant s \leqslant t$ we have $f(t) - f(s) \leqslant (\log t - \log s)/2$ (else, we can consider the function~$g$ such that $g(n):=(f(n)-f(2))/2 + 1$ which satisfies it). Let $(a_i)_{i \geqslant 1}$ be the sequence given by Lemma~\ref{lem:sequence a_i}, and for every $m \geqslant 1$ let $b_m := \sum_{i=1}^m a_i - 2$. Since $a_1 = 2$ and $a_i \geqslant 1$ for every $i \geqslant 2$, we have $b_m \geqslant 1$ for every $m \geqslant 2$, and the sequence $(b_m)_{m \in \N}$ is increasing.
		Moreover, by definition of $(a_i)_{i \in \N}$, we have:
		
		\begin{equation}
			\label{eq:b_i}
			\forall m \geqslant 1, \;\; \log m \leqslant f(b_m + 2) \leqslant \log m + 1
		\end{equation}
		
		Let us define the following property~$\P$ on caterpillars: the caterpillars satisfying~$\P$ are such that there exists $m \geqslant 2$ and a central path $P:=u_1,\ldots,u_{b_m+1}$ of length $b_m+1$ such that there are $b_m$ degree-1 vertices attached to $u_1$, and for every $i \in \{2, \ldots, b_m+1\}$ there are $i-1$ degree-1 vertices attached to $u_i$. See Figure~\ref{fig:caterpillar_no_gap_d} for an illustration.

		
		
		
		\begin{figure}[h!]
			\centering
			\begin{tikzpicture}[x=0.75pt,y=0.75pt,yscale=-1,xscale=1]
				
				\draw    (125.25,125.25) -- (85.25,103.25) ;
				\draw    (125.25,125.25) -- (165.25,103.25) ;
				\draw    (125.25,125.25) -- (125.25,103.25) ;
				\draw    (125.25,125.25) -- (145.25,103.25) ;
				\draw    (125.25,125.25) -- (105.25,103.25) ;
				\draw  [fill={rgb, 255:red, 255; green, 255; blue, 255 }  ,fill opacity=1 ] (80,103.25) .. controls (80,100.35) and (82.35,98) .. (85.25,98) .. controls (88.15,98) and (90.5,100.35) .. (90.5,103.25) .. controls (90.5,106.15) and (88.15,108.5) .. (85.25,108.5) .. controls (82.35,108.5) and (80,106.15) .. (80,103.25) -- cycle ;
				\draw  [fill={rgb, 255:red, 255; green, 255; blue, 255 }  ,fill opacity=1 ] (100,103.25) .. controls (100,100.35) and (102.35,98) .. (105.25,98) .. controls (108.15,98) and (110.5,100.35) .. (110.5,103.25) .. controls (110.5,106.15) and (108.15,108.5) .. (105.25,108.5) .. controls (102.35,108.5) and (100,106.15) .. (100,103.25) -- cycle ;
				\draw  [fill={rgb, 255:red, 255; green, 255; blue, 255 }  ,fill opacity=1 ] (120,103.25) .. controls (120,100.35) and (122.35,98) .. (125.25,98) .. controls (128.15,98) and (130.5,100.35) .. (130.5,103.25) .. controls (130.5,106.15) and (128.15,108.5) .. (125.25,108.5) .. controls (122.35,108.5) and (120,106.15) .. (120,103.25) -- cycle ;
				\draw  [fill={rgb, 255:red, 255; green, 255; blue, 255 }  ,fill opacity=1 ] (140,103.25) .. controls (140,100.35) and (142.35,98) .. (145.25,98) .. controls (148.15,98) and (150.5,100.35) .. (150.5,103.25) .. controls (150.5,106.15) and (148.15,108.5) .. (145.25,108.5) .. controls (142.35,108.5) and (140,106.15) .. (140,103.25) -- cycle ;
				\draw  [fill={rgb, 255:red, 255; green, 255; blue, 255 }  ,fill opacity=1 ] (160,103.25) .. controls (160,100.35) and (162.35,98) .. (165.25,98) .. controls (168.15,98) and (170.5,100.35) .. (170.5,103.25) .. controls (170.5,106.15) and (168.15,108.5) .. (165.25,108.5) .. controls (162.35,108.5) and (160,106.15) .. (160,103.25) -- cycle ;
				\draw    (382.5,125.25) -- (342.5,103.25) ;
				\draw    (382.5,125.25) -- (422.5,103.25) ;
				\draw    (382.5,125.25) -- (382.5,103.25) ;
				\draw    (382.5,125.25) -- (402.5,103.25) ;
				\draw    (382.5,125.25) -- (362.5,103.25) ;
				\draw    (189.5,125.25) -- (189.5,103.25) ;
				\draw    (237.75,125.25) -- (246.75,103.25) ;
				\draw    (237.75,125.25) -- (228.8,103.25) ;
				\draw    (125.25,125.25) -- (189.5,125.25) ;
				\draw    (189.5,125.25) -- (237.75,125.25) ;
				\draw    (237.75,125.25) -- (286,125.25) ;
				\draw    (286,125.25) -- (334.25,125.25) ;
				\draw    (334.25,125.25) -- (382.5,125.25) ;
				\draw  [fill={rgb, 255:red, 255; green, 255; blue, 255 }  ,fill opacity=1 ] (120,125.25) .. controls (120,122.35) and (122.35,120) .. (125.25,120) .. controls (128.15,120) and (130.5,122.35) .. (130.5,125.25) .. controls (130.5,128.15) and (128.15,130.5) .. (125.25,130.5) .. controls (122.35,130.5) and (120,128.15) .. (120,125.25) -- cycle ;
				\draw  [fill={rgb, 255:red, 255; green, 255; blue, 255 }  ,fill opacity=1 ] (184.25,125.25) .. controls (184.25,122.35) and (186.6,120) .. (189.5,120) .. controls (192.4,120) and (194.75,122.35) .. (194.75,125.25) .. controls (194.75,128.15) and (192.4,130.5) .. (189.5,130.5) .. controls (186.6,130.5) and (184.25,128.15) .. (184.25,125.25) -- cycle ;
				\draw  [fill={rgb, 255:red, 255; green, 255; blue, 255 }  ,fill opacity=1 ] (232.5,125.25) .. controls (232.5,122.35) and (234.85,120) .. (237.75,120) .. controls (240.65,120) and (243,122.35) .. (243,125.25) .. controls (243,128.15) and (240.65,130.5) .. (237.75,130.5) .. controls (234.85,130.5) and (232.5,128.15) .. (232.5,125.25) -- cycle ;
				\draw  [fill={rgb, 255:red, 255; green, 255; blue, 255 }  ,fill opacity=1 ] (280.75,125.25) .. controls (280.75,122.35) and (283.1,120) .. (286,120) .. controls (288.9,120) and (291.25,122.35) .. (291.25,125.25) .. controls (291.25,128.15) and (288.9,130.5) .. (286,130.5) .. controls (283.1,130.5) and (280.75,128.15) .. (280.75,125.25) -- cycle ;
				\draw  [fill={rgb, 255:red, 255; green, 255; blue, 255 }  ,fill opacity=1 ] (329,125.25) .. controls (329,122.35) and (331.35,120) .. (334.25,120) .. controls (337.15,120) and (339.5,122.35) .. (339.5,125.25) .. controls (339.5,128.15) and (337.15,130.5) .. (334.25,130.5) .. controls (331.35,130.5) and (329,128.15) .. (329,125.25) -- cycle ;
				\draw  [fill={rgb, 255:red, 255; green, 255; blue, 255 }  ,fill opacity=1 ] (377.25,125.25) .. controls (377.25,122.35) and (379.6,120) .. (382.5,120) .. controls (385.4,120) and (387.75,122.35) .. (387.75,125.25) .. controls (387.75,128.15) and (385.4,130.5) .. (382.5,130.5) .. controls (379.6,130.5) and (377.25,128.15) .. (377.25,125.25) -- cycle ;
				\draw  [fill={rgb, 255:red, 255; green, 255; blue, 255 }  ,fill opacity=1 ] (223.55,103.25) .. controls (223.55,100.35) and (225.9,98) .. (228.8,98) .. controls (231.7,98) and (234.05,100.35) .. (234.05,103.25) .. controls (234.05,106.15) and (231.7,108.5) .. (228.8,108.5) .. controls (225.9,108.5) and (223.55,106.15) .. (223.55,103.25) -- cycle ;
				\draw  [fill={rgb, 255:red, 255; green, 255; blue, 255 }  ,fill opacity=1 ] (184.25,103.25) .. controls (184.25,100.35) and (186.6,98) .. (189.5,98) .. controls (192.4,98) and (194.75,100.35) .. (194.75,103.25) .. controls (194.75,106.15) and (192.4,108.5) .. (189.5,108.5) .. controls (186.6,108.5) and (184.25,106.15) .. (184.25,103.25) -- cycle ;
				\draw  [fill={rgb, 255:red, 255; green, 255; blue, 255 }  ,fill opacity=1 ] (241.5,103.25) .. controls (241.5,100.35) and (243.85,98) .. (246.75,98) .. controls (249.65,98) and (252,100.35) .. (252,103.25) .. controls (252,106.15) and (249.65,108.5) .. (246.75,108.5) .. controls (243.85,108.5) and (241.5,106.15) .. (241.5,103.25) -- cycle ;
				\draw  [fill={rgb, 255:red, 255; green, 255; blue, 255 }  ,fill opacity=1 ] (337.25,103.25) .. controls (337.25,100.35) and (339.6,98) .. (342.5,98) .. controls (345.4,98) and (347.75,100.35) .. (347.75,103.25) .. controls (347.75,106.15) and (345.4,108.5) .. (342.5,108.5) .. controls (339.6,108.5) and (337.25,106.15) .. (337.25,103.25) -- cycle ;
				\draw  [fill={rgb, 255:red, 255; green, 255; blue, 255 }  ,fill opacity=1 ] (357.25,103.25) .. controls (357.25,100.35) and (359.6,98) .. (362.5,98) .. controls (365.4,98) and (367.75,100.35) .. (367.75,103.25) .. controls (367.75,106.15) and (365.4,108.5) .. (362.5,108.5) .. controls (359.6,108.5) and (357.25,106.15) .. (357.25,103.25) -- cycle ;
				\draw  [fill={rgb, 255:red, 255; green, 255; blue, 255 }  ,fill opacity=1 ] (377.25,103.25) .. controls (377.25,100.35) and (379.6,98) .. (382.5,98) .. controls (385.4,98) and (387.75,100.35) .. (387.75,103.25) .. controls (387.75,106.15) and (385.4,108.5) .. (382.5,108.5) .. controls (379.6,108.5) and (377.25,106.15) .. (377.25,103.25) -- cycle ;
				\draw  [fill={rgb, 255:red, 255; green, 255; blue, 255 }  ,fill opacity=1 ] (397.25,103.25) .. controls (397.25,100.35) and (399.6,98) .. (402.5,98) .. controls (405.4,98) and (407.75,100.35) .. (407.75,103.25) .. controls (407.75,106.15) and (405.4,108.5) .. (402.5,108.5) .. controls (399.6,108.5) and (397.25,106.15) .. (397.25,103.25) -- cycle ;
				\draw  [fill={rgb, 255:red, 255; green, 255; blue, 255 }  ,fill opacity=1 ] (417.25,103.25) .. controls (417.25,100.35) and (419.6,98) .. (422.5,98) .. controls (425.4,98) and (427.75,100.35) .. (427.75,103.25) .. controls (427.75,106.15) and (425.4,108.5) .. (422.5,108.5) .. controls (419.6,108.5) and (417.25,106.15) .. (417.25,103.25) -- cycle ;
				\draw    (127,149) -- (379,149) ;
				\draw [shift={(381,149)}, rotate = 180] [color={rgb, 255:red, 0; green, 0; blue, 0 }  ][line width=0.75]    (10.93,-3.29) .. controls (6.95,-1.4) and (3.31,-0.3) .. (0,0) .. controls (3.31,0.3) and (6.95,1.4) .. (10.93,3.29)   ;
				\draw [shift={(125,149)}, rotate = 0] [color={rgb, 255:red, 0; green, 0; blue, 0 }  ][line width=0.75]    (10.93,-3.29) .. controls (6.95,-1.4) and (3.31,-0.3) .. (0,0) .. controls (3.31,0.3) and (6.95,1.4) .. (10.93,3.29)   ;
				
				\draw (277,105.4) node [anchor=north west][inner sep=0.75pt]    {${\displaystyle \dotsc }$};
				\draw (374,77.4) node [anchor=north west][inner sep=0.75pt]    {$b_{m}$};
				\draw (238,158.4) node [anchor=north west][inner sep=0.75pt]    {$b_{m} +1$};
				\draw (119,77.4) node [anchor=north west][inner sep=0.75pt]    {$b_{m}$};
				\draw (185,77.4) node [anchor=north west][inner sep=0.75pt]    {$1$};
				\draw (233,77.4) node [anchor=north west][inner sep=0.75pt]    {$2$};

			\end{tikzpicture}
			
			\caption{A caterpillar satisfying~$\P$.}
			\label{fig:caterpillar_no_gap_d}
		\end{figure}
		
		First, let us prove that $\P$ can be certified with certificates of size~$O(f(d))$. First, the prover gives a special certificate to each degree-1 vertex. Then, the prover writes the integer~$m$ in the certificate of every vertex.
		The certificates have size $O(\log m)$, and since the diameter~$d$ of a caterpillar in~$\P$ is equal to $b_m + 2$, the size of the certificates is $O(f(d))$ because $\log m \leqslant f(b_m + 2)$.
		Let us now explain how the vertices check their certificates. First, every vertex checks that the integer~$m \geqslant 2$ written in its certificate is the same as in the certificates of its neighbors.
		Every vertex also checks that it has the special certificate if and only if it has degree~$1$. Then, note that since $b_m \geqslant 1$ for every $m \geqslant 2$, a vertex belongs to the path~$P$ if and only if it has degree at least~$2$.
		Let~$u$ be a vertex having degree at least~$2$.
		Let us note by~$\delta_1(u)$ the number of neighbors of~$u$ that have degree~1.
		For every neighbor~$v$ of~$u$, since the vertices can view at distance $r \geqslant 2$, $u$ can determine $\delta_1(v)$.
		The vertex~$u$ checks that one of the three following cases holds, and rejects if it is not the case:
		\begin{enumerate}
			\item $\delta_1(u)=b_m$, and~$u$ has exactly one neighbor with degree at least~$2$, denoted by~$v$, such that $\delta_1(v) \in \{1,b_{m}-1\}$
			\item $\delta_1(u)=1$, and~$u$ has exactly two neighbors with degree at least~$2$, denoted by~$v$ and~$w$, such that $\delta_1(v)=b_m$ and $\delta_1(w)=2$
			\item there exists $i \in \{2, \ldots, b_m-1\}$ such that $\delta_1(u)=i$, and $u$ has exactly two neighbors with degree at least~$2$, denoted by~$v$ and~$w$, such that $\delta_1(v) = i-1$ and~$\delta_1(w) = i+1$
		\end{enumerate}
		
		Then, it is straightforward that if all the vertices accept, the caterpillar satisfies~$\P$, and conversely that if it satisfies~$\P$ then the verification of every vertex accepts with the certificates described above.
		
		Let us now prove that $\P$ cannot be certified with certificates of size~$o(f(d))$. Assume by contradiction that certificates of size~$o(f(d))$ are sufficient. For every $m \geqslant 2$, let us denote by~$G_m$ the caterpillar satisfying~$\P$ having its central path of length~$b_m+1$, by $u_1, \ldots, u_{b_m+1}$ the vertices of its central path, and by~$d_m:=b_m+2$ its diameter. By equation~(\ref{eq:b_i}), we have $f(d_m) = \Theta(\log m)$.
		Thus, if we denote by~$s(m)$ the size of the certificates that makes all the vertices of~$G_m$ accept, we have $s(m)=o(\log m)$. So there exists $m_0 \geqslant 2$ such that, for all $m \geqslant m_0$, $b_m > 2r$, and the two following conditions are all satisfied:
		\begin{equation}
			\label{eq:technical inequality gap d}
			s(m) \leqslant \frac{\log \sqrt{m}}{4r^2} \qquad \text{and} \qquad \sqrt{\frac{m}{2}} > \frac{\log \sqrt{2m}}{4r^2}
		\end{equation}

		For every $m \geqslant m_0$, let us denote by $X$ the set of vertices in~$G_m$ that contains $u_2, \ldots, u_{2r+1}$ and all their neighbors of degree~1, and let us also denote by~$G_m[X]$ the subgraph of $G_m$ induced by~$X$ (note that, by definition of~$G_m$, the structure of~$G_m[X]$ does not depend on~$m$, because the length $b_m$ of the central path of~$G_m$ is at least $2r+1$). Let us also denote by~$c_m$ an assignment of certificates by the prover to the vertices of~$G_m$ such that every vertex accepts.
		Let us prove the following claim:
		
		\begin{claim}
			\label{claim:pigeonhole no gap d}
			There exists two distinct integers $m,m' \geqslant 2$ such that $G_m$ and $G_{m'}$ are certified with certificates of the same size, and such that $c_m$ and $c_{m'}$ are equal on the vertices of~$X$ in~$G_m$ and~$G_{m'}$.
		\end{claim}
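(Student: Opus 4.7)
My plan for proving Claim~\ref{claim:pigeonhole no gap d} is a pigeonhole argument applied to the finite index set $I := \{m_0+1, \ldots, 2m_0\}$, which contains $m_0$ integers. To each $m \in I$ I associate the signature $\sigma(m) := (s(m), c_m|_X)$; two distinct indices with the same signature will automatically satisfy the conclusion of the claim.

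First I compute $|X|$. Because $b_m > 2r$ by the choice of $m_0$, all vertices of $X$ lie in the interior of the caterpillar, so (as already observed in the statement) the induced subgraph $G_m[X]$ has the same structure for every $m \in I$. Concretely, $X$ consists of the $2r$ central-path vertices $u_2,\dots,u_{2r+1}$ together with, respectively, $1, 2, \dots, 2r$ degree-1 neighbors, so
\[
|X| \;=\; 2r + \sum_{i=1}^{2r} i \;=\; 2r^2 + 3r,
\]
a constant depending only on $r$. Hence, once the certificate size $s$ is fixed, the number of possible assignments $c_m|_X$ is exactly $2^{s|X|}$.

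Next I bound the total number of signatures. By the first inequality in~(\ref{eq:technical inequality gap d}), every $m \leq 2m_0$ satisfies $s(m) \leq S := \lfloor (\log\sqrt{2m_0})/(4r^2) \rfloor$, so the number of possible signatures is at most
\[
\sum_{s=0}^{S} 2^{s|X|} \;\leq\; 2 \cdot 2^{S|X|} \;\leq\; 2 \cdot (2m_0)^{(2r^2+3r)/(8r^2)}.
\]
Since $r \geq 2$ (because $r > 1$ and $r$ is an integer), the exponent $(2r^2+3r)/(8r^2) = 1/4 + 3/(8r) \leq 7/16$ is strictly less than $1$, so this bound grows strictly slower than $m_0$. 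The second inequality in~(\ref{eq:technical inequality gap d}) is precisely what guarantees $m_0$ is large enough for this count to be strictly less than $|I| = m_0$. Pigeonhole then delivers two distinct indices $m \neq m'$ in $I$ with $\sigma(m) = \sigma(m')$, which is the claim.

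The main obstacle is the bookkeeping: one must verify that $|X|$ is a constant depending only on $r$ (so that the signature space does not blow up with $m$) and that the resulting exponent $(2r^2+3r)/(8r^2)$ on $m_0$ is strictly less than $1$, which is ultimately where the hypothesis $r > 1$ plays its role.
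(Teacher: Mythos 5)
Your proof is correct and uses essentially the same argument as the paper: a pigeonhole over $\{m_0+1,\ldots,2m_0\}$ exploiting that $|X|$ is a constant depending only on $r$ and that $s(m)\leqslant \log\sqrt{2m_0}/(4r^2)$; the only organizational difference is that you apply a single pigeonhole to the pair $(s(m), c_m|_X)$, whereas the paper first pigeonholes on the certificate size to extract a subset $I$ of size at least $\sqrt{2m_0}+1$ with a common $s(m)$, and then pigeonholes on the restriction to $X$ within $I$. One small imprecision: the final inequality $2\cdot(2m_0)^{7/16}<m_0$ is not literally a consequence of the second condition imposed on $m_0$ (which can hold already for tiny $m_0$); it additionally requires $m_0$ to exceed a small absolute constant, which is harmless since $m_0$ may always be taken larger.
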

		
		\begin{proof}
			For every $m \in \{m_0+1, \ldots, 2m_0\}$, the size of the certificates that make all the vertices of $G_m$ accept is $s(m)$ and we have $s(m) \leqslant s(2m_0) \leqslant \frac{\log \sqrt{2m_0}}{4r^2}$. By the second inequality of~(\ref{eq:technical inequality gap d}), we know that $m_0 > \sqrt{2m_0}\frac{\log \sqrt{2m_0}}{4r^2}$. Thus, by the pigeonhole principle, there exists a set $I \subseteq \{m_0+1, \ldots, 2m_0\}$ of size at least $\sqrt{2m_0}+1$ such that $s(m)$ is the same for all the integers $m \in I$. Moreover, simply by counting, the total number of vertices in $X$ is at most~$4r^2$. So, for $m \in I$, the total number of possible ways to attribute certificates of size~$s(m)$ to the vertices of $G_m[X]$ is at most $2^{4r^2s(m)} \leqslant \sqrt{m} \leqslant \sqrt{2m_0}$. Finally, again by the pigeonhole principle, there exist two distinct integers $m, m' \in I$ such $c_m$ and $c_{m'}$ are equal on the vertices of $X$.
		\end{proof}
		
		Let $m,m'$ be the two integers given by Claim~\ref{claim:pigeonhole no gap d}.
		Consider the following caterpillar~$G_{m,m'}$: the central path has length~$b_m+1$, the vertices of the central path are denoted by~$u_1, \ldots, u_{b_m+1}$, $u_1$ has $b_{m'}$ neighbors of degree~$1$, and for all $i \in \{2, \ldots, b_m+1\}$, $u_i$ has $i-1$ neighbors of degree~$1$. Let us consider the following assignment of certificates: $u_1$ and its neighbors receive the certificates given by~$c_{m'}$, and for every $i \geqslant 2$, $u_i$ and its neighbors receive the certificates given by~$c_m$. Then, all the vertices accept with this assignment of certificates: indeed, $u_1$ and its neighbors have the same view as in~$G_{m'}$, and for every $i \geqslant 2$, $u_i$ and its neighbors have the same view as in~$G_{m}$. It is finally a contradiction and concludes the proof, because $G_{m,m'}$ does not satisfy~$\P$ (indeed, $m \neq m'$ implies that $b_m \neq b_{m'}$ since $(b_m)_{m\geqslant 1}$ is increasing). \qedhere

	\end{proof}

	
	
	
	\section{No gap in general graphs via caterpillars}
	\label{sec:no-gap-general-graphs}
	
	In this section, we transfer the results we got in Section~\ref{sec:no-gap-in-n-caterpillars} and~\ref{sec:no-gap-in-d-caterpillars} from caterpillars to general graphs. 
	
	\begin{theorem}
		\label{thm:general-graphs}
		Let $f : \N \to \mathbb{R}$ be a non-decreasing function such that $\lim_{n \rightarrow +\infty}f(n) = +\infty$ and for all integers $1 \leqslant s \leqslant t$ we have $f(t)-f(s) \leqslant \log t - \log s$. Then:
		\begin{enumerate}
			\item For verification radius 1, there exists a property of general graphs that can be certified with certificates of size~$O(f(n))$ and not with certificates of size~$o(f(n))$.
			\item For verification radius $r>1$, there exists a property of general graphs that can be certified with certificates of size~$O(f(d))$ and not with certificates of size~$o(f(d))$, where $d$ is the diameter.
		\end{enumerate}
	\end{theorem}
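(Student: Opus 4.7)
The strategy is a straightforward reduction to the caterpillar results, namely Theorem~\ref{thm:caterpillars-no-gap-in-n} for part~1 (radius~$1$, parameter~$n$) and Theorem~\ref{thm:caterpillar-no-gap-radius-2} for part~2 (radius~$r>1$, parameter~$d$). Let $\P_{\mathrm{cat}}$ denote the property on caterpillars produced by the relevant theorem, and define the general-graph property $\P$ by $G\in\P$ iff $G$ is a caterpillar and $G\in\P_{\mathrm{cat}}$.

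For the lower bound, I observe that any certification of $\P$ on general graphs using certificates of size $o(f(n))$ (resp.\ $o(f(d))$), when restricted to caterpillar inputs, is automatically a certification of $\P_{\mathrm{cat}}$ on caterpillars of the same asymptotic certificate size: completeness is preserved because every caterpillar in $\P_{\mathrm{cat}}$ is also in $\P$, and soundness is preserved because every caterpillar outside $\P_{\mathrm{cat}}$ is also outside $\P$, so every certificate assignment yields a rejecting vertex. This contradicts the caterpillar theorem.

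For the upper bound, I use exactly the certification scheme of the caterpillar theorem and augment the local verification so that non-caterpillars are also rejected. Recall that the caterpillar scheme gives a dedicated \emph{leaf} certificate to every degree-$1$ vertex and an index~$i$ to the $i$-th vertex $u_i$ of the central path. I strengthen the verification as follows: a vertex holding the leaf certificate must have degree exactly~$1$; a vertex holding an index~$i$ must have exactly the prescribed number of leaf-certificate neighbors ($a_i-1$ for part~1, the appropriate count for part~2), and either exactly two index-certificate neighbors with values $i-1$ and $i+1$ (interior of the central path), or exactly one such neighbor with value $2$ or $d-1$ (endpoint of the central path). Together with the connectedness assumption of the model, these purely local checks guarantee that every globally accepting graph is precisely a caterpillar of the prescribed form, hence lies in $\P_{\mathrm{cat}}$. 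The certificates themselves are unchanged, so they retain size $O(f(n))$ (resp.\ $O(f(d))$).

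The main point to address carefully is that these strengthened local checks really close every loophole for non-caterpillar configurations: branching is ruled out by the "exactly two index-neighbors" constraint, cycles are excluded because the consistent propagation of indices along a cycle would force the $\{i-1,i+1\}$ rule to be violated at some vertex, disconnected pieces cannot arise by the connectedness promise, and extra leaves or missing leaves on a central-path vertex violate the $a_i-1$ count. For part~2 the larger verification radius only simplifies this argument, since the degrees of vertices within distance~$r$ are directly visible. This finishes the reduction in both parts.
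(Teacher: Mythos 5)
Your proposal is correct, but it takes a genuinely different route from the paper. The paper does not certify ``being a caterpillar'' directly: instead it defines the hard property on \emph{circular} concatenations of $\P_{\mathrm{cat}}$-caterpillars (a single cycle with pendant vertices), because that structure is certifiable with $O(1)$ bits in the anonymous radius-$1$ setting (the ``cycle''/``leaf'' labels plus connectedness force a unique cycle), and this keeps the caterpillar theorems entirely black-box; the price is that the lower bound needs an extra cut-and-paste argument to transfer a hypothetical $o(f(n))$ scheme for the circular language back to plain caterpillars. You instead take the property to be ``caterpillar and in $\P_{\mathrm{cat}}$,'' which makes the lower bound an immediate restriction argument, but forces you to open up the caterpillar schemes and check that their specific certificates also certify acyclicity. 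That check is the crux, and your monotonicity argument is sound: every vertex on a cycle has degree at least $2$, hence carries a non-leaf certificate and must see \emph{exactly two} non-leaf neighbours with values $i-1$ and $i+1$, so the increments around the cycle are all $+1$ or all $-1$ and cannot sum to zero; branching and spurious endpoints are likewise excluded, and connectedness then pins down the graph as exactly the prescribed caterpillar. One imprecision worth fixing: for part~2 the scheme of Theorem~\ref{thm:caterpillar-no-gap-radius-2} does \emph{not} write position indices in the certificates (every central vertex just receives $m$); the role of the index is played by the visible leaf-counts $\delta_1(u)$, but the same strict-monotonicity argument applies to those, and the endpoint with $\delta_1=b_m$ cannot lie on a cycle since it must have exactly one neighbour of degree at least~$2$. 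The trade-off, then, is that the paper's construction is modular and would work for \emph{any} caterpillar property (even one with $O(1)$ certificates, where your approach would fail to exclude long cycles), whereas yours exploits the fact that these particular properties already embed a distance labelling along the central path within the $O(f(n))$ (resp.\ $O(f(d))$) budget.
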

	
	\begin{proof}
		For both items, we use the same strategy, and use Theorems~\ref{thm:caterpillars-no-gap-in-n} and~\ref{thm:caterpillar-no-gap-radius-2} as black-boxes. Let us consider the first item for concreteness, and a property $\mathcal{P}$ of caterpillars having a local certification of size $O(f(n))$ but not of size $o(f(n))$, for radius 1. We consider the language of graphs that can be created by concatenating caterpillars having property $\mathcal{P}$, by edges between endpoints, in a circular fashion. 
		
		Let us first prove that the fact that the graph is a circular caterpillar (a cycle with pending vertices) can be certified with a constant number of bits. The certification on a correct instances consists in assigning a label \emph{cycle} on the vertices of the cycles, and \emph{leaf} on the others. Nodes with label \emph{leaf} check that they have exactly one neighbor, and that it is labeled \emph{cycle}. Nodes labeled \emph{cycle} check that they have exactly two neighbors labeled with \emph{cycle}. The only thing to prove is that there cannot be two disjoint cycles with label \emph{cycle}. This cannot happen since if there were at least two disjoint cycles, then a pair of such cycles would be connected by a path of vertices labeled $\emph{leaf}$, which is impossible.
		
		We now want to certify that the circular caterpillar is made of a concatenation of graphs having property $\mathcal{P}$. On a correct instance, in addition to the certificates for checking the circular caterpillar structure, every node if given the following pieces of information. 
		\begin{enumerate}
			\item A label saying whether it is the endpoint of a connecting edge or not.
			\item The certificate it would have got if only its own caterpillar existed (without the concatenation). 
		\end{enumerate}
		Every node then checks (in addition to the structure check) that:
		\begin{enumerate}
			\item If it has not been labeled as a connecting node, then the verifier from the caterpillar version accepts. 
			\item Otherwise, it should have a unique neighbor also labeled as a connecting node, and the verifier from the caterpillar version should accept, when forgetting the edge to that node. 
		\end{enumerate}
		This scheme is clearly correct. 
		It size is in $O(f(n))$ since every concatenated part of size $k$ uses $O(f(k))$ , with $k\leq n$, and $f$ is non-decreasing. 
		Now this cannot be certified with $o(f(n))$ bits because otherwise we could contradict our previous theorem. Indeed, in that case, the graph made by taking one caterpillar and connecting both endpoints by an edge would be certified with $o(f(n))$ bits. Then in the non-circular caterpillars, we could have a $o(f(n))$ certification, by using the exact same scheme, except that the certificates of the endpoints would be copied all along the path, to allow the checks on the virtual concatenating edge. 
		
		The proofs works exactly the same for the other item of the theorem. 
	\end{proof}

	\section{Extensions for identifiers and global knowledge}
	\label{sec:extensions-ID}
	
	All the results so far were for anonymous networks, without any knowledge of $n$. In this section we prove a few results about settings where there are identifiers or some knowledge of $n$ is given. 
	For simplicity, we focus on paths. 
	We first prove in Subsection~\ref{subsec:no-gap-exact-n} that in labeled paths, the gap disappears when $n$ is given (either explicitly, or implicitly via identifiers). Then for unlabeled paths, where exact knowledge of $n$ makes the problem trivial, we prove that having a sharp estimate on $n$ allows to break the $\log \log n$ barrier (Subsection~\ref{subsec:approx-n}). 
	Finally, we prove in Subsection~\ref{subsec:large-ID} that the gap result still holds if the nodes are equipped with arbitrarily large identifiers.

	\subsection{No gap for labeled paths with identifiers in $[1,n]$ or exact knowledge of $n$}
	\label{subsec:no-gap-exact-n}

	\begin{theorem}\label{thm:nogap_exactID}
		Let $f : \N \to \N$ be a non decreasing function such that $\lim_{n \rightarrow +\infty}f(n) = +\infty$. Then, there exists a property on labeled paths that can be certified with certificates of size~$O(f(n))$ and not with certificates of size~$o(f(n))$ if the vertices received unique IDs in the range $[1,n]$ or the exact knowledge of~$n$.
	\end{theorem}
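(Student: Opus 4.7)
The plan is to exhibit a property $\P$ on $\{0,1\}$-labeled paths whose optimal certificate size is $\Theta(f(n))$. Define $\P$ to consist of the labeled paths $P$ of length $n \geq 2f(n)+1$ such that there exists $\alpha \in \{0,1\}^{f(n)}$ with labels along one orientation equal to $\alpha_1 \cdots \alpha_{f(n)}\, 0^{\,n-2f(n)}\, \alpha_1 \cdots \alpha_{f(n)}$: a length-$f(n)$ prefix, a zero middle, and a matching length-$f(n)$ suffix. For each $n$ this yields on the order of $2^{f(n)}$ distinct labeled paths, indexed by $\alpha$ (up to the orientation ambiguity, which divides the count by at most $2$).

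For the upper bound, I use the assumption that every vertex knows $n$ exactly --- either by hypothesis, or via a standard spanning-tree scheme of $O(\log n)$ bits that certifies $n$ from identifiers in $[1,n]$ (this overhead is absorbed whenever $f(n)=\Omega(\log n)$; otherwise the exact-$n$ branch of the assumption provides $n$ for free). Knowing $n$, each vertex knows $f(n)$. The prover picks one endpoint as the ``start'' and assigns each vertex a certificate consisting of the word $\alpha$ together with a flag in $\{\textsf{prefix}, \textsf{middle}, \textsf{suffix}\}$ and, when the flag is \textsf{prefix} or \textsf{suffix}, a position counter in $\{1,\ldots,f(n)\}$. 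Verification checks that the flag/counter pattern evolves as a \textsf{prefix} block of length $f(n)$, then a \textsf{middle} block, then a \textsf{suffix} block of length $f(n)$, and that labels agree with $\alpha$ at \textsf{prefix}/\textsf{suffix} positions and with $0$ at \textsf{middle} positions. The total certificate size is $f(n) + O(\log f(n)) = O(f(n))$.

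For the lower bound, I argue by a splicing/pigeonhole argument that $o(f(n))$ bits cannot suffice. Suppose a scheme of size $c(n)=o(f(n))$ existed. Fix $n$ large and consider the $2^{f(n)}$ labeled paths in $\P$. Pick a middle edge $(i,i+1)$ with $f(n) < i < n-f(n)$; since both middle labels are forced to $0$, the only discriminant of the edge view $((\ell_i,c_i),(\ell_{i+1},c_{i+1}))$ is the certificate pair, giving at most $2^{2c(n)}$ possibilities. Because $2^{2c(n)} < 2^{f(n)}$ for large $n$, pigeonhole yields two patterns $\alpha_1 \neq \alpha_2$ whose certifications agree on both endpoints of this edge. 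Splicing the prefix half of the $\alpha_1$-path with the suffix half of the $\alpha_2$-path produces a labeled path that is accepted at every vertex (each view is identical to one of the two originals), but whose prefix is $\alpha_1$ and suffix is $\alpha_2 \neq \alpha_1$; it is therefore not in $\P$, contradicting soundness.

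The main technical point to handle carefully will be the orientation ambiguity inherent to undirected paths: both endpoints could legitimately be declared ``start'' by the prover, so the verifier must accept a prefix/suffix decomposition in either direction, and one must count distinct labeled paths after this symmetrization (the count drops by at most a factor of two, leaving the pigeonhole bound unaffected). A secondary detail is gracefully handling the boundary cases at the two flag transitions and at the endpoints, which is achieved by standard local bookkeeping that does not affect the asymptotic certificate size.
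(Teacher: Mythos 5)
Your construction and both bounds are essentially the paper's: a language in which a length-$f(n)$ binary string must appear identically at the two ends of the path, certified with $O(f(n))$ bits by broadcasting the string together with local position bookkeeping, and a lower bound by pigeonhole on the certificate pair crossing a middle edge (an EQUALITY-style cut-and-splice argument). The exact-knowledge-of-$n$ case is handled correctly, and the splicing argument is sound (with a fixed ID assignment, every vertex of the spliced instance has the same view as in one of the two accepting instances).

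The gap is in the identifier branch. The theorem asserts the conclusion under \emph{either} assumption, so it must hold when the vertices only have unique IDs in $[1,n]$ and do \emph{not} know $n$. Your fallback --- certifying $n$ via a spanning-tree scheme costing $O(\log n)$ bits --- only works when $f(n)=\Omega(\log n)$, whereas the theorem allows any unbounded non-decreasing $f$, e.g. $f(n)=\lceil\log\log n\rceil$; writing that ``the exact-$n$ branch of the assumption provides $n$ for free'' conflates the two branches, so for sublogarithmic $f$ your upper bound is not established in the ID-only setting. The missing observation, which the paper uses, is that unique IDs in the range $[1,n]$ on an $n$-vertex path form a bijection onto $\{1,\dots,n\}$, so the vertex carrying the maximum identifier knows $n$ exactly (it equals its own ID). A unique such leader can be certified with $O(1)$ additional bits (a distance-modulo-$3$ counter pointing to the max-ID vertex), after which the leader alone checks that the broadcast string has length exactly $f(n)$, while every other vertex only checks that $f(\mathrm{ID})$ is at most the string's length. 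This keeps the certificate size at $O(f(n))$ for all admissible $f$ and closes the gap.
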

	
	\begin{proof}
		The language we consider is the following set of labeled paths. All nodes at distance at most $f(n)$ from an endpoint are given a bit, and all other nodes are given an empty label, and the strings corresponding to the concatenation of the bits of each endpoints should be equal. 
		
		Let us start with the case where $n$ is known. In a correct instance, a node received the following certificate:
		\begin{enumerate}
			\item The string that appears on both endpoints. 
			\item If it has a bit, its distance to the endpoint. 
		\end{enumerate}
		The nodes perform the following check. 
		\begin{enumerate}
			\item The string is the same as the one given to its neighbors.
			\item If it has a bit, the distances are consistent (in particular with the value of $f(n)$ that it can compute) and the string is consistent with its distance and its input bit. 
		\end{enumerate}
		This certification is clearly correct and uses $O(f(n))$ bits. Also $o(f(n))$ bits is not possible, since it would directly contradict the communication complexity of EQUALITY (see \cite{Feuilloley21} for this kind of proof, or simply use a counting argument like in Section~\ref{sec:no-gap-general-graphs} or \cite{GoosS16}).
		
		Now for the case where nodes only have IDs in $[1,n]$ but do not know $n$, the trick is the following. We give every node its distance modulo 3 to the node with the largest identifier. This is typ of counter is known to certify a unique leader in trees (see \emph{e.g.} \cite{FeuilloleyBP22}). Then every node checks that when it applies $f$ to its ID, the result is at most the size of the string, and the leader checks that it is exactly the size of the string. 
	\end{proof}
	
	

	\subsection{Breaking the $\log\log n$ barrier with sharp estimates of $n$ in unlabeled paths}
	\label{subsec:approx-n}
	
	
	In this subsection we prove that if we have sharp estimate of $n$ then the gap result does not hold anymore. 
	
	\begin{theorem}
		Let $g(n) \in o(\log n)$, a non-decreasing function such that for all integers $1 \leqslant s \leqslant t$ we have $g(t)-g(s) \leqslant \log t - \log s$.
		Consider a model where the nodes are given an approximation of $n$, denoted $\hat{n}$, with the promise that $|\hat{n}-n|\leq g(n)$. In this model, 
		there are properties that can be certified with $O(\log g(n))$ bits (which is in $o(\log \log n)$) but not with $O(1)$ bits.
	\end{theorem}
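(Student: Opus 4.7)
The goal is to exhibit a very sparse language $S \subseteq \N$ of admissible path lengths for which the approximation $\hat{n}$ essentially localises $n$ within a window of size $O(g(n))$: the prover then needs only $O(\log g(n))$ bits to pin down which element of $S$ we are reading, while a constant-size scheme cannot distinguish enough lengths to match $S$ on all such windows. I will take $S = \{2^{2^k} : k \geq 1\}$: consecutive elements satisfy $n_{k+1} = n_k^2$, so the gap $n_{k+1} - n_k$ dwarfs $g(n_{k+1}) = o(\log n_{k+1})$, and in particular every interval of length $2g(n)+1$ contains at most one element of $S$ once $n$ is large. Crucially, $S$ is not eventually periodic, which is the feature that will rule out constant-size certification.

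\emph{Upper bound.} On a correct path of length $n \in S$, I let the prover write in every certificate two pieces of information: the offset $\delta := n - \hat{n} \in \{-g(n), \ldots, g(n)\}$, in $O(\log g(n))$ bits, together with a distance-modulo-$m$ counter rooted at a designated endpoint (as in Lemma~\ref{lem:certif_modulo}) for $m := 4g(\hat{n}) + 7$. Every vertex checks that $\delta$ is constant along the path, that the counter is locally consistent, and, using $\hat{n}$, that $\hat{n} + \delta \in S$; the terminal endpoint additionally checks that its counter value equals $(\hat{n}+\delta-1) \bmod m$. Since $m > 2g(n)$, the counter together with $\hat{n}$ pins down $n$ exactly, so soundness is immediate. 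The hypothesis $g(t) - g(s) \leq \log t - \log s$ gives $g(\hat{n}) = g(n) + O(1)$, so nodes may safely compute $m$ from $\hat{n}$ and the certificate has size $O(\log g(n))$.

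\emph{Lower bound.} Suppose toward a contradiction that $S$ admits a certification with certificates of size $k = O(1)$. For each value of $\hat{n}$ the verification rules specialise to a nondeterministic unary automaton $A_{\hat{n}}$ on $2^{2k}+2$ states, exactly as in Section~\ref{sec:gap-path}, accepting some $L_{\hat{n}}$ that must satisfy $L_{\hat{n}} \cap [\hat{n}-g(\hat{n}), \hat{n}+g(\hat{n})] = S \cap [\hat{n}-g(\hat{n}), \hat{n}+g(\hat{n})]$. Because $k$ is constant, only finitely many such automata exist, so by pigeonhole there is a fixed automaton $A$ with $A_{\hat{n}} = A$ for infinitely many $\hat{n}$; let $L = L(A)$. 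Taking $\hat{n} = n_k$ for arbitrarily large $k$ shows that infinitely many $n_k \in S$ lie in $L$, so $L$ is infinite; by the Chrobak normal form (Theorem~\ref{thm:chrobak}), $L$ is eventually periodic with some period $p$ and contains an infinite arithmetic progression of common difference dividing $p$. But then, picking any $\hat{n}$ with $A_{\hat{n}} = A$ and $g(\hat{n}) > p$ (possible since $g \to \infty$), the window $[\hat{n}-g(\hat{n}), \hat{n}+g(\hat{n})]$ contains at least two elements of $L$, whereas $S$ contains at most one, a contradiction.

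\emph{Main obstacle.} The delicate step is the lower bound: one has to see that the unbounded parameter $\hat{n}$ still induces only finitely many unary automata (despite the verifier being allowed to read $\hat{n}$ in arbitrary ways), and then combine Chrobak-style periodicity with the carefully chosen sparsity of $S$ to force the final contradiction. The remaining technicalities---namely that the slow growth of $g$ lets nodes recover $g(n)$ from $\hat{n}$ up to an additive constant, and that the pigeonhole must be applied on the infinite family $\{\hat{n} = n_k\}$ in order to guarantee that $L$ is infinite---are routine once this framework is in place.
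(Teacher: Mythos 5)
Your proposal is correct, and your upper bound is essentially the paper's: a distance counter modulo $\Theta(g(\hat{n}))$, which together with the promised window around $\hat{n}$ determines $n$ exactly at an endpoint (your extra field $\delta$ is redundant -- the endpoint can already recover $n$ from the residue and the window, as the paper does -- and if you do keep $\delta$ you must have the verifier check that $|\delta|\leqslant g(\hat n)+O(1)$, otherwise a dishonest prover could point to a far-away element of $S$ with the same residue). Your lower bound, however, takes a genuinely different route. The paper picks a sparse set $S$ (at most one element per window, built greedily), fixes $\hat n=n\in S$, finds two edges carrying the same pair of constant-size certificates, and deletes the segment between them: the resulting path of length $n-c$ still has $\hat n$ as a legitimate approximation (this is where $g\to\infty$ is used) and is still accepted, yet lies outside $S$ by sparsity. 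You instead specialize the verifier to a unary automaton $\mathcal{A}_{\hat n}$ for each $\hat n$, pigeonhole over the infinite family $\hat n=n_k$ to extract one automaton $A$ governing infinitely many windows, and invoke Chrobak normal form to get eventual periodicity of $L(A)$, which forces two accepted lengths inside a single large window where $S$ has at most one. Both arguments hinge on the same sparsity of $S$ and both need $g\to\infty$ (which the theorem statement leaves implicit but which is necessary for the claim to be nontrivial); the paper's deletion argument is shorter and self-contained, while yours reuses the Chrobak machinery from Section~\ref{sec:Chrobak} and makes the role of the parameter $\hat n$ (finitely many induced automata, pigeonhole on $\{n_k\}$ to guarantee $L$ infinite) fully explicit.
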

	
	Note that the approximation needed is much more precise than the usual linear or even polynomial upper bounds on $n$ that are more common in distributed computing. 
	
	\begin{proof}
		The idea of the upper bound is easy. The node know that the real value of $n$ is close to $\hat{n}$, more precisely, $n$ must be in $[\hat{n}-g(\hat n)-1,\hat{n}+g(\hat n)+1]$ (thanks to the constraints on the growth of $g$, for large enough $n$). The certification simply consists in a counter modulo $4g(n)$. Then the last node can deduce the exact value of $n$ because only one such value is consistent with both the interval and the congruence. This uses $O(\log g(n))$ bits.
		
		For the lower bound, suppose that for any property there would exist a constant size certification, when an approximation of $n$ as in the theorem statement.
		Consider an infinite set of integers $S$, such that for any integer $k$ there exists at most one element of $S$ in $[k-\log k, k+\log k]$ (such a set can be built greedily). Consider the language of the paths whose lengths are in $S$, and suppose that it has a constant size certification. For $n\in S$, consider the certificates given to the nodes when $\hat{n}=n$. 
		Since it is a constant size certification, there exists a pair of edge having exactly the same certificates, which are at constant distance one from the other. Now consider the instance where we have removed the part in between these edges, including the two internal nodes of the pairs, keeping the certificates of the remaining nodes. In this new instance, of size $n-c$ for a constant $c$, $\hat{n}$ is still a legitimate approximation of the length, and all nodes still accept. But this instance cannot be in the language by construction, which is a contradiction. 
	\end{proof}
	
	\subsection{Arbitrarily large identifiers}
	\label{subsec:large-ID}
	
	Another setting that is stronger than the anonymous case but does not leak the knowledge of $n$ is the one where the nodes are given distinct identifiers, which can be arbitrarily large. Here a node can see its own certificate and identifier and the certificates of its neighbors. Not seeing the identifiers of the neighbors is the classic assumption in the state model of self-stabilization. In this setting, we show that we are basically back to the anonymous case, in particular the gap result still holds. 
	
	\begin{theorem}
		The gap result of Theorem~\ref{thm:gap-paths-chrobak} still holds if we assume that the nodes are given distinct identifiers from an arbitrarily large set. 
	\end{theorem}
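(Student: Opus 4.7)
The plan is to reduce the setting with identifiers back to the anonymous setting and then apply Theorem~\ref{thm:gap-paths-chrobak} as a black box. The key observation is that a node sees its own identifier but not those of its neighbors, and identifiers come from an infinite set, so identifiers provide only weak asymmetry. Concretely, given an identifier-based certification with verifier $V$ using certificates of size $s(n) = \lfloor (\log \log n)/c \rfloor$, I will construct an anonymous verifier $V_{\text{anon}}$ with the same certificate size. For every configuration consisting of a local certificate and the certificates of the neighbors, I set $V_{\text{anon}}$ to accept iff there exist infinitely many identifiers $i$ for which $V$ accepts this configuration when the node has identifier $i$. Since $V_{\text{anon}}$ depends only on the finite tuple of certificates, it is a legitimate anonymous verifier.

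For completeness on a path of length $n$ in the language, I will consider infinitely many disjoint identifier assignments, for instance $(m-1)n+1, \dots, mn$ for $m\geq 1$. For each $m$, the identifier-based scheme supplies an accepting certificate assignment $c^{(m)}_1, \dots, c^{(m)}_n$. Since there are only finitely many certificate assignments of size $s(n)$, pigeonhole yields infinitely many indices $m$ with the same assignment $c_1, \dots, c_n$; for each position $j$, the identifier $(m-1)n+j$ then takes infinitely many accepting values, so $V_{\text{anon}}$ accepts at position $j$ (endpoint configurations are handled analogously). For soundness on an invalid path, suppose $V_{\text{anon}}$ accepts with certificates $c_1,\dots,c_n$. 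Each of the $n$ configurations has an infinite set $A_j$ of accepting identifiers for $V$; a greedy selection picks $i_j \in A_j \setminus \{i_1, \dots, i_{j-1}\}$, possible because $A_j$ is infinite, and the resulting assignment of pairwise distinct identifiers together with $c_1,\dots,c_n$ makes $V$ accept at every vertex, contradicting the soundness of the identifier-based scheme. Applying Theorem~\ref{thm:gap-paths-chrobak} to this anonymous certification then yields a constant-size anonymous certification, which remains valid in the identifier setting (the anonymous verifier simply ignores identifiers).

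The only delicate point is the quantifier in the definition of $V_{\text{anon}}$: replacing ``infinitely many'' by ``some'' would break soundness, since a bounded number of accepting identifiers per position could fail to supply pairwise distinct identifiers along a long path, while replacing it by ``all'' would break completeness. The quantifier ``infinitely many'' is the sweet spot, and crucially does not involve $n$, so $V_{\text{anon}}$ is genuinely anonymous and the reduction to Theorem~\ref{thm:gap-paths-chrobak} goes through without modification.
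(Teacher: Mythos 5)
Your proof is correct, and it follows the same overall strategy as the paper: reduce the identifier setting to the anonymous one by a pigeonhole argument on identifiers, then invoke Theorem~\ref{thm:gap-paths-chrobak} as a black box. The implementation of the reduction differs, though. The paper applies pigeonhole once to the space of local verification functions: since there are at most $2^{2^{2s}}$ possible maps from certificate neighborhoods to decisions, some infinite set $I_s$ of identifiers all induce the same map, and the anonymous verifier simply simulates the original one with an imagined identifier from $I_s$ (completeness and soundness are then immediate, because restricted to $I_s$ the original verifier is literally identifier-oblivious). You instead build the anonymous verifier directly via the ``infinitely many accepting identifiers'' quantifier, which forces you to argue completeness (pigeonhole over the finitely many certificate assignments across disjoint identifier blocks) and soundness (greedy selection of pairwise distinct identifiers from the infinite accepting sets) separately; your verifier accepts a superset of the configurations accepted by the paper's simulated one, but your greedy-selection argument shows this larger acceptance is still sound. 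Both arguments correctly exploit the stated assumption that a node sees only its own identifier and not its neighbors'; your version is slightly longer but arguably more self-contained, while the paper's homogeneous-set formulation is the one that extends most directly to bounded-degree trees as claimed at the end of that section.
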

	
	The proof follows the same path as the ones in~\cite{BlinDF20}, which extend the ones of~\cite{BeauquierGJ99}. We sketch it here. 
	
	\begin{proof}
		For the sake of contradiction, suppose there exists a property with complexity $f(n)$ in $o(\log \log n)$ but not in $O(1)$ in paths, when the nodes are given distinct identifiers.  
		We claim that for every certificate size $s$, we can find an infinite set of identifiers $I_s$, such that the mapping from the neighborhood's certificates  to the accept/reject decision is the same for all identifiers of $I_s$. 
		Indeed, for every $s$ the number of such mappings is bounded by a function of $s$ 
		(a upper bound is $2^{2^{2s}}$ 
		taking into account the fact 
		that the mappings are special at the extremities), hence we can apply pigeon-hole principle, and get this infinite set.
		Now we can define a certification in anonymous paths in the following way. The prover picks the relevant certificate size $s$, and assigns certificates as if all nodes had an ID from $I_s$. This is always possible because $I_s$ is infinite. Each node computes $I_s$ from $s$, and simulates the local verification of the original scheme, imagining that it has an ID from $I_s$. 
		The correctness follows from the definition of $I_s$ and from the correctness of the original definition. And this certification uses $f(n)$ bits, which contradicts Theorem~\ref{thm:gap-paths-chrobak}.
	\end{proof}
	
	This proof can easily be extended to bounded degree trees.

	\newpage{}

	\bibliography{biblio}
	
\end{document}